\useunder{\uline}{\ul}{}
\definecolor{ms}{rgb}{0,.4,1}
\newcommand{\ms}[1]{{\color{black}#1}}
\newcommand{\mb}[1]{\mathbb{#1}}
\newcommand{\Tr}[1]{\mathrm{Tr}\left[ #1\right]} 
\newcommand{\id}{\mathbbm{1}}
\newcommand{\RR}{\mb{R}}
\newcommand{\Norm}[1]{\left\Vert #1 \right\Vert}
\newcommand{\ket}[1]{\left.\left|{#1}\right.\right\rangle}
\newcommand{\bra}[1]{\left.\left\langle{#1}\right.\right|}
\newcommand{\ketbra}[2]{\ket{#1} \!\! \bra{#2}}
\newcommand{\sandwich}[3]
{\left\langle  #1 \right| #2 \left| #3 \right\rangle}
\newcommand{\bigo}[1]{\mathcal{O}\left (#1\right)}
\newcommand{\rom}[1]{\uppercase\expandafter{\romannumeral #1\relax}}
\newcommand{\norbra}[1]{\left( #1\right)}
\newcommand{\sqrbra}[1]{\left[ #1\right]}
\newcommand{\lind}{\mathcal{L}}
\newcommand{\J}{\mathbb{J}}
\newcommand{\timeorderedexp}{\mathcal{T}_{\leftarrow}{\text{exp}}}
\newcommand{\dt}{{\rm d}t\,}
\newcommand{\de}{{\rm d}}
\newcommand{\TrR}[2]{\mathrm{Tr}_{#1}\left[ #2\right]}
\newcommand{\idO}{\mathds{I}}
\newtheorem{theorem}{Theorem}
\newtheorem{lemma}{Lemma}
\definecolor{my-green}{RGB}{0,144,81}
\definecolor{my-red}{RGB}{255,113,91}
\renewenvironment{proof}%
{\par\pushQED{\qed}\normalfont {\bfseries {\underline{Proof:}}}} 
{\popQED\endtrivlist\vspace{0.5em}}
\renewcommand\onecolumngrid{
	\do@columngrid{one}{\@ne}%
	\def\set@footnotewidth{\onecolumngrid}
	\def\footnoterule{\kern-6pt\hrule width 1.5in\kern6pt}%
}
\renewcommand\twocolumngrid{
	\def\footnoterule{
		\dimen@\skip\footins\divide\dimen@\thr@@
		\kern-\dimen@\hrule width.5in\kern\dimen@}
	\do@columngrid{mlt}{\tw@}
}%
\begin{document}
	
	\title{Thermalization in open many-body systems and KMS detailed balance}
	\date{\today}
	
	\author{\begingroup
		\hypersetup{urlcolor=navyblue}
		\href{https://orcid.org/0000-0002-6395-3971}{Matteo Scandi\endgroup}
	}
	\email{matteo.scandi@csic.es}
	\affiliation{Instituto de F\'{i}sica T\'{e}orica UAM/CSIC, C. Nicol\'{a}s Cabrera 13-15, Cantoblanco, 28049 Madrid, Spain}
	
	\author{\begingroup
		\hypersetup{urlcolor=navyblue}
		\href{https://orcid.org/0000-0002-5889-4022}{\'Alvaro M. Alhambra\endgroup}
	}
	\email{alvaro.alhambra@csic.es}
	\affiliation{Instituto de F\'{i}sica T\'{e}orica UAM/CSIC, C. Nicol\'{a}s Cabrera 13-15, Cantoblanco, 28049 Madrid, Spain}

	\begin{abstract}
		
		Starting from a microscopic description of weak system-bath interactions, we derive from first principles a quantum master equation that does not rely on the well-known rotating wave approximation. This includes generic many-body systems, with Hamiltonians with vanishingly small energy spacings that forbid that approximation. The equation satisfies a general form of detailed balance, called KMS, which ensures exact convergence to the many-body Gibbs state. Unlike the more common notion of GNS detailed balance, this notion is compatible with the absence of the rotating wave approximation. We show that the resulting Lindbladian dynamics not only reproduces the thermal equilibrium point \ms{up to a small renormalization of the system Hamiltonian}, but it also approximates the true system evolution with an error that grows at most linearly in time, giving an exponential improvement upon previous estimates. This master equation has quasi-local jump operators, can be efficiently simulated on a quantum computer, and reduces to the usual Davies dynamics in the limit \ms{of a coarse-graining time much larger than the inverse of the smallest frequency difference}. With it, we provide a rigorous model of many-body thermalization relevant to both open quantum systems and quantum algorithms.
	\end{abstract}
	
	\maketitle
	\onecolumngrid
	

	
	\begin{center}
		\normalsize{\underline{\textbf{Contents}}}
	\end{center}
	\begin{enumerate}[I.]
		\item \hyperref[sec:intro]{Introduction}
		\item \hyperref[sec:db]{Detailed balance for quantum systems}
		\item \hyperref[sec:thermalBaths]{Framework and properties of the bath}
		\item \hyperref[sec:CPderivation]{Microscopic derivation and approximate detailed balance}
		\begin{enumerate}[A.]
			\item \hyperref[sec:mainideas]{\emph{\added{Main ideas of the proof}}}
			\item \hyperref[subsec:bm]{\emph{The Born-Markov approximation}}
			\item \hyperref[subsec:timeaverage]{\emph{Time averaging of the dynamics}}
			\item \hyperref[subsec:improvederror]{\emph{Improved error scaling}}
			\item \hyperref[sec:completepositivity]{\emph{Complete positivity and approximate detailed balance of $\lind^{CG}_{{t}}$}}
		\end{enumerate}
		\item \hyperref[sec:exactDB]{Recovering Complete Positivity while preserving detailed balance}
		\item \hyperref[sec:qualg]{Quantum algorithmic simulation}
		\item \hyperref[sec:conclusions]{Conclusions}
	\end{enumerate}
	
	\section{Introduction}\label{sec:intro}
	
	A central question of statistical mechanics is how the stochastic dynamics observed at the macroscopic level emerges from reversible microscopic evolution. Boltzmann's renowned \mbox{H-theorem}~\cite{boltzmann1872weitere} showed that a coarse-graining of the dynamical degrees of freedom was sufficient to prove the approach to equilibrium, as seen by the decrease of the \mbox{H-function}. In doing so, Boltzmann also recognized that the time-reversal symmetry at the microscopic level mapped at the macroscopic scale into the principle of \emph{detailed balance}: at equilibrium, the rate of any transition is compensated by its reverse. Since its introduction, this fundamental symmetry has accompanied many breakthroughs in statistical mechanics, such as Einstein's absorption/emission theory~\cite{einstein1916quantentheorie} or Onsager's reciprocal relations~\cite{onsager1931reciprocal}. It also features prominently in Monte Carlo algorithms, such as the Metropolis–Hastings sampling~\cite{robert1999monte}, where the convergence to the stationary distribution is guaranteed by detailed balance. 
	
	A first quantum mechanical analogue of detailed balance was proposed by Agarwal and Alicki~\cite{agarwal1973open, alicki1976detailed}, under the name of GNS detailed balance (for Gelfand–Naimark–Segal). The work of Davies \cite{davies1974markovian,davies1976markovian} established its physical relevance, by showing how it emerges in the effective dynamics of small quantum systems when weakly coupled to a thermal bath. 
	However, both Davies' dynamics and GNS detailed balance  rely on the \emph{rotating wave} approximation (RWA), commonly used in quantum optics. This approximation applies as long as the system-bath interaction is much weaker than the relevant oscillation frequencies of the quantum system\footnote{This is connected to the energy-time uncertainty relations, as the duration of the dynamics (the inverse of the interaction strength) needs to become very large in order for the system to distinguish between different oscillation frequencies.}.
	
	While the RWA is natural for small systems, it becomes problematic in the many-body regime, since the spectra are exponentially dense in the system's size. In this context, the validity of the RWA demands the unrealistic assumption that the coupling is exponentially small with the size of the system. What this implies is that the most commonly used models of open system thermalization, namely, Davies' maps and their variants, are not relevant to even moderately large quantum systems, perhaps with the exception of systems with commuting interactions \cite{Kastoryano2016}. As such, there is a severe lack of models of open system thermalization in quantum many-body systems. Until recently, the study of these has been largely limited to simple dissipation models such as sums of single-site depolarizing or amplitude damping channels (see e.g. \cite{orus-review,fazio2025manybodyopenquantumsystems}), which do not have the many-body Gibbs state as a fixed point. This issue has motivated ideas in two complementary directions:
	
	\begin{itemize}
		\item The proposal of alternative, inequivalent, definitions for quantum detailed balance beyond GNS~\cite{fagnola2007generators, temme2010chi,amorim2021complete, scandi2023quantum}, which still guarantee convergence to thermal equilibrium.
		\item The derivation, from microscopic principles, of master equations beyond Davies' that model interactions with a bath while not relying on the RWA \cite{taj2008completely,mozgunov2020completely,nathan2020universal,Davidovi__2020,Trusheckin_2021,Kir_anskas_2018}.
	\end{itemize}
	It was previously not clear whether it is possible to merge these two ideas, into a model of interactions with an external bath that guaranteed convergence to the Gibbs state without relying on the RWA \cite{fazio2025manybodyopenquantumsystems}. In this work, we address this by deriving from first principles (that is, from a microscopic description of system-bath interaction) a quantum master equation under approximations that hold for many-body systems and that, at the same time, respects KMS detailed balance (for Kubo–Martin–Schwinger). This is a strictly weaker notion of detailed balance, that still implies convergence to thermal equilibrium \cite{fagnola2009two,amorim2021complete}. To summarise, we obtain:
	
	\textbf{Result 1.} (informal) \emph{\ms{There exists a Lindbladian that satisfies KMS detailed balanced with respect to the renormalised Hamiltonian $H_S^* := {H_S +\alpha^2 H_{LS}^{CG}(\alpha)/2}$,} and that approximates the reduced dynamics of a system coupled to a thermal bath in the weak coupling limit. It explicitly takes the form:
		\begin{align}
			\widehat{\lind}^{DB*} [\rho] 
			&=-\frac{i}{2}\sqrbra{H^{DB}_{LS},\rho}+\int_{-\infty}^{\infty}\de\omega^*\;\Big({\hat A}_\lambda(\omega^*)\rho {\hat A}^\dagger_\lambda(\omega^*)-\frac{1}{2}\{{\hat A}^\dagger_\lambda(\omega^*){\hat A}_\lambda(\omega^*),\rho\}\Big)\,,
		\end{align}
		where ${\hat A}_\lambda(\omega^*)$ are quasi-local jump operators, and $H^{DB}_{LS}$ is the Lamb shift Hamiltonian\footnote{\added{The Lamb shift Hamiltonian is often the name given to the energy shifts in a system that are induced by its external environment, following its discovery on the electron orbitals of the hydrogen atom \cite{LambShift}.}}.} 
	
	\ms{Here and in the following, we assume Einstein's summation over repeated indices.} We refer to Thm.~\ref{thm:dbLindlbadian} for the precise definitions of the quantities above.  \added{In particular, Eq. \eqref{eq:jumpOperatorsDB} shows the form of the jump operators ${\hat A}_\lambda(\omega^*)$, constructed through a convolution of the bath spectral function and a coarse-graining function chosen to be a Gaussian.} \ms{Moreover, it should be noticed that in agreement with previous literature, a dynamical renormalization of the system Hamiltonian naturally appears~\cite{winczewski2021renormalization}.}
	
	The derivation of this master equation is inspired by recent progress in dissipative quantum algorithms for Gibbs sampling \cite{chen2023quantumthermal,chen2023efficient,Ding_2025}, designed to prepare Gibbs states on a quantum computer. Interestingly, our equation can be simulated with the frameworks developed in \cite{chen2023quantumthermal,chen2023efficient,Ding_2025}, which efficiently implement Lindbladians with KMS (but no GNS) detailed balance. \added{While in those works KMS Lindbladians were only constructed as a starting point for their quantum algorithmic simulation, we here derive them from first principles under physically relevant assumptions of weak system-bath interactions.} Our derivation thus shows that the generators defined in those works have a clear physical justification, and that they can not only simulate the steady state of open-system thermalization, but also reproduce the entire dynamics. 
	
	\begin{table*}[]
		\begin{tabular}{|c|c|c|c|c|c|c|}
			\hline
			& {\;\ul CP} \quad & \makecell{\;{\ul Reduces to}\; \\ \;\;{\ul Davies}\;\;\vspace{0.4ex}} & \makecell{\;{\ul Detailed} \;\\ \;{\ul Balance}\;\vspace{0.4ex} } & {\ul No RWA} & {\;{\ul Quasi-local} } & {\ul Error estimates}                \\ \hline
			Redfield                                     &  $X$      & -         & $X$                    & \checkmark                & $X$                 & -                           \\ \hline
			Davies    ~\cite{davies1974markovian,davies1976markovian}          & \checkmark      & -                       & \checkmark (GNS)                    & $X$             & $X$                 & -                           \\ \hline 
			Taj-Rossi~\cite{taj2008completely}                 & \checkmark      & $X$                       & $X$                     & \checkmark                & \checkmark               & -                \\ \hline
			ULE~\cite{nathan2020universal}                & \checkmark      & $X$                       & $X$                      & \checkmark                & \checkmark               & $\mathcal{O}(e^{\alpha(\Gamma t)(\Gamma \tau)}-1)$                \\ \hline
			Local Davies/ULE~\cite{shiraishi2024}        & \checkmark      & $X$                     & $X$                      & $X$ / \checkmark            & \checkmark               & - 
			\\ \hline
			Coarse-grained~\cite{mozgunov2020completely} & \checkmark      & \checkmark                      & $X$                      & \checkmark                & \checkmark               & $\mathcal{O}(\alpha \sqrt{\Gamma \tau}\,(e^{\Gamma t}-1))$
			\\ \hline
			Thermod. consistent~\cite{soret2022thermodynamic} & \checkmark      & \checkmark                      & $X$ (no $H_{LS}^{CG}$)                     & \checkmark                & \checkmark               & -
			\\ \hline
			This work                                    & \checkmark      & \checkmark                     & \checkmark (KMS)                    & \checkmark               & \checkmark               & $\bigo{\alpha(\sqrt{\Gamma \tau}+\Gamma^2 \tau\, t)+(\alpha^2\,\Gamma^2\beta \,t)\,e^{(\alpha\,\Gamma\beta)^2}}$    \\ \hline
		\end{tabular}
		\caption{Comparison of the different master equations in the literature. In many previous works, the approximation errors at finite coupling have been either only numerically estimated, or largely uncontrolled. The time that appears in the estimates has been rescaled with a factor $\alpha^{-2}$ (see the discussion at the end of Sec.~\ref{sec:thermalBaths}). The error of this work has two notable features: on the one hand, it is non-zero even for $t=0$, which is the ingredient that allows us to obtain a linear scaling in time (compared to the exponential behavior of previous error estimates); on the other, a dependency on $\beta$ appears due to the introduction of detailed balance.}
	\label{tab:table1}
\end{table*}

From a technical point of view, our construction closely follows previous derivations \cite{taj2008completely,mozgunov2020completely,nathan2020universal} with additional non-trivial steps required to prove quantum detailed balance, aided by the characterization from \cite{fagnola2007generators,amorim2021complete}.  We begin by combining the methods in~\cite{taj2008completely} and~\cite{mozgunov2020completely} to get the generator of a completely positive semigroup which is approximately KMS detailed balanced. Then, inspired by~\cite{nathan2020universal}, we make a further approximation to regain exact detailed balance. Many steps are a combination of those in~\cite{taj2008completely,mozgunov2020completely} and some others in~\cite{nathan2020universal}, as it becomes apparent by comparing the jump operators in the two cases. The jump operators that generate the evolution are quasi-local for many-body systems, and the resulting master equation correctly reduces to the Davies dynamics in the limit in which the coupling is much smaller than the smallest frequency difference of the system.

We also go beyond previous works by improving on the error estimates between this effective master equation and the real evolution. In particular, we prove that the distance between them grows at most linearly with the evolution time, whereas the previous best estimates incurred in an exponentially large error \cite{mozgunov2020completely}. This gives the second main result of the paper: 

\textbf{Result 2.} (informal) \emph{The exact system dynamics $\rho_S(t)$ 
	is approximated by $e^{t\widehat{\lind}^{DB*} }(\rho_S(0))$ with an error that scales linearly in time as:
	\begin{align}
		\|    \rho_S(t) - e^{t\widehat{\lind}^{DB*} }(\rho_S(0))\|_1 \leq \bigo{\alpha\,(\Gamma\, t)\norbra{\Gamma\tau+(\alpha\,\Gamma\beta )\,e^{(\alpha\,\Gamma\beta)^2}}}\,,
	\end{align} 
	where $\alpha$ is the coupling constant, $\Gamma^{-1}$ and $\tau$ are proper timescales of the bath (defined in Eq.~\eqref{eq:GammaDef} and Eq.~\eqref{eq:TauDef}), and $\beta$ is the inverse temperature of the bath. \ms{Moreover, the fixed point of $\widehat{\lind}^{DB*}$ is close to the Gibbs state of the system Hamiltonian, up to corrections of order $\bigo{\alpha^2 (\beta \Gamma)}$.}
}

The precise statements are contained in Thm.~\ref{thm:smoothing}, Thm.~\ref{thm:coarseGrain},  Thm.~\ref{thm:DBerrorBoundIntegrated2} and Thm.~\ref{thm:interactionInsensitive}.
This result is possible thanks to the introduction of a family of states that interpolates between the exact reduced dynamics and the Lindbladian evolution we obtain from first principles (see Eq.~\eqref{eq:smoothedEv}). These states are defined by averaging the exact dynamics over a period of approximately  $T(\alpha)$, called the observation time, so that the rapidly oscillating terms in the reduced dynamics are smoothed out. The price we pay for this improvement is an additional small, time-independent error. We refer to Table~\ref{tab:table1} for a comparison of the various master equations in the literature. \ms{Finally, it should also be noticed that even if the steady state of the reduced evolution is not the exact thermal state of the system Hamiltonian, the corrections are one order larger in $\alpha$ than the one arising from the approximations on the dynamics.}



The paper is structured as follows: after introducing two main notions of detailed balance in Sec.~\ref{sec:db}, we discuss the main properties of thermal baths (and in particular the KMS condition) in Sec.~\ref{sec:thermalBaths}. Then, in Sec.~\ref{sec:CPderivation} we proceed to give a first derivation of a Lindbladian equation generating a completely positive evolution whose distance from the true state increases only linearly with time. In the same section, we also prove that, without further assumptions, this equation is already approximately KMS detailed balance. Still, in order to satisfy exact detailed balance, one needs to perform a further approximation, which is carried out in Sec.~\ref{sec:exactDB}. In the appendices, we discuss some of the more technical steps in the proofs, and other auxiliary results. 


\section{Detailed balance for quantum systems}\label{sec:db}

It is first useful to explain detailed balance in the classical regime.	
Let us consider a dissipative evolution specified by the probability vector $\vec{v}(t)$ describing the state of the system  at time $t$, and a matrix $\Phi$ inducing the stochastic dynamics:
\begin{align}
	v_i(t+1) := \sum_{j}\; \Phi_{ij}\,v_j(t)\,.\label{eq:stochasticEv}
\end{align}
The entries $\Phi_{ij}$ can be interpreted as the probability of transitioning into the {$i$-th} state when the $j$-th state is the initial condition. For this reason, we use the notation $\Phi_{ij} = P(i| j)$. Moreover, we denote the fixed point of the evolution as $ {(\gamma_S)}_i := P(i)$. Detailed balance encodes the requirement that, at equilibrium, the probability of any transition is completely compensated by its reverse. This can be written as:
\begin{align}
	P(i| j)P(j) = P(j| i)P(i)\,.\label{eq:dbClassical}
\end{align}
A simple rearranging gives the usual relation between the entries of $\Phi$:
\begin{align}
	\frac{\Phi_{ij}}{\Phi_{ji}} = \frac{ (\gamma_S)_i}{ (\gamma_S)_j}\,.\label{eq:dbClassicalRates}
\end{align}
Detailed balance is also directly related to Bayes' theorem. Dividing  both sides of the Eq.~\eqref{eq:dbClassical} by $P(i)$ gives:
\begin{align}
	P(i| j) = \frac{P(j| i)P(i)}{P(j)}\,,\label{eq:Bayes}
\end{align}
that is, the reverse transition is exactly described by Bayes' rule. The underlying  reason for this connection is simple: if we define the joint distribution $P(i,j)$ (i.e., the probability of starting in $j$ and then transitioning to $i$), time reversal symmetry corresponds to the requirement that $P(i,j) = P(j,i)$ (another rewriting of Eq.~\eqref{eq:dbClassical}). This is exactly the starting point to prove Bayes' theorem.

In the quantum case, the situation becomes more involved. The transition matrix $\Phi$ becomes a superoperator, so it is not immediately clear how to generalize Eq.~\eqref{eq:dbClassicalRates}. Even the generalizations of conditional probability and Bayes' theorem are still subject of dispute~\cite{Leifer_2013,parzygnat2023time}. Nonetheless, one can formally write a quantum version of Eq.~\eqref{eq:dbClassicalRates} as:
\begin{align}
	\Phi = \mathbb{M}_ {\gamma_S}\circ \Phi^\dagger\circ \mathbb{M}_ {\gamma_S}^{-1}\,,\label{eq:dbGeneralization}
\end{align}
where $\mathbb{M}_ {\gamma_S}$ is a superoperator that reduces to the multiplication by $ {\gamma_S}$ in the commuting case. 

The simplest choice  of $\mathbb{M}_ {\gamma_S}$ in Eq.~\eqref{eq:dbGeneralization} is given by the right multiplication operator $\RR_ {\gamma_S}[X] := X {\gamma_S} $~\cite{agarwal1973open,alicki1976detailed}. In this case the condition in  Eq.~\eqref{eq:dbGeneralization} is the aforementioned GNS detailed balance\footnote{The name is connected to the famous GNS construction from operator algebra~\cite{bratteli1997kms}.}. This is the most common definition in the literature~\cite{breuer2002theory}, and for good reasons: it is a very natural generalization of Eq.~\eqref{eq:dbClassicalRates}, it can be interpreted in terms of the scalar product induced by the usual covariance~\cite{alicki1976detailed}, and it naturally emerges in the usual microscopic derivation of the reduced Lindblad dynamics~\cite{davies1974markovian,davies1976markovian}. Still, it has one major drawback: it already implies that one has taken the rotating wave approximation.

\begin{figure}
	\centering
	\includegraphics[width=1.\linewidth]{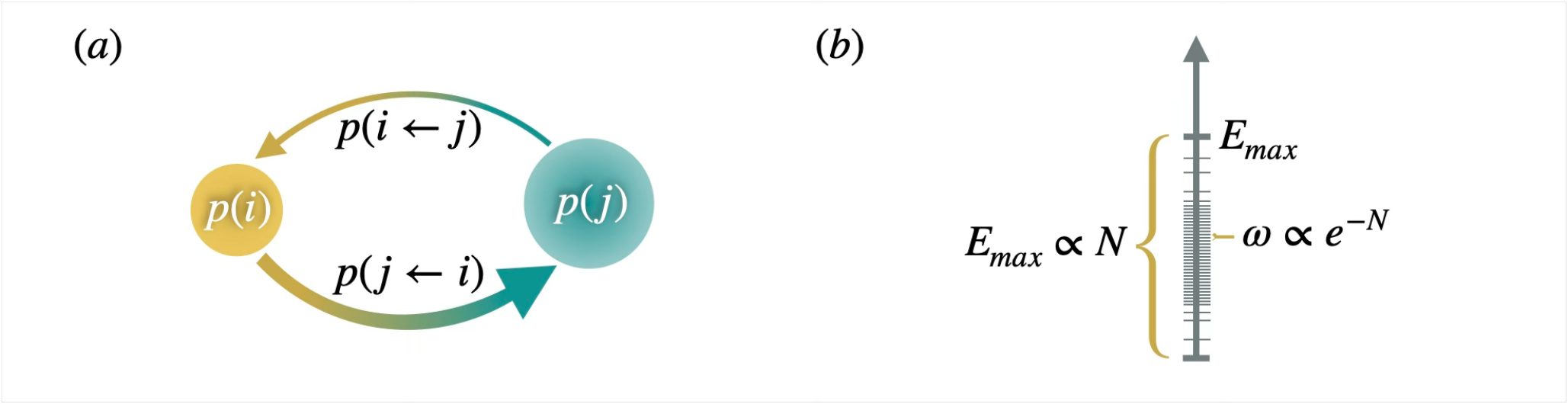}
	\caption{(a) The principle of detailed balance says that at equilibrium, if the $j$-th state is more probable than the $i$-th one, then the probability of the transition $p(j\leftarrow i)$ is equally more frequent than $p(i\leftarrow j)$; (b) Simplified depiction of the spectrum of a generic many-body Hamiltonian. Since the maximum energy is expected to scale with the size of the system (say $N$), in order to accommodate an exponential amount of different energy levels (as the dimension of the Hilbert space scales as $d^N$, where $d$ is the local dimension), the minimum spacing must become exponentially small.}
	\label{fig:dbrepresentation}
\end{figure}

To explain what this means, we first need to introduce some notation. Let us denote by $H_S$ the Hamiltonian associated to the Gibbs state $ {\gamma_S} = \frac{e^{-\beta H_S}}{\mathcal{Z}}$, and let
\begin{align}
	H_S := \sum_i E_i \ketbra{E_i}{E_i}\,,
\end{align}
be its eigendecomposition. Moreover, let $\Delta_ {\gamma_S}$ be the modular operator, defined as $\Delta_ {\gamma_S}[X] =  {\gamma_S}\,X\, {\gamma_S}^{-1}$. Since $\Delta_ {\gamma_S}$ is self-adjoint with respect to the Hilbert-Schmidt scalar product, any matrix $X$ can be decomposed as:
\begin{align}
	X = \sum_\omega\;X(\omega)\,,
\end{align}
where each $X(\omega)$ satisfies $\Delta_ {\gamma_S}[X(\omega)]=e^{-\beta\omega}X(\omega)$, i.e., it is an eigenoperator of $\Delta_ {\gamma_S}$. In coordinates, this decomposition takes the form:
\begin{align}
	X(\omega) :=\sum_{E_i,E_j}\; \delta((E_i-E_j)-\omega) \ketbra{E_i}{E_i}X\ketbra{E_j}{E_j}\,.\label{eq:freqDecomp}
\end{align}
Then, let $\{A_k\}$ be any orthonormal basis with respect to the Hilbert-Schmidt product. Any super-operator can be decomposed as:
\begin{align}
	\Phi[X]&=\sum_{k,l}\;c_{kl} \,A_l XA_k^\dagger = \sum_{k,l,\omega,\tilde{\omega}}\; c^{\omega,\tilde{\omega}}_{kl}\,A_l(\omega)  XA_k^\dagger(\tilde{\omega}) \,.\label{eq:superOperatorDec}
\end{align}
The RWA corresponds to dropping from Eq.~\eqref{eq:superOperatorDec} all the terms for which $\omega\neq\tilde{\omega}$. This is typically justified due to the fast rotation of the $\omega\neq\tilde{\omega}$ terms as compared to the overall speed of the evolution induced by $\Phi$, so that these terms quickly average out\footnote{Mathematically, this is justified with the application of the Riemann-Lebesgue lemma \cite{rivas2012open}.}. We denote the resulting evolution by $\Phi^{RW}$.

In small systems, the frequencies are of a fixed size, so the approximation is typically an excellent one in the weak coupling limit. However, as the system size increases, the spectrum of the system's Hamiltonian becomes exponentially dense with size, and so we require the unrealistic assumption that $\Phi$ is an exponentially slow process. Another way to see this is that the resulting jump operators require distinguishing different energy gaps exactly, which should involve an exponential amount of resources.
In fact, in a generic local Hamiltonian, the energy eigenvectors $\ket{E_i}$ are extremely non-local, so that the rotating wave approximated channel $\Phi^{RW}$ will also be non-local and potentially very complex, irrespective of the initial structure of $\Phi$. This seems unlikely to be nature's way of thermalizing.

What stems from this is that the RWA appears to be unphysical for large systems. As the next result shows, the same also applies to GNS detailed balance.
\begin{theorem}[GNS $\iff$ RW]\label{thm:GNSeqRW}
	Let $\Phi$ be a Hermitian preserving superoperator satisfying GNS detailed balance. Then, the rotating wave approximation is exact, that is $\Phi = \Phi^{RW}$.
\end{theorem}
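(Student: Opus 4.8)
The plan is to reduce the claim to the single commutation relation $[\Phi,\Delta_{\gamma_S}]=0$, and then to establish that commutation from GNS detailed balance together with Hermiticity preservation. To set up the reduction, I would first observe that the frequency components $X(\omega)$ of Eq.~\eqref{eq:freqDecomp} are exactly the eigenoperators of the modular operator, with $\Delta_{\gamma_S}[X(\omega)]=e^{-\beta\omega}X(\omega)$. Since $\Delta_{\gamma_S}$ is an algebra automorphism, for jump operators that are themselves eigenoperators a one-line computation (inserting $\gamma_S^{-1}\gamma_S$ between factors) gives
\begin{align}
\Delta_{\gamma_S}\!\left[A_l(\omega)\,X(\nu)\,A_k^\dagger(\tilde\omega)\right]=e^{-\beta(\nu+\omega-\tilde\omega)}\,A_l(\omega)\,X(\nu)\,A_k^\dagger(\tilde\omega)\,,
\end{align}
so the term labelled $(\omega,\tilde\omega)$ shifts the input frequency $\nu$ by $\omega-\tilde\omega$. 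The terms retained by $\Phi^{RW}$ are precisely those with $\omega=\tilde\omega$, i.e.\ the frequency-preserving ones. As $\beta>0$ makes $\omega\mapsto e^{-\beta\omega}$ injective, the eigenspaces of $\Delta_{\gamma_S}$ coincide with the frequency sectors, and therefore $\Phi=\Phi^{RW}$ holds if and only if $\Phi$ maps each sector to itself, that is, if and only if $[\Phi,\Delta_{\gamma_S}]=0$.

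For the commutation itself the idea is to play the Hilbert--Schmidt adjoint $\dagger$ against the operator conjugation $\ast\colon X\mapsto X^\dagger$. Writing $\RR_{\gamma_S}[X]=X\gamma_S$ and introducing the left multiplication $\LL_{\gamma_S}[X]=\gamma_S X$, GNS detailed balance is the statement $\Phi^\dagger=\RR_{\gamma_S}^{-1}\,\Phi\,\RR_{\gamma_S}$. I would then conjugate this identity by $\ast$. Using that $\Phi$ is Hermiticity preserving, hence $\ast\Phi\ast=\Phi$; that $\Phi^\dagger$ inherits this property, so $\ast\Phi^\dagger\ast=\Phi^\dagger$; and the elementary identity $\ast\,\RR_{\gamma_S}\,\ast=\LL_{\gamma_S}$ (which holds because $\gamma_S=\gamma_S^\dagger$), one obtains the second representation $\Phi^\dagger=\LL_{\gamma_S}^{-1}\,\Phi\,\LL_{\gamma_S}$. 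Equating the two expressions for $\Phi^\dagger$ and using that $\LL_{\gamma_S}$ and $\RR_{\gamma_S}$ commute gives $\Delta_{\gamma_S}^{-1}\,\Phi\,\Delta_{\gamma_S}=\Phi$, since $\RR_{\gamma_S}\LL_{\gamma_S}^{-1}=\Delta_{\gamma_S}^{-1}$. This is the desired commutation, which together with the reduction above yields $\Phi=\Phi^{RW}$.

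I expect the only genuine difficulty to be bookkeeping rather than conceptual: one must keep the conventions for $\dagger$, $\ast$, and the placement of $\gamma_S$ perfectly consistent, and in particular verify carefully that $\Phi^\dagger$ is Hermiticity preserving whenever $\Phi$ is, and that $\ast$-conjugation interchanges left and right multiplication. As a guard against sign slips I would rerun the argument in the energy eigenbasis: GNS detailed balance together with Hermiticity yields the matrix-element relation $\Phi_{(pq),(rs)}=\tfrac{g_p}{g_r}\,\Phi_{(sr),(qp)}$ with $g_a=e^{-\beta E_a}/\mathcal{Z}$, and applying it twice returns $\Phi_{(pq),(rs)}\big(1-g_pg_s/(g_qg_r)\big)=0$. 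This forces $\Phi_{(pq),(rs)}=0$ unless $E_p-E_q=E_r-E_s$, i.e.\ unless the output and input frequencies agree, recovering frequency preservation and hence $\Phi=\Phi^{RW}$. Notably, no positivity of $\Phi$ is needed anywhere; Hermiticity preservation is exactly the ingredient that turns the single detailed-balance relation into commutation with $\Delta_{\gamma_S}$.
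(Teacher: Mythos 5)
Your proposal is correct and follows essentially the same route as the paper: both arguments use GNS detailed balance together with Hermiticity preservation to show that $\Phi$ commutes with the modular operator $\Delta_{\gamma_S}$, and then read off $\omega=\tilde\omega$ from the uniqueness of the frequency decomposition (your $\ast$-conjugation of the GNS identity is just a repackaging of the paper's direct chain of equalities, which inserts $X^\dagger$ in exactly the same way). The coordinate cross-check at the end is a nice sanity test but redundant, and your explicit remarks that $\Phi^\dagger$ inherits Hermiticity preservation and that injectivity of $\omega\mapsto e^{-\beta\omega}$ is what identifies eigenspaces with frequency sectors are correct and slightly more careful than the paper's presentation.
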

\begin{proof} An operator is Hermitian preserving if $\Phi[X^\dagger] =(\Phi[X])^\dagger$. Then, by applying Eq.~\eqref{eq:dbGeneralization} repeatedly with $\mathbb{M}_ {\gamma_S} := \RR_ {\gamma_S}$, we have:
	\begin{align}
		\Phi\Delta^{-1}_ {\gamma_S}&[X] = \Phi[ {\gamma_S}^{-1}\,X\, {\gamma_S}] = \Phi^\dagger[(X^\dagger {\gamma_S}^{-1})^\dagger]\, {\gamma_S}= (\Phi^\dagger[X^\dagger {\gamma_S}^{-1}])^\dagger\, {\gamma_S}= (\Phi[X^\dagger] {\gamma_S}^{-1})^\dagger\, {\gamma_S}=\Delta^{-1}_ {\gamma_S}\Phi[X]\,,
	\end{align}
	or, alternatively, $\Phi = \Delta_ {\gamma_S}\Phi\Delta^{-1}_ {\gamma_S}$. Plugging this equality back in Eq.~\eqref{eq:superOperatorDec} we obtain:
	\begin{align}
		\Phi[X]&=\sum_{k,l,\omega,\tilde{\omega}}\; c^{\omega,\tilde{\omega}}_{kl}\,( {\gamma_S} A_l(\omega) {\gamma_S}^{-1})  X( {\gamma_S}^{-1} A_k(\tilde{\omega}) {\gamma_S})^\dagger =\sum_{k,l,\omega,\tilde{\omega}}\; c^{\omega,\tilde{\omega}}_{kl}e^{-\beta(\omega-\tilde{\omega})}\,A_l(\omega)  XA_k^\dagger(\tilde{\omega}) \,.\label{eq:proofGNS}
	\end{align}
	Since Eq.~\eqref{eq:proofGNS} is an eigendecomposition into frequencies, it has to hold that $c^{\omega,\tilde{\omega}}_{kl} = c^{\omega,\tilde{\omega}}_{kl}e^{-\beta(\omega-\tilde{\omega})}$, which is satisfied only if $\omega = \tilde{\omega}$. This proves the claim.
\end{proof}
This situation suggests going back to Eq.~\eqref{eq:dbGeneralization} and changing the definition of $\mathbb{M}_ {\gamma_S}$. A natural candidate is the symmetrized  version of the right multiplication operator $\mathbb{M}_ {\gamma_S} = \J_ {\gamma_S}$, where $\J_ {\gamma_S}[X] = \sqrt{ {\gamma_S}}\,X\sqrt{ {\gamma_S}}$. In this case Eq.~\eqref{eq:dbGeneralization} corresponds to KMS detailed balance, \ms
{sometimes also written in the equivalent form $\Phi( \sqrt{ {\gamma_S}}\,X\sqrt{ {\gamma_S}}) =  \sqrt{ {\gamma_S}}\,\Phi^\dagger(X)\sqrt{ {\gamma_S}}$}. This is the simplest generalization of $\mathbb{M}_ {\gamma_S}$ that does not reduce to the GNS detailed balance. Indeed, it has been shown that for $\mathbb{M}_ {\gamma_S}[X] =  {\gamma_S}^{\alpha} \, X \,  {\gamma_S}^{1-\alpha}$ for any $\alpha\in[0,1]$, the corresponding definition of detailed balance reduces to GNS, unless ${\alpha=1/2}$~\cite{fagnola2009two}. In this particular case the RWA is not needed, as can be seen from the following characterization:
\begin{theorem}[Structural characterization of KMS channels]\label{thm:KMSstructuralcharacterization}
	Let $\Phi$ be a Hermitian preserving superoperator satisfying KMS detailed balance. Then, its decomposition into frequencies satisfies:
	\begin{align}
		c^{\omega,\tilde{\omega}}_{kl} =\overline{ c^{-\omega,-\tilde{\omega}}_{kl}}e^{-\beta\frac{(\omega+\tilde{\omega})}{2}}\,.\label{eq:generalKMSDB}
	\end{align}
	Moreover, if $\Phi$ is GNS detailed balanced, it automatically is KMS detailed balanced as well.
\end{theorem}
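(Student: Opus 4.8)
The plan is to follow the template of the proof of Thm.~\ref{thm:GNSeqRW}, replacing the right-multiplication $\RR_{\gamma_S}$ by the symmetrized operator $\J_{\gamma_S}$, and then to read off the constraint on the coefficients by comparing frequency eigendecompositions. For the first identity I would argue directly at the level of the components $c^{\omega,\tilde\omega}_{kl}$, whereas for the second claim I would give a basis-free operator argument that bypasses the coefficients altogether.

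For the first part, I start by computing the Hilbert--Schmidt adjoint of $\Phi$ in the frequency representation of Eq.~\eqref{eq:superOperatorDec}. Using cyclicity of the trace, $\langle Y,\Phi[X]\rangle=\tr[Y^\dagger\Phi[X]]$ gives
\begin{align}
	\Phi^\dagger[Y]=\sum_{k,l,\omega,\tilde\omega}\overline{c^{\omega,\tilde\omega}_{kl}}\;A_l^\dagger(\omega)\,Y\,A_k(\tilde\omega)\,.
\end{align}
I then insert this into the KMS condition $\Phi=\J_{\gamma_S}\circ\Phi^\dagger\circ\J_{\gamma_S}^{-1}$, with $\J_{\gamma_S}^{-1}[X]=\gamma_S^{-1/2}X\gamma_S^{-1/2}$, and commute the factors $\gamma_S^{\pm1/2}$ through the eigenoperators. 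The crucial bookkeeping is that $A(\omega)^\dagger$ is an eigenoperator of $\Delta_{\gamma_S}$ of frequency $-\omega$, so that $\gamma_S^{1/2}A_l^\dagger(\omega)\gamma_S^{-1/2}=e^{\beta\omega/2}A_l^\dagger(\omega)$ and $\gamma_S^{-1/2}A_k(\tilde\omega)\gamma_S^{1/2}=e^{\beta\tilde\omega/2}A_k(\tilde\omega)$, which produces
\begin{align}
	\J_{\gamma_S}\Phi^\dagger\J_{\gamma_S}^{-1}[X]=\sum_{k,l,\omega,\tilde\omega}\overline{c^{\omega,\tilde\omega}_{kl}}\;e^{\beta(\omega+\tilde\omega)/2}\,A_l^\dagger(\omega)\,X\,A_k(\tilde\omega)\,.
\end{align}
Choosing the orthonormal basis $\{A_k\}$ to consist of self-adjoint operators keeps the index structure simple: daggering then maps the frequency-$\omega$ component of $A_l$ to its frequency-$(-\omega)$ component \emph{with the same index}, i.e.\ $A_l^\dagger(\omega)=A_l(-\omega)$. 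After relabelling $\omega\to-\omega$, $\tilde\omega\to-\tilde\omega$ the right-hand side is brought to the canonical form of Eq.~\eqref{eq:superOperatorDec}, and matching the two frequency eigendecompositions term by term (exactly as in the proof of Thm.~\ref{thm:GNSeqRW}) yields the claimed relation $c^{\omega,\tilde\omega}_{kl}=\overline{c^{-\omega,-\tilde\omega}_{kl}}\,e^{-\beta(\omega+\tilde\omega)/2}$.

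For the second part I would avoid coefficients entirely and argue at the level of superoperators. From the proof of Thm.~\ref{thm:GNSeqRW}, GNS detailed balance (for a Hermiticity-preserving $\Phi$) is equivalent to $\Phi=\Delta_{\gamma_S}\Phi\Delta_{\gamma_S}^{-1}$, i.e.\ $[\Phi,\Delta_{\gamma_S}]=0$. Since $\Delta_{\gamma_S}$ is self-adjoint and positive for the Hilbert--Schmidt product, taking the adjoint of this commutator gives $[\Phi^\dagger,\Delta_{\gamma_S}]=0$, and by functional calculus $\Phi^\dagger$ then also commutes with $\Delta_{\gamma_S}^{1/2}$. The last ingredient is the algebraic identity $\J_{\gamma_S}=\RR_{\gamma_S}\,\Delta_{\gamma_S}^{1/2}=\Delta_{\gamma_S}^{1/2}\,\RR_{\gamma_S}$ (both sides send $X\mapsto\gamma_S^{1/2}X\gamma_S^{1/2}$), which lets me rewrite $\J_{\gamma_S}\Phi^\dagger\J_{\gamma_S}^{-1}=\RR_{\gamma_S}\big(\Delta_{\gamma_S}^{1/2}\Phi^\dagger\Delta_{\gamma_S}^{-1/2}\big)\RR_{\gamma_S}^{-1}=\RR_{\gamma_S}\Phi^\dagger\RR_{\gamma_S}^{-1}=\Phi$, the last equality being precisely the GNS condition. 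Hence GNS detailed balance implies the KMS condition, as claimed.

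The step I expect to be the main obstacle is the coefficient matching in the first part. The operators $\{A_l(\omega)\}$ are overcomplete within a fixed frequency sector, so identifying coefficients is not \emph{a priori} legitimate; it is justified by noting that distinct pairs $(\omega,\tilde\omega)$ label orthogonal joint eigenspaces of two commuting modular super-operations, obtained by conjugating the input and output of a superoperator by half-integer powers of $\gamma_S$, whose joint spectrum separates $\omega$ and $\tilde\omega$. This is the same implicit mechanism used when matching the frequency eigendecomposition in Thm.~\ref{thm:GNSeqRW}, and it allows the comparison to be carried out sector by sector. The remaining delicate point is purely the sign bookkeeping of how $\dagger$ acts on frequency components, which the self-adjoint basis choice renders transparent; Hermiticity preservation of $\Phi$ is a standing assumption that enters the second part through Thm.~\ref{thm:GNSeqRW} and guarantees that $\Phi^\dagger$ is itself Hermiticity preserving with a Hermitian matrix of coefficients.
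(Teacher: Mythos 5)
Your proposal is correct. The first half is essentially the paper's own argument: compute the Hilbert--Schmidt adjoint $\Phi^\dagger[Y]=\sum\overline{c^{\omega,\tilde\omega}_{kl}}A_l^\dagger(\omega)YA_k(\tilde\omega)$, conjugate with $\J_{\gamma_S}$, use that $\Delta_{\gamma_S}^{1/2}$ acts on the frequency eigenoperators by $e^{\pm\beta\omega/2}$, relabel $\omega\to-\omega$, $\tilde\omega\to-\tilde\omega$, and match coefficients. Your explicit choice of a self-adjoint basis to make the identity $A_l^\dagger(\omega)=A_l(-\omega)$ transparent is a sensible way to pin down the dagger bookkeeping that the paper handles via the (slightly loosely stated) identity $X(\omega)^\dagger=X(-\omega)$; just note that one should also check the chosen self-adjoint basis still has the property that $\{A_k(\omega)\}$ spans, as the paper requires, and that the coefficient matching within a fixed $(\omega,\tilde\omega)$ sector carries the same implicit linear-independence assumption in both your argument and the paper's --- you flag this honestly. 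Where you genuinely diverge is the second claim: the paper verifies it at the level of coefficients, observing that GNS forces $c^{\omega,\tilde\omega}_{kl}=0$ for $\omega\neq\tilde\omega$ and that the diagonal of Eq.~\eqref{eq:generalKMSDB} then reproduces the known GNS characterization from App.~\ref{app:channels}. You instead give a basis-free operator argument: from Thm.~\ref{thm:GNSeqRW}, GNS implies $[\Phi,\Delta_{\gamma_S}]=0$, hence $[\Phi^\dagger,\Delta_{\gamma_S}^{1/2}]=0$ by self-adjointness of $\Delta_{\gamma_S}$ and functional calculus, and the factorization $\J_{\gamma_S}=\RR_{\gamma_S}\Delta_{\gamma_S}^{1/2}=\Delta_{\gamma_S}^{1/2}\RR_{\gamma_S}$ collapses the Petz conjugation to the GNS one. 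This is cleaner in that it avoids invoking the coefficient characterization of GNS maps and makes the implication manifestly independent of the choice of basis; the paper's route has the minor advantage of exhibiting the explicit diagonal relation $c^{\omega,\omega}_{kl}=\overline{c^{-\omega,-\omega}_{kl}}e^{\beta\omega}$ that is reused elsewhere. Both are valid.
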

\begin{proof}
	Eq.~\eqref{eq:generalKMSDB} can be verified by direct calculation. By expressing the right-hand side of Eq.~\eqref{eq:dbGeneralization} in the frequency representation we obtain:
	\begin{align}
		(\J_ {\gamma_S}\circ &\Phi^\dagger\circ \J_ {\gamma_S}^{-1})[X]=\sum_{k,l,\omega,\tilde{\omega}}\; \overline{c^{\omega,\tilde{\omega}}_{kl}}\,(\Delta_ {\gamma_S}^{\frac{1}{2}}[A^\dagger_l(\omega)])  X(\Delta_ {\gamma_S}^{\frac{1}{2}}[A^\dagger_k(\tilde\omega)])^\dagger=\sum_{k,l,\omega,\tilde{\omega}}\; \overline{c^{-\omega,-\tilde{\omega}}_{kl}} e^{-\beta\frac{(\omega+\tilde{\omega})}{2}}A_l(\omega)  XA_k^\dagger(\tilde{\omega}) \,,\label{eq:thm2eq19}
	\end{align}
	where in the first equality we used the fact that the dual of $\Phi$ is $\Phi^\dagger:=\sum_{k,l,\omega,\tilde{\omega}}\; \overline{c^{\omega,\tilde{\omega}}_{kl}}\,A^\dagger_l(\omega) XA_k(\tilde\omega)$, and we then applied the identity $X(\omega)^\dagger = X(-\omega)$, which can be verified from Eq.~\eqref{eq:freqDecomp}. Since the operators $\{A_k\}$ are a basis, we can equate the coefficients in Eq.~\eqref{eq:thm2eq19} with the ones of $\Phi$, which yields Eq.~\eqref{eq:generalKMSDB}. Finally, it should be noticed that, if the GNS condition holds, then $c^{\omega,\tilde{\omega}}_{kl} = 0$ unless, $\omega = \tilde\omega$. Then, $c^{\omega,{\omega}}_{kl} =\overline{ c^{-\omega,-{\omega}}_{kl}}e^{\beta\omega}$, which coincides with the characterization of GNS detailed balanced maps (see App.~\ref{app:channels}).
\end{proof}

With this choice of detailed balance, Eq.~\eqref{eq:dbGeneralization} becomes directly related to the Petz recovery map~\cite{petz1986sufficient}, a central concept in quantum information~\cite{junge2018universal,Sutter_2018}, which has been suggested as a possible generalization of Bayes' theorem to the quantum regime~\cite{aw2021fluctuation}. For a channel $\Phi$ with fixed point $\gamma_S$, its Petz recovery map is defined as:
\begin{align}
	\tilde{\Phi}_P :=  \J_ {\gamma_S}\circ \Phi^\dagger\circ \J_ {\gamma_S}^{-1}\,.
\end{align}
Thus the KMS detailed balance condition in Eq.~\eqref{eq:dbGeneralization} is the requirement that $\tilde{\Phi}_P = \Phi$, that the ``reverse channel" induced by the Petz recovery map is identical to the forward evolution channel. This offers a straightforward interpretation to Eq.~\eqref{eq:dbGeneralization} that is lacking for the GNS condition, since $\RR_ {\gamma_S}\circ \Phi^\dagger\circ\RR_ {\gamma_S}^{-1}$ is not a channel. 

The KMS is part of a larger family of Fisher detailed balance conditions further generalizing Eq.~\eqref{eq:dbGeneralization}, in which $\mathbb{M}_ {\gamma_S}$ is chosen to be a Fisher operator~\cite{temme2010chi, scandi2023quantum}. Within this family, the Petz recovery map is the only one that satisfies at the same time complete positivity, and the property $\widetilde{(\Phi\circ\Psi)}_P = \widetilde{\Psi}_P \circ\widetilde{\Phi}_P $, which is fundamental when treating Markovian semigroups (see Box~8 and 9 in~\cite{scandi2023quantum})\footnote{GNS detailed balance also implies this property.}.


So far we have discussed the detailed balance condition for channels. The same considerations apply to the generators $\lind$ of the Markovian dynamics defined by $\Phi_t = e^{t\lind}$, for which Eq.~\eqref{eq:dbGeneralization} implies that:
\begin{align}
	\lind &= \lim_{t\rightarrow0} \;\frac{\Phi_t-\idO}{t} = \lim_{t\rightarrow0}\; \J_ {\gamma_S}\circ \norbra{\frac{\Phi_t-\idO}{t}}^\dagger\circ \J_ {\gamma_S}^{-1}=\J_ {\gamma_S}\circ \lind^\dagger\circ \J_ {\gamma_S}^{-1}\,.\label{eq:lindKMSdef}
\end{align}
That is, KMS detailed balance holds also at the level of the generator. Thanks to the special behavior of the Petz recovery map under compositions, we can also prove the opposite direction. Notice that, for any operator $\Psi$, if $\Psi = \tilde\Psi_P$, then $\Psi^n = \tilde\Psi_P^n$. Then, suppose that Eq.~\eqref{eq:lindKMSdef} holds. Since we can rewrite $\Phi_t$ as:
\begin{align}
	\Phi_t = \lim_{n\rightarrow \infty}\;\norbra{\idO + \frac{t}{n}\lind}^n\,,
\end{align}
this shows that $\Phi_t$ is also KMS detailed balanced, so for Markovian dynamics with a time-independent generator, 
KMS detailed balance can be equivalently defined at the level of the Lindbladian, for which we have the analogous structural characterization:
\begin{theorem}[Structural characterization of KMS Lindbladians]\label{thm:KMSLindstructuralcharacterization}
	Let $\lind$ be the generator of a trace preserving evolution of the form $\Phi_t = e^{t\lind}$. Assume that $\lind$ satisfies KMS detailed balance. Then, it can be decomposed as:
	\begin{align}
		&\lind[\rho] = -i[H_{LS}^0 + H_{LS}, \rho] + \sum_{k,l,\omega,\tilde{\omega}}\; \gamma^{\omega,\tilde{\omega}}_{kl}\,\norbra{A_l(\omega)  \rho A_k^\dagger(\tilde{\omega}) -\frac{1}{2}\{A_k^\dagger(\tilde{\omega})A_l(\omega),\rho\}}\,,
	\end{align}
	where $H_{LS}^0$ and $H_{LS}$ are two Hermitian operators such that $[H_{LS}^0,\gamma_S]=0$, and $H_{LS} =\frac{1}{2}\sum_{\omega\neq\tilde{\omega}} S^{\omega,\tilde{\omega}}_{kl} A_k^\dagger(\tilde{\omega})A_l(\omega)$, where the coefficients $\gamma^{\omega,\tilde{\omega}}_{kl}$ and $S^{\omega,\tilde{\omega}}_{kl}$ satisfy:
	\begin{align}\label{eq:DBdefinition}
		\begin{cases}
			\gamma_{kl}^{\omega,\tilde{\omega}} = \overline{\gamma_{kl}^{-\omega,-\tilde{\omega}}}\,e^{-\beta(\frac{\omega+\tilde{\omega}}{2})} \\
			S_{kl}^{\omega,\tilde{\omega}} = i\tanh\norbra{\beta\norbra{\frac{\omega-\tilde{\omega}}{4}}}\gamma_{kl}^{\omega,\tilde{\omega}}	
		\end{cases}\;\;\;.
	\end{align}
\end{theorem}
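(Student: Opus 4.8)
The plan is to start from the most general admissible generator and impose the KMS condition \eqref{eq:lindKMSdef} in the same spirit as the channel case of Thm.~\ref{thm:KMSstructuralcharacterization}. Since $\Phi_t=e^{t\lind}$ is trace and Hermiticity preserving, $\lind$ admits a GKSL-type decomposition with a Hermitian (not necessarily positive) coefficient matrix; expanding each jump operator into its frequency components $A_k=\sum_\omega A_k(\omega)$ this reads
\begin{align}
	\lind[\rho]=\Psi[\rho]+G\rho+\rho G^\dagger\,,\qquad \Psi[\rho]=\sum_{k,l,\omega,\tilde\omega}\gamma^{\omega,\tilde\omega}_{kl}\,A_l(\omega)\,\rho\,A_k^\dagger(\tilde\omega)\,,
\end{align}
where trace preservation fixes $G=-iH-\tfrac12 M$ with $H$ Hermitian and $M=\sum_{k,l,\omega,\tilde\omega}\gamma^{\omega,\tilde\omega}_{kl}A_k^\dagger(\tilde\omega)A_l(\omega)$. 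Hermiticity preservation makes the coefficients Hermitian, so $M$ is Hermitian and, by \eqref{eq:freqDecomp}, $M(-\nu)^\dagger=M(\nu)$ for its frequency blocks; note also that each product $A_k^\dagger(\tilde\omega)A_l(\omega)$ carries the single frequency $\nu=\omega-\tilde\omega$.

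Next I would apply $\J_{\gamma_S}\circ(\argdot)^\dagger\circ\J_{\gamma_S}^{-1}$ to $\lind$ and split the image into its ``sandwiching'' and ``one-sided'' contributions. The sandwiching part is treated exactly as in \eqref{eq:thm2eq19}: the Petz dual of $\Psi$ equals $\sum\overline{\gamma^{-\omega,-\tilde\omega}_{kl}}\,e^{-\beta(\omega+\tilde\omega)/2}A_l(\omega)\rho A_k^\dagger(\tilde\omega)$, and equating it with $\Psi$ reproduces the first line of \eqref{eq:DBdefinition}. The one-sided part $G\rho+\rho G^\dagger$ maps to $\tilde G\rho+\rho\tilde G^\dagger$ with $\tilde G=\gamma_S^{1/2}G^\dagger\gamma_S^{-1/2}$. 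Because a left-multiplication and a right-multiplication superoperator agree only on multiples of $\idO$, equating the two one-sided parts forces $G=\tilde G$ up to an imaginary multiple of $\idO$ (a harmless constant shift of $H$), i.e. $G\gamma_S^{1/2}=(G\gamma_S^{1/2})^\dagger$.

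The core of the argument is to convert this last constraint into the second line of \eqref{eq:DBdefinition}. Writing $H=\sum_\nu H(\nu)$ and using $Y(\nu)\,\gamma_S^{1/2}=e^{\beta\nu/2}\gamma_S^{1/2}\,Y(\nu)$ for any frequency-$\nu$ operator, together with $M(-\nu)^\dagger=M(\nu)$, the Hermiticity condition $G\gamma_S^{1/2}=(G\gamma_S^{1/2})^\dagger$ decouples frequency by frequency into
\begin{align}
	-iH(\nu)\norbra{1+e^{-\beta\nu/2}}=\tfrac12\,M(\nu)\norbra{1-e^{-\beta\nu/2}}\,,
\end{align}
so that $H(\nu)=\tfrac{i}{2}\tanh(\beta\nu/4)\,M(\nu)$. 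The frequency-zero block $H(0)$ commutes with $\gamma_S$ and is identified with $H^0_{LS}$, while the $\nu\neq0$ blocks assemble into $H_{LS}=\sum_{\nu\neq0}H(\nu)$; comparing this with $H_{LS}=\tfrac12\sum_{\omega\neq\tilde\omega}S^{\omega,\tilde\omega}_{kl}A_k^\dagger(\tilde\omega)A_l(\omega)$ and recalling that $A_k^\dagger(\tilde\omega)A_l(\omega)$ has frequency $\omega-\tilde\omega$ gives $S^{\omega,\tilde\omega}_{kl}=i\tanh\!\big(\beta(\omega-\tilde\omega)/4\big)\,\gamma^{\omega,\tilde\omega}_{kl}$.

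I expect the main obstacle to be the clean separation of the sandwiching and one-sided parts and the attendant $\idO$-freedom, which is what allows $H$ to be chosen Hermitian and is cleanest when the jump operators are taken orthogonal to $\idO$. The rest is bookkeeping: tracking the frequency carried by each product $A_k^\dagger(\tilde\omega)A_l(\omega)$, and checking that the elementary identity $\tfrac{1-e^{-x}}{1+e^{-x}}=\tanh(x/2)$ with $x=\beta\nu/2$ produces exactly the argument $\beta(\omega-\tilde\omega)/4$ in the statement.
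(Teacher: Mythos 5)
Your proposal is correct and follows essentially the same route as the paper's proof in App.~A: transfer the sandwiching condition from the channel-level characterization, reduce the remaining constraint to $G=\Delta_{\gamma_S}^{1/2}[G^\dagger]$, and solve it frequency block by frequency block using the trace-preservation form $K=-\tfrac12 M$ to extract the $\tanh(\beta\nu/4)$ factor. Your explicit manipulation of $G\gamma_S^{1/2}$ is just a rewriting of the paper's superoperator $(\idO-\Delta_{\gamma_S}^{1/2})(\idO+\Delta_{\gamma_S}^{1/2})^{-1}$ applied to $K_\perp$, and your observation about the residual $i c\,\idO$ freedom in $G$ is a small but harmless refinement the paper leaves implicit.
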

We refer to App.~\ref{app:channels} for the proof, similar to that of Thm.~\ref{thm:KMSstructuralcharacterization}. \ms{Since KMS Lindbladians do not require the RWA, which is a restrictive assumption for many-body system and bath interactions, they are a natural candidate for describing the effective dynamics of open system thermalization}. We show this is the case in Sec.~\ref{sec:CPderivation} and Sec.~\ref{sec:exactDB}. 



\section{Framework and properties of the bath}\label{sec:thermalBaths}

In order to model the system-bath interaction we consider a Hamiltonian of the form $H := H_S+\alpha \,V+ H_B$, where $V$ is an interaction term which can be generically decomposed as:
\begin{align}
	V := \sum_k \; A_k\otimes B_k\,,\label{eq:interactionDec}
\end{align}
where $A_k$ are operators acting on the system only, and $B_k$ are (possibly unbounded) operators acting on the bath. We normalize the jump operators $\{A_k\}$ so that $\|A_k\|_\infty = 1$ for all $k$. 

In the interaction picture we substitute any operator $X$ with its time evolved $X(t) := e^{i( H_S+ H_B)t}X\,e^{-i( H_S+ H_B)t}$, so the dynamics is:
\begin{align}
	\partial_t\rho_{SB}(t) = -i\alpha[V(t),\rho_{SB}(t)]\,.\label{eq:diffEqOfMotion}
\end{align}
We denote the formal solution of the equation above by $\mathcal{V}(t_1,t_0) :=  \timeorderedexp\norbra{-i\alpha \int_{t_0}^{t_1}\de s \; V(s)}$, where $\timeorderedexp$ denotes the time ordered exponential constructed from right to left. Our aim is to characterize the evolution of the reduced state, that is:
\begin{align}
	\partial_t\rho_{S}(t) = -i\alpha\,\TrR{B}{[V(t),\rho_{SB}(t)]}\,.\label{eq:diffEqOfMotionRed}
\end{align}
To proceed forward, we require the following three assumptions:
\begin{enumerate}
	\item \textbf{\underline{Product initial conditions:}} The initial state of the evolution is $\rho_{SB}(0) := \rho_{S}(0)\otimes\gamma_B$ (where $\gamma_B$ is the Gibbs state of the bath). This is equivalent to setting the origin of the time to the moment in which the system starts interacting with the thermal bath.\label{it:productInCond}
	\item  \textbf{\underline{No energy shift in the bath on average:}} Each of the interaction term satisfies $\TrR{B}{B_k\gamma_B} = 0$. If this is not the case, we can shift the system Hamiltonian and the interaction as:
	\begin{align}
		\begin{cases}
			\tilde{B_k} = B_k - \TrR{B}{B_k\gamma_B} \idO_B\\
			\tilde{H}_S = H_S + \alpha \norbra{A_k\otimes\TrR{B}{B_k\gamma_B}\idO_B}
		\end{cases}\,,
	\end{align}
	so that this condition holds.\label{it:noEnergyShift}
	\item \textbf{\underline{Gaussianity of the thermal bath:}}		\label{it:gaussianBath} The thermal bath is Gaussian, that is, for any set of operators $\{B_k\}$, the expectation value of their product satisfies Wick's theorem:
	\begin{align}
		&\Tr{B_1B_2\dots B_n \gamma_B} =\sum_{j=2}^n \; \Tr{B_1B_j\gamma_B}\Tr{B_2\dots B_{j-1} B_{j+1}\dots B_n \gamma_B}\,.\label{eq:WicksTheorem}
	\end{align}
\end{enumerate}
While the requirements~\ref{it:productInCond} and~\ref{it:noEnergyShift} are generic, condition~\ref{it:gaussianBath} does restrict the scope of the derivation. Nonetheless, it is typically well justified for large, non-critical baths and for this reason it has become customary in the literature~\cite{mozgunov2020completely,nathan2020universal}.

Under these three assumptions, it is possible to rewrite Eq.~\eqref{eq:diffEqOfMotionRed} in the form (see App.~\ref{app:exactDynamics} for the derivation):
\begin{align}
	\partial_t\rho_{S}(t) = \alpha^2\int_{0}^{t}\de s\;\big (C_{kl}(t-s)&\,\TrR{B}{\sqrbra{\mathcal{V}(t,s) A_l(s) \rho_{SB}(s)\mathcal{V}^\dagger(t,s), A_k(t)}}+\nonumber
	\\
	&+C_{kl}(s-t)\,\TrR{B}{\sqrbra{ A_l(t),\mathcal{V}(t,s) \rho_{SB}(s) A_k(s)\mathcal{V}^\dagger(t,s)}}\big)\,,\label{eq:exactReducedDynamics}
\end{align}
where, here and in the following, we implicitly sum over repeated indexes $k,l$. Here we introduced the bath correlation function:
\begin{align}
	C_{kl} (t):= \TrR{B}{B_k(t) B_l\gamma_B}\,.
\end{align}
Since this quantity is central in describing the dynamics of the reduced state, we list some of its important properties. First, it behaves under complex conjugation as:
\begin{align}
	\overline{C_{kl}(t)} =  \TrR{B}{\gamma_BB_lB_k(t)} = \TrR{B}{B_l(-t)B_k\gamma_B} = C_{lk}(-t)\,,\label{eq:symmetryBathcorrelation}
\end{align}
that is, in matrix form, it holds that $\textbf{C}(t) = \textbf{C}(-t)^\dagger$. When considering its Fourier transform (which takes the name of bath spectral density):
\ms
{\begin{align}
		\widehat{C}_{kl}(\omega) := \int_{-\infty}^{\infty}\de t\; e^{it\omega} C_{kl}(t) = \frac{1}{2\pi}\TrR{B}{B_k(-\omega)B_l\gamma_B}\,,\label{eq:spectralDensity}
\end{align}}
the property in Eq.~\eqref{eq:symmetryBathcorrelation} implies that:
\begin{align}
	\widehat{C}_{kl}(\omega) =  \int_{-\infty}^{\infty}\de t\; e^{it\omega} \overline{C_{lk}(-t)} = \overline{\int_{-\infty}^{\infty}\de t\; e^{it\omega} C_{lk}(t)} = \overline{\widehat{C}_{lk}(\omega)}\,,\label{eq:correlationFunctionHermitian}
\end{align}
that is, $\widehat{\textbf{C}}(\omega)^\dagger=\widehat{\textbf{C}}(\omega)$, so the bath spectral density is a Hermitian matrix. Leveraging this fact, it can even be shown that $\widehat{\textbf{C}}(\omega)\geq 0$ (see App.~\ref{app:bath}). 

To prove Eq.~\eqref{eq:symmetryBathcorrelation}, it is sufficient to assume that $[H_B, \gamma_B]=0$. If in addition $\gamma_B$ is thermal, it also holds that:
\begin{align}
	C_{kl}(t) &= \TrR{B}{B_kB_l(-t)\gamma_B} = \frac{1}{\mathcal{Z}_B}\TrR{B}{(e^{\beta H_B}B_l(-t)e^{-\beta H_B})B_ke^{-\beta H_B}} =\\
	&= \TrR{B}{B_l(-t-i\beta)B_k\gamma_B} = C_{lk}(-t-i\beta)\,.
\end{align}
This property is often referred to as the \emph{KMS condition} (not to be confused with KMS detailed balance), and is equivalent to the fact that $\gamma_B$ is a Gibbs state~\cite{bratteli1997kms}. In the frequency domain, it takes a particularly simple form:
\ms
{\begin{align}
		\widehat{C}_{kl}(\omega) &= \frac{1}{2\pi}\TrR{B}{B_k(-\omega)B_l\gamma_B} =  \frac{1}{2\pi}\TrR{B}{B_k\gamma_B(\gamma_B^{-1}B_l(\omega)\gamma_B)} = \frac{e^{\beta\omega}}{2\pi}\TrR{B}{B_l(\omega)B_k\gamma_B} =e^{\beta\omega}\widehat{C}_{lk}(-\omega) \,,\label{eq:corrKMS}
	\end{align}
	where we used the definition in Eq.~\eqref{eq:spectralDensity}, and the relation $\Delta_ {\gamma_B^{-1}}[X(\omega)]=e^{\beta\omega}X(\omega)$.}

Since $\widehat{\textbf{C}}(\omega)$ is a positive matrix, it is also useful to introduce its  square root $\widehat{\textbf{g}}(\omega) := \sqrt{\widehat{\textbf{C}}(\omega)}$. This matrix can always be chosen to be positive. Moreover, the KMS condition implies that:
\begin{align}
	\widehat{g}_{k\lambda}(\omega)  = e^{\frac{\beta\omega}{2}}\widehat{g}_{\lambda k}(-\omega) \,.\label{eq:gspectKMSCond}
\end{align}
It can be easily connected to the bath correlation function by the convolution theorem:
\begin{align}
	C_{kl}(t) = \int_{-\infty}^{\infty}\de s\; g_{k\lambda}(s)g_{\lambda l}(t-s) = (g_{k\lambda}\ast g_{\lambda l})(t)\,,\label{eq:convTheorem}
\end{align}
where again the sum over $\lambda$ is implicit, and $\ast$ denotes the convolution.

The bath correlation function also determines the characteristic timescales of the dynamics. To this end, let us introduce the two sets of constants:
\ms
{
	\begin{align}
		\Gamma_0:&= \sum_{kl}\int_{-\infty}^{\infty}\dt\; |C_{kl}(t)|\,, \qquad \qquad\qquad\qquad \Gamma:= \sum_{kl}\int_{-\infty}^{\infty}\dt\; |g_{k\lambda}(t)|\int_{-\infty}^{\infty}\dt\; |g_{\lambda l}(t)|\,, \label{eq:GammaDef}\\
		\Gamma_0\tau_0 :&=  \sum_{kl}\int_{-\infty}^{\infty}\dt\;|t| |C_{kl}(t)|\,,\label{eq:TauDef} \qquad \;\qquad\qquad\;\Gamma\tau:=\sum_{kl}\, 2\int_{-\infty}^{\infty}\dt\; |t||g_{k\lambda}(t)|\int_{-\infty}^{\infty}\dt\; |g_{\lambda l}(t)|\,,\\
		K_0 &:=  \sum_{kl}\int_{-\infty}^{\infty}\dt\;|t|^2 |C_{kl}(t)|\,; \;\qquad \qquad\qquad\;K:=\sum_{kl} \,\int_{-\infty}^{\infty}\dt\int_{-\infty}^{\infty}\de s\; (|t|+|s|)^2|g_{k\lambda}(t)| |g_{\lambda l}(s)|\,,\label{eq:KDef}
	\end{align}
	where it can be proven that $|\widehat{C}_{kl}(\omega)|\leq\Gamma_0\leq \Gamma$, that $|\widehat{C}'_{kl}(\omega)|\leq\Gamma_0\tau_0\leq \Gamma\tau$ and that  $|\widehat{C}''_{kl}(\omega)|\leq K_0\leq K$ (see App.~D). It should be noticed that requiring these constants to be finite restrict the applicability of our results: for example, $\Gamma_0$ does not diverge only for infinite baths, because only in this case $C_{kl}(t)$ can decay sufficiently fast for the integral to converge. Moreover, requiring that $\tau$ is also finite, further restricts to the case in which the decay at infinity is faster than $1/t^2$ (and similarly, it should decay faster than $1/t^3$ for $K<\infty$).
	Interestingly, the constant $K$ only plays a role in the bounds of Sec.~\ref{sec:exactDB}, whereas the derivation in Sec.~\ref{sec:CPderivation} only rely on the finiteness of $\Gamma_0$ and $\tau_0$.} The latter can be interpreted as  a measure the correlation time of the bath (the timescale over which perturbations  decay\footnote{This intuitive interpretation becomes exact if $C_{kl}(t)\propto e^{-|t|/\tau^*}$ or $C_{kl}(t)\propto e^{-t^2/(\tau^*)^2}$, since in this case one can explicitly carry out the calculation and obtain $\tau_0 \propto \tau^*$}). Also, $\Gamma_0$ and $\Gamma$ can be related to the evolution rate of the state in a very precise manner~\cite{nathan2020universal}:
\begin{theorem}[Fastest rate in the system]\label{thm:fastestRate}
	For a Gaussian bath, and an evolution satisfying condition~\ref{it:productInCond} and~\ref{it:noEnergyShift}, the rate of change of the reduced system dynamics can be upper bounded as:
	\begin{align}
		\|\partial_t\rho_{S}(t)\|_{1}\leq 2\alpha^2 \, \Gamma_0\,,\label{eq:fastestRate}
	\end{align}
	where $\|\cdot\|_1$ is the trace norm $\|X\|_1:= \Tr{|X|}$.
\end{theorem}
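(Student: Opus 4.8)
The plan is to start from the exact reduced equation of motion, Eq.~\eqref{eq:exactReducedDynamics}, which already holds under the stated assumptions, and bound its trace norm term by term using only elementary operator inequalities. Taking $\|\cdot\|_1$ of both sides and applying the triangle inequality to pull the norm inside the $s$-integral and the sum over $k,l$, the bath correlation functions $C_{kl}$ factor out as scalars, leaving the trace norm of the two partial-traced commutators to be controlled.

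For each of these I would first use that the partial trace is a contraction in trace norm, $\|\TrR{B}{M}\|_1\leq\|M\|_1$ (it is the adjoint of the isometric embedding $X\mapsto X\otimes\idO_B$). Next I would invoke the commutator bound $\|[X,Y]\|_1\leq 2\|X\|_1\|Y\|_\infty$, which follows from Hölder's inequality. For the first commutator this gives $2\,\|\mathcal{V}(t,s)A_l(s)\rho_{SB}(s)\mathcal{V}^\dagger(t,s)\|_1\,\|A_k(t)\|_\infty$; unitary invariance of the trace norm removes the propagators $\mathcal{V}(t,s)$, and a further application of Hölder yields $\|A_l(s)\rho_{SB}(s)\|_1\leq\|A_l(s)\|_\infty\,\|\rho_{SB}(s)\|_1$. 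Since in the interaction picture $A_l(s)=(e^{iH_Ss}A_le^{-iH_Ss})\otimes\idO_B$ is a unitary conjugate of $A_l\otimes\idO_B$, its operator norm equals $\|A_l\|_\infty=1$ by the normalization assumption, and $\|\rho_{SB}(s)\|_1=1$ because it is a state. Hence each commutator contributes at most $2$ in trace norm; the second commutator is bounded identically after grouping the factors so that $A_l(t)$ carries the operator norm and $\rho_{SB}(s)A_k(s)$ the trace norm.

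Collecting these bounds leaves
\begin{align}
	\|\partial_t\rho_S(t)\|_1\leq 2\alpha^2\sum_{kl}\int_0^t\de s\;\big(|C_{kl}(t-s)|+|C_{kl}(s-t)|\big)\,.
\end{align}
The final step is bookkeeping on the time integrals: substituting $u=t-s$ in the first term and $u=s-t$ in the second recombines them into $\int_{-t}^{t}\de u\,|C_{kl}(u)|$, which is in turn upper bounded by $\int_{-\infty}^{\infty}\de u\,|C_{kl}(u)|$. Summing over $k,l$ and recognizing the definition of $\Gamma_0$ in Eq.~\eqref{eq:GammaDef} gives $\|\partial_t\rho_S(t)\|_1\leq 2\alpha^2\,\Gamma_0$, as claimed.

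I expect no genuine obstacle in this argument: the estimate is essentially mechanical, and its tightness comes precisely from the fact that all the structural factors (correlation functions as scalars, propagators as unitaries, jump operators as norm-one operators) decouple cleanly. The only points that require care are the trace-norm contractivity of the partial trace and the change of variables that assembles the two time-reversed correlation integrals into a single integral over the real line.
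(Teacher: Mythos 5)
Your proposal is correct and follows essentially the same route as the paper's proof in App.~E: both bound Eq.~\eqref{eq:exactReducedDynamics} term by term via H\"older-type estimates, strip the propagators by unitary invariance, use $\|A_k\|_\infty=1$ and $\|\rho_{SB}\|_1=1$, and extend the correlation-function integral to the whole real line to recognize $\Gamma_0$. The only cosmetic difference is that you invoke trace-norm contractivity of the partial trace up front, whereas the paper handles the partial trace via the variational characterization $\|\TrR{B}{M}\|_1=\sup_{-\idO\leq P_S\leq\idO}|\Tr{(P_S\otimes\idO_B)M}|$ and cyclicity; these are equivalent.
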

We refer to App.~\ref{app:fastestRate} for a proof of this statement. There are two main messages in Eq.~\eqref{eq:fastestRate}: first, the speed at which the system changes directly depends on the constant $\Gamma_0$ defined at the level of the bath; second, the maximum rate is of order $\alpha^2$, so the system will take longer to evolve as the coupling constant decreases. In order to make the total evolution time independent of $\alpha$, we introduce the time-rescaled state $\tilde{\rho}_{S}(t) := {\rho}_{S}(t/\alpha^2)$, for which it holds that $\|\partial_{t}\tilde{\rho}_{S}(t)\|_{1}\leq 2 \, \Gamma_0$.

The equation presented in Eq.~\eqref{eq:exactReducedDynamics} is exact, and it already highlights aspects of the reduced system dynamics: its dependence on the bath correlation function, the natural appearance of a fastest rate $\Gamma_0$ and the scaling of the evolution time with $\alpha^2$. Still, from the quantitative point of view, Eq.~\eqref{eq:exactReducedDynamics} is intractable in general: its solution depends on the exact global evolution $\mathcal{V}(t,s)$, and it is highly non-local in time, since the evolution at time $t$ depends on the full history $\{\rho_{SB}(s)\}_{s\leq t}$. Along the derivation, we introduce several approximations that ultimately make the dynamics quantitatively manageable.


\section{Microscopic derivation and approximate detailed balance}\label{sec:CPderivation}
We now prove that there exists an operator $\lind_{{t}}$ which at the same time generates a completely positive semigroup, and can be used to express Eq.~\eqref{eq:exactReducedDynamics} as:
\begin{align}
	\partial_{{t}}\tilde{\rho}_{S}({t}) = \lind_{{t}}[\tilde{\rho}_{S}({t})] + \mathcal{E}_{{t}}\,,\label{eq:exactDecomposition}
\end{align}
where $\mathcal{E}_t$ is some arbitrary correction whose trace norm we can upper bound. Such a decomposition would allow to apply the following result:
\begin{theorem}[Propagation of errors]\label{thm:DBerrorBoundIntegrated}
	Consider two family of states ${\pi}({t})$ and ${\rho}({t})$ respectively satisfying the system of equations:
	\begin{align}
		\begin{cases}
			\partial_{{t}}\,{\pi}({t})={\lind}_{{t}} [ {\pi}({t})]\\
			{\pi}(0) = {\pi}(0)
		\end{cases}\;;
		\qquad\qquad\qquad
		\begin{cases}
			\partial_{{t}}\,{\rho}({t})={\lind}_{{t}}[ {\rho}({t})]+\mathcal{E}_{{t}}\\
			{\rho}(0) =  {\pi}(0)+\delta\rho(0)\label{eq:199}
		\end{cases}\;,
	\end{align}
	where we assume that: $(i)$ the norm of the perturbation $\delta\rho(0)$ is bounded, that is $\Norm{\delta\rho(0)}_1\leq R$; that $(ii)$ there exists an absolutely integrable function $m(t)$ (with primitive $M(t)$) such that $\|\mathcal{E}_t\|_1\leq m(t)$; and finally $(iii)$ that the evolution $\Phi(t_1,t_0):= \timeorderedexp\norbra{\int_{t_0}^{t_1}\de s \; \lind_s}$ is CPTP for all times. Then, the norm distance between $ {\pi}({t})$ and $ {\rho}({t})$ scales as:
	\begin{align}
		\| {\pi}({t})- {\rho}({t})\|_1\leq R+(M(t)-M(0))\,.
	\end{align}
\end{theorem}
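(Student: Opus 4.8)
The plan is to reduce the statement to the trace-norm contractivity of CPTP maps via a Duhamel (variation-of-constants) argument. First I would introduce the difference $\delta(t) := {\rho}(t) - {\pi}(t)$, which by subtracting the two defining systems satisfies the inhomogeneous linear equation $\partial_t \delta(t) = \lind_t[\delta(t)] + \mathcal{E}_t$ with initial condition $\delta(0) = \delta\rho(0)$. Since ${\pi}(t)$ and ${\rho}(t)$ are both states and $\lind_t$ is Hermitian-preserving, the operators $\delta\rho(0)$ and $\mathcal{E}_t = \partial_t{\rho}(t) - \lind_t[{\rho}(t)]$ are Hermitian (and traceless). I would record this explicitly at the outset, since it is exactly what makes the contractivity step below applicable.

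Next I would solve the equation by variation of constants in terms of the propagator $\Phi(t,s) = \timeorderedexp\norbra{\int_s^t \de u\; \lind_u}$. Writing $\delta(t) = \Phi(t,0)[c(t)]$ and using $\partial_t \Phi(t,s) = \lind_t \circ \Phi(t,s)$ together with the cocycle property $\Phi(t,s) = \Phi(t,u)\circ\Phi(u,s)$, one finds $\partial_t c(t) = \Phi(0,t)[\mathcal{E}_t]$, which integrates to the Duhamel formula
\begin{align}
	\delta(t) = \Phi(t,0)[\delta\rho(0)] + \int_0^t \de s\; \Phi(t,s)[\mathcal{E}_s]\,.
\end{align}
Taking the trace norm and applying the triangle inequality in its integral form then gives
\begin{align}
	\Norm{\delta(t)}_1 \leq \Norm{\Phi(t,0)[\delta\rho(0)]}_1 + \int_0^t \de s\; \Norm{\Phi(t,s)[\mathcal{E}_s]}_1\,.
\end{align}

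Now I would invoke assumption $(iii)$: each $\Phi(t,s)$ is CPTP, and any trace-preserving positive map contracts the trace norm on Hermitian operators, $\Norm{\Phi(t,s)[X]}_1 \leq \Norm{X}_1$. Applying this to the Hermitian operators $\delta\rho(0)$ and $\mathcal{E}_s$, and then using assumptions $(i)$ and $(ii)$, bounds the first term by $R$ and the integrand by $m(s)$, so that $\Norm{\delta(t)}_1 \leq R + \int_0^t \de s\; m(s) = R + (M(t)-M(0))$, which is the claim.

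I expect the only genuine subtlety to be the contractivity step, for two reasons. First, trace-norm contractivity of a trace-preserving positive map holds for Hermitian inputs but fails for arbitrary operators, which is precisely why the preliminary observation that $\delta\rho(0)$ and $\mathcal{E}_s$ are Hermitian is essential rather than cosmetic. Second, one must be careful that it is genuinely the \emph{forward} propagator $\Phi(t,s)$ that appears in Duhamel's formula: the time-ordering matters, and the CPTP hypothesis of assumption $(iii)$ is invoked on each segment $[s,t]$ that evolves the instantaneous error $\mathcal{E}_s$ up to the final time. Everything else is a routine application of the triangle inequality and the fundamental theorem of calculus.
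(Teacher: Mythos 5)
Your proposal is correct and follows essentially the same route as the paper: integrate via Duhamel's formula, apply the triangle inequality, and use trace-norm contractivity of the CPTP propagator together with the bounds on $\delta\rho(0)$ and $\mathcal{E}_s$. Your extra remark on Hermiticity is a harmless additional precaution (the paper invokes contractivity for CPTP maps on arbitrary operators, which also holds), and does not change the argument.
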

\begin{proof}
	The equations in Eq.~\eqref{eq:199} can be formally integrated to give:
	\begin{align}
		\begin{cases}
			\pi(t) = \Phi({t},0)[{\pi}(0)]
			\\
			\rho(t) = \Phi({t},0)[{\pi}(0)] +\Phi({t},0)[{\delta\rho}(0)]+ \int_{0}^{t}\de s\; \Phi({t},s)[\mathcal{E}_s]
		\end{cases}\,.
	\end{align}
	Then, the trace distance between the two states can be estimated as:
	\begin{align}
		\| {\pi}({t})- {\rho}({t})\|_1&\leq\Norm{\Phi({t},0)[{\delta\rho}(0)]}_1+\left\| \int_{0}^{t}\de s\; \Phi({t},s)[\mathcal{E}_s]\right\|_1\leq
		\\
		&\leq R+\int_{0}^{t}\de s\;\left\|  \Phi({t},s)[\mathcal{E}_s]\right\|_1\leq R+ \int_{0}^{t}\de s\;\left\|  \mathcal{E}_s\right\|_1\leq R+(M(t)-M(0))
	\end{align}
	where we used the contractivity of the trace norm under CP-evolution, that is the fact that for every completely positive channel $\Psi$ it holds that~\cite{rivas2014quantum}: 
	\begin{align}
		\|\Psi(X)\|_1\leq \|X\|_1\,.\label{eq:traceNormNonIncreasing}
	\end{align}
	Finally, we applied the bound $\|\mathcal{E}_s\|_1\leq m(s)$ that holds for all $s$. This proves the claim.
\end{proof}
For most of the error terms below, $m(t)$ will be a constant, so that, in those cases, Thm.~\ref{thm:DBerrorBoundIntegrated} implies an error scaling linearly with time.

Thanks to the theorem just derived, in order to justify the reduced dynamics generated by some Lindbladian $\lind_t$, we just need to prove that a decomposition of $\partial_{{t}}\tilde{\rho}_{S}({t})$ akin to Eq.~\eqref{eq:exactDecomposition} can be found. We proceed in different steps, introducing a set of new states, each corresponding to a different approximation to the original evolution. In this context, it is particularly useful to apply the change of variables $\{\sigma = \alpha^2s\,; \, x = s-u\}$ to the integrated version of Eq.~\eqref{eq:exactReducedDynamics}:
\begin{align}
	\tilde{\rho}_{S}(t) =& {\rho}_{S}(0)+\nonumber
	\\&+\int_0^{t}\de \sigma\; \int_0^{\frac{\sigma}{\alpha^2}}\de x\;\;\bigg (C_{kl}\norbra{x}\,\TrR{B}{\sqrbra{\mathcal{V}_{\sigma,x}\, A_l\norbra{\frac{\sigma}{\alpha^2}-x} \tilde{\rho}_{SB}\norbra{\sigma-\alpha^2x}\mathcal{V}_{\sigma,x}^\dagger, A_k\norbra{\frac{\sigma}{\alpha^2}}}}+\nonumber
	\\
	&\qquad\qquad\qquad\qquad\qquad\qquad\quad+C_{kl}(-x)\,\TrR{B}{\sqrbra{ A_l\norbra{\frac{\sigma}{\alpha^2}},\mathcal{V}_{\sigma,x}\, \tilde{\rho}_{SB}\norbra{\sigma-\alpha^2x} A_k\norbra{\frac{\sigma}{\alpha^2}-x}\mathcal{V}_{\sigma,x}^\dagger}}\bigg)\,,\label{eq:exactReducedRescaled}
\end{align}
where, for brevity, we introduced the notation $\mathcal{V}_{\sigma,x}:=\mathcal{V}\norbra{\frac{\sigma}{\alpha^2},\frac{\sigma}{\alpha^2}-x}$. 

\begin{table}[]
	\ms{
		\begin{tabular}{|c|c|c|c|c|}
			\hline\makecell{\;\vspace{-2ex}\\\;\;{\ul State }\; \;\vspace{0.4ex}}  
			& \makecell{\;\vspace{-2ex}\\\;\;{\ul Notation }\; \;\vspace{0.4ex}}   &\makecell{\;\vspace{-2ex}\\\;\;{\ul Physical intuition }\; \;\vspace{0.4ex}} & \makecell{\;\vspace{-2ex}\\\;\;{\ul Main feature}\; \;\vspace{0.4ex}}& \makecell{\;\vspace{-2ex}\\\;\;{\ul Error bound}\; \;\vspace{0.4ex}} \\     \hline     
			Exact &$\tilde{\rho}_S$ & \makecell{\;\vspace{-2ex}\\\;\;{Obtained from tracing out }\;\; \\ \;\;{the environment}\;\;\vspace{0.4ex}} &\;\; Intractable but exact \;\;& 0\\ \hline     
			Born & $\tilde{\rho}^B_S$ & \makecell{\;\vspace{-2ex}\\\;\;{The bath dynamics is much faster than }\;\;\\ \;\;{the one of the system$\implies$ it always}\;\;\\\;\;{looks in thermal equilibrium}\;\;\vspace{0.4ex}} & \makecell{\;\vspace{-2ex}\\\;\;{Only depends on the  }\;\;\\ \;\;{bath through $C_{kl}(t)$}\;\;\vspace{0.4ex}}& $\bigo{\alpha^2 (\Gamma_0 t)(\Gamma_0\tau) }$\\\hline     
			Markov & $\tilde{\rho}^{BM}_S$ &  \makecell{\;\vspace{-2ex}\\\;\;{The bath dynamics is much faster than}\;\;\\ \;\;{the one of the system$\implies$ the  }\;\;\\\;\;{environment is memoryless}\;\;\vspace{0.4ex}}& \makecell{\;\vspace{-2ex}\\\;\;{Is local in time (admits}\;\;\\ \;\;{a differential form)}\;\;\vspace{0.4ex}}&$\bigo{\alpha^2 (\Gamma_0 t)(\Gamma_0\tau) }$ \\\hline     
			Redfield & $\tilde{\rho}^{RE}_S$ &\makecell{\;\vspace{-2ex}\\\;\;{The information about the initial time}\;\;\\ \;\;{is quickly erased}\;\;\vspace{0.4ex}} &  \makecell{\;\vspace{-2ex}\\\;\;{Differential form doesn't}\;\;\\ \;\;{depend on time in the}\;\;\\ \;\;{ Schrödinger's picture}\;\;\vspace{0.4ex}}& $\bigo{\frac{(\Gamma_0 t)(\Gamma_0\tau)}{\Gamma_0T(\alpha)} }$\\\hline     
			Coarse grained & $\tilde{\rho}^{CG}_S$ &\makecell{\;\vspace{-2ex}\\\;\;{Rapid oscillations in the rates justify}\;\;\\ \;\;{averaging over a time period}\;\;\\ \;\;{$T(\alpha)$ (called observation time)}\;\;\vspace{0.4ex}} & \makecell{\;\vspace{-2ex}\\\;\;{Is generated by a}\;\;\\ \;\;{Lindblad equation}\;\;\vspace{0.4ex}}&$\bigo{\alpha^2(\Gamma_0 t)(\Gamma_0T(\alpha))}$ \\\Xhline{1.2pt}     
			Smoothed & $\tilde{\rho}^S_S$ & \makecell{\;\vspace{-2ex}\\\;\;{Rapid oscillations in the rates are already}\;\;\\ \;\;{present in the exact dynamics $\implies$ }\;\;\\ \;\;{averaging over a time period $T(\alpha)$}\;\;\vspace{0.4ex}} & \makecell{\;\vspace{-2ex}\\\;\;{Is close to the exact}\;\;\\ \;\;{dynamics at all times}\;\;\vspace{0.4ex}} &$\bigo{\alpha^2(\Gamma_0T(\alpha))}$ \\
			\hline
		\end{tabular}
		\caption{Here we collect the notations used throughout the derivation, together with the physical intuition behind each of the approximations. Each of the approximation is carried out sequentially on top of the state coming from the previous stage of approximation. For this reason, the error bounds accounts for the difference in one-norm with respect to the previous step (e.g., for $\tilde{\rho}_{S}^{BM}$ it corresponds to an estimate of $\|\tilde{\rho}_{S}^{B}-\tilde{\rho}_{S}^{BM}\|_1$ ). The only exception is $\tilde{\rho}^S_S$, which approximates the exact dynamics directly (and for this reason the error is computed with respect to $\tilde{\rho}_S$). Some additional notations used are: the tilde in ${\tilde \rho}_S^X(t)$ (for generic $X$) refers to the rescaled time evolution, that is ${\tilde \rho}_S^X(t)={\rho}_S^X(t/\alpha^2)$; the extra $S$ in the subscript ${\tilde \rho}_{S,S}^X(t)$ refers to the Schrödinger's picture.}
		\label{tab:2}}
\end{table}

\ms{\subsection{Main ideas of the proof}\label{sec:mainideas}
	
	In order to find a decomposition of the type in Eq.~\eqref{eq:exactDecomposition}, we introduce a number of states approximating the exact evolution (see Table~\ref{tab:2}), by exploiting the fast dynamics of the bath as compared to the system: for example, since the bath goes back to equilibrium very quickly, it will effectively appear to the system as if it was always at equilibrium. This justifies the Born approximation (see Thm.~\ref{thm:bornApproximation}). Then, for the same reason, any information about the past evolution is quickly erased, justifying the Markov approximation (in Thm.~\ref{thm:markovApproximation}). This allows us to express the evolution in differential form (see Eq.~\eqref{eq:markovDynamics}). Finally, exploiting the Markovianity of the resulting dynamics, we carry out the Redfield approximation, which erases any information about the switching on of the interaction (see Thm.~\ref{thm:redfield}). This makes the generator of the differential form of the dynamics time-independent in the Schrödinger picture (see Eq.~\eqref{eq:decomposeRE}). These steps are the content of Sec.~\ref{subsec:bm}, and are standard in the literature \cite{taj2008completely,mozgunov2020completely,nathan2020universal}. 
	
	It is important to point out that even if these approximations simplify the dependency of the resulting state on the exact dynamics $\tilde{\rho}_S(t)$, this is anyway needed for their very definition. None of the states defined in this manner are helpful in finding a treatable expression of the reduced system dynamics. Their interest relies on the fact that, at each step, we have a provable error bound for the derivative of their difference. That is, since $\partial_t\tilde{\rho}^{RE}_{S}(t) = \lind^{RE}_{{t}}[\tilde{\rho}_{S}(t)]$, we can obtain an expression similar to Eq.~\eqref{eq:exactDecomposition} as follows:
	\begin{align}
		\partial_t\tilde{\rho}_{S}(t)& = \partial_t\norbra{\tilde{\rho}_{S}(t)-\tilde{\rho}^{B}_{S}(t) } + \partial_t\norbra{\tilde{\rho}^{B}_{S}(t)-\tilde{\rho}^{BM}_{S}(t) } + \partial_t\norbra{\tilde{\rho}^{BM}_{S}(t)-\tilde{\rho}^{RE}_{S}(t) } + \partial_t\tilde{\rho}^{RE}_{S}(t) =
		\\
		&= \mathcal{E}_t^{B}+\mathcal{E}_t^{BM}+\mathcal{E}_t^{RE} + \lind^{RE}_{{t}}[\tilde{\rho}_{S}(t)]\,.\label{eq:expl1}
	\end{align}
	Bounding the error terms $\mathcal{E}_t^{B}$,$\mathcal{E}_t^{BM}$, and $\mathcal{E}_t^{RE}$ is the content of Thm.~\ref{thm:bornApproximation},~\ref{thm:markovApproximation} and~\ref{thm:redfield} respectively. At this point, one might be tempted to define a state in terms of the differential equation:
	\begin{align}
		\begin{cases}
			\partial_{{t}}\,{\pi}({t})={\lind}^{RE}_{{t}} [ {\pi}({t})]\\
			{\pi}(0) = {\tilde{\rho}}_S(0)
		\end{cases}\;,
	\end{align}
	and then apply Thm.~\ref{thm:DBerrorBoundIntegrated} to prove that $\pi(t)$ diverges from $\tilde{\rho}_{S}(t)$ in a controlled manner. There is a fundamental reason that makes this impossible: the generator $\lind^{RE}_{{t}}$ is not completely positive, so that Thm.~\ref{thm:DBerrorBoundIntegrated} cannot be applied. Not only $\pi(t)$ is not guaranteed to stay positive throughout the evolution, but one can also generically expect the norm $\|\pi(t)-\tilde{\rho}_{S}(t)\|_1$ to grow exponentially in time (see the discussion in Sec.~\ref{subsec:improvederror}).
	
	At this point, we only need to find a further approximation that makes the generator CP. This is done by exploiting the oscillating nature of the rates in the differential equation in Eq.~\eqref{eq:expl1}, as explained at the beginning of Sec.~\ref{subsec:timeaverage}. Doing this directly, though, prevents us from finding a controlled error bound in one norm, which we need for the application of  Thm.~\ref{thm:DBerrorBoundIntegrated} (see~\cite{mozgunov2020completely}). 
	
	In order to avoid this, we go back to the beginning of the derivation and slightly change the starting point, which is where our derivation deviates from previous work (and in particular from the best previous rigorous proof \cite{mozgunov2020completely}). We introduce a smoothed state ${\tilde\rho}_{S}^S(t)$, corresponding to averaging the exact dynamics over a time window of order $T(\alpha)$, called the observation time (see Eq.~\eqref{eq:smoothedEv} for the precise expression). This is the main innovation in our derivation, and it is what allows to obtain error bounds that only scale linearly with time. Indeed, applying the same approximations as in Sec.~\ref{subsec:bm} allows to retain the controlled error bounds we derived before, but with the extra advantage of giving a controlled error bound also when passing to a CP generator. Moreover, we can also show that the exact evolution ${\tilde\rho}_S(t)$ and the smoothed one ${\tilde\rho}_{S}^S(t)$ stay at a constant distance throughout the evolution. This is all that is needed for the Thm.~\ref{thm:DBerrorBoundIntegrated} to hold, and it gives the main result of this section, Thm.~\ref{thm:coarseGrain}. Finally, in Sec. \ref{sec:completepositivity}, we go beyond previous derivations by analyzing how the resulting evolution has approximate detailed balance, which we later expand on in Sec. \ref{sec:exactDB}.}

\subsection{The Born-Markov approximation}\label{subsec:bm}

Since the evolution $\mathcal{V}(s,u) :=  \timeorderedexp\norbra{-i\alpha \int_{u}^{s}\de x \; V(x)}$ depends on $\alpha$, one can expect that  $\mathcal{V}_{\sigma,x} \simeq \idO$. Then, the substitution $\mathcal{V}_{\sigma,x}\rightarrow\idO$ in Eq.~\eqref{eq:exactReducedRescaled} is what is called the Born approximation. There are two regimes in which this is well justified: firstly, whenever $\alpha$ is small enough for the unitary evolution to be close to the identity operator (the weak-coupling limit); secondly, whenever the timescale characterizing the decay  of correlations in the bath (which, as we mentioned, is dictated by $\tau_0$ in Eq.~\eqref{eq:TauDef}) is much shorter than the fastest timescale of the system $\Gamma_0$. Both these intuitions are confirmed by the following result:
\begin{theorem}[Born approximation]\label{thm:bornApproximation} 
	Let us define the evolved state $\tilde{\rho}^B_{S}(t)$ as:
	\begin{align}
		\tilde{\rho}^B_{S}(t) = {\rho}_{S}(0)+\int_0^{t}\de \sigma\; \int_0^{\frac{\sigma}{\alpha^2}}\de x\;\;\bigg (&C_{kl}\norbra{x}\,\sqrbra{\, A_l\norbra{\frac{\sigma}{\alpha^2}-x} \tilde{\rho}_{S}\norbra{\sigma-\alpha^2x}, A_k\norbra{\frac{\sigma}{\alpha^2}}}+\nonumber
		\\
		&\qquad\qquad\qquad\qquad+C_{kl}(-x)\,\sqrbra{ A_l\norbra{\frac{\sigma}{\alpha^2}}, \tilde{\rho}_{S}\norbra{\sigma-\alpha^2x} A_k\norbra{\frac{\sigma}{\alpha^2}-x}}\bigg)\,.\label{eq:bornDynamics}
	\end{align}
	Let us define the error term $\mathcal{E}^B_t= \partial_t\norbra{\tilde{\rho}_{S}(t)-\tilde{\rho}^B_{S}(t)}$. Then, we can give the following bound:
	\begin{align}
		\|\mathcal{E}^B_t\|_1\leq 4\alpha^2\, \Gamma_0\,(\Gamma_0\tau_0)\,.\label{eq:bornApproximationE}
	\end{align}
	Moreover, it directly follows that:
	\begin{align}
		\|\tilde{\rho}_{S}(t)-\tilde{\rho}^B_{S}(t)\|_1\leq\int_0^t\de s\;\|\mathcal{E}^B_s\|_1\leq 4\alpha^2\, (\Gamma_0t)\,(\Gamma_0\tau_0)\,.\label{eq:bornApproximation}
	\end{align}
\end{theorem}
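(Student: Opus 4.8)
The plan is to recognize that $\mathcal{E}^B_t$ is, term by term, exactly the discrepancy produced by replacing the global propagator $\mathcal{V}_{\sigma,x}$ with $\idO$, and then to bound this discrepancy using the Gaussianity of the bath rather than any (divergent) operator norm of the interaction. First I would differentiate both Eq.~\eqref{eq:exactReducedRescaled} and Eq.~\eqref{eq:bornDynamics} in $t$: the outer integral $\int_0^t\de\sigma$ collapses, leaving the inner $x$-integral evaluated at $\sigma=t$. Because $A_k$ and $A_l$ act on the system alone, setting $\mathcal{V}_{t,x}\to\idO$ in the exact integrand gives $\TrR{B}{[A_l\tilde{\rho}_{SB},A_k]}=[A_l\,\TrR{B}{\tilde{\rho}_{SB}},A_k]=[A_l\tilde{\rho}_S,A_k]$, which is precisely the Born integrand. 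Crucially, the Born state $\tilde{\rho}^B_S$ is defined with the \emph{exact} reduced state $\tilde{\rho}_S$ on the right-hand side, so no self-consistency is needed here. Hence, introducing $W:=\mathcal{V}_{t,x}A_l\tilde{\rho}_{SB}\mathcal{V}_{t,x}^\dagger-A_l\tilde{\rho}_{SB}$ for the first term (and its analogue for the second), the whole error collapses to $\mathcal{E}^B_t=\int_0^{t/\alpha^2}\de x\,\big(C_{kl}(x)\,\TrR{B}{[W,A_k]}+C_{kl}(-x)\,(\text{mirror term})\big)$, with every trace of the approximation collected in $W$.

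Next I would extract the leading power of $\alpha$ from $W$ by Duhamel's formula. Differentiating $\mathcal{V}_{t,y}=\mathcal{V}(t/\alpha^2,t/\alpha^2-y)$ with respect to $y$ brings down a factor $i\alpha\,V(t/\alpha^2-y)$, giving $W=-i\alpha\int_0^x\de y\;\mathcal{V}_{t,y}[V(t/\alpha^2-y),A_l\tilde{\rho}_{SB}]\mathcal{V}_{t,y}^\dagger$, which exposes one explicit factor of $\alpha$ and one extra bath operator hidden in $V=\sum_mA_m\otimes B_m$. The order-$\alpha^2$ scaling of the theorem (rather than a naive $\alpha$) is then won through assumption~\ref{it:noEnergyShift}: to leading order one may replace $\mathcal{V}_{t,y}\to\idO$ and $\tilde{\rho}_{SB}\to\tilde{\rho}_S\otimes\gamma_B$, whereupon the bath trace of the single insertion is proportional to $\TrR{B}{B_m(\cdot)\gamma_B}=0$ and vanishes identically. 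The first surviving term is therefore $O(\alpha^2)$ and carries a second bath operator, originating either from the $O(\alpha)$ system--bath correlations built up in $\tilde{\rho}_{SB}$ or from the $O(\alpha)$ correction to $\mathcal{V}_{t,y}$; at this point I would invoke Wick's theorem, Eq.~\eqref{eq:WicksTheorem}, to pair these two remaining bath operators into a single correlation function.

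With this structure in hand, the bound assembles itself. The explicit prefactor $C_{kl}(x)$, integrated over $x$ and summed over $k,l$, yields $\sum_{kl}\int|C_{kl}(x)|\,\de x=\Gamma_0$, while the second correlation function, integrated over the Duhamel variable $y\in[0,x]$ with its natural $|t|$ weight, yields $\sum\int|t|\,|C(t)|\,\de t=\Gamma_0\tau_0$. The commutator against $A_k$ (recall $\|A_k\|_\infty=1$) costs a factor $2$, and the two contributions proportional to $C_{kl}(x)$ and $C_{kl}(-x)$ double this, reproducing the claimed constant in Eq.~\eqref{eq:bornApproximationE}; the estimate is uniform in $t$ because $\int_0^{t/\alpha^2}|C_{kl}(x)|\,\de x\le\int_{-\infty}^{\infty}|C_{kl}(x)|\,\de x$. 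The integrated statement Eq.~\eqref{eq:bornApproximation} is then immediate: since $\tilde{\rho}_S$ and $\tilde{\rho}^B_S$ coincide at $t=0$, one has $\tilde{\rho}_S(t)-\tilde{\rho}^B_S(t)=\int_0^t\mathcal{E}^B_s\,\de s$, and the triangle inequality for the trace norm together with the constant bound on $\|\mathcal{E}^B_s\|_1$ supplies the extra factor $t$.

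I expect the main obstacle to be the pairing step above: controlling the unbounded bath operators rigorously. Everything depends on never estimating $\|V\|$ --- which is infinite --- but instead always closing the bath trace into the integrable correlation functions $C_{kl}$ via Gaussianity, and on the careful bookkeeping of which Wick contractions survive. The zero-mean condition killing the first-order insertion is exactly what upgrades a naive $O(\alpha)$ estimate to the $O(\alpha^2)$ bound, and matching the time-weighted double integral to the precise constants $\Gamma_0$ and $\Gamma_0\tau_0$ is the delicate quantitative heart of the argument.
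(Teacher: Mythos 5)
Your proposal follows essentially the same route as the paper's proof in App.~F.1: a Duhamel expansion of $\mathcal{V}_{\sigma,x}$ exposing one explicit factor of $\alpha$, the zero-mean condition plus Wick's theorem (i.e., the identity behind Eq.~\eqref{eq:wicksTheoremConsequence}) to kill the first-order insertion and close the surviving bath operator into a second, integrable correlation function, and a final assembly into $4\alpha^2\Gamma_0(\Gamma_0\tau_0)$. The one point you should correct is the attribution of the factors in the last step: the $\Gamma_0\tau_0$ does \emph{not} come from the inner, Wick-paired correlation function carrying a ``natural $|t|$ weight''. In the paper's accounting, the Wick-paired insertion is bounded \emph{uniformly} in the Duhamel variable by $\mathcal{O}(\alpha\Gamma_0)$ (this is exactly the estimate of App.~\ref{app:fastestRate}), so the integral over $w\in[-x,0]$ simply contributes a factor $|x|$, which must then be absorbed by the \emph{outer} correlation function via $\sum_{kl}\int\de x\,|x|\,\bigl(|C_{kl}(x)|+|C_{kl}(-x)|\bigr)\le\Gamma_0\tau_0$. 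Grouping the factors the way you state them --- outer integral $\le\Gamma_0$ unweighted, inner integral $\le\Gamma_0\tau_0$ --- does not close, because the inner Duhamel integral has length $x\le t/\alpha^2$ and is not uniformly bounded in $x$; the $|x|$ must be paired with the decay of $C_{kl}(x)$. With that regrouping your argument coincides with the paper's, and the integrated bound Eq.~\eqref{eq:bornApproximation} follows as you say.
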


We refer to App.~\ref{app:born} for the proof. Thm.~\ref{thm:bornApproximation} confirms the intuition behind the Born approximation. Indeed, from Eq.~\eqref{eq:bornApproximation} we can see that  the error is negligible whenever $\Gamma_0 t$ is small (short times), or $\Gamma_0 \tau_0$ is small (fast decay of correlations in the bath), or $\alpha^2$ is small compared to the product of these two timescales (weak-coupling). Moreover, since the jump operators $A_k(t)$ only act on the system, the Born approximation $\mathcal{V}_{\sigma,x}\rightarrow\idO$ allows to rewrite the evolution directly in terms of the reduced state $\tilde{\rho}_{S}(t)$ in Eq.~\eqref{eq:bornDynamics}, without any reference to the state of the bath, which is a first step towards reducing the complexity of representing the dynamics. 

Still, even if Eq.~\eqref{eq:bornDynamics} is defined only in terms of the properties of the system and of the bath correlation function, the resulting evolution is still non-local in time. This problem is taken care through the Markov approximation, which consists in substituting $\tilde{\rho}_{S}\norbra{\sigma-\alpha^2x}$ with $\rho(\sigma)$ in Eq.~\eqref{eq:bornDynamics}. The intuitive justification of this procedure comes from  Thm.~\ref{thm:fastestRate}, that says that the derivative of $\tilde{\rho}_{S}$ is bounded at all times, so that the corresponding evolution cannot be very non-local. This is confirmed by the following theorem:
\begin{theorem}[Markov approximation]\label{thm:markovApproximation}
	Let us define the evolved state $\tilde{\rho}^{BM}_{S}(t)$ as:
	\begin{align}
		\tilde{\rho}^{BM}_{S}(t) = \rho_{S}(0) +\int_0^{t}\de \sigma \int_0^{\frac{\sigma}{\alpha^2}}\de x\;\bigg (&C_{kl}\norbra{x}\,\sqrbra{\, A_l\norbra{\frac{\sigma}{\alpha^2}-x} \tilde{\rho}_{S}\norbra{\sigma}, A_k\norbra{\frac{\sigma}{\alpha^2}}}+\nonumber
		\\
		&\qquad\qquad\qquad\qquad\qquad+C_{kl}(-x)\,\sqrbra{ A_l\norbra{\frac{\sigma}{\alpha^2}}, \tilde{\rho}_{S}\norbra{\sigma} A_k\norbra{\frac{\sigma}{\alpha^2}-x}}\bigg)\,.\label{eq:markovDynamicsImplicit}
	\end{align}
	Moreover, we introduce the error term $\mathcal{E}^{BM}_t= \partial_t\norbra{\tilde{\rho}^{B}_{S}(t)-\tilde{\rho}^{BM}_{S}(t)}$. Then, we can give the following bound:
	\begin{align}
		\|\mathcal{E}^{BM}_t\|_1\leq 2\alpha^2\, \Gamma_0\,(\Gamma_0\tau_0)\,,\label{eq:markovApproximationE}
	\end{align}
	which directly implies that:
	\begin{align}
		\|\tilde{\rho}_{S}^B(t)-\tilde{\rho}^{BM}_{S}(t)\|_1\leq\int_0^t\de s\;\|\mathcal{E}^{BM}_s\|_1\leq 2\alpha^2\, (\Gamma_0t)\,(\Gamma_0\tau_0)\,.\label{eq:markovApproximation}
	\end{align}
\end{theorem}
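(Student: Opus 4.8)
The plan is to exploit the fact that both $\tilde{\rho}^B_S(t)$ and $\tilde{\rho}^{BM}_S(t)$ are written as $\rho_S(0)+\int_0^t \de\sigma\,(\cdots)$, so that differentiating in $t$ simply evaluates the integrand at $\sigma=t$ and removes the outer integral. Since the two integrands in Eq.~\eqref{eq:bornDynamics} and Eq.~\eqref{eq:markovDynamicsImplicit} differ only through the replacement $\tilde{\rho}_S(\sigma-\alpha^2 x)\mapsto\tilde{\rho}_S(\sigma)$, the error $\mathcal{E}^{BM}_t=\partial_t(\tilde{\rho}^B_S(t)-\tilde{\rho}^{BM}_S(t))$ becomes a single integral over $x\in[0,t/\alpha^2]$ of two commutator terms, in each of which the state factor is the difference $\delta\rho_x:=\tilde{\rho}_S(t-\alpha^2 x)-\tilde{\rho}_S(t)$. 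Concretely, $\mathcal{E}^{BM}_t=\int_0^{t/\alpha^2}\de x\,\big(C_{kl}(x)[A_l(\tfrac{t}{\alpha^2}-x)\delta\rho_x,A_k(\tfrac{t}{\alpha^2})]+C_{kl}(-x)[A_l(\tfrac{t}{\alpha^2}),\delta\rho_x A_k(\tfrac{t}{\alpha^2}-x)]\big)$, with the implicit sum over $k,l$.

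The central estimate is a bound on $\Norm{\delta\rho_x}_1$. Here I would invoke the fastest-rate result, Thm.~\ref{thm:fastestRate}, in its time-rescaled form $\Norm{\partial_s\tilde{\rho}_S(s)}_1\leq 2\Gamma_0$: writing $\delta\rho_x=-\int_{t-\alpha^2 x}^{t}\de s\,\partial_s\tilde{\rho}_S(s)$ and using the triangle inequality for integrals gives $\Norm{\delta\rho_x}_1\leq 2\Gamma_0\,\alpha^2 x$. It is precisely this factor of $x$ that, upon integration against the correlation function, produces its first moment and hence the timescale $\Gamma_0\tau_0$.

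It remains to control the two commutators. Using the normalization $\Norm{A_k}_\infty=1$, submultiplicativity of the trace norm under multiplication by bounded operators, and the triangle inequality, each commutator has trace norm bounded by a constant times $\Norm{\delta\rho_x}_1$. Taking $\Norm{\cdot}_1$ inside the $x$-integral and the $k,l$-sum, and factoring out $|C_{kl}(x)|$ and $|C_{kl}(-x)|$, I would then use the key observation that the two terms cover complementary half-lines: after the substitution $x\mapsto -x$ in the second term, the combination reconstitutes the full first moment $\sum_{kl}\int_{-\infty}^{\infty}\de x\,|x|\,|C_{kl}(x)|=\Gamma_0\tau_0$ from Eq.~\eqref{eq:TauDef}, rather than double counting the positive half. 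Collecting the constants yields Eq.~\eqref{eq:markovApproximationE}, and integrating $\Norm{\mathcal{E}^{BM}_s}_1$ over $s\in[0,t]$ (exactly as in Thm.~\ref{thm:bornApproximation}) gives the linear-in-time bound Eq.~\eqref{eq:markovApproximation}.

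The main obstacle is bookkeeping rather than conceptual: one must keep careful track of the $\alpha^2$ rescalings and shifted time arguments so that the first-moment structure emerges cleanly, and one must be careful with the numerical prefactor — in particular with how the two commutator terms and the two half-line contributions of $C_{kl}$ combine, since a naive term-by-term triangle inequality would overcount. A minor technical point to justify along the way is the interchange of differentiation and the $x$-integration (and the evaluation of the integrand at $\sigma=t$), which is legitimate because $C_{kl}$ is absolutely integrable and $\tilde{\rho}_S$ has a trace-norm-bounded derivative by Thm.~\ref{thm:fastestRate}.
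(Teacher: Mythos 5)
Your proposal is correct and follows essentially the same route as the paper's proof in App.~F.2: differentiate to reduce $\mathcal{E}^{BM}_t$ to a single $x$-integral whose integrand contains $\delta\rho_x=\tilde{\rho}_S(t-\alpha^2 x)-\tilde{\rho}_S(t)$, bound $\Norm{\delta\rho_x}_1$ by integrating the fastest-rate estimate of Thm.~\ref{thm:fastestRate}, control the commutators via H\"older with $\Norm{A_k}_\infty=1$, and recombine the two half-line contributions of $|C_{kl}(\pm x)|$ into the full first moment $\Gamma_0\tau_0$. The only difference is the numerical prefactor you rightly flag: the paper reaches the constant $2$ by asserting $\Norm{\tilde{\rho}_S(t-\alpha^2x)-\tilde{\rho}_S(t)}_1\leq\alpha^2|x|\,\Gamma_0$ (with an unexplained factor $1/2$ in the intermediate step), whereas your honest bound $2\Gamma_0\alpha^2|x|$ would give $4\alpha^2\Gamma_0(\Gamma_0\tau_0)$ --- an inessential constant discrepancy.
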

Once again, we defer the proof to App.~\ref{app:markov}. While the Born approximation allows to simplify the dependency of the dynamics on the degrees of freedom of the bath, the Markov approximation uses the fact that the decoherence time of the bath is much shorter than the typical timescales of interaction in order to make the evolution memoryless. This gives a dynamics that is local in time, since by differentiating Eq.~\eqref{eq:markovDynamicsImplicit} it is possible to obtain:
\begin{align}
	\partial_t\tilde{\rho}^{BM}_{S}(t)&=\int_0^{\frac{t}{\alpha^2}}\de x\;\bigg (C_{kl}\norbra{x}\,\sqrbra{\, A_l\norbra{\frac{t}{\alpha^2}-x} \tilde{\rho}_{S}\norbra{t}, A_k\norbra{\frac{t}{\alpha^2}}}+\nonumber
	\\
	&\qquad\qquad\qquad\qquad\qquad\qquad+C_{kl}(-x)\,\sqrbra{ A_l\norbra{\frac{t}{\alpha^2}}, \tilde{\rho}_{S}\norbra{t} A_k\norbra{\frac{t}{\alpha^2}-x}}\bigg) \label{eq:markovDynamics1}
	\\
	&= \lind^{BM}_t[\tilde{\rho}_{S}\norbra{t}]\,,\label{eq:markovDynamics}
\end{align}
or, alternatively:
\begin{align}
	\partial_{{t}}\tilde{\rho}_{S}({t})  &= \partial_{{t}}\tilde{\rho}^{BM}_{S}({t}) +  \partial_{{t}}(\tilde{\rho}^{B}_{S}({t}) -\tilde{\rho}^{BM}_{S}({t}) ) + \partial_{{t}}(\tilde{\rho}_{S}({t}) -\tilde{\rho}^{B}_{S}({t}) ) = 
	\\
	&= \lind^{BM}_t[\tilde{\rho}_{S}\norbra{t}] + \mathcal{E}^{BM}_t+\mathcal{E}^{B}_t\,.\label{eq:bornmarkoverror}
\end{align}
This expression is formally very similar to Eq.~\eqref{eq:exactDecomposition}, but with a key difference: even if we are able to bound the trace norm of $\mathcal{E}^{B}_t$ and $\mathcal{E}^{BM}_t$ (uniformly in time), the evolution generated by $\lind^{BM}_t$ is not completely positive.

An intermediate step is to extend the integral in Eq.~\eqref{eq:markovDynamics1} to infinity. Once again, this is done by exploiting the rapid decay of $C_{kl}(x)$ as $|x|\rightarrow\infty$:
\begin{theorem}[Redfield approximation]\label{thm:redfield}
	Let us define the evolved state $\tilde{\rho}^{RE}_{S}(t)$ as:
	\begin{align}
		\tilde{\rho}^{RE}_{S}(t) = \rho_{S}(0) +\int_0^{t}\de \sigma \int_0^{\infty}\de x\;e^{-\frac{(x/2)^2}{T(\alpha)^2}}\bigg (&C_{kl}\norbra{x}\,\sqrbra{\, A_l\norbra{\frac{\sigma}{\alpha^2}-x} \tilde{\rho}_{S}\norbra{\sigma}, A_k\norbra{\frac{\sigma}{\alpha^2}}}+\nonumber
		\\
		&\qquad\qquad\qquad+C_{kl}(-x)\,\sqrbra{ A_l\norbra{\frac{\sigma}{\alpha^2}}, \tilde{\rho}_{S}\norbra{\sigma} A_k\norbra{\frac{\sigma}{\alpha^2}-x}}\bigg)\,.\label{eq:redfieldDynamicsImplicit}
	\end{align}
	Moreover, we introduce the error term $\mathcal{E}^{RE}_t= \partial_t\norbra{\tilde{\rho}^{BM}_{S}(t)-\tilde{\rho}^{RE}_{S}(t)}$. Then, the following holds:
	\begin{align}
		\|\mathcal{E}^{RE}_t\|_1\leq 2\,\Gamma_0(1+\Gamma_0\tau_0) \norbra{\frac{1}{{2+\Gamma_0\,T(\alpha)}}+\frac{\alpha^2}{(\alpha^2+\Gamma_0 \,t)}}
		\,.\label{eq:redfieldApproximationE}
	\end{align}
	Integrating the right-hand side of Eq.~\eqref{eq:redfieldApproximationE} we also obtain:
	\begin{align}
		\|\tilde{\rho}_{S}^{BM}(t)-\tilde{\rho}^{RE}_{S}(t)\|_1\leq\int_0^t\de s\;\|\mathcal{E}^{BM}_s\|_1\leq 2\,(1+\Gamma_0\tau_0) \norbra{\frac{\Gamma_0\,t}{{2+\Gamma_0\,T(\alpha)}}+\alpha^2\log\norbra{1+\frac{\Gamma_0 \,t}{\alpha^2}}}\,.\label{eq:redfieldApproximation}
	\end{align}
	The error goes to zero with $\alpha$ whenever $T(\alpha)\rightarrow\infty$ in the same regime.
\end{theorem}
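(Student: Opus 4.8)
The plan is to reduce the statement to a pointwise-in-$t$ bound on $\Norm{\mathcal{E}^{RE}_t}_1$ and then integrate it. Since $t$ enters both $\tilde{\rho}^{BM}_S(t)$ (Eq.~\eqref{eq:markovDynamicsImplicit}) and $\tilde{\rho}^{RE}_S(t)$ (Eq.~\eqref{eq:redfieldDynamicsImplicit}) only through the upper limit of the outer $\sigma$-integral, differentiating collapses everything to the two inner $x$-integrals evaluated at $\sigma=t$. Writing $G_t(x)$ for the common operator-valued integrand (the two commutators weighted by $C_{kl}(x)$ and $C_{kl}(-x)$, with the jump operators at times $t/\alpha^2$ and $t/\alpha^2-x$ acting on $\tilde{\rho}_S(t)$, exactly as in Eq.~\eqref{eq:markovDynamics1}), I get
\begin{align*}
\mathcal{E}^{RE}_t &= \int_0^{t/\alpha^2}\de x\; G_t(x) - \int_0^\infty \de x\; e^{-(x/2)^2/T(\alpha)^2} G_t(x) \\
&= \int_0^{t/\alpha^2}\de x\;\norbra{1-e^{-(x/2)^2/T(\alpha)^2}} G_t(x) - \int_{t/\alpha^2}^\infty \de x\; e^{-(x/2)^2/T(\alpha)^2} G_t(x)\,.
\end{align*}
The first piece is the \emph{reshaping} error incurred by inserting the Gaussian inside the original range; the second is the \emph{tail} produced by extending the range to infinity. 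Both are controlled purely by the decay of the bath correlations.

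The second step is a state-independent bound on the integrand. Each commutator of the form $[A_l(\cdot)\,\tilde{\rho}_S(t),A_k(\cdot)]$ has trace norm at most $2$ because $\Norm{A_k}_\infty=1$ and $\Norm{\tilde{\rho}_S(t)}_1=1$; summing over $k,l$ and over the two terms, and using the symmetry $|C_{kl}(-x)|=|C_{lk}(x)|$ implied by Eq.~\eqref{eq:symmetryBathcorrelation}, gives $\Norm{G_t(x)}_1\leq 2\,\tilde c(x)$ with $\tilde c(x):=\sum_{kl}\norbra{|C_{kl}(x)|+|C_{kl}(-x)|}$. By the definitions in Eq.~\eqref{eq:GammaDef} and Eq.~\eqref{eq:TauDef} this weight satisfies $\int_0^\infty \tilde c(x)\,\de x=\Gamma_0$ and $\int_0^\infty x\,\tilde c(x)\,\de x=\Gamma_0\tau_0$, so the problem becomes a scalar estimate on tails of $\tilde c$.

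The engine for both pieces is a single cutoff estimate: for any $R>0$, combine the trivial total-mass bound $\int_R^\infty \tilde c\leq\Gamma_0$ with the Markov (first-moment) bound $\int_R^\infty \tilde c\leq R^{-1}\int_0^\infty x\,\tilde c=\Gamma_0\tau_0/R$, and interpolate them via the elementary inequality $\min(1,p/s)(1+s)\leq 1+p$ (take $p=\Gamma_0\tau_0$, $s=\Gamma_0 R$) to obtain $\int_R^\infty \tilde c\leq \Gamma_0(1+\Gamma_0\tau_0)/(1+\Gamma_0 R)$. For the tail, choosing $R=t/\alpha^2$ reproduces \emph{exactly} the second term $2\Gamma_0(1+\Gamma_0\tau_0)\,\alpha^2/(\alpha^2+\Gamma_0 t)$. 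For the reshaping piece I use the elementary bound $1-e^{-u^2}\leq u$ with $u=x/(2T(\alpha))$, which converts the Gaussian weight into a first-moment integral scaling like $1/T(\alpha)$; combined again with the trivial total-mass bound this yields the $1/(2+\Gamma_0 T(\alpha))$ term, giving Eq.~\eqref{eq:redfieldApproximationE}. The integrated statement Eq.~\eqref{eq:redfieldApproximation} then follows by integrating the pointwise bound over $s\in[0,t]$, where the first term integrates to $\Gamma_0 t/(2+\Gamma_0 T(\alpha))$ and the second to $\alpha^2\log\norbra{1+\Gamma_0 t/\alpha^2}$ using $\int_0^t \alpha^2(\alpha^2+\Gamma_0 s)^{-1}\de s=(\alpha^2/\Gamma_0)\log\norbra{1+\Gamma_0 t/\alpha^2}$; and for fixed $t$ both terms vanish as $\alpha\to0$ once $T(\alpha)\to\infty$.

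I expect the main obstacle to be matching the precise rational forms rather than the structure. The tail term falls out cleanly, but pinning the reshaping error to the exact shape $1/(2+\Gamma_0 T(\alpha))$ requires balancing the smooth Gaussian cutoff against the first moment so that the $(1+\Gamma_0\tau_0)$ prefactor stays uniform and the constant in the denominator (reflecting the Gaussian's effective width $\sim 2T(\alpha)$) comes out correctly; using either the total-mass bound or the first-moment bound alone is insufficient, and the interpolation has to be carried out carefully. A minor technical point is justifying the differentiation under the nested integral sign, which is immediate here since $t$ appears only in the outer limit and the integrand is jointly continuous and absolutely integrable in $x$.
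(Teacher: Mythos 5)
Your proposal is correct and follows essentially the same route as the paper's proof in App.~F.3: the identical Heaviside-versus-Gaussian splitting of $\mathcal{E}^{RE}_t$ into a reshaping piece on $[0,t/\alpha^2]$ and a tail piece beyond it, the same state-independent Hölder bound $\|G_t(x)\|_1\le 2\sum_{kl}(|C_{kl}(x)|+|C_{kl}(-x)|)$, and the same interpolation between the zeroth and first moments of the bath correlations (which the paper implements by multiplying and dividing by $|x|+\Gamma_0^{-1}$ and taking a supremum). The only deviation is at the level of constants in the reshaping term, where your version of the interpolation lands on a denominator $1+2\Gamma_0 T(\alpha)$ rather than $2+\Gamma_0 T(\alpha)$ — a cosmetic mismatch of exactly the kind you flag, and comparable to the slack already present in the paper's own supremum estimate.
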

The proof of this result is postponed to App.~\ref{app:redfield}. The approximation presented here is a smoothed version of the usual Redfield approximation, obtained by sending $T(\alpha)\rightarrow\infty$. This leaves us with one last step in order to derive a CP evolution.

\subsection{Time averaging of the dynamics}\label{subsec:timeaverage}
Differentiating Eq.~\eqref{eq:redfieldDynamicsImplicit}, we define the generator $\lind^{RE}_{{t}}$ satisfying $\partial_t\tilde{\rho}^{RE}_{S}(t)=\lind^{RE}_{{t}}[\tilde{\rho}_{S}(t)]$, that is:
\begin{align}
	\lind^{RE}_{{t}}[\tilde{\rho}_{S}(t)]&=\int_0^{\infty}\de x\;e^{-\frac{(x/2)^2}{T(\alpha)^2}}\;\mathcal{U}_{\frac{t}{\alpha^2}}^\dagger\sqrbra{\norbra{C_{kl}\norbra{x}\,\sqrbra{\, A_l\norbra{-x} \mathcal{U}_{\frac{t}{\alpha^2}}[\tilde{\rho}_{S}\norbra{t}], A_k}+C_{kl}(-x)\,\sqrbra{ A_l, \mathcal{U}_{\frac{t}{\alpha^2}}[\tilde{\rho}_{S}\norbra{t}] A_k\norbra{-x}}}}\,,\label{eq:56}
\end{align}
where we introduced the notation $\mathcal{U}_{\frac{t}{\alpha^2}}$ for the operator $\mathcal{U}_{t}[X] = e^{-iH_S t} X e^{iH_S t}$. Then, Eq.~\eqref{eq:56} shows that we can decompose the Lindbladian $\lind^{RE}_{{t}}$ as:
\begin{align}
	\lind^{RE}_{{t}}[\tilde{\rho}_{S}(t)] = \mathcal{U}_{\frac{t}{\alpha^2}}^\dagger\circ\widehat{\lind}^{RE}\circ\mathcal{U}_{\frac{t}{\alpha^2}}[\tilde{\rho}_{S}(t)]\,,\label{eq:decomposeRE}
\end{align}
where we implicitly defined $\widehat{\lind}^{RE}_t$. This rewriting is particularly useful because it isolates two dynamics happening at very different timescales: on the one hand, the dependency of the state $\tilde{\rho}_{S}(t)$  is of order $\|\partial_t\tilde{\rho}_{S}(t)\|_{1}\simeq\bigo{\alpha^{0}}$, whereas, on the other, the unitary dynamics is extremely fast, with a rate of change of order $\|\partial_t\,\mathcal{U}_{\frac{t}{\alpha^2}}\|_{1-1}\simeq\bigo{\alpha^{-2}}$. To appreciate the effect of this scale separation, it is useful to expand Eq.~\eqref{eq:redfieldDynamicsImplicit} in the frequency domain:
\begin{align}
	&\tilde\rho^{RE}_{S}(t) =\rho_{S}(0)+\nonumber
	\\
	&+\int_0^{t}\de \sigma\; \int_0^{\infty}\de x\;e^{-\frac{(x/2)^2}{T(\alpha)^2}}\;e^{i\sigma \frac{\omega-\tilde{\omega}}{\alpha^2}}\bigg (e^{-ix\omega}C_{kl}\norbra{x}\,\sqrbra{ A_l(\omega) \tilde{\rho}_{S}\norbra{\sigma}, A^\dagger_k(\tilde{\omega})}+e^{ix\tilde{\omega}}C_{kl}(-x)\,\sqrbra{ A_l(\omega), \tilde{\rho}_{S}\norbra{\sigma} A_k^\dagger(\tilde{\omega})}\bigg)\,.\label{eq:bmFrequency}
\end{align}
Here and in the following, we implicitly integrate over all frequencies $\omega$ and $\tilde{\omega}$. In the limit $\alpha\rightarrow0$ the exponent of $e^{i\sigma \frac{\omega-\tilde{\omega}}{\alpha^2}}$ diverges whenever the frequencies are different. This allows for the application of the Riemann-Lebesgue lemma, which shows that the only terms surviving in Eq.~\eqref{eq:bmFrequency} are exactly the ones for which $\omega =\tilde{\omega}$~\cite{rivas2012open, breuer2002theory}. This amounts to performing the RWA, which straightforwardly yields a completely positive Lindbladian that is GNS detailed balanced (thanks to Thm.~\ref{thm:GNSeqRW}). Nonetheless, this limit is not always justified, and for this reason we need a more careful analysis.

As a matter of fact, even without the RWA, the difference in scaling of $\tilde{\rho}_{S}(t)$ as compared to $\|\partial_t\,\mathcal{U}_{\frac{t}{\alpha^2}}\|_{1-1}$ suggests that over a small period of time, the state appears to be frozen to the unitary evolution. This justifies the introduction of the smoothed state\footnote{By convention, whenever the argument of $\tilde{\rho}_S(t)$ is negative, we set $\tilde{\rho}_S(-|t|)=\tilde{\rho}_S(0)$. The same goes for all the other approximate states, that is $\tilde{\rho}^{B}_S(-|t|)=\tilde{\rho}^{BM}_S(-|t|)=\tilde{\rho}^{RE}_S(-|t|)=\tilde{\rho}_S(0)$.}:
\begin{align}
	\tilde{\rho}^{S}_S(t) := \int_{-\infty}^{\infty}\de q\; \frac{e^{-\frac{q^2}{T(\alpha)^2}}}{\sqrt{\pi} \,T(\alpha)}\;\tilde{\rho}_S(t+\alpha^2q)\,,\label{eq:smoothedEv}
\end{align}
where $T(\alpha)$ is a free parameter called the observation time. This procedure corresponds to averaging around the state at time $t$, over a window of width $T(\alpha)$ and a variation governed by $\alpha^2$. In real time, without rescaling, this gives a variation independent of $\alpha$, since $\tilde{\rho}_S(t+\alpha^2q) = {\rho}_S\norbra{\frac{t}{\alpha^2}+q}$. The new result shows that this smoothing is well justified:
\begin{theorem}[Smoothing of evolution]\label{thm:smoothing}
	Let $\tilde{\rho}^{S}_S(t) $ be the averaged state defined in Eq.~\eqref{eq:smoothedEv}. Then, the trace distance from the exact evolution can be bounded uniformly in $t$ as:
	\begin{align}
		\Norm{\tilde{\rho}_S(t)-\tilde{\rho}^{S}_S(t)}_1 \leq \frac{2\Gamma_0\alpha^2 T(\alpha)}{\sqrt{\pi}}\,.\label{eq:differenceSmoothing}
	\end{align}
	Thus, the error goes to zero when $\alpha\rightarrow0$ if $T(\alpha)< \alpha^{-2}$.
\end{theorem}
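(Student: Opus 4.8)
The plan is to exploit that the Gaussian smoothing kernel is a normalized probability density, so that $\tilde{\rho}^{S}_S(t)$ is a weighted average of the exact state over a window of nearby times, and then to control the deviation of $\tilde{\rho}_S$ across that window using the uniform rate bound of Thm.~\ref{thm:fastestRate}. The first step is to record that the kernel integrates to one, $\int_{-\infty}^{\infty}\de q\; \frac{e^{-q^2/T(\alpha)^2}}{\sqrt{\pi}\,T(\alpha)} = 1$, which lets me write the difference as a single integral of increments,
\begin{align}
	\tilde{\rho}_S(t)-\tilde{\rho}^{S}_S(t) = \int_{-\infty}^{\infty}\de q\; \frac{e^{-q^2/T(\alpha)^2}}{\sqrt{\pi}\,T(\alpha)}\,\norbra{\tilde{\rho}_S(t)-\tilde{\rho}_S(t+\alpha^2 q)}\,.
\end{align}
Applying the triangle inequality (equivalently, convexity of the trace norm under the probability weight) moves the norm inside the integral, reducing everything to a pointwise bound on $\Norm{\tilde{\rho}_S(t)-\tilde{\rho}_S(t+\alpha^2 q)}_1$.

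Next I would bound each increment by the fundamental theorem of calculus: writing it as the integral of $\partial_s\tilde{\rho}_S(s)$ over $s$ between $t$ and $t+\alpha^2 q$ and invoking the rescaled rate bound $\Norm{\partial_s\tilde{\rho}_S(s)}_1\leq 2\Gamma_0$ (the $\alpha$-rescaled form of Thm.~\ref{thm:fastestRate}), one gets
\begin{align}
	\Norm{\tilde{\rho}_S(t)-\tilde{\rho}_S(t+\alpha^2 q)}_1\leq 2\Gamma_0\,\alpha^2|q|\,,
\end{align}
valid for both signs of $q$ since the interval has length $\alpha^2|q|$. Substituting this and evaluating the Gaussian first absolute moment, $\int_{-\infty}^{\infty}\de q\; e^{-q^2/T(\alpha)^2}\,|q| = T(\alpha)^2$, collapses the prefactor $\tfrac{2\Gamma_0\alpha^2}{\sqrt{\pi}\,T(\alpha)}$ into exactly the claimed bound $\tfrac{2\Gamma_0\alpha^2 T(\alpha)}{\sqrt{\pi}}$. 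The estimate is automatically uniform in $t$ because the rate bound holds at all times, with no reference to $t$.

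The only subtlety—more a bookkeeping point than a genuine obstacle—is the behavior near the origin of time, where the convention $\tilde{\rho}_S(-|t|)=\tilde{\rho}_S(0)$ applies. When $t+\alpha^2 q<0$ the relevant increment is $\Norm{\tilde{\rho}_S(t)-\tilde{\rho}_S(0)}_1\leq 2\Gamma_0\,t$; but $t+\alpha^2 q<0$ forces $\alpha^2|q|>t$, so $2\Gamma_0\,t\leq 2\Gamma_0\,\alpha^2|q|$ and the increment bound remains valid on the whole real line. Hence the chain of inequalities above goes through without modification, and the stated vanishing as $\alpha\to0$ under $T(\alpha)<\alpha^{-2}$ follows immediately from the final expression.
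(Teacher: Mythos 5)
Your proof is correct and follows essentially the same route as the paper's: move the trace norm inside the Gaussian average, bound each increment via the fundamental theorem of calculus and the rate bound $\Norm{\partial_s\tilde{\rho}_S(s)}_1\leq 2\Gamma_0$ from Thm.~\ref{thm:fastestRate}, and evaluate the Gaussian first absolute moment. Your extra remark checking the convention $\tilde{\rho}_S(-|t|)=\tilde{\rho}_S(0)$ near the time origin is a careful touch the paper leaves implicit, but it does not change the argument.
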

We defer the proof to App.~\ref{app:cg}. The bound imposed on $T(\alpha)$ can be intuitively understood as follows: on the one hand, one wants $T(\alpha)$  to be sufficiently large compared with the fast unitary dynamics (which happens at a timescale of order $\bigo{\alpha^2}$) so to smooth its effect; on the other, it cannot diverge arbitrarily fast, as the total time of evolution in real time will be of order $\bigo{\alpha^{-2}}$ (since $\tilde{\rho}_S(t) = {\rho}_S\norbra{\frac{t}{\alpha^2}}$). This justifies the name of \emph{observation time} for $T(\alpha)$.

Importantly, on top of always being close to the real evolution, the smoothed state  in Eq.~\eqref{eq:smoothedEv} also satisfies the differential equation (see App.~\ref{app:cg}):
\begin{align}
	\partial_t\tilde{\rho}^{S}_S(t) :&= \lind_t^{CG}[\tilde{\rho}^{S}_S(t)] + \mathcal{E}^{TOT}_t\,,
	\label{eq:diffS}
\end{align}
where $\mathcal{E}^{TOT}_t$ is an error term whose trace norm can be controlled, and where the coarse-grained Lindbladian $\lind^{CG}_{{t}}$ is defined as:
\begin{align}
	\lind^{CG}_{{t}}[\tilde{\rho}^S_{S}\norbra{t}] := \int_{-\infty}^{\infty}\de q\int_0^{\infty}\de x\; \frac{e^{-\frac{(q^2+(x/2)^2)}{T(\alpha)^2}}}{\sqrt{\pi} \,T(\alpha)}\bigg (&C_{kl}\norbra{x}\,\sqrbra{\, A_l\norbra{\frac{t}{\alpha^2}+q-\frac{x}{2}} \tilde{\rho}^S_{S}\norbra{t}, A_k\norbra{\frac{t}{\alpha^2}+q+\frac{x}{2}}}+\nonumber
	\\
	&\qquad+C_{kl}(-x)\,\sqrbra{ A_l\norbra{\frac{t}{\alpha^2}+q+\frac{x}{2}}, \tilde{\rho}^S_{S}\norbra{t} A_k\norbra{\frac{t}{\alpha^2}+q-\frac{x}{2}}}\bigg)\,.\label{eq:averagedLindblad}
\end{align}
\begin{figure}
	\centering
	\includegraphics[width=0.5\linewidth]{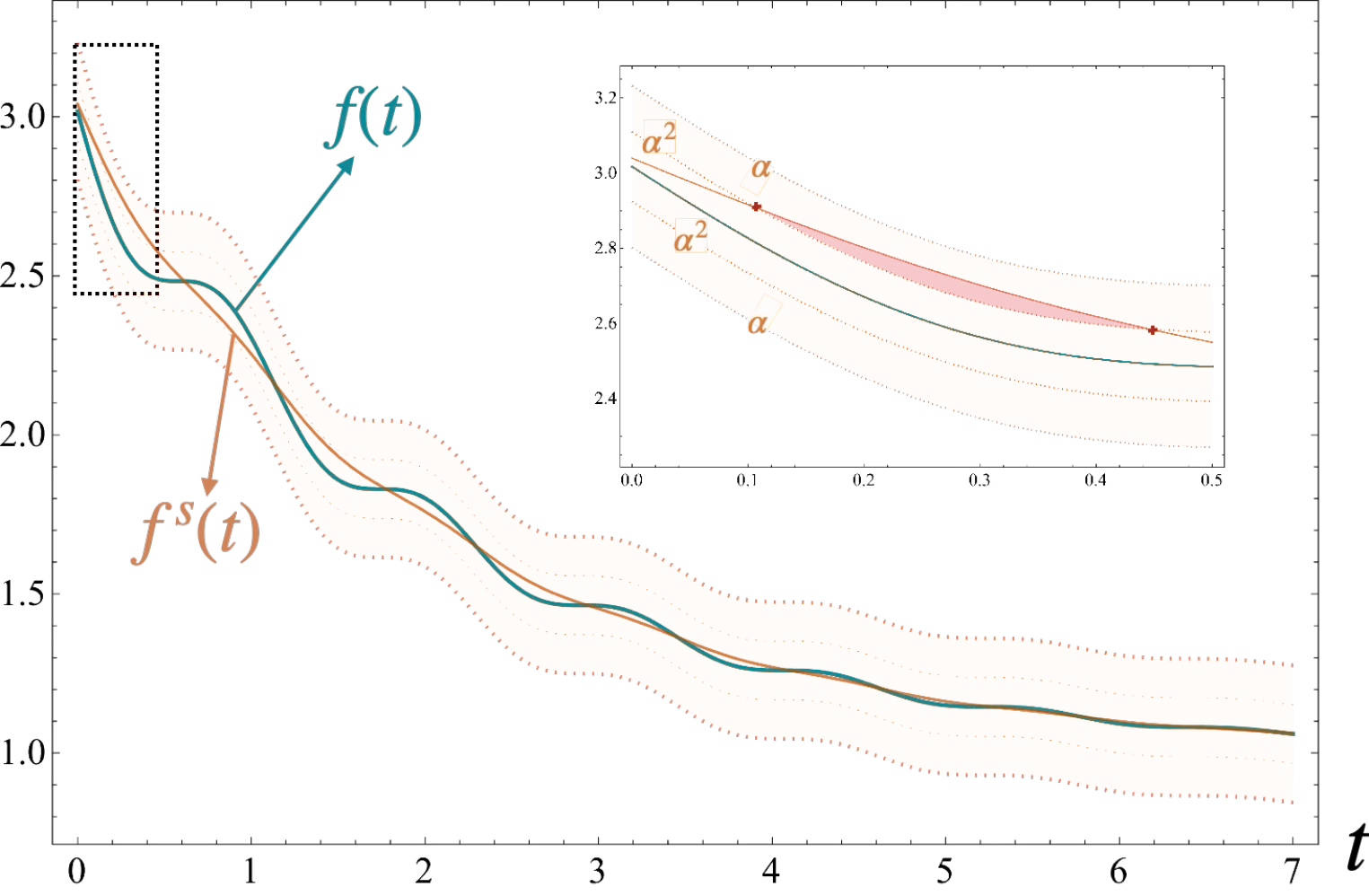}
	\caption{ Intuition behind Theorem \ref{thm:coarseGrain}. In order to illustrate the effect of the smoothing, consider the differential equation $f'(t) =F(t,\alpha):=- e^{-t/2} \left(1+\cos \left(\frac{t}{\alpha ^2}\right)\right)$. We define $f^{s}(t)$ to be the solution to the differential equation obtained by taking the Gaussian average of $F(t+\alpha^2q,\alpha)$. In analogy with the discussion in Thm.~\ref{thm:coarseGrain}, we choose $T(\alpha) = \alpha^{-1}$. The shaded region corresponds to an error of order $\bigo{\alpha}$ around the exact solution $f(t)$. As it can be seen, the averaged solution always falls within this region. The inset corresponds to zooming the beginning of the evolution (the region contained in the dotted box): this shows that even if $f^{s}(t)$ always stays within $\bigo{\alpha}$ of  $f(t)$, this distance can be larger than $\bigo{\alpha^2}$.} 
	\label{fig:averaging}
\end{figure}
In Eq.~\eqref{eq:averagedLindblad} there are two approximations happening at the same time. First, motivated by the discussion above, we average over different values of $\lind_t^{RE}$, obtaining:
\begin{align}
	\lind^{CG0}_{{t}}:=\int_{-\infty}^{\infty}\de q\; \frac{e^{-\frac{q^2}{T(\alpha)^2}}}{\sqrt{\pi} \,T(\alpha)}\; {\lind}^{RE}_{t+\alpha^2q}=\int_{-\infty}^{\infty}\de q\; \frac{e^{-\frac{q^2}{T(\alpha)^2}}}{\sqrt{\pi} \,T(\alpha)}\; \mathcal{U}_{\frac{t}{\alpha^2}+q}^\dagger\circ\widehat{\lind}^{RE}\circ\mathcal{U}_{\frac{t}{\alpha^2}+q} \,,\label{eq:lindcg0}
\end{align}
Then, exploiting the rapid decay of $C_{kl}(x)$ as $|x|\rightarrow\infty$, we  shift $q\rightarrow q+\frac{x}{2}$ inside of the two commutators, but not in the Gaussian, obtaining Eq.~\eqref{eq:averagedLindblad}. Importantly this happens at the same order of approximation. This procedure allows us to derive the key result:
\begin{theorem}[Complete positivity]\label{thm:CP}
	The Lindbladian $\lind^{CG}_{{t}}$ induces a completely positive, trace preserving evolution.
\end{theorem}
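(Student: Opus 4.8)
The plan is to massage $\lind^{CG}_{t}$ into the Gorini--Kossakowski--Sudarshan--Lindblad (GKSL) form with a manifestly positive dissipator and then appeal to Lindblad's theorem. Trace preservation is immediate: every term in Eq.~\eqref{eq:averagedLindblad} is a commutator, so $\Tr{\lind^{CG}_{t}[\rho]}=0$ for all $\rho$. Expanding the two commutators, I would split the generator as $\lind^{CG}_{t}[\rho]=\Phi_{t}[\rho]-G_{t}\,\rho-\rho\,G_{t}^{\dagger}$, where $\Phi_{t}$ gathers the ``sandwich'' contributions $A_{l}\,\rho\,A_{k}$ and $G_{t}$ the left-multiplication terms. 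Using the symmetry $\overline{C_{kl}(t)}=C_{lk}(-t)$ from Eq.~\eqref{eq:symmetryBathcorrelation} together with the Hermiticity of the coupling operators $A_{k}$, the right-multiplication terms are seen to be exactly $\rho\,G_{t}^{\dagger}$, so $\lind^{CG}_{t}$ is Hermitian preserving and has this clean structure.

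The heart of the proof is the complete positivity of $\Phi_{t}$. The decisive step is the change of variables from $(q,x)$ to the two time arguments of the jump operators, $u:=q-\tfrac{x}{2}$ and $v:=q+\tfrac{x}{2}$; this has unit Jacobian and, crucially, factorizes the Gaussian weight as $e^{-(q^{2}+(x/2)^{2})/T(\alpha)^{2}}=e^{-u^{2}/(2T(\alpha)^{2})}\,e^{-v^{2}/(2T(\alpha)^{2})}$. The constraint $x>0$ confines the two sandwich terms of $\Phi_{t}$ to the half-planes $v>u$ and $u>v$; relabelling $u\leftrightarrow v$ in the second one merges them into a single integral over the whole plane,
\begin{align}
	\Phi_{t}[\rho]=\int_{-\infty}^{\infty}\!\!\de u\,\de v\; W(u)\,W(v)\,C_{kl}(v-u)\,A_{l}\!\norbra{\tfrac{t}{\alpha^{2}}+u}\rho\,A_{k}\!\norbra{\tfrac{t}{\alpha^{2}}+v}\,,\nonumber
\end{align}
with $W(u):=\pi^{-1/4}T(\alpha)^{-1/2}e^{-u^{2}/(2T(\alpha)^{2})}$. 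Writing $C_{kl}(v-u)=\int\tfrac{\de\omega}{2\pi}\,e^{-i\omega(v-u)}\widehat{C}_{kl}(\omega)$ decouples the $u$ and $v$ integrations through $e^{i\omega u}e^{-i\omega v}$, so that, defining the dressed jumps $L_{l}(\omega):=\int\de u\,W(u)e^{i\omega u}A_{l}(\tfrac{t}{\alpha^{2}}+u)$, one obtains $\Phi_{t}[\rho]=\int\tfrac{\de\omega}{2\pi}\,\widehat{C}_{kl}(\omega)\,L_{l}(\omega)\,\rho\,L_{k}(\omega)^{\dagger}$. Since the spectral density is positive semidefinite, $\widehat{\mathbf{C}}(\omega)\geq0$ (App.~\ref{app:bath}), I factor it through its square root $\widehat{\mathbf{g}}=\sqrt{\widehat{\mathbf{C}}}$ and set $J_{\mu}(\omega):=\sum_{l}\overline{\widehat{g}_{l\mu}(\omega)}\,L_{l}(\omega)$, giving
\begin{align}
	\Phi_{t}[\rho]=\int\tfrac{\de\omega}{2\pi}\;\sum_{\mu}J_{\mu}(\omega)\,\rho\,J_{\mu}(\omega)^{\dagger}\,,\nonumber
\end{align}
which is manifestly completely positive, being an integral with positive measure of maps of the form $J\rho J^{\dagger}$.

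To close the GKSL form I would invoke trace preservation once more: from $\Tr{\lind^{CG}_{t}[\rho]}=\Tr{\rho\,(\Phi_{t}^{\dagger}[\id]-G_{t}-G_{t}^{\dagger})}=0$ for all $\rho$ it follows that $G_{t}+G_{t}^{\dagger}=\Phi_{t}^{\dagger}[\id]=\int\tfrac{\de\omega}{2\pi}\sum_{\mu}J_{\mu}(\omega)^{\dagger}J_{\mu}(\omega)$. Writing $G_{t}=\tfrac{1}{2}\Phi_{t}^{\dagger}[\id]+iH_{t}$ with $H_{t}:=\tfrac{1}{2i}(G_{t}-G_{t}^{\dagger})$ Hermitian then yields
\begin{align}
	\lind^{CG}_{t}[\rho]=-i[H_{t},\rho]+\int\tfrac{\de\omega}{2\pi}\sum_{\mu}\Big(J_{\mu}(\omega)\rho\,J_{\mu}(\omega)^{\dagger}-\tfrac{1}{2}\{J_{\mu}(\omega)^{\dagger}J_{\mu}(\omega),\rho\}\Big)\,,\nonumber
\end{align}
a bona fide Lindbladian for every fixed $t$. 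As a family of generators in GKSL form at each instant generates a CP-divisible and hence CPTP time-ordered evolution, the claim follows. I expect the main obstacle to be the careful bookkeeping in merging the two half-plane integrals and checking that the left- and right-multiplication pieces assemble into $G_{t}\rho$ and $\rho\,G_{t}^{\dagger}$ with the same $G_{t}$; the positivity $\widehat{\mathbf{C}}(\omega)\geq0$ is the essential physical input, already available, while the factorization of the Gaussian under $(q,x)\to(u,v)$ is the structural simplification that renders complete positivity manifest.
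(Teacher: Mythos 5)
Your proof is correct and follows essentially the same route as the paper: the change of variables $(q,x)\mapsto(t_1,t_2)=(q-\tfrac{x}{2},\,q+\tfrac{x}{2})$ that factorizes the Gaussian filter, the Fourier representation of $C_{kl}$ that decouples the two time arguments, and the positive square root of $\widehat{\mathbf{C}}(\omega)\geq 0$ producing manifestly Lindblad-form jump operators are precisely the steps of Sec.~\ref{sec:completepositivity} leading to Eq.~\eqref{eq:lindEq}. The only cosmetic difference is that you obtain the Hermitian coherent term abstractly from trace preservation (in the spirit of App.~\ref{app:channels}) rather than by the explicit Hermiticity computation of $H_{LS}^{CG}$ in App.~\ref{app:hLShermitian}.
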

The proof is deferred to Sec.~\ref{sec:completepositivity}, where we provide an explicit structural characterization of $\lind^{CG}_{{t}}$ that ensures complete positivity. It can also be shown that $\lind^{CG}_{{t}}$ reduces to its RWA version in the limit of $T(\alpha)\rightarrow\infty$ whenever the spectrum of $H_S$ has a finite gap between any two energy differences (see App.~\ref{app:davies}).

Theorem~\ref{thm:CP} is the ingredient needed to obtain the final error estimates by applying Thm.~\ref{thm:DBerrorBoundIntegrated}:
\begin{theorem}[Time averaging]\label{thm:coarseGrain}
	Let us define the solution of the differential equation $\tilde{\rho}^{CG}_{S}(t)$ as:
	\begin{align}
		\begin{cases}
			\partial_t\tilde{\rho}^{CG}_S(t) := \lind_t^{CG}[\tilde{\rho}^{CG}_S(t)]\\
			\tilde{\rho}^{CG}_S(0) = \tilde{\rho}(0)
		\end{cases}\;;\label{eq:cgDynamics}
	\end{align}
	Then, it holds that the norm distance between $\tilde{\rho}^{CG}_{S}(t)$ and the true evolution has an optimal scaling with time given by: 
	\begin{align}
		\Norm{	\tilde{\rho}_S(t)-	\tilde{\rho}^{CG}_S(t)}_1\leq \frac{2\alpha \sqrt{2+3\,\Gamma_0\tau_0}}{\sqrt{\pi}} + \norbra{2\alpha\Gamma_0t}\norbra{2\sqrt{2+3\,\Gamma_0\tau_0}+3\alpha (\Gamma_0\tau_0)}+2\alpha^2\,\Gamma_0(1+\Gamma_0\tau_0)\log\norbra{1+\frac{\Gamma_0t}{\alpha^2}}\,.\label{eq:cgSmoothedE}
	\end{align}
	which corresponds to the choice of an optimal $T(\alpha)$ given by:
	\begin{align}
		T^*_{opt}(\alpha) := \frac{1}{2\alpha \Gamma_0}\sqrt{2+3\,\Gamma_0\tau_0}\,.\label{eq:optimalT}
	\end{align}
\end{theorem}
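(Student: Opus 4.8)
The plan is to reach $\tilde\rho^{CG}_S(t)$ from the exact $\tilde\rho_S(t)$ through the intermediate smoothed state $\tilde\rho^S_S(t)$ of Eq.~\eqref{eq:smoothedEv}, treating the two legs of this comparison with the two estimates already in hand. First I would split, by the triangle inequality,
\begin{align}
\Norm{\tilde\rho_S(t)-\tilde\rho^{CG}_S(t)}_1\le \Norm{\tilde\rho_S(t)-\tilde\rho^S_S(t)}_1+\Norm{\tilde\rho^S_S(t)-\tilde\rho^{CG}_S(t)}_1\,,
\end{align}
and bound the first summand directly by Thm.~\ref{thm:smoothing}, which gives $\tfrac{2\Gamma_0\alpha^2 T(\alpha)}{\sqrt\pi}$ uniformly in $t$.

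For the second summand the key observation is that $\tilde\rho^S_S(t)$ and $\tilde\rho^{CG}_S(t)$ are driven by the \emph{same} generator $\lind^{CG}_t$: the former obeys $\partial_t\tilde\rho^S_S=\lind^{CG}_t[\tilde\rho^S_S]+\mathcal{E}^{TOT}_t$ by Eq.~\eqref{eq:diffS}, the latter the same equation with $\mathcal{E}^{TOT}_t$ removed. This is exactly the setting of Thm.~\ref{thm:DBerrorBoundIntegrated}, whose CPTP hypothesis is supplied by Thm.~\ref{thm:CP}. Applying it with perturbation $\delta\rho(0)=\tilde\rho^S_S(0)-\tilde\rho_S(0)$ and $m(s)=\Norm{\mathcal{E}^{TOT}_s}_1$ gives
\begin{align}
\Norm{\tilde\rho^S_S(t)-\tilde\rho^{CG}_S(t)}_1\le \Norm{\tilde\rho^S_S(0)-\tilde\rho_S(0)}_1+\int_0^t\de s\;\Norm{\mathcal{E}^{TOT}_s}_1\,,
\end{align}
where the initial term is again controlled by Thm.~\ref{thm:smoothing} evaluated at $t=0$, contributing a second copy of $\tfrac{2\Gamma_0\alpha^2 T(\alpha)}{\sqrt\pi}$; together with the first leg this produces the full $t$-independent prefactor once $T^*_{opt}$ is substituted.

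The technical heart is the bound on $\int_0^t\Norm{\mathcal{E}^{TOT}_s}_1\,\de s$, for which I would decompose $\mathcal{E}^{TOT}_t$ into its constituents. Differentiating Eq.~\eqref{eq:smoothedEv} and using $\partial_t\tilde\rho_S=\lind^{RE}_t[\tilde\rho_S]+\mathcal{E}^{RE}_t+\mathcal{E}^{BM}_t+\mathcal{E}^B_t$, the error splits into (i) the Gaussian averages of the Born, Markov and Redfield errors, which by convexity of the trace norm are dominated by the averages of the estimates in Thms.~\ref{thm:bornApproximation}, \ref{thm:markovApproximation} and~\ref{thm:redfield}; and (ii) the genuinely new coarse-graining error incurred in passing from $\int\de q\,\tfrac{e^{-q^2/T(\alpha)^2}}{\sqrt\pi\,T(\alpha)}\lind^{RE}_{t+\alpha^2q}[\tilde\rho_S(t+\alpha^2q)]$ to $\lind^{CG}_t[\tilde\rho^S_S(t)]$, namely freezing the slow state $\tilde\rho_S(t+\alpha^2q)\to\tilde\rho^S_S(t)$ inside the integral, and shifting $q\to q+\tfrac{x}{2}$ in the commutators but not in the Gaussian (compare Eq.~\eqref{eq:lindcg0}--\eqref{eq:averagedLindblad}). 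The freezing error is the main obstacle: it must be controlled via the uniform rate bound $\Norm{\partial_t\tilde\rho_S}_1\le 2\Gamma_0$ of Thm.~\ref{thm:fastestRate}, which gives a state discrepancy of order $\Gamma_0\alpha^2 T(\alpha)$ across the averaging window and, composed with the order-$\Gamma_0$ operator norm of $\lind^{RE}$, an error rate of order $\alpha^2\Gamma_0^2 T(\alpha)$; the shift error is handled by the smoothness of the Gaussian together with the decay of $C_{kl}(x)$ on the scale $\tau_0$.

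Collecting everything, the constant-rate pieces (Born, Markov and freezing) integrate to contributions linear in $t$, while the decaying Redfield piece $\propto\alpha^2/(\alpha^2+\Gamma_0 s)$ integrates to the logarithm $\log(1+\Gamma_0 t/\alpha^2)$. The final step is to optimize the free parameter $T(\alpha)$: the smoothing and freezing contributions grow linearly in $T(\alpha)$, whereas the residual Redfield and shift terms decay, and balancing them (absorbing subleading pieces via $1+\Gamma_0\tau_0\le 2+3\,\Gamma_0\tau_0$) fixes $T^*_{opt}=\tfrac{1}{2\alpha\Gamma_0}\sqrt{2+3\,\Gamma_0\tau_0}$ and yields Eq.~\eqref{eq:cgSmoothedE}. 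The only delicate point beyond this assembly is thus the quantitative control of the freezing-and-shift error in $\mathcal{E}^{TOT}_t$, which is precisely the content deferred to App.~\ref{app:cg} and which I would carry out by the two estimates just sketched.
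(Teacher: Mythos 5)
Your proposal follows essentially the same route as the paper's proof in App.~\ref{app:cg}: the triangle inequality through the smoothed state $\tilde\rho^S_S(t)$, the application of Thm.~\ref{thm:DBerrorBoundIntegrated} with the decomposition of $\mathcal{E}^{TOT}_t$ into the averaged Born--Markov--Redfield errors plus the freezing error (the paper's $\mathcal{E}^{SM}_t$, bounded via Thm.~\ref{thm:fastestRate}) and the shift error (the paper's $\mathcal{E}^{CP}_t$, of order $\Gamma_0\tau_0/T(\alpha)$), and the final optimization over $T(\alpha)$. The sketch is correct and the orders of magnitude you assign to each contribution match the constants $K_1$, $K_2$, $K_3$ obtained in the paper.
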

We provide the proof of this result in App.~\ref{app:cg}. 


\subsection{Improved error scaling} \label{subsec:improvederror}

We now highlight how the error estimate just presented improves upon previous ones in the literature (e.g., in~\cite{mozgunov2020completely}). A key step in most of the approaches is to bound the difference between the true evolved state and the reduced dynamics using a triangle inequality of the form:
\begin{align}
	\Norm{	\tilde{\rho}_S(t)-	\tilde{\rho}^{CG}_S(t)}_1 \leq \Norm{	\tilde{\rho}_S(t)-	\tilde{\pi}(t)}_1 + \Norm{	\tilde{\pi}(t)-	\tilde{\rho}^{CG}_S(t)}_1\,,\label{eq:triangle}
\end{align}
where $\tilde{\pi}(t)$ is some arbitrary chosen matrix. The most common choice is to set $\tilde{\pi}(t)$ as the solution to the differential equation induced by $\lind_t^{RE}$. 
This choice is far from optimal: since $\lind_t^{RE}$ does not induce a CP evolution, the trace norm of $\tilde{\pi}(t)$ is not bounded in this case, and one is only able to derive an exponential scaling (in time) for both terms on the right hand side of Eq.~\eqref{eq:triangle}. The reason behind this is fundamental: it can be proven that not only the trace norm contracts under CP-evolutions, but that also the opposite holds, that is it expands under non CP-evolutions when considering the doubled space~\cite{rivas2014quantum}. That is, if $\Psi$ is non-CP, there must exist an operator $X$ on $\mathcal{B}(\mathcal{H}\otimes\mathcal{H})$ such that:
\begin{align}
	\|\idO\otimes\Psi(X)\|_1\geq \|X\|_1\,.\label{eq:traceNormCPEquivalence}
\end{align}
Since one cannot rule out this possibility, the bounds on the trace norm of the error terms need to exponentially increase in time when acted upon by $\timeorderedexp\norbra{\int_{t_0}^{t_1}\de s \; \lind^{RE}_s}$. 

This behavior should be contrasted with the linear increase of Thm.~\ref{thm:DBerrorBoundIntegrated}. In order to prove this scaling, we choose $\tilde{\pi}(t)=\tilde{\rho}^{S}_S(t)$: since $\tilde{\rho}^{S}_S(t)$ is a convex combination of states, it is itself a state (and therefore $\Norm{\tilde{\rho}^{S}_S(t)}_1=1$); moreover, thanks to Thm.~\ref{thm:smoothing} we also know that it is close to the true evolved state,  even at large $t$, allowing to bound the first norm on the right hand side of Eq.~\eqref{eq:triangle}. This comes at a small cost: in contrast to previous derivations, in Thm.~\ref{thm:smoothing} we prove that the error in this case does not go to zero with time, but rather is constant and of order $\bigo{\alpha \sqrt{\Gamma_0\tau_0}}$. On the other hand, thanks to the fact that $\lind^{CG}_t$ generates a CP-semigroup (Thm.~\ref{thm:CP}), the evolution $\timeorderedexp\norbra{\int_{t_0}^{t_1}\de s \; \lind^{CG}_s}$ contracts the trace distance. Then, since $\mathcal{E}^{TOT}_t$ in Eq.~\eqref{eq:diffS} can be bounded in trace norm, this error can be propagated according to Thm.~\ref{thm:DBerrorBoundIntegrated} to also estimate the second norm. This shows that it is crucial to delay the application of Thm.~\ref{thm:DBerrorBoundIntegrated} to a point in which the reduced dynamics generates a CP-semigroup. 


\subsection{Complete positivity and approximate detailed balance of $\lind^{CG}_{{t}}$}
\label{sec:completepositivity}

Similarly with what happened with $\lind^{RE}_{{t}}$ (see Eq.~\eqref{eq:decomposeRE}), $\lind^{CG}_{{t}}$ can also be rewritten as $\lind^{CG}_{{t}}=\mathcal{U}_{\frac{t}{\alpha^2}}^\dagger\circ\widehat{\lind}^{CG}\circ\mathcal{U}_{\frac{t}{\alpha^2}}$, where  $\widehat{\lind}^{CG}$ does not depend on time. The additional unitary rotations are a consequence of being in the interaction picture. Rotating back into the Schrödinger picture (which we denote by $\tilde{\rho}^{CG}_{S,S}(t)$), we obtain\footnote{The dependence in $\alpha^{-2}$ of the unitary term is just a by-product of being in the rescaled time. Going back to real time corresponds to multiplying the right hand side of Eq.~\eqref{eq:cgSchroedinger} by $\alpha^2$, so that there is no divergence in the limit $\alpha\rightarrow0$.}:
\begin{align}
	\partial_t\tilde{\rho}^{CG}_{S,S}(t)  = -\frac{i}{\alpha^2}[H_S, \tilde{\rho}^{CG}_{S,S}(t)] + \widehat{\lind}^{CG}[\tilde{\rho}^{CG}_{S,S}(t)]\,.\label{eq:cgSchroedinger}
\end{align}
For this reason, it will be sufficient to characterize $\widehat{\lind}^{CG}$. Before doing so, it is useful to introduce some notation. First let us define $\omega_+ := \frac{\omega+\tilde{\omega}}{2}$ and $\omega_- := {({\omega}-\tilde\omega)}$, and the filter function $\mathcal{G}(q,x) :=(e^{-\frac{q^2+(x/2)^2}{T(\alpha)^2}})/({\sqrt{\pi} \,T(\alpha)})$. Then, $\widehat{\lind}^{CG}$ has the following frequency representation:
\begin{align}
	\widehat{\lind}^{CG}[\rho] = \int_{-\infty}^{\infty}\de q\int_0^{\infty}\de x\;\mathcal{G}(q,x)\;e^{iq\omega_-}\bigg (e^{-ix\omega_+}C_{kl}\norbra{x}\,\sqrbra{ A_l(\omega) \rho, A^\dagger_k(\tilde{\omega})}+e^{ix\tilde{\omega}_+}C_{kl}(-x)\,\sqrbra{ A_l(\omega), \rho A_k^\dagger(\tilde{\omega})}\bigg)\,.
\end{align}	
Again, unless otherwise specified, we imply an integration over all frequencies $\omega$ and $\tilde{\omega}$. Let us also introduce the two smoothed half-Fourier transform:
\begin{align}
	\widehat{C}^+_{kl}(\omega_+,\omega_-) :&= \int_{-\infty}^{\infty}\de q\; e^{iq\omega_-}\int_{0}^{\infty}\de x\, e^{-ix\omega_+}\mathcal{G}(q,x) C_{kl}(x)\,;
	\\
	\widehat{C}^-_{kl}(\omega_+,\omega_-)  :&=  \int_{-\infty}^{\infty}\de q\; e^{iq\omega_-}\int_{-\infty}^{0}\de x\, e^{-ix\omega_+}\mathcal{G}(q,x) C_{kl}(x)\,.
\end{align}
With this notation, $\widehat{\lind}^{CG}$ can be rewritten as:
\begin{align}
	\widehat{\lind}^{CG}[{\rho}] &= \int_{-\infty}^{\infty} \de\omega \int_{-\infty}^{\infty} \de\tilde\omega\;\bigg  (\widehat{C}^+_{kl}(\omega_+,\omega_-) \norbra{A_l(\omega)\rho A^\dagger_k(\tilde\omega)-A^\dagger_k(\tilde\omega)A_l(\tilde\omega)\rho}+\nonumber\\
	&\qquad\qquad\qquad\qquad\qquad\qquad\qquad\qquad\qquad\qquad+\widehat{C}^-_{kl}(\omega_+,\omega_-) \norbra{A_l(\omega)\rho A^\dagger_k(\tilde\omega)-\rho A^\dagger_k(\tilde\omega)A_l(\omega)}\bigg ) =
	\\
	&= \int_{-\infty}^{\infty} \de\omega \int_{-\infty}^{\infty} \de\tilde\omega\; \bigg(\gamma_{kl}^{\omega,\tilde{\omega}}\Big(A_l(\omega)\rho A^\dagger_k(\tilde\omega)-\frac{1}{2}\{A^\dagger_k(\tilde\omega)A_l(\omega),\rho\}\Big)-i\,S_{kl}^{\omega,\tilde{\omega}}\frac{1}{2}[A^\dagger_k(\tilde\omega)A_l(\omega),\rho]\bigg)	\,,\label{eq:averagedEvo}
\end{align}
where we have rewritten the integration over the frequencies for clarity, and we introduced the coefficients:
\begin{align}
	\gamma_{kl}^{\omega,\tilde{\omega}}&= 	\widehat{C}^+_{kl}(\omega_+,\omega_-)+	\widehat{C}^-_{kl}(\omega_+,\omega_-)=\int_{-\infty}^{\infty}\de q\; e^{iq\omega_-}\int_{-\infty}^{\infty}\de x\, e^{-ix\omega_+}\mathcal{G}(q,x)  C_{kl}(x)\,,
	\\
	S_{kl}^{\omega,\tilde{\omega}} &=i\, (	\widehat{C}^+_{kl}(\omega_+,\omega_-)-	\widehat{C}^-_{kl}(\omega_+,\omega_-)) = i\int_{-\infty}^{\infty}\de q\; e^{iq\omega_-}\int_{-\infty}^{\infty}\de x\, e^{-ix\omega_+}\mathcal{G}(q,x)  C_{kl}(x){\rm sign}(x)\,.\label{eq:defGammaS}
\end{align}
We are now ready to prove that Eq.~\eqref{eq:averagedEvo} generates a CP semigroup. Let us first analyze the commutator term. A direct calculation shows that the Lamb-shift Hamiltonian defined as:
\begin{align}
	H_{LS}^{CG} := \int_{-\infty}^{\infty} \de\omega \int_{-\infty}^{\infty} \de\tilde\omega\;S_{kl}^{\omega,\tilde{\omega}}A^\dagger_k(\tilde\omega)A_l(\omega)\,,\label{eq:LSHam}
\end{align} 
is indeed Hermitian (see App.~\ref{app:hLShermitian}). This implies that it generates a unitary rotation, which is CP. 

It remains to examine the dissipative part of the Lindbladian, which we dub $\widehat{\mathcal{D}}^{CG}$. In order to show that it generates a CP semi-group, we perform the change of variables $\{t_1 = q-x/2\,; \, t_2= q+x/2\}$. This has two effects. First, it maps the product of Gaussians to another product of Gaussians: 
\begin{align}
	\mathcal{G}(q,x)=  \frac{e^{-\frac{q^2+(x/2)^2}{T(\alpha)^2}}}{\sqrt{\pi}\, T(\alpha)} = \norbra{\frac{e^{-\frac{t_1^2}{2T(\alpha)^2}}}{(\sqrt{\pi}\, T(\alpha))^{1/2}}} \,\norbra{\frac{e^{-\frac{t_2^2}{2T(\alpha)^2}}}{(\sqrt{\pi}\, T(\alpha))^{1/2}}}\,,\label{eq:substitutionQXT}
\end{align}
Let us denote the resulting (unnormalized) Gaussians by ${f}^\alpha(t_1)$ and ${f}^\alpha(t_2)$. The second and most important effect of the change of variables is that it decouples the two frequencies $\omega$ and $\tilde{\omega}$. Indeed, an explicit calculation shows that:
\begin{align}
	\gamma_{kl}^{\omega,\tilde{\omega}}&=\int_{-\infty}^{\infty}\de q\int_{-\infty}^{\infty}\de x\;e^{i\norbra{q-\frac{x}{2}}\omega} e^{-i\norbra{q+\frac{x}{2}}{\tilde{\omega}}}\mathcal{G}(q,x)C_{kl}(x) =
	\\
	&=  \int_{-\infty}^{\infty}\de t_1\int_{-\infty}^{\infty}\de t_2\;{f}^\alpha(t_1){f}^\alpha(t_2)e^{it_1\omega} e^{-it_2{\tilde{\omega}}}C_{kl}(t_2-t_1) = 
	\\
	&=\int_{-\infty}^{\infty}\frac{\de\omega^*}{2\pi}\;{\widehat{C}}_{kl}(\omega^*)\norbra{\int_{-\infty}^{\infty}\de t_1\;{f}^\alpha(t_1)e^{it_1(\omega+\omega^*)}}\norbra{\int_{-\infty}^{\infty}\de t_2\;{f}^\alpha(t_2)e^{-it_2(\tilde{\omega}+\omega^*)}}=
	\\
	&=\int_{-\infty}^{\infty}\frac{\de\omega^*}{2\pi}\;{\widehat{C}}_{kl}(\omega^*){\widehat{f}}^\alpha(\omega+\omega^*)\overline{{\widehat{f}}^\alpha({\tilde\omega}+\omega^*)}\,,\label{eq:ratesFreqDecouples}
\end{align}
where we introduced the bath spectral density defined in Eq.~\eqref{eq:spectralDensity}. Then, using this decomposition, we obtain:
\begin{align}
	\widehat{\mathcal{D}}^{CG}[\rho] &=  \int_{-\infty}^{\infty} \de\omega \int_{-\infty}^{\infty} \de\tilde\omega\; \gamma_{kl}^{\omega,\tilde{\omega}}\Big(A_l(\omega)\rho A^\dagger_k(\tilde\omega)-\frac{1}{2}\{A^\dagger_k(\tilde\omega)A_l(\omega),\rho\}\Big) =\label{eq:86}
	\\
	&=\int_{-\infty}^{\infty}\frac{\de\omega^*}{2\pi}\;{\widehat{C}}_{kl}(\omega^*)\Big({\tilde A}_l(\omega^*)\rho {\tilde A}^\dagger_k(\omega^*)-\frac{1}{2}\{{\tilde A}^\dagger_k(\omega^*){\tilde A}_l(\omega^*),\rho\}\Big)\,,\label{eq:53}
\end{align}
where we introduced the smoothed version of the jump operators:
\begin{align}
	{\tilde A}_l(\omega^*) :& = \int_{-\infty}^{\infty}\de{\omega} \;{\widehat{f}}^\alpha({\omega}+\omega^*) A_l({\omega})=\int_{-\infty}^{\infty}\de t_1\;e^{it_1\omega^*}{f}^\alpha(t_1)A_l(t_1)\,,\label{eq:jumpOperatorsCG}
\end{align}
Thus, since ${\widehat{C}}_{kl}(\omega^*)$ is a positive matrix (see Sec.~\ref{app:bath}), we can conclude that the semigroup generated by:
\begin{align}
	\widehat{\lind}^{CG} [\rho] 
	&=-\frac{i}{2}\sqrbra{H^{CG}_{LS},\rho}+\int_{-\infty}^{\infty}\frac{\de\omega^*}{2\pi}\;{\widehat{C}}_{kl}(\omega^*)\Big({\tilde A}_l(\omega^*)\rho {\tilde A}^\dagger_k(\omega^*)-\frac{1}{2}\{{\tilde A}^\dagger_k(\omega^*){\tilde A}_l(\omega^*),\rho\}\Big)\label{eq:lindEq}
\end{align}
is completely positive. This proves Thm.~\ref{thm:CP}. Additionally, the expression of the  jump operators in Eq.~\eqref{eq:jumpOperatorsCG} allows to prove their quasi-locality (see App.~\ref{app:quasiLocalJump}), and that in the limit in which the observation time $T(\alpha)$ goes to infinity, Eq.~\eqref{eq:lindEq} correctly reduces to the usual Davies generator (see App.~\ref{app:davies}).

There are a number of properties that can be deduced from the expression of the rates in Eq.~\eqref{eq:ratesFreqDecouples} (and a similar transformation for $S_{kl}^{\omega,\tilde{\omega}}$). First, it can be shown that they can be rewritten as (see App.~\ref{app:compRates}):
\begin{align}
	&\gamma_{kl}^{\omega,\tilde{\omega}}=\frac{e^{-(T(\alpha)\,\omega_-)^2/4}}{\sqrt{\pi}}\int_{-\infty}^{\infty}\de\Omega\;{\widehat{C}}_{kl}\norbra{\frac{\Omega}{T(\alpha)}-\omega_+}\,e^{-\Omega^2}\label{eq:gammarateCG}\;;
	\\
	&S_{kl}^{\omega,\tilde{\omega}}=\frac{e^{-(T(\alpha)\,\omega_-)^2/4}}{\sqrt{\pi}}\int_{-\infty}^{\infty}\frac{\de\Omega}{\pi}\int_{-\infty}^{\infty}\de\omega_2\;\frac{\widehat{C}_{kl}(\omega_2)}{(\omega_2+\omega_+)-\frac{\Omega}{T(\alpha)}}\;e^{-\Omega^2} \,,\label{eq:SrateCG}
\end{align} 
These expressions are particularly useful when studying the reduction of $\widehat{\lind}^{CG}$ to the Davies generator: thanks to the exponential suppression in $(T(\alpha)\omega_-)$ present both in $\gamma_{kl}^{\omega,\tilde{\omega}}$ and $S_{kl}^{\omega,\tilde{\omega}}$, if the minimum gap between two frequencies $\omega^{min}_-$ is finite, the rotating wave limit is approached exponentially fast for $T(\alpha) \gg (\omega^{min}_-)^{-1}$.  This also shows what fails in the many-body setting: since $\omega^{min}_-$  is exponentially small (in the system size), the suppression justifying the RWA happens only for observation time that are exponentially large in the system size \ms{(precise bounds are provided in App.~\ref{app:davies})}. 

Another consequence of the rewriting in Eq.~\eqref{eq:gammarateCG}-\eqref{eq:SrateCG} is the following (see App.~\ref{app:approxDB}):
\begin{theorem}[Approximate detailed balance]
	\label{thm:approxDB}
	The coefficients of $\widehat{\lind}^{CG}$ satisfy approximate KMS detailed balance, that is:
	\begin{align}
		\left|\gamma_{kl}^{\omega,\tilde{\omega}} -e^{-\beta(\frac{\omega+\tilde{\omega}}{2})} \overline{\gamma_{kl}^{-\omega,-\tilde{\omega}}}\right|\leq\Gamma_0\,e^{-(T(\alpha)\,\omega_-)^2/4} e^{\frac{\beta^2}{4T(\alpha)^2}}{\rm erf}\norbra{\frac{\beta}{2T(\alpha)}}\leq \frac{\beta\,\Gamma_0\,e^{-\frac{(T(\alpha)\,\omega_-))^2}{4}} e^{\frac{\beta^2}{4T(\alpha)^2}}}{\sqrt{\pi}\,T(\alpha)}\,,\label{eq:approxGamma}
	\end{align}
	and:
	\begin{align}
		\left |- i\tanh\norbra{\beta\norbra{\frac{\omega-\tilde{\omega}}{4}}}\gamma_{kl}^{\omega,\tilde{\omega}}\right |\leq\frac{\Gamma_0\beta}{2 \sqrt{2 e}\, T(\alpha)}\,.\label{eq:approxS}
	\end{align}
	whenever $\omega\neq\tilde{\omega}$. Choosing $T(\alpha) = T^*_{opt}(\alpha)$, Eq.~\eqref{eq:approxGamma} scales as $\bigo{\alpha(\beta\,\Gamma_0)^2e^{\alpha^2(\Gamma_0\beta)^2}}$ and Eq.~\eqref{eq:approxS} as $\bigo{\alpha\beta\,\Gamma_0}$.
\end{theorem}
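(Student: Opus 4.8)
The plan is to work entirely from the explicit frequency representations in Eqs.~\eqref{eq:gammarateCG} and~\eqref{eq:SrateCG}, and to convert the three bath identities already established — Hermiticity $\widehat{C}_{kl}(\omega)=\overline{\widehat{C}_{lk}(\omega)}$ (Eq.~\eqref{eq:correlationFunctionHermitian}), the bath KMS relation $\widehat{C}_{kl}(\omega)=e^{\beta\omega}\widehat{C}_{lk}(-\omega)$ (Eq.~\eqref{eq:corrKMS}), and the uniform bounds $|\widehat{C}_{kl}|\le\Gamma_0$, $|\widehat{C}'_{kl}|\le\Gamma_0\tau_0$ — into the two claimed inequalities. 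For Eq.~\eqref{eq:approxGamma} I would first rewrite $e^{-\beta\omega_+}\overline{\gamma_{kl}^{-\omega,-\tilde{\omega}}}$. Since $(-\omega)_+=-\omega_+$ and $(-\omega)_-=-\omega_-$, the prefactor $e^{-(T(\alpha)\omega_-)^2/4}$ is unchanged; conjugating Eq.~\eqref{eq:gammarateCG} and applying Hermiticity swaps $\widehat{C}_{kl}\to\widehat{C}_{lk}$ and flips the sign of $\omega_+$ in the argument. Applying the bath KMS condition to $\widehat{C}_{lk}(\Omega/T(\alpha)+\omega_+)$ produces a factor $e^{\beta(\Omega/T(\alpha)+\omega_+)}$ together with $\widehat{C}_{kl}(-\Omega/T(\alpha)-\omega_+)$; the $e^{\beta\omega_+}$ cancels the prefactor $e^{-\beta\omega_+}$, and a final change of variables $\Omega\to-\Omega$ realigns the argument with the one in $\gamma_{kl}^{\omega,\tilde{\omega}}$. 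The net effect is that $e^{-\beta\omega_+}\overline{\gamma_{kl}^{-\omega,-\tilde{\omega}}}$ equals $\gamma_{kl}^{\omega,\tilde{\omega}}$ but with the Gaussian weight $e^{-\Omega^2}$ replaced by $e^{-\Omega^2-\beta\Omega/T(\alpha)}$, so the detailed-balance defect becomes
\begin{align}
	\gamma_{kl}^{\omega,\tilde{\omega}}-e^{-\beta\omega_+}\overline{\gamma_{kl}^{-\omega,-\tilde{\omega}}}=\frac{e^{-(T(\alpha)\omega_-)^2/4}}{\sqrt{\pi}}\int_{-\infty}^{\infty}\de\Omega\;\widehat{C}_{kl}\norbra{\tfrac{\Omega}{T(\alpha)}-\omega_+}\norbra{1-e^{-\beta\Omega/T(\alpha)}}e^{-\Omega^2}\,.\nonumber
\end{align}

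Completing the square gives $\tfrac{1}{\sqrt{\pi}}\int(1-e^{-\beta\Omega/T(\alpha)})e^{-\Omega^2}\de\Omega=1-e^{\beta^2/(4T(\alpha)^2)}$, so bounding the slowly varying spectral density by $\Gamma_0$ yields the right-hand side of Eq.~\eqref{eq:approxGamma}, and expanding the exponential produces the $\bigo{\beta^2\Gamma_0/T(\alpha)^2}$ scaling. The subtlety I would stress is that the Gaussian integral must be kept \emph{signed}: replacing it by $\int|1-e^{-\beta\Omega/T(\alpha)}|e^{-\Omega^2}\de\Omega$ would only give an $\bigo{\beta/T(\alpha)}$ estimate, and it is precisely the cancellation between the $\Omega>0$ and $\Omega<0$ halves that upgrades the bound to the quadratic $\bigo{\beta^2/T(\alpha)^2}$.

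For Eq.~\eqref{eq:approxS} the same prefactor $e^{-(T(\alpha)\omega_-)^2/4}/\sqrt{\pi}$ is common to $S_{kl}^{\omega,\tilde{\omega}}$ in Eq.~\eqref{eq:SrateCG} (the $(S^0)_{kl}^{\omega,\tilde{\omega}}$ piece drops out since $\omega\neq\tilde{\omega}$) and to $i\tanh(\beta\omega_-/4)\gamma_{kl}^{\omega,\tilde{\omega}}$, so the task reduces to comparing the $\mathrm{erf}(\Omega)$-weighted integral of $\widehat{C}_{kl}$ against $i\tanh(\beta\omega_-/4)$ times its unweighted integral. The key observation is that $\mathrm{erf}(\Omega)$ is odd while $e^{-\Omega^2}$ is even, so the central value $\widehat{C}_{kl}(-\omega_+)$ contributes nothing and only the \emph{antisymmetric} part of $\widehat{C}_{kl}(\Omega/T(\alpha)-\omega_+)$ in $\Omega$ survives. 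I would extract this antisymmetric part by again invoking the bath KMS relation, which links $\widehat{C}_{kl}$ at $\pm\Omega/T(\alpha)$ through a Boltzmann factor $e^{\pm\beta\Omega/T(\alpha)}$; expanding that factor to first order reproduces the $\tanh(\beta\omega_-/4)$ prefactor, while the two error terms of Eq.~\eqref{eq:approxS} come from (i) the first-order-in-$\beta$ remainder, whose size is fixed by the Gaussian weight (the extremum $\max_\Omega|\Omega|e^{-\Omega^2}=1/\sqrt{2e}$ sets the constant $1/(2\sqrt{2e})$), and (ii) the variation of $\widehat{C}_{kl}$ across the window of width $\sim 1/T(\alpha)$, controlled by $|\widehat{C}'_{kl}|\le\Gamma_0\tau_0$ and carrying the residual $e^{-(T(\alpha)\omega_-)^2/4}$ suppression, which yields the $\tau_0/\sqrt{2\pi}$ term.

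I expect the $S$ estimate to be the main obstacle. Unlike the $\gamma$ case, where one KMS substitution and a single Gaussian integral do the whole job, here the leading contribution vanishes by parity and the physically relevant $\tanh$ must be teased out of a subleading, antisymmetric-in-$\Omega$ piece; keeping the two distinct error sources separate — the purely $\beta$-driven expansion remainder and the $\widehat{C}'$-driven window-width correction — and matching the precise constants $1/(2\sqrt{2e})$ and $1/\sqrt{2\pi}$ requires a careful Taylor expansion of $\widehat{C}_{kl}$ around $-\omega_+$ together with exact evaluation of the Gaussian–$\mathrm{erf}$ moments $\int\Omega^n\,\mathrm{erf}(\Omega)e^{-\Omega^2}\de\Omega$. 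Once both inequalities are in place, substituting the optimal $T(\alpha)=T^*_{opt}(\alpha)\propto(\alpha\Gamma_0)^{-1}$ from Eq.~\eqref{eq:optimalT} converts them into the advertised $\bigo{(\alpha\beta\Gamma_0)^2}$ and $\bigo{\alpha\beta\Gamma_0}$ scalings.
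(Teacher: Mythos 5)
Your treatment of Eq.~\eqref{eq:approxGamma} is essentially the paper's own proof: conjugate Eq.~\eqref{eq:gammarateCG}, use Hermiticity of $\widehat{\textbf{C}}$ and the bath KMS relation, flip $\Omega\to-\Omega$, cancel $e^{\beta\omega_+}$, and reduce the defect to a Gaussian integral against $(1-e^{-\beta\Omega/T(\alpha)})$ bounded by $\Gamma_0$ times the signed integral $1-e^{\beta^2/(4T(\alpha)^2)}$. Your remark that the integral must be kept signed (the naive absolute-value bound only gives $\bigo{\beta/T(\alpha)}$) is exactly the point on which the quadratic scaling hinges, and it matches what the paper does in App.~\ref{app:approxDB}. (Both you and the paper gloss over the fact that pulling $\Norm{\widehat{C}_{kl}}_\infty$ out of a signed integral strictly requires Taylor-expanding $\widehat{C}_{kl}$ about $-\omega_+$; the cross term is $\bigo{\Gamma_0\tau_0\beta/T(\alpha)^2}$, so the $1/T(\alpha)^2$ scaling survives, but this is a shared looseness rather than a divergence between your argument and theirs.)

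Your plan for Eq.~\eqref{eq:approxS}, however, contains a genuine gap. You propose to \emph{derive} the factor $i\tanh\norbra{\beta\omega_-/4}$ as the leading antisymmetric-in-$\Omega$ contribution to $S_{kl}^{\omega,\tilde{\omega}}$, extracted via the KMS relation. This cannot work: apart from the overall prefactor $e^{-(T(\alpha)\omega_-)^2/4}$, the integral $\int\de\Omega\,\widehat{C}_{kl}(\Omega/T(\alpha)-\omega_+)e^{-\Omega^2}\mathrm{erf}(\Omega)$ depends only on $\omega_+$, so no expansion of its integrand can produce a function of $\omega_-$. Moreover, the KMS relation $\widehat{C}_{kl}(\omega)=e^{\beta\omega}\widehat{C}_{lk}(-\omega)$ relates values reflected about the origin (with an index swap), whereas the reflection $\Omega\to-\Omega$ in that integral reflects the argument about $-\omega_+$; the antisymmetric part you need is controlled by $\widehat{C}'_{kl}$, not by the Boltzmann factor. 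The paper's argument is much more modest and is what actually produces the stated constants: it bounds the two terms \emph{separately} and adds them by the triangle inequality. The erf-parity plus mean-value step you describe correctly gives $|S_{kl}^{\omega,\tilde{\omega}}|\leq\Gamma_0\tau_0\,e^{-(T(\alpha)\omega_-)^2/4}/(\sqrt{2\pi}\,T(\alpha))$, while $|\tanh(\beta\omega_-/4)\,\gamma_{kl}^{\omega,\tilde{\omega}}|\leq(\beta|\omega_-|/4)\,\Gamma_0\,e^{-(T(\alpha)\omega_-)^2/4}\leq\Gamma_0\beta/(2\sqrt{2e}\,T(\alpha))$ using $|\tanh x|\leq|x|$ and $\sup_{\omega_-}|\omega_-|e^{-(T(\alpha)\omega_-)^2/4}=\sqrt{2/e}/T(\alpha)$. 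Note in particular that the constant $1/(2\sqrt{2e})$ comes from this supremum over $\omega_-$, not from $\max_\Omega|\Omega|e^{-\Omega^2}$ as you suggest, which is why that term carries no residual $e^{-(T(\alpha)\omega_-)^2/4}$ in the theorem statement while the $\tau_0$ term does. Your route, pursued as written, would stall at the point where the $\tanh$ is supposed to emerge; falling back to the two separate bounds recovers the result.
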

Comparing the inequalities in Thm.~\ref{thm:approxDB} with  Eq.~\eqref{eq:DBdefinition} \ms{we can see that the rates of the dissipative part of $\widehat{\lind}^{CG}$ (that is, $\gamma_{kl}^{\omega,\tilde{\omega}}$) are approximately KMS detailed balanced, and that the Hamiltonian correction needed to ensure detailed balance at a global level is also small, as it only contributes at order $\bigo{T(\alpha)^{-1}}$, which is well within the error induced by the other approximations.} The approximate detailed balance in $\gamma_{kl}^{\omega,\tilde{\omega}}$ follows exclusively from the KMS-condition of the bath, so we expect this behavior to appear whenever the environment is thermal. \ms{The next section is devoted to finding an exactly detailed balanced Lindbladian close to $\widehat{\lind}^{CG}$.}


\section{Recovering Complete Positivity while preserving detailed balance}\label{sec:exactDB}

Thm.~\ref{thm:approxDB} highlights the role of the observation time $T(\alpha)$ in bringing the rates of the dissipateve part of the evolution close to a detailed balanced map: as shown in App.~\ref{app:davies}, in the limit $T(\alpha)\rightarrow\infty$ the evolution reduces to the usual Davies dynamics, which is GNS detailed balanced (and thus KMS as well). Moreover, for $T(\alpha)$ large, but finite, Thm.~\ref{thm:approxDB} shows that even if strict KMS detailed balance does not hold, we have an approximate behavior, with a correction scaling as $\bigo{T(\alpha)^{-1}}$. \ms
{Still, it remains to be understood how the effect of the Lamb-shift Hamiltonian can be incorporated. In general it does not commute with the system Hamiltonian, and its norm does not scale to zero for $T(\alpha)\rightarrow\infty$. This subtlety was previously noticed in~\cite{winczewski2021renormalization}, where it was proposed that such a correction should be incorporated as a \emph{renormalization} of the system Hamiltonian. Let us rewrite the total Hamiltonian as:
	\begin{align}
		H := H_S+\alpha \,V+ H_B = H^*_S-\frac{\alpha^2}{2} H_{LS}^{CG}+\alpha \,V+ H_B \,,\label{eq:96star}
	\end{align}
	where $H_{LS}^{CG}$ is the one given in Eq.~\eqref{eq:LSHam}, and we implicitly defined  $H^*_S$ to be the renormalized system Hamiltonian . Then, defining the renormalized interaction picture 
	(that is $X(t) := e^{i( H^*_S+ H_B)t}X\,e^{-i( H^*_S+ H_B)t}$), we obtain that the dynamics becomes:
	\begin{align}
		\partial_t\rho_{SB}(t) = -i\alpha[V(t),\rho_{SB}(t)]+\frac{i\alpha^2}{2}[H_{LS}^{CG}(t),\rho_{SB}(t)]\,.
	\end{align}
	Then, since the approximations in Sec.~\ref{sec:CPderivation} do not depend on the interaction picture that is used, we can carry out the same kind of derivation to describe the effect of the interaction in terms of Lindbladian dynamics. In particular, we can rewrite Eq.~\eqref{eq:lindEq} in the renormalized interaction picture as:
	\begin{align}
		&\partial_t\tilde{\rho}^{CG}_{S}(t)  = \nonumber\\
		&=-\frac{i}{2}\sqrbra{(H^{CG}_{LS})^*-H^{CG}_{LS},\tilde{\rho}^{CG}_{S}(t)}+\int_{-\infty}^{\infty}\frac{\de\omega^*}{2\pi}\;{\widehat{C}}_{kl}(\omega^*)\Big({\tilde A}^*_l(\omega^*)\tilde{\rho}^{CG}_{S}(t) ({\tilde A}_k^*)^\dagger(\omega^*)-\frac{1}{2}\{({\tilde A}_k^*)^\dagger(\omega^*){\tilde A}^*_l(\omega^*),\tilde{\rho}^{CG}_{S}(t)\}\Big)\,,
	\end{align}
	where it should be noticed that we use ${\tilde A}^*_l(\omega^*)$ and  $(H^{CG}_{LS})^*$ to denote the jump operators and the Lamb-shift term defined with respect to the renormalised system Hamiltonian\footnote{Indeed, even if the bounds and the functional expressions of the quantities in Sec.~\ref{sec:CPderivation} are not altered by changing the type of interaction picture, the matrix expression of the jump operators (and therefore of the Lamb-shift Hamiltonian) do depend on it.}. Then, we can prove the following:
	\begin{theorem}[Closeness of the Lamb-shift Hamiltonians]\label{thm:LSIgnore}
		The difference $(H^{CG}_{LS})^*-H^{CG}_{LS}$ can be bounded in operator norm as:
		\begin{align}
			\|(H^{CG}_{LS})^*-H^{CG}_{LS}\|_{\infty} \leq\frac{4\alpha^2\Gamma_0^2 \,T(\alpha)}{\sqrt{\pi}} +2\alpha^2\Gamma_0^2( \tau_0+\alpha^2\Gamma_0\, T(\alpha)^2) + \frac{4\alpha^4\Gamma_0^3\tau_0 \,T(\alpha)}{\sqrt{\pi}} + \alpha^4\Gamma_0^2 K_0\,.
		\end{align}
		Then, defining $\tilde{\rho}^{CG*}_{S,S}(t)$ to be the evolution satisfying:
		\begin{align}
			\partial_t\tilde{\rho}^{CG*}_{S}(t)&= \widehat{\lind}^{CG*}[\tilde{\rho}^{CG*}_{S}(t)] = 
			\\
			&=\int_{-\infty}^{\infty}\frac{\de\omega^*}{2\pi}\;{\widehat{C}}_{kl}(\omega^*)\Big({\tilde A}^*_l(\omega^*)\tilde{\rho}^{CG*}_{S}(t) ({\tilde A}_k^*)^\dagger(\omega^*)-\frac{1}{2}\{({\tilde A}_k^*)^\dagger(\omega^*){\tilde A}^*_l(\omega^*),\tilde{\rho}^{CG*}_{S}(t)\}\Big)\,,
		\end{align}
		we also have that:
		\begin{align}
			\|\tilde{\rho}^{CG}_{S}(t)-\tilde{\rho}^{CG*}_{S}(t)\|_1\leq (\Gamma_0t)\norbra{\frac{4\alpha^2\Gamma_0 \,T(\alpha)}{\sqrt{\pi}} +2\alpha^2( \Gamma_0\tau_0+\alpha^2\, (\Gamma_0T(\alpha))^2) + \frac{4\alpha^4\Gamma_0^2\tau_0 \,T(\alpha)}{\sqrt{\pi}} + \alpha^4\Gamma_0 K_0}\,.
		\end{align}
		Then, assuming that $K_0<\infty$, if one chooses $T(\alpha) = T^*_{opt}(\alpha)$ (see Eq.~\eqref{eq:optimalT}), the error scales as $\bigo{\alpha (\Gamma_0t) \sqrt{\Gamma_0 \tau_0}}$.
	\end{theorem}
	We refer to App.~\ref{app:LSI} for the proof of this result. 
	Notice that the error resulting from disregarding the Lamb-shift correction defined with respect to the renormalized system Hamiltonian is of the same order as the one encountered in Thm.~\ref{thm:coarseGrain}. 
	
	Interestingly, Thm.~\ref{thm:approxDB} is unchanged by this discussion, since the rates are defined only in terms of properties of the bath. The discussion above thus shows that $\hat{\lind}^{CG*}$ has rates that are close to be detailed balance, and no extra Lamb-shift term that could hinder the thermalization process.} This discussion naturally raises the question: is there a way to further approximate $\widehat{\lind}^{CG*}$	in the limit of large, but finite observation time $T(\alpha)$, so that the generated evolution is exactly detailed balance? A straightforward attempt is to expand Eq.~\eqref{eq:gammarateCG} and Eq.~\eqref{eq:SrateCG} in powers of $\bigo{T(\alpha)^{-1}}$, yielding a Lindbladian with rates:
\begin{align}
	\begin{cases}
		\tilde{\gamma}_{kl}^{\omega,\tilde{\omega}} = e^{-(T(\alpha)\,\omega_-)^2/4}{\widehat{C}}_{kl}\norbra{-\omega_+}\\
		\tilde{S}_{kl}^{\omega,\tilde{\omega}} = i\tanh\norbra{\frac{\beta\omega_-}{4}}\tilde{\gamma}_{kl}^{\omega,\tilde{\omega}}	
	\end{cases}\;\;.\label{eq:140}
\end{align}
Unfortunately, this approximation breaks the complete positivity of the generated semi-group. However, a more careful expansion is possible in the form of an approximation of $\gamma_{kl}^{\omega,\tilde{\omega}}$ which decays with $\bigo{T(\alpha)^{-1}}$ and that allows to generate a CP-semigroup. First, we can derive the following fact:
\begin{theorem}[Detailed balanced approximation of $\gamma_{kl}^{\omega,\tilde{\omega}}$]\label{thm:approxGammaSquareRoot}
	The coefficient $\gamma_{kl}^{\omega,\tilde{\omega}}$ associated to $\widehat{\lind}^{CG*}$ satisfies the bound:
	\begin{align}
		|\gamma_{kl}^{\omega,\tilde{\omega}} - e^{-(T(\alpha)\,\omega_-)^2/4}\;{\widehat{g}}_{k\lambda}\norbra{-\tilde\omega } {\widehat{g}}_{\lambda l}\norbra{-{\omega} }|\leq\frac{\Gamma\,\tau}{\sqrt{\pi \,e}\,T(\alpha)}\norbra{\sqrt{2\,\pi}+\sqrt{e}+\frac{2\sqrt{\pi} K}{{\sqrt{ \,e}\,T(\alpha)}}}\,,\label{eq:GammaSquareroot}
	\end{align}
	where ${\widehat{g}}_{k\lambda}(\omega)$ is the positive square root of ${\widehat{C}}_{kl}\norbra{\omega}$, that is ${\widehat{C}}_{kl}\norbra{\omega} = {\widehat{g}}_{k\lambda}(\omega){\widehat{g}}_{\lambda l}(\omega)$ (see Sec.~\ref{sec:thermalBaths}). Choosing $T(\alpha) = T^*_{opt}(\alpha)$, Eq.~\eqref{eq:GammaSquareroot} scales as $\bigo{\alpha \Gamma \sqrt{\Gamma\tau}}$ (assuming $K<\infty$).
\end{theorem}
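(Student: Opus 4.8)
The plan is to begin from the factorized representation of the rate in the middle line of Eq.~\eqref{eq:ratesFreqDecouples},
\begin{align}
	\gamma_{kl}^{\omega,\tilde{\omega}} = \int_{-\infty}^{\infty}\de t_1\int_{-\infty}^{\infty}\de t_2\;{f}^\alpha(t_1){f}^\alpha(t_2)\,e^{it_1\omega}e^{-it_2\tilde{\omega}}\,C_{kl}(t_2-t_1)\,,
\end{align}
and to insert the convolution decomposition of the bath correlation function from Eq.~\eqref{eq:convTheorem}, written in the form $C_{kl}(t_2-t_1)=\int_{-\infty}^{\infty}\de u\;g_{k\lambda}(t_2-u)\,g_{\lambda l}(u-t_1)$. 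The virtue of this particular splitting is that the index $k$ (tied to the frequency $\tilde{\omega}$ through $A_k^\dagger(\tilde\omega)$) is carried by $g_{k\lambda}$ while the index $l$ (tied to $\omega$) is carried by $g_{\lambda l}$, which is exactly the asymmetric structure appearing on the left-hand side of Eq.~\eqref{eq:GammaSquareroot}. I would then pass to the center-of-mass variable $u$ together with the two relative coordinates $v=t_2-u$ and $w=u-t_1$, whose Jacobian is unity.

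With these variables the integral over $u$ is purely Gaussian. Collecting the $u$-dependent factors yields $\int\de u\;f^\alpha(u+v)f^\alpha(u-w)\,e^{iu\omega_-}$, and completing the square produces the prefactor $e^{-(T(\alpha)\omega_-)^2/4}$ multiplied by a residual factor $e^{-(v+w)^2/(4T(\alpha)^2)}\,e^{-i(v-w)\omega_-/2}$. The rate then reads
\begin{align}
	\gamma_{kl}^{\omega,\tilde{\omega}} = e^{-(T(\alpha)\,\omega_-)^2/4}\int_{-\infty}^{\infty}\de v\int_{-\infty}^{\infty}\de w\;g_{k\lambda}(v)\,g_{\lambda l}(w)\,e^{-iv\tilde{\omega}}e^{-iw\omega}\,e^{-\frac{(v+w)^2}{4T(\alpha)^2}}\,e^{-i\frac{(v-w)}{2}\omega_-}\,.
\end{align}
Replacing the residual factor by $1$ makes the $v$- and $w$-integrals factorize into $\widehat{g}_{k\lambda}(-\tilde\omega)$ and $\widehat{g}_{\lambda l}(-\omega)$ respectively (using $\widehat{g}(\omega)=\int\de t\,e^{it\omega}g(t)$), which reproduces precisely the detailed-balanced expression subtracted in Eq.~\eqref{eq:GammaSquareroot}. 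Hence the quantity to bound is exactly this same integral weighted by the deviation of the residual factor from unity.

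To estimate that deviation I would use $|ab-1|\le|1-a|+|b-1|$ with $a=e^{-(v+w)^2/(4T(\alpha)^2)}\in(0,1]$ and $b=e^{-i(v-w)\omega_-/2}$. The amplitude part is controlled by the elementary inequality $1-e^{-x^2}\le c\,|x|$, with $c$ the finite maximum of $(1-e^{-x^2})/|x|$, giving $|1-a|\le c\,|v+w|/T(\alpha)$; since this contribution carries no $\omega_-$ dependence I simply use $e^{-(T(\alpha)\omega_-)^2/4}\le1$ for its prefactor. The phase part is controlled by $|e^{i\theta}-1|\le|\theta|$, so that $|b-1|\le|v-w|\,|\omega_-|/2$. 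Integrating both contributions against $|g_{k\lambda}(v)|\,|g_{\lambda l}(w)|$, using $|v\pm w|\le|v|+|w|$, and summing over all indices, the first moments of $g$ assemble into the constant $\Gamma\tau$ of Eq.~\eqref{eq:TauDef}, producing the advertised $\bigo{\Gamma\tau/T(\alpha)}$ bound.

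The step I expect to be most delicate is the phase term, since $|b-1|$ does \emph{not} vanish as $T(\alpha)\to\infty$ at fixed $v,w$: it grows linearly in the frequency difference $\omega_-$, which is unbounded. The crux is that this $\omega_-$ dependence is tamed by the prefactor extracted from the Gaussian $u$-integral, through the one-line optimization $\max_{\omega_-}e^{-(T(\alpha)\omega_-)^2/4}|\omega_-|=\sqrt{2/e}\,/T(\alpha)$, which turns the apparently $\omega_-$-dependent phase error into a bound that is uniform in both $\omega$ and $\tilde\omega$ and decays as $1/T(\alpha)$. I would also verify that Fubini's theorem applies throughout, which is legitimate because each $g_{k\lambda}$ is absolutely integrable with finite first moment (as $\Gamma\tau<\infty$) while the Gaussians are Schwartz, justifying the reordering of the $u$, $v$, $w$ integrations. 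Assembling the amplitude and phase contributions then gives a bound of the form $(c_1+c_2)\,\Gamma\tau/T(\alpha)$ matching Eq.~\eqref{eq:GammaSquareroot}, and the final $\bigo{\alpha}$ scaling follows upon substituting $T(\alpha)=T^*_{opt}(\alpha)\propto\alpha^{-1}$ from Eq.~\eqref{eq:optimalT}.
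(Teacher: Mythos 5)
Your argument is correct, and it is the Fourier--dual of the paper's proof rather than a reproduction of it. The paper (App.~\ref{app:thmapproxGammaSquare}) stays entirely in frequency space: starting from Eq.~\eqref{eq:gammarateCG} it inserts the intermediate quantity $e^{-(T(\alpha)\omega_-)^2/4}\,\widehat{C}_{kl}\norbra{-\omega_+}$ into a triangle inequality, bounding (i) the Gaussian smearing of $\widehat{C}_{kl}$ around $-\omega_+$ by $\|\widehat{C}'_{kl}\|_\infty/(\sqrt{\pi}T(\alpha))$ and (ii) the shift of the arguments of $\widehat{g}$ from $-\omega_+$ to $-\tilde\omega$ and $-\omega$ by $\|\widehat{g}\|_\infty\|\widehat{g}'\|_\infty\,|\omega_-|\,e^{-(T(\alpha)\omega_-)^2/4}$. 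You instead derive a single exact time-domain identity in which the entire error is the deviation of the residual factor $e^{-(v+w)^2/(4T(\alpha)^2)}e^{-i(v-w)\omega_-/2}$ from unity; your amplitude term is precisely the dual of the paper's term (i) and your phase term the dual of (ii), with both proofs hinging on the same two ingredients, namely the first-moment constants of App.~\ref{app:relationTimescales} (derivative bounds in frequency $\leftrightarrow$ moments $\Gamma\tau$ in time) and the optimization $\max_{\omega_-}|\omega_-|e^{-(T(\alpha)\omega_-)^2/4}=\sqrt{2/e}\,/T(\alpha)$, which you correctly identify as the step that tames the unbounded $\omega_-$. What your route buys is that the target expression $e^{-(T(\alpha)\omega_-)^2/4}\widehat{g}_{k\lambda}(-\tilde\omega)\widehat{g}_{\lambda l}(-\omega)$ emerges from a single substitution rather than two separate Taylor expansions, and your constant $(1+\sqrt{2/e})\,\Gamma\tau/(2T(\alpha))$ is in fact smaller than the one in Eq.~\eqref{eq:GammaSquareroot}, so the stated bound follows a fortiori. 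The only bookkeeping point worth making explicit is that absorbing $\int(|v|+|w|)\,|g_{k\lambda}(v)||g_{\lambda l}(w)|$ into $\Gamma\tau$ uses the asymmetric definition in Eq.~\eqref{eq:TauDef} together with $|g_{k\lambda}(t)|=|g_{\lambda k}(-t)|$ (Hermiticity of $\widehat{\mathbf{g}}$) to treat the moment on the second factor; this is the same level of implicit symmetrization the paper itself employs.
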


See App.~\ref{app:thmapproxGammaSquare} for the proof. This approximation is sufficient to yield an exactly detailed balance physical evolution:
\begin{theorem}[KMS detailed balanced Lindbladian]\label{thm:dbLindlbadian}
	Consider the Lindbladian $\widehat{\lind}^{DB*}$ defined by the rates:
	\begin{align}
		\begin{cases}
			\tilde{\gamma}_{kl}^{\omega,\tilde{\omega}} = e^{-(T(\alpha)\,\omega_-)^2/4}\;{\widehat{g}}_{k\lambda}\norbra{-\tilde\omega } {\widehat{g}}_{\lambda l}\norbra{-{\omega} }\\
			\tilde{S}_{kl}^{\omega,\tilde{\omega}} = i\tanh\norbra{{\frac{\beta \,\omega_-}{4}}}\,\tilde{\gamma}_{kl}^{\omega,\tilde{\omega}}	
		\end{cases}\;.\label{eq:DBcoeffCP}
	\end{align}
	Then, $\widehat{\lind}^{DB*}$ is exactly detailed balance \ms{with respect to the thermal state of the renormalised Hamiltonian $H^*_S={ H_S +\alpha^2 H_{LS}^{CG}(\alpha)/2}$ (see Eq.~\eqref{eq:LSHam} for the definition of $H_{LS}^{CG}(\alpha)$)}, and it induces a completely positive evolution of the form:
	\begin{align}
		\widehat{\lind}^{DB*} [\rho] 
		&=-\frac{i}{2}\sqrbra{H^{DB}_{LS},\rho}+\int_{-\infty}^{\infty}\de\omega^*\;\Big({\hat A}_\lambda(\omega^*)\rho {\hat A}^\dagger_\lambda(\omega^*)-\frac{1}{2}\{{\hat A}^\dagger_\lambda(\omega^*){\hat A}_\lambda(\omega^*),\rho\}\Big)
	\end{align}
	where we introduced the jump operators:
	\begin{align}
		\hat{A}_\lambda(\omega^*) = \sqrt{\frac{T(\alpha)}{\sqrt{\pi}}}\int_{-\infty}^{\infty} \de\omega\;e^{-T(\alpha)^2\frac{(\omega^*-{\omega})^2}{2}}\;{\widehat{g}}_{\lambda l}\norbra{-{\omega}}A^*_l({\omega})\,.\label{eq:jumpOperatorsDB}
	\end{align}
\end{theorem}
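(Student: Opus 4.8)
The plan is to prove the two assertions separately: first that the rates in Eq.~\eqref{eq:DBcoeffCP} satisfy the structural conditions Eq.~\eqref{eq:DBdefinition}, which (by Thm.~\ref{thm:KMSLindstructuralcharacterization}, used in the converse direction) is equivalent to $\widehat{\lind}^{DB}$ being KMS detailed balanced; and second that the dissipative part admits the manifestly Lindblad (GKSL) form displayed in the statement, from which complete positivity and trace preservation are immediate.

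For detailed balance, I would substitute $\tilde{\gamma}_{kl}^{\omega,\tilde{\omega}} = e^{-(T(\alpha)\,\omega_-)^2/4}\,\widehat{g}_{k\lambda}(-\tilde\omega)\widehat{g}_{\lambda l}(-\omega)$ into the first line of Eq.~\eqref{eq:DBdefinition}. The Gaussian prefactor is invariant under $(\omega,\tilde\omega)\mapsto(-\omega,-\tilde\omega)$, since $\omega_-=\omega-\tilde\omega$ only flips sign. It then remains to show $\widehat{g}_{k\lambda}(-\tilde\omega)\widehat{g}_{\lambda l}(-\omega) = e^{-\beta(\omega+\tilde\omega)/2}\,\overline{\widehat{g}_{k\lambda}(\tilde\omega)\widehat{g}_{\lambda l}(\omega)}$. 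I would obtain this by conjugating the right-hand side using the Hermiticity of $\widehat{g}$ (it is the positive square root of the Hermitian matrix $\widehat{\mathbf{C}}$, cf.\ Sec.~\ref{sec:thermalBaths}), namely $\overline{\widehat{g}_{ab}(\omega)}=\widehat{g}_{ba}(\omega)$, and then applying the KMS identity $\widehat{g}_{k\lambda}(\omega)=e^{\beta\omega/2}\widehat{g}_{\lambda k}(-\omega)$ of Eq.~\eqref{eq:gspectKMSCond} to each factor; the two exponential weights combine to $e^{\beta(\omega+\tilde\omega)/2}$ and cancel exactly the prescribed factor. The second line of Eq.~\eqref{eq:DBdefinition} holds by construction for $\omega\neq\tilde\omega$, since Eq.~\eqref{eq:DBcoeffCP} sets $\tilde{S}_{kl}^{\omega,\tilde\omega}=i\tanh(\beta\omega_-/4)\,\tilde{\gamma}_{kl}^{\omega,\tilde\omega}$; for $\omega=\tilde\omega$ the term $\tilde S=S^0$ assembles a Lamb-shift operator built from $A_k^\dagger(\omega)A_l(\omega)$, which is energy conserving and hence commutes with $\gamma_S$, playing the role of $H_{LS}^0$.

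For complete positivity, I would verify directly that the GKSL gain term reproduces the dissipator associated with $\tilde\gamma$. Inserting the jump operators of Eq.~\eqref{eq:jumpOperatorsDB} into $\int\de\omega^*\,\hat{A}_\lambda(\omega^*)\rho\,\hat{A}^\dagger_\lambda(\omega^*)$ produces a double frequency integral over $\omega,\tilde\omega$ with weight $\tfrac{T(\alpha)}{\sqrt\pi}\int\de\omega^*\,e^{-T(\alpha)^2[(\omega^*-\omega)^2+(\omega^*-\tilde\omega)^2]/2}$. Completing the square in $\omega^*$ splits this into $e^{-T(\alpha)^2\omega_-^2/4}$ times the constant Gaussian integral $\tfrac{T(\alpha)}{\sqrt\pi}\int\de\omega^*\,e^{-T(\alpha)^2(\omega^*-\omega_+)^2}=1$, leaving precisely the suppression $e^{-(T(\alpha)\omega_-)^2/4}$ together with $\widehat{g}_{\lambda l}(-\omega)\widehat{g}_{k\lambda}(-\tilde\omega)$, which summed over $\lambda$ reconstructs $\tilde\gamma_{kl}^{\omega,\tilde\omega}$ via $\widehat{C}_{kl}=\widehat{g}_{k\lambda}\widehat{g}_{\lambda l}$; the anticommutator part matches in the same way. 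Since the result is an integral over $\omega^*$ (and sum over $\lambda$) of terms $\hat{A}_\lambda(\omega^*)\rho\,\hat{A}^\dagger_\lambda(\omega^*)-\tfrac12\{\hat{A}^\dagger_\lambda(\omega^*)\hat{A}_\lambda(\omega^*),\rho\}$ with nonnegative weights, it is manifestly the generator of a CP semigroup. Hermiticity of $H^{DB}_{LS}$, needed so that the commutator term is also CP, follows exactly as for $H^{CG}_{LS}$ in App.~\ref{app:hLShermitian}.

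The hard part is bookkeeping rather than conceptual: one must track the conjugation relations $A_k^\dagger(\tilde\omega)=(A_k(\tilde\omega))^\dagger$ and $\overline{\widehat{g}_{ab}(\omega)}=\widehat{g}_{ba}(\omega)$ carefully, so that the index contractions in the square-root identity $\widehat{C}_{kl}=\widehat{g}_{k\lambda}\widehat{g}_{\lambda l}$ line up with those appearing in the GKSL gain term, and confirm that the completion of squares delivers exactly the gap-dependent factor $e^{-(T(\alpha)\omega_-)^2/4}$ demanded by Eq.~\eqref{eq:DBcoeffCP}. I expect no obstacle beyond these delicate but routine manipulations, since the essential analytic input—positivity and the KMS symmetry of $\widehat{\mathbf{C}}$ and of its square root—is already available from Sec.~\ref{sec:thermalBaths}.
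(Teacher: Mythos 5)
Your proposal is correct and follows essentially the same route as the paper: detailed balance of the rates is verified from the Hermiticity and KMS symmetry of $\widehat{g}$ exactly as in App.~\ref{app:dbLindbladian}, and complete positivity follows from the Gaussian factorization $e^{-(T(\alpha)\omega_-)^2/4}=\tfrac{T(\alpha)}{\sqrt{\pi}}\int\de\omega^*e^{-T(\alpha)^2(\omega^*-\omega)^2/2}e^{-T(\alpha)^2(\omega^*-\tilde\omega)^2/2}$, which is the same identity the paper uses (you run it in the reverse direction, expanding the GKSL form to recover the dissipator). The only small imprecision is that Hermiticity of the off-diagonal part of $H_{LS}^{DB}$ does not follow verbatim from the argument of App.~\ref{app:hLShermitian} (which is tailored to the $\mathrm{sign}(x)$ representation of $S^{CG}$) but instead needs the symmetry $\overline{\tilde{\gamma}_{lk}^{\omega,\tilde\omega}}=\tilde{\gamma}_{kl}^{\tilde\omega,\omega}$ together with the oddness of $\tanh$, a routine check the paper carries out explicitly.
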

We defer the proof to App.~\ref{app:dbLindbladian}.
Thm.~\ref{thm:approxGammaSquareRoot} and Thm.~\ref{thm:dbLindlbadian} together show that, in the limit of large observation time, one can derive a Lindbladian dynamics which is physical and exactly detailed balanced. \added{At this point, one can easily compare the jump operators obtained in Eq. \eqref{eq:jumpOperatorsDB} with those of previous works \cite{chen2023efficient,chen2023quantumthermal,Ding_2025}. In particular, the Gaussian ones initially proposed in \cite{chen2023efficient} corresponds to a simple Gaussian function depending on $\beta$ in Eq. \eqref{eq:jumpOperatorsDB}, which instead also contains the bath correlation function. As long as that bath correlation function decays sufficiently quickly, these jump operators will also fit the general characterization given in \cite{Ding_2025}.} 

It remains to be proven that the evolution induced by $\widehat{\lind}^{DB*}$ stays close to the one induced by $\widehat{\lind}^{CG*}$. Once more, this relies on Thm.~\ref{thm:DBerrorBoundIntegrated}. First, we have the bound:
\begin{theorem}[Closeness of $\widehat{\lind}^{CG*}$ and $\widehat{\lind}^{DB*}$]\label{thm:lindbladErrorDB}
	Given any state $\rho$,  we have :
	\begin{align}
		\|\widehat{\lind}^{CG*}[\rho]-\widehat{\lind}^{DB*}[\rho]\|_1\leq\frac{\Gamma}{2\sqrt{\pi}\,T(\alpha)}\norbra{8\tau +\beta e^{\frac{\beta^2}{16T(\alpha)^2}}}+ \frac{2K}{T(\alpha)^2}\,.\label{eq:errorDB}
	\end{align}
	Choosing $T(\alpha) = T^*_{opt}(\alpha)$ (and assuming $K< \infty$), Eq.~\eqref{eq:errorDB} scales as $\bigo{\alpha\Gamma \sqrt{\Gamma\tau}\,(\Gamma\tau+ \Gamma\beta e^{(\alpha\,\Gamma\beta)^2})}$.
\end{theorem}
The proof of this result is deferred to App.~\ref{app:thmlindbladErrorDB}. Then, using Thm.~\ref{thm:DBerrorBoundIntegrated} we obtain:
\begin{theorem}[Error in the evolution]\label{thm:DBerrorBoundIntegrated2}
	Consider the solutions $\tilde{\rho}^{CG*}_{S}({t})$ and $\tilde{\rho}^{DB*}_{S}({t})$ to the following differential equations:
	\begin{align}
		\begin{cases}
			\partial_{{t}}\,\tilde{\rho}^{CG*}_{S}({t})=\widehat{\lind}^{CG*} [\tilde{\rho}^{CG*}_{S}({t})]\\
			\tilde{\rho}^{CG*}_{S}(0) =\tilde{\rho}_{S}(0)
		\end{cases}\;;
		\qquad\qquad\qquad
		\begin{cases}
			\partial_{\tilde{t}}\,\tilde{\rho}^{DB*}_{S}({t})=\widehat{\lind}^{DB*} [\tilde{\rho}^{DB*}_{S}({t})]\\
			\tilde{\rho}^{DB*}_{S}(0) = \tilde{\rho}_{S}(0)
		\end{cases}\;.
	\end{align}
	Then, the trace distance between the two solutions can be bounded as:
	\begin{align}
		\|\tilde{\rho}^{CG*}_{S}({t})-\tilde{\rho}^{DB*}_{S}({t})\|_1\leq\frac{\Gamma\,{t}}{2\sqrt{\pi}\,T(\alpha)}\norbra{8\tau +\beta e^{\frac{\beta^2}{16T(\alpha)^2}}}+ \frac{2\,K\,{t}}{T(\alpha)^2}\,.\label{eq:cgDberror}
	\end{align}
	Assuming $K<\infty$, and choosing $T(\alpha) = T^*_{opt}(\alpha)$ (Eq.~\eqref{eq:optimalT} in Thm.~\ref{thm:coarseGrain}), the error scales as $\bigo{\alpha\,(\Gamma\, t) \sqrt{\Gamma\tau}\,(\Gamma\tau+ \Gamma\beta e^{(\alpha\,\Gamma\beta)^2})}$.
\end{theorem}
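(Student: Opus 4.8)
The plan is to obtain Thm.~\ref{thm:DBerrorBoundIntegrated2} as an immediate consequence of the error-propagation estimate of Thm.~\ref{thm:DBerrorBoundIntegrated}, fed with the pointwise Lindbladian bound of Thm.~\ref{thm:lindbladErrorDB}. The point is that the heavy lifting is already done: Thm.~\ref{thm:lindbladErrorDB} controls $\|\widehat{\lind}^{CG}[\rho]-\widehat{\lind}^{DB}[\rho]\|_1$ by a quantity that is \emph{independent of $t$}, and Thm.~\ref{thm:dbLindlbadian} guarantees that $\widehat{\lind}^{DB}$ generates a CPTP semigroup, so all that remains is to match the hypotheses of Thm.~\ref{thm:DBerrorBoundIntegrated}.

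Concretely, I would take $\widehat{\lind}^{DB}$ as the common generator $\lind_t$ appearing in both lines of Thm.~\ref{thm:DBerrorBoundIntegrated}, set $\pi(t):=\tilde{\rho}^{DB}_S(t)$ and $\rho(t):=\tilde{\rho}^{CG}_S(t)$, and rewrite the coarse-grained evolution as
\begin{align}
	\partial_t\tilde{\rho}^{CG}_S(t)=\widehat{\lind}^{DB}[\tilde{\rho}^{CG}_S(t)]+\mathcal{E}_t,\qquad \mathcal{E}_t:=(\widehat{\lind}^{CG}-\widehat{\lind}^{DB})[\tilde{\rho}^{CG}_S(t)].
\end{align}
Since both trajectories start from $\tilde{\rho}_S(0)$, the initial perturbation vanishes, so hypothesis $(i)$ holds with $R=0$; hypothesis $(iii)$ is exactly the complete positivity of $\widehat{\lind}^{DB}$ established in Thm.~\ref{thm:dbLindlbadian}.

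It then remains to check hypothesis $(ii)$, i.e.\ to bound $\|\mathcal{E}_t\|_1$. The key observation is that $\tilde{\rho}^{CG}_S(t)$ is a genuine density matrix for all $t$, being the image of the state $\tilde{\rho}_S(0)$ under the CPTP semigroup generated by $\widehat{\lind}^{CG}$ (Thm.~\ref{thm:CP}). Hence Thm.~\ref{thm:lindbladErrorDB}, which holds \emph{for any state}, applies to $\rho=\tilde{\rho}^{CG}_S(t)$ and yields the uniform bound $\|\mathcal{E}_t\|_1\leq m$ with
\begin{align}
	m=\frac{\Gamma}{2\sqrt{\pi}\,T(\alpha)}\norbra{(8+\pi)\tau+\beta\, e^{\frac{\beta^2}{16T(\alpha)^2}}}+\frac{K}{T(\alpha)^2}.
\end{align}
Because $m$ is constant in $t$, its primitive is $M(t)=m\,t$, so $M(t)-M(0)=m\,t$, and Thm.~\ref{thm:DBerrorBoundIntegrated} gives $\|\tilde{\rho}^{CG}_S(t)-\tilde{\rho}^{DB}_S(t)\|_1\leq R+(M(t)-M(0))=m\,t$, which is precisely Eq.~\eqref{eq:cgDberror}. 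The asymptotic scaling then follows by inserting $T(\alpha)=T^*_{opt}(\alpha)$ from Thm.~\ref{thm:coarseGrain}.

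There is essentially no analytic obstacle left at this point; the only thing to be careful about is the \emph{logical} matching of hypotheses. One must assign the CPTP generator to the \emph{common} $\lind_t$ of Thm.~\ref{thm:DBerrorBoundIntegrated} (either $\widehat{\lind}^{CG}$ or $\widehat{\lind}^{DB}$ works, since both are CPTP) and evaluate the residual $\mathcal{E}_t$ along the trajectory driven by that same generator, so that the argument fed into Thm.~\ref{thm:lindbladErrorDB} is guaranteed to be a density matrix. The conceptually important feature — and the reason the error grows only linearly rather than exponentially in $t$ — is that the contraction of the trace norm under the CPTP flow (Eq.~\eqref{eq:traceNormNonIncreasing}) lets one integrate the \emph{time-independent} pointwise error $m$ with no amplification, exactly as stressed in Sec.~\ref{subsec:improvederror}.
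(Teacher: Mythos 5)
Your proposal is correct and is essentially the paper's own argument: the paper derives Thm.~\ref{thm:DBerrorBoundIntegrated2} precisely by feeding the uniform, $t$-independent bound of Thm.~\ref{thm:lindbladErrorDB} into the error-propagation result Thm.~\ref{thm:DBerrorBoundIntegrated}, with $R=0$ and complete positivity supplied by Thm.~\ref{thm:dbLindlbadian}. Your added care about which generator plays the role of the common $\lind_t$ (so that the residual is evaluated on a genuine density matrix) is a valid and worthwhile clarification, but it does not change the route.
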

This result shows that there exists an exactly KMS detailed balanced Lindbladian inducing a CP-semigroup, and whose evolution stays close to the exact reduced dynamics with an error scaling at most linearly in time. Putting together the results in Thm.~\ref{thm:coarseGrain}, Thm.~\ref{thm:LSIgnore} and Thm.~\ref{thm:DBerrorBoundIntegrated2}, a simple triangle inequality gives:
\begin{align}
	\Norm{	\tilde{\rho}_S(t)-	\tilde{\rho}^{DB*}_S(t)}_1 \leq \Norm{	\tilde{\rho}_S(t)-	\tilde{\rho}^{CG}_S(t)}_1 + \Norm{	\tilde{\rho}^{CG}_S(t)-	\tilde{\rho}^{DB*}_S(t)}_1\leq \bigo{\alpha\,(\Gamma\, t)\norbra{\Gamma\tau+(\Gamma\beta )\,e^{(\alpha\,\Gamma\beta)^2}}}\,,\label{eq:totalError}
\end{align}
where we choose $T(\alpha) \simeq \bigo{(\alpha\Gamma)^{-1}}$. This choice is justified by the fact that, thanks to the specific dependency of the error term in Eq.~\eqref{eq:cgDberror} (that is, inversely proportional to $T(\alpha)$), the $T^*_{opt}(\alpha)$ obtained by minimizing Eq.~\eqref{eq:totalError} has the same scaling as the one obtained in Eq.~\eqref{eq:optimalT}.

The derivation just presented is based on the only extra assumption that the bath satisfies the KMS condition in Eq.~\eqref{eq:corrKMS}, that is, that it should be in a Gibbs state. This shows that, similarly to the derivation of Davies \cite{davies1974markovian,davies1976markovian}, KMS detailed balance is a generic property of systems in contact with thermal baths. In this context, it is also interesting to point out the role of temperature in the error estimates in Thm.~\ref{thm:lindbladErrorDB} and Thm.~\ref{thm:DBerrorBoundIntegrated2}, which acts as an energy scale with respect to which $\alpha\Gamma$ needs to be small for the weak coupling regime to be valid. This is a novelty with respect to the derivation in Sec.~\ref{sec:CPderivation}, where only the bath timescales $\Gamma^{-1}$ and $\tau$ play a role. This highlights the special role that thermal equilibrium plays in ensuring detailed balance. \added{The exponential dependence of Eq.~\eqref{eq:totalError} on the temperature makes the bound less relevant for $\beta$ very big (at fixed $\alpha$). This can be taken care of by choosing $T(\alpha)\simeq\bigo{\beta \alpha^{-1}}$. Still, doing so makes the whole error bound scale as $\beta^{-1}$, so this choice will not be optimal when treating high temperatures ($\beta\rightarrow0$). For this reason, there exists a temperature $\beta^*$ for which one passes from one regime in which it is better to choose a temperature dependent observation time, to one in which $T(\alpha)$ is independent on $\beta$.}

\ms
{Finally, it should be noticed that the equilibrium is attained with respect to the thermal state of the renormalized Hamiltonian $H^*_S= H_S +\alpha^2 H_{LS}^{CG}(\alpha)/2$ (where, for clarity, we have highlighted the dependency of the Lamb-shift Hamiltonian on $\alpha$). One might wonder whether $H^*_S$ is somehow related to the mean force Hamiltonian $H^{(mf)}_S$, implicitly defined by the relation:
	\begin{align}
		e^{-\beta\,H^{(mf)}_S} := \TrR{B}{\frac{e^{-\beta\,H}}{\mathcal{Z}_B}}\,.
	\end{align}
	This would mean that at equilibrium, the state of the system is the reduced density matrix of the global Gibbs state, rather than thermal with respect to the bare system Hamiltonian. It is quite puzzling that this is not the case, as noticed already in~\cite{winczewski2021renormalization}, and discussed in App.~\ref{app:meanFieldH}. However, the following is still the case:
	\begin{theorem}[Closeness of fixed points]\label{thm:interactionInsensitive}
		Let $\gamma_S$ be the Gibbs state of the bare Hamiltonian, and define $\gamma_S^*$ with respect to the renormalized Hamiltonian  $H^*_S= H_S +\alpha^2 H_{LS}^{CG}(\alpha)/2$. It holds that:
		\begin{align}
			\|\gamma_S - \gamma_S^*\|_1 \leq \alpha^2 \beta \,\Gamma_0\,.
		\end{align}
	\end{theorem}
	We refer to the proof of this theorem to App.~\ref{app:intInsensitive}. This shows that even if the thermal fixed point of the evolution does not coincide with the one expected from thermodynamic considerations (i.e. the zeroth law of thermodynamics), this discrepancy is one order larger in $\alpha$ than the one arising from the approximations to the dynamics.
}



\section{Quantum algorithmic simulation}
\label{sec:qualg}

The KMS Lindbladian defined in this work can be efficiently simulated on a quantum computer with the schemes analyzed in \cite{chen2023quantumthermal,chen2023efficient,Ding_2025}. These are based on the paradigms of block-encoding and linear combinations of unitaries, standard in the quantum algorithms literature \cite{Childs2012,Gily_n_2019,Low_2019,GrandtutorialPRXQuantum}. We now outline the inner workings of those algorithms, and  summarize the requirements in terms of quantum resources.

The Lindbladian simulation algorithms require a block encoding of the jump operators and of the coherent term. In this case, this can be done through an approximation of the integral over a ``time'' variable that defines them. We recall the expression for the jump operators obtained in Eq.~\eqref{eq:jumpOperatorsDB}, and write it as the integral:

\begin{align}\label{eq:dbJump}
	\hat{A}_\lambda(\omega^*) &= \sqrt{\frac{T(\alpha)}{\sqrt{\pi}}}\int_{-\infty}^{\infty} \de\omega\;e^{-T(\alpha)^2\frac{(\omega^*-{\omega})^2}{2}}\;{\widehat{g}}_{\lambda l}\norbra{-{\omega}}A_l({\omega})\,=
	\\
	& = \int_{-\infty}^{\infty} \dt\; e^{i \omega^* t } A(t)  \left ( \frac{1}{\sqrt{\sqrt{\pi} T(\alpha)}}\,\norbra{e^{\frac{-t^2}{2 T(\alpha)^2}}  \ast g_{\lambda l}}(t) \right )=
	\\
	& :=  \int_{-\infty}^{\infty} \dt \; e^{i \omega^* t } A(t) G(t)  .
\end{align}

We can also write the coherent term (or Lamb-shit Hamiltonian) $H^{DB}_{LS}$ as an integral over time with the filter function $F(t)$ in Eq.~\eqref{eq:LSft}, as the Fourier transform of the coefficients in the last line of Eq.~\eqref{eq:DBcoeffCP}. Then, in order to simulate the master equation generated by these, we require:
\begin{itemize}
	\item A block encoding of the initial jump operators $A$. When $A$ is Hermitian but not unitary, this requires the construction of a larger unitary gate with a block or sub-matrix corresponding to $A$, which can be done for several classes of matrices of interest \cite{Gily_n_2019,Nguyen_2022,camps2023explicit,S_nderhauf_2024}.
	\item A coherent quantum state (oracle) encoding an approximation to the functions $G(t)$ and $F(t)$. This is constructed through an approximation of the time-integrals with a discrete sum, by truncating them to a finite time range, and then approximating with a numerical cuadrature. This can be efficiently done given the smoothness and decay of both  $G(t)$ and $F(t)$ \cite{Ding_2025}.
	\item Black-box access to the controlled time evolution $e^{iHt}$ for the relevant times in the numerical cuadrature. The largest of these depends on the timescales $T(\alpha),\Gamma^{-1},\beta$ of the functions  $G(t)$ and $F(t)$. This can be achieved with any of the various quantum simulation methods for unitary dynamics.
\end{itemize}
Once those ingredients are in place, one can resort to efficient Lindbladian simulation algorithms such as \cite{cleve_et_al,li_et_al,chen2023efficient,Ding_2024}.  According to the analysis of \cite{chen2023efficient,Ding_2025} applied to our setting, the total computational cost of running the algorithm for a time $t$ and up to error $\epsilon$ is roughly $\tilde{\mathcal{O}}\left(t(T(\alpha)+\Gamma^{-1}+\beta)  \text{polylog}(\frac{1}{\epsilon}) \right)$, which is by construction efficient in time and approximation error. This should be contrasted with the case of Davies dynamics, that for a many-body system of $N$ particles involves the RWA, meaning that $T(\alpha) \propto \alpha^{-1} \ge e^{\Omega(N)}$; thus the simulation of Davies maps takes exponential time in the number of components.

One can compare the Lindbladian obtained in this work with that in~\cite{chen2023efficient}, which we report here for convenience:
\begin{align}
	\widehat{\lind}^{AL} [\rho] 
	&=-\frac{i}{2}\sqrbra{H^{AL}_{LS},\rho}+\int_{-\infty}^{\infty}\de\omega^*\;\Big({\hat A^{AL}_\lambda}(\omega^*)\rho ({\hat A^{AL}_\lambda}(\omega^*))^\dagger-\frac{1}{2}\{({\hat A^{AL}_\lambda}(\omega^*))^\dagger{\hat A_\lambda^{AL}}(\omega^*),\rho\}\Big),\label{eq:lindAl}
\end{align}
where we introduced the jump operators $\hat{A}^{AL}_\lambda(\omega^*)$ defined as\footnote{In order to highlight the similarity with our derivation, we use a different notation as compared to~\cite{chen2023efficient}. In particular, the jump operators $\hat{A}^{AL}_\lambda(\omega^*)$ are related to the ones defined there as $\hat{A}^{AL}_\lambda(\omega^*) = \sqrt{\gamma(\omega^*)}\hat{\mathbf{A}}_\lambda(\omega^*)$. }:
\begin{align}
	\hat{A}^{AL}_\lambda(\omega^*) = \sqrt{\frac{1}{(\sqrt{2}\sigma_E)\sqrt{\pi}}}\int_{-\infty}^\infty\de\omega\;e^{-\frac{(\omega^*+\omega_\gamma)^2}{4 \sigma_\gamma^2}}e^{-\frac{(\omega^*-\omega)^2}{2 (\sqrt{2}\sigma_E)^2}}\;A_\lambda({\omega})\,,\label{eq:jumpAlg}
\end{align}
whereas $H^{AL}_{LS}$ is chosen according to Thm.~\ref{thm:KMSLindstructuralcharacterization} and $\omega_\gamma,\,\sigma_\gamma$ and $\sigma_E$ are free parameters satisfying the relation:
\begin{align}
	\beta = \frac{2\omega_\gamma}{\sigma_\gamma^2 + \sigma_E^2}\,.\label{eq:dbConstr}
\end{align}

By setting $\sigma_E := (\sqrt{2} \,T(\alpha))^{-1}$, the two Gaussian filters become identical; moreover, we see that the role of ${\widehat{g}}_{\lambda l}\norbra{-{\omega}}$ is taken by the term $G_{\omega_\gamma,\sigma_\gamma}(\omega^*):=\exp\norbra{-\frac{(\omega^*+\omega_\gamma)^2}{4 \sigma_\gamma^2}}$ in Eq.~\eqref{eq:jumpAlg}.



\section{Conclusions}
\label{sec:conclusions}

We have proven the existence of an explicit master equation that approximates the evolution of a quantum system weakly coupled to a bath. Notably, this comes equipped with error estimates that improve all previous derivations, and it relies on approximations that apply to many-body systems. As a main conceptual novelty, our master equation satisfies an exact notion of KMS detailed balance, thus guaranteeing thermality. \added{In our derivation, we introduced the Gaussian profile in order to coarse-grain the dynamics. As a result, the resulting generator is more akin to that initially derived in \cite{chen2023efficient}. However, notice that $i)$ the Lindbladian coefficients also depend on the correlation function of the bath, which may not be Gaussian and $ii)$ the choice of Gaussian is convenient but arbitrary, and other similar profile functions will likely yield similar bounds.}

This construction paves the way for the exploration of the open system dynamics of many-body physics, beyond the simple models commonly used in the literature \cite{orus-review,fazio2025manybodyopenquantumsystems}. This can allow us to uncover interesting new features of the thermal behavior of many-body systems \cite{kastoryano2024littlebit}. It also further motivates the development of quantum algorithms for Gibbs sampling that explicitly simulate these processes \cite{chen2023efficient,Ding_2025} in a way similar to how product formulae simulate unitary time evolution \cite{Childs_2021}. 
From the open systems and thermodynamics point of view, the master equation derived here can potentially yield insights into recently studied problems, such as the question of ``local'' vs. ``global'' master equations \cite{Rivas_2010,Hofer_2017,Correa17,DeChiara_2018}, or the corrections associated to Hamiltonians of mean force \cite{Potts_2021,Trushechkin22}.

Our results are also relevant in the wider context of quantum information and computation, where various works on Lindbladiands with KMS detailed balance have recently appeared. It has been shown that, at low temperature, they can reproduce arbitrary quantum computations \cite{rouze2024efficient}, they feature in proofs of spatial Markov properties of Gibbs states \cite{chen2025quantumgibbs,kato2025clustering}, and they can be used to efficiently prepare quantum states of interest \cite{rouze2024optimal,smid2025polynomialtime,tong2025fastmixing,zhan2025rapidquantumgroundstate}.

\acknowledgements

We thank Ariane Soret for pointing out the related and independent work~\cite{soret2022thermodynamic} after the completion of this manuscript, \added{and Zhiyan Ding and Yongtao Zhan for pointing to an error in a previous proof of KMS detailed balance}. We acknowledge support from the Spanish Agencia Estatal de Investigacion through the grants  ``IFT Centro de Excelencia Severo Ochoa CEX2020-001007-S” and ``PCI2024-153448" (for M.S.), and ``Ram\'on y Cajal RyC2021-031610-I'', financed by MCIN/AEI/10.13039/501100011033 and the European Union NextGenerationEU/PRTR (for A.M.A.). This project was funded within the QuantERA II Programme that has received funding from the EU’s H2020 research and innovation programme under the GA No 101017733.


\bibliography{bib.bib}

\appendix

\onecolumngrid


\newpage
\begin{center}
	\normalsize{\underline{\textbf{Appendix: Table of contents}}}
\end{center}
\begin{enumerate}
	\item[\underline{A:}] \hyperref[app:channels]{Abstract characterization of channels}
	\item[\underline{B:}] \hyperref[app:exactDynamics]{Derivation of the exact reduced system dynamics}
	\item[\underline{C:}] \hyperref[app:bath]{Positivity of the bath spectral function}
	\item[\underline{D:}] \hyperref[app:relationTimescales]{Relation between different timescales}
	\item[\underline{E:}] \hyperref[app:fastestRate]{Proof of Thm.~\ref{thm:fastestRate}: Fastest rate of change in the system}
	\item[\underline{F:}] \hyperref[app:errorTerms]{Bounding the different error terms}
	\begin{enumerate}[1.]
		\item \hyperref[app:born]{Proof of Thm.~\ref{thm:bornApproximation}: Born approximation}
		\item \hyperref[app:markov]{Proof of Thm.~\ref{thm:markovApproximation}: Markov approximation}
		\item \hyperref[app:redfield]{Proof of Thm.~\ref{thm:redfield}: Redfield approximation}
		\item \hyperref[app:cg]{Proof of Thm.~\ref{thm:smoothing} and Thm.~\ref{thm:coarseGrain}: Time averaging procedures}
	\end{enumerate}
	\item[\underline{G:}] \hyperref[app:cgLindProperties]{Properties of the coarse-grained Lindbladian}
	\begin{enumerate}[1.]
		\item \hyperref[app:hLShermitian]{Hermiticity of $H_{LS}$}
		\item \hyperref[app:compRates]{Decomposition of the rates}
		\item \hyperref[app:quasiLocalJump]{Quasilocality of the jump operators}
		\item \hyperref[app:davies]{Reduction to Davies generator}
		\item \hyperref[app:approxDB]{Proof of Thm.~\ref{thm:approxDB}: Approximate detailed balance of the coarse-grained Lindbladian}
	\end{enumerate}
	\item[\underline{H:}] \hyperref[app:exactDBLind]{Exactly detailed balanced Lindbladian}
	\begin{enumerate}[1.]
		\item \hyperref[app:thmapproxGammaSquare]{Proof of Thm.~\ref{thm:approxGammaSquareRoot}: Approximation to the coarse-grained rates}
		\item \hyperref[app:dbLindbladian]{Proof of Thm.~\ref{thm:dbLindlbadian}: $\widehat{\lind}^{DB} $ is exactly detailed balanced and CP}
		\item \hyperref[app:quasiLocalJump2]{Quasilocality of the jump operators}
		\item \hyperref[app:davies2]{Reduction to Davies generator}
		\item \hyperref[app:thmlindbladErrorDB]{Proof of Thm.~\ref{thm:lindbladErrorDB}: $\widehat{\lind}^{CG} $ and $\widehat{\lind}^{DB} $ are close}
	\end{enumerate}
\end{enumerate}


\section{Abstract characterization of channels}\label{app:channels}

The discussions of this appendix are inspired by~\cite{gorini1976completely, parravicini1977generator}, as recounted in~\cite{carlen2017gradient}. We begin by defining an orthonormal basis of matrices $\{A_k\}$ (with respect to Hilbert-Schmidt), that we choose in such a way that $\{A_k(\omega)\}$ is also a basis (with possibly different indexes\footnote{The most common choice in this context is $A_{(i,j)} := \ketbra{E_i}{E_j}$, so that $\{A_{(i,j)}\}=\{A_{(i,j)}(\omega)\}$ (since we have the equality $A_{(i,j)}(\omega) = \delta_{\omega,\omega_{i,j}} A_{(i,j)}$).}). Then, any superoperator $\Phi$ can be rewritten in the form:
\begin{align}
	\Phi[X]&=\sum_{k,l}\;c_{kl} \,A_l XA_k^\dagger = \sum_{k,l,\omega,\tilde{\omega}}\; c^{\omega,\tilde{\omega}}_{kl}\,A_l(\omega)  XA_k^\dagger(\tilde{\omega}) \,.
\end{align}
The coefficients $c_{kl}$ are the GKS matrix (for Gorini-Kossakowski-Sudarshan). This representation is uniquely defined once the basis $\{A_k\}$ is fixed, and it encodes the most important properties of the channel $\Phi$ in a simple manner. For example, $\Phi$ is Hermitian preserving (that is, $\Phi(X^\dagger)=\Phi(X)^\dagger$) if and only if $c_{kl}$ is Hermitian:
\begin{align}
	\Phi(X^\dagger) = \sum_{k,l}\;c_{kl} \,A_l X^\dagger A_k^\dagger = \norbra{\sum_{k,l}\;\overline{c_{lk}} \,A_l X A_k^\dagger}^\dagger\,.
\end{align}
In order for the expression to be equal to $\Phi(X)^\dagger$, thanks to the uniqueness of $c_{kl}$, we need that $c_{kl}=\overline{c_{lk}}$, proving the claim. A less obvious and more interesting fact is the following:
\begin{theorem}[Choi's theorem~\cite{choi1975completely}]\label{thm:choi}
	The channel $\Phi$ is completely positive if and only if the corresponding GKS matrix is positive semidefinite.
\end{theorem}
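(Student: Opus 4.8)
The plan is to prove both implications by passing through the Choi–Jamio\l{}kowski matrix, which is the natural bridge between the abstract GKS representation $\Phi[X]=\sum_{k,l}c_{kl}A_l X A_k^\dagger$ and the defining property of complete positivity. Writing $d=\dim\mcH$, I would introduce the (unnormalized) maximally entangled vector $\ket{\Omega}=\sum_i\ket{i}\otimes\ket{i}$ and the Choi matrix $J(\Phi):=(\mathrm{id}\otimes\Phi)[\ketbra{\Omega}{\Omega}]$ on $\mcH\otimes\mcH$, where $\mathrm{id}$ is the identity superoperator. Recall that $\Phi$ is completely positive precisely when $\mathrm{id}_n\otimes\Phi$ is positivity-preserving for \emph{every} $n$; the whole content of the theorem is that this infinite family of conditions collapses to the single algebraic requirement $c\geq 0$.

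The first step is the structural bridge. Substituting the GKS form into the definition of $J(\Phi)$ and factoring out the jump operators gives
\begin{align}
	J(\Phi)=\sum_{k,l}\;c_{kl}\,\ketbra{\psi_l}{\psi_k}\,,\qquad \ket{\psi_k}:=(\id\otimes A_k)\ket{\Omega}\,.
\end{align}
The key observation is that $\{\ket{\psi_k}\}$ is orthonormal: from the elementary identity $\bra{\Omega}(\id\otimes M)\ket{\Omega}=\Tr{M}$ one obtains $\braket{\psi_k}{\psi_l}=\Tr{A_k^\dagger A_l}=\delta_{kl}$, so that orthonormality of $\{A_k\}$ with respect to the Hilbert–Schmidt product is \emph{exactly} what is needed. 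Since there are $d^2$ such vectors in a $d^2$-dimensional space, they form an orthonormal basis, and the matrix elements of $J(\Phi)$ in this basis reproduce the GKS matrix up to a harmless relabeling, $\bra{\psi_m}J(\Phi)\ket{\psi_n}=c_{nm}$. It follows at once that $J(\Phi)\geq 0$ if and only if the matrix $(c_{kl})$ is positive semidefinite.

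It then remains to show that $\Phi$ is completely positive if and only if $J(\Phi)\geq 0$. One direction is immediate from the definition: if $\Phi$ is completely positive then $\mathrm{id}\otimes\Phi$ preserves positivity, and since $\ketbra{\Omega}{\Omega}\geq 0$ we get $J(\Phi)\geq 0$ directly. For the converse I would diagonalize the positive matrix $(c_{kl})=\sum_\mu\lambda_\mu\,v^\mu_k\overline{v^\mu_l}$ with $\lambda_\mu\geq 0$, read off the Kraus operators $K_\mu:=\sqrt{\lambda_\mu}\sum_l v^\mu_l A_l$, and obtain $\Phi[X]=\sum_\mu K_\mu X K_\mu^\dagger$. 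A map in Kraus form is manifestly completely positive, since for any $n$ and any positive $Y$ on $\CC^n\otimes\mcH$ one has $(\mathrm{id}_n\otimes\Phi)[Y]=\sum_\mu(\id\otimes K_\mu)\,Y\,(\id\otimes K_\mu)^\dagger\geq 0$.

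The two computations feeding this argument — the identity $\bra{\Omega}(\id\otimes M)\ket{\Omega}=\Tr{M}$ and the expansion of $J(\Phi)$ — are routine and not delicate. The conceptual crux, and the only step that is more than a definition-chase, is the converse implication $J(\Phi)\geq 0\Rightarrow$ complete positivity: here one must upgrade positivity of a single $d^2\times d^2$ matrix to positivity of $\mathrm{id}_n\otimes\Phi$ for \emph{all} $n$. The Kraus extraction from the spectral decomposition of $(c_{kl})$ is precisely the mechanism that performs this upgrade, and correctly identifying it as the engine of the proof is the main point to get right.
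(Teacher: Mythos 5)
Your proof is correct. Note that the paper does not actually prove this statement: it is quoted as Choi's theorem with a citation to the original reference, so there is no in-paper argument to compare against. Your route through the Choi matrix $J(\Phi)=(\mathrm{id}\otimes\Phi)[\ketbra{\Omega}{\Omega}]$, the orthonormality $\braket{\psi_k}{\psi_l}=\Tr{A_k^\dagger A_l}=\delta_{kl}$, and the Kraus extraction from the spectral decomposition of $(c_{kl})$ is exactly the standard proof that the citation points to, and all the steps check out (in particular $\bra{\psi_m}J(\Phi)\ket{\psi_n}=c_{nm}$, and positivity of a Hermitian matrix is preserved under transposition, so the ``relabeling'' is indeed harmless). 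The only blemish is a bookkeeping conjugate: with the paper's convention $\Phi[X]=\sum_{k,l}c_{kl}A_lXA_k^\dagger$ and $c_{kl}=\sum_\mu\lambda_\mu v^\mu_k\overline{v^\mu_l}$, the Kraus operators should be $K_\mu=\sqrt{\lambda_\mu}\sum_l\overline{v^\mu_l}\,A_l$ rather than $\sqrt{\lambda_\mu}\sum_l v^\mu_l A_l$; this does not affect the validity of the argument.
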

So, for channels, not only $c_{kl}$ is a Hermitian matrix, but it also has a non-negative spectrum. 
When referring to a one-parameter semigroup $\Phi_t = e^{t\lind}$, it is often useful not to work directly in terms of the full $c_{kl}$, but with a reduced version thereof. Let us fix $A_1 := \id$. With this choice at time $t=0$, $c_{kl}=\delta_{k,1}\delta_{l,1}$, since $\Phi_0 = \idO$ (the identity channel). Then, in order to see how the evolution in time affects $\Phi_t$, we define the reduced GKS matrix $\tilde{c}_{kl}$ to be the $(d-1)^2\times(d-1)^2$ matrix obtained by eliminating from $c_{kl}$ the first row and column. This allows for the following characterization~\cite{parravicini1977generator}:
\begin{theorem}\label{thm:lindCPreq}
	The one-parameter semigroup $\Phi_t = e^{t\lind}$ is CP if and only if the reduced GKS matrix $\tilde{c}_{kl}$ of $\lind$ is positive semidefinite.
\end{theorem}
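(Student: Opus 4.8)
The plan is to deduce both implications from Choi's theorem (Thm.~\ref{thm:choi}), which already states that a superoperator is completely positive exactly when its GKS matrix is positive semidefinite. The whole content of the statement is therefore to relate positivity of the \emph{full, time-dependent} GKS matrix of $\Phi_t$ to positivity of the \emph{reduced} GKS matrix of the generator $\lind$. Throughout I work in the fixed basis $\{A_k\}$ with $A_1=\id$, and I write $c_{kl}(t)$ for the GKS matrix of $\Phi_t=e^{t\lind}$; this is real-analytic in $t$, with $c_{kl}(0)=\delta_{k,1}\delta_{l,1}$ and $\dot c_{kl}(0)=c_{kl}$, the GKS matrix of $\lind$ itself, whose reduced block (first row and column deleted) is the matrix $\tilde c_{kl}$ of the statement.

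For the necessity direction ($\Phi_t$ CP $\Rightarrow \tilde c_{kl}\geq 0$), I would first invoke Thm.~\ref{thm:choi} to turn complete positivity of $\Phi_t$ for all $t\geq 0$ into $c(t)\geq 0$ for all $t\geq 0$. The key observation is that $c(0)$ is supported entirely on the $A_1=\id$ direction, so every vector $w$ with $w_1=0$ lies in its kernel. Fixing such a $w$, the function $f(t):=\langle w, c(t)\,w\rangle$ satisfies $f(t)\geq 0$ and $f(0)=0$, so its one-sided derivative at the origin is non-negative: $0\leq f'(0^+)=\langle w,\dot c(0)\,w\rangle=\langle \tilde w,\tilde c\,\tilde w\rangle$, where $\tilde w$ collects the components of $w$ on the indices $\geq 2$ and the last equality holds because $w_1=0$ selects exactly the reduced block. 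Since such $w$ is arbitrary, this gives $\tilde c\geq 0$.

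For the sufficiency direction ($\tilde c_{kl}\geq 0 \Rightarrow \Phi_t$ CP), I would isolate the identity component of the generator. Using that $c_{kl}$ is Hermitian (because $\lind$ is Hermitian-preserving, so $\overline{c_{1l}}=c_{l1}$), one reorganizes $\lind[\rho]=\Psi[\rho]+G\rho+\rho G^\dagger$, where $\Psi[\rho]:=\sum_{k,l\geq 2}\tilde c_{kl}\,A_l\rho A_k^\dagger$ gathers the terms with no identity factor and $G:=\tfrac12 c_{11}\id+\sum_{l\geq 2}c_{1l}A_l$ absorbs the real scalar $c_{11}$ and all cross terms. By Thm.~\ref{thm:choi}, $\Psi$ is completely positive precisely because its GKS matrix is $\tilde c\geq 0$, while $\mathcal{G}[\rho]:=G\rho+\rho G^\dagger$ generates the manifestly completely positive flow $e^{t\mathcal G}[\rho]=e^{tG}\rho\, e^{tG^\dagger}$. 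Writing $\lind=\Psi+\mathcal G$ and applying the Lie--Trotter formula $e^{t\lind}=\lim_{n\to\infty}\bigl(e^{(t/n)\mathcal G}e^{(t/n)\Psi}\bigr)^n$ (or, equivalently, expanding the Dyson series of $e^{t\lind}$ around $e^{t\mathcal G}$), one expresses $\Phi_t$ as a limit of compositions of completely positive maps; since the cone of completely positive maps is closed under addition, composition and limits, $\Phi_t$ is completely positive for every $t\geq 0$.

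The main obstacle is the necessity direction, and specifically the degeneracy of $c(0)$: because $\Phi_0=\idO$ has a rank-one GKS matrix, the constraint $c(t)\geq 0$ is tight along the entire kernel at $t=0$, so one cannot naively ``differentiate positivity'' to read off a semidefinite $\dot c(0)$. The one-sided-derivative argument above is exactly what resolves this, since it extracts positivity only on the subspace $w_1=0$ that survives the degeneracy — and this is precisely what singles out the reduced matrix $\tilde c$ rather than the full GKS matrix of $\lind$. The only routine point in the sufficiency direction is verifying that the $(1,1)$ entry and the cross terms reassemble exactly into $G\rho+\rho G^\dagger$, which is immediate from the Hermiticity relation $\overline{c_{1l}}=c_{l1}$.
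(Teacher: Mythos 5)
Your proof is correct. The necessity direction is, in substance, the same as the paper's: the paper observes that $\tilde{c}_{kl}(\Phi_t)$ is a principal submatrix of the positive semidefinite GKS matrix of $\Phi_t$ and takes the limit of $\tilde{c}_{kl}(\Phi_t)/t$ as $t\rightarrow 0^+$, which is exactly your one-sided derivative $f'(0^+)$ of the quadratic form along vectors with $w_1=0$ (your framing makes the role of the degenerate kernel of $c(0)$ more explicit, but the computation is identical). The sufficiency direction is where you genuinely depart from the paper. The paper expands ${c}_{kl}(\Phi_t) = \delta_{k,1}\delta_{l,1} + t\,{c}_{kl}(\lind) + \bigo{t^2}$, asserts that positivity of $\tilde{c}_{kl}(\lind)$ makes ${c}_{kl}(\Phi_{t^*})$ positive semidefinite for small $t^*$, and then composes $\Phi_t = \prod_{i=1}^{t/t^*}\Phi_{t^*}$ via Thm.~\ref{thm:choi}. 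You instead split $\lind = \Psi + \mathcal{G}$ with $\Psi$ completely positive by Choi and $\mathcal{G}[\rho]=G\rho+\rho G^\dagger$ generating the manifestly CP flow $e^{tG}\rho\,e^{tG^\dagger}$, and invoke Lie--Trotter. Your route is the standard GKLS-style argument and is arguably more robust: the paper's first-order expansion does not by itself settle positivity of ${c}_{kl}(\Phi_{t^*})$ when $\tilde{c}$ is singular, since the $\bigo{t}$ cross terms $c_{1l}$ act along the kernel of the rank-one matrix $c(\idO)$ and a Schur-complement check (or the $\bigo{t^2}$ terms) is needed to close that case, whereas your Trotter decomposition sidesteps the issue entirely. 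The only caveat is the normalization of $G$: your $G=\tfrac{1}{2}c_{11}\id+\sum_{l\geq 2}c_{1l}A_l$ is the correct bookkeeping (the paper's $G=\sum_l c_{1,l}A_l$ double-counts the $c_{11}$ contribution in $G\rho+\rho G^\dagger$), and you verify the reassembly via $\overline{c_{1l}}=c_{l1}$ as required.
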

\begin{proof}
	For clarity, let us introduce the notation $c_{kl}(\Psi)$ to denote the GKS matrix associated to the map $\Psi$ (and similarly for the reduced GKS matrix). Suppose $\Phi_t = e^{t\lind}$ is CP at all times. Let us introduce the difference:
	\begin{align}
		\frac{c_{kl}(\Phi_t-\idO)}{t} =\delta_{k,1}\delta_{l,1}\frac{(c_{1,1}(\Phi_t)-1)}{t} + \frac{\tilde{c}_{k,l}(\Phi_t)}{t}\,.
	\end{align}
	Since this decomposition holds for all times, we can take the limit $\lim_{t\rightarrow0^+}\frac{c_{kl}(\Phi_t-\idO)}{t} = c_{kl}(\lind)$, so that $\tilde{c}_{kl}(\lind) = \lim_{t\rightarrow0^+}\frac{\tilde{c}_{k,l}(\Phi_t)}{t}$. Then, since $\tilde{c}_{k,l}(\Phi_t)$  is positive semi-definite, the same has to hold for $\tilde{c}_{kl}(\lind)$ as well. 
	
	On the other hand, suppose that $\tilde{c}_{kl}(\lind)\geq0$. Then, expanding in $t$ we have that:
	\begin{align}
		{c}_{kl}(\Phi_t) = {c}_{kl}(\idO) + t\,{c}_{kl}(\lind)  +\bigo{t^2}\,.
	\end{align}
	Then, since ${c}_{kl}(\idO)=\delta_{k,1}\delta_{l,1}$ and $\tilde{c}_{kl}(\lind)$ is positive semidefinite, for $t$ sufficiently small (say $t^*$), ${c}_{kl}(\Phi_{t^*})$ is also positive semidefinite. The claim follows from the fact that $\Phi_t = \prod^{t/t^*}_{i=1} \Phi_{t^*}$, together with Thm.~\ref{thm:choi}.
\end{proof}

We can also characterize the property that $\Phi_t = e^{t\lind}$ is trace preserving independently of whether it is CP or not. Before doing so it is useful to notice that the action of any Hermitian-preserving map $\lind$ can be decomposed as:
\begin{align}
	\lind[X] = G X + X G^\dagger + \sum_{k,l}\;\tilde{c}_{kl} \,A_l XA_k^\dagger\,,\label{eq:decHermPres}
\end{align}
where $G := \sum_{l}\;{c}_{1,l} \,A_l$. Notice that only the reduced GKS matrix enters the  sum. Now $G$ can be decomposed in self-adjoint and anti-self-adjoint components as $G = K - i H$ (where both $K$ and $H$ are self-adjoint), and we can explicitly compute the anti-self-adjoint part as:
\begin{align}
	H := \frac{i}{2}\sum_{l>1} \norbra{c_{1,l} A_l - c_{l,1} A_l^\dagger}\,.
\end{align}
so $H$ is traceless. In this context, the requirement that $\Phi_t = e^{t\lind}$ is trace preserving fixes the expression of $K$:
\begin{theorem}\label{thm:tracePreserving}
	Let $\lind$ be a Hermitian preserving map generating a trace-preserving semigroup. Then, the self-adjoint component of $G$ in Eq.~\eqref{eq:decHermPres} reads:
	\begin{align}
		K = -\frac{1}{2}\sum_{k,l}\;\tilde{c}_{kl} A_k^\dagger A_l\,,\label{eq:app:kStruct}
	\end{align}
	that is, we can rewrite the action of $\lind$ as:
	\begin{align}
		\lind[X] = -i [H, X] + \sum_{k,l}\;\tilde{c}_{kl} \,\norbra{A_l XA_k^\dagger -\frac{1}{2}\{A_k^\dagger A_l, X\}}\,.\label{eq:lindTrPres}
	\end{align}
\end{theorem}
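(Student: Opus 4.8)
The plan is to reduce the semigroup-level trace-preservation condition to an infinitesimal statement about the generator, and then read off the structure of $K$ from the non-degeneracy of the Hilbert–Schmidt pairing. First I would observe that trace preservation of $\Phi_t = e^{t\lind}$ means $\Tr{\Phi_t[X]} = \Tr{X}$ for every operator $X$ and every $t\geq 0$. Differentiating this identity at $t=0$ and using $\partial_t\Phi_t[X]\big|_{t=0} = \lind[X]$ gives the equivalent infinitesimal condition $\Tr{\lind[X]} = 0$ for all $X$.

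Next I would compute $\Tr{\lind[X]}$ directly from the decomposition in Eq.~\eqref{eq:decHermPres}. Applying cyclicity of the trace to each term, with $\Tr{XG^\dagger} = \Tr{G^\dagger X}$ and $\Tr{A_l X A_k^\dagger} = \Tr{A_k^\dagger A_l X}$, one obtains
\begin{align}
	\Tr{\lind[X]} = \Tr{\norbra{G + G^\dagger + \sum_{k,l}\tilde{c}_{kl}A_k^\dagger A_l}X}\,.
\end{align}
Since this vanishes for every $X$ and the Hilbert–Schmidt pairing $(M,X)\mapsto\Tr{M^\dagger X}$ is non-degenerate, the operator inside the trace must itself vanish:
\begin{align}
	G + G^\dagger + \sum_{k,l}\tilde{c}_{kl}A_k^\dagger A_l = 0\,.
\end{align}

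Then I would use the self-adjoint/anti-self-adjoint split $G = K - iH$, from which $G^\dagger = K + iH$ and hence $G + G^\dagger = 2K$. Solving the previous display for $K$ yields exactly Eq.~\eqref{eq:app:kStruct}. Substituting $G = K - iH$ back into Eq.~\eqref{eq:decHermPres} and grouping terms gives
\begin{align}
	\lind[X] = -i[H,X] + \{K,X\} + \sum_{k,l}\tilde{c}_{kl}A_l X A_k^\dagger\,,
\end{align}
and inserting the expression for $K$ turns the anticommutator $\{K,X\}$ into $-\tfrac12\sum_{k,l}\tilde{c}_{kl}\{A_k^\dagger A_l,X\}$, recovering Eq.~\eqref{eq:lindTrPres}.

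There is no genuine obstacle here; the computation is routine once the reduction is made. The only point requiring care is the equivalence between trace preservation of the full semigroup and the generator-level condition $\Tr{\lind[X]}=0$: the forward direction follows from differentiation at $t=0$, while it is the non-degeneracy of the trace pairing that upgrades the scalar identity $\Tr{MX}=0$ (holding for all $X$) into the operator identity $M=0$ that pins down $K$.
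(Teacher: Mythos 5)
Your proposal is correct and follows essentially the same route as the paper: both reduce trace preservation of the semigroup to the infinitesimal condition (you phrase it as $\Tr{\lind[X]}=0$ for all $X$ plus non-degeneracy of the trace pairing, the paper as $\lind^\dagger(\id)=0$, which is the same statement), and both then read off $G+G^\dagger+\sum_{k,l}\tilde{c}_{kl}A_k^\dagger A_l=0$ and use $G+G^\dagger=2K$. The only cosmetic difference is that by cycling inside $\Tr{\lind[X]}$ directly you avoid the paper's appeal to Hermiticity of $\tilde{c}_{kl}$ when rewriting the adjoint map's Kraus-like term.
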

\begin{proof}
	First, notice that $\Phi_t$ is trace preserving if and only if $\Phi_t^\dagger(\id) = \id$. Since we want this relation to hold for all times, it directly implies that $\lind^\dagger(\id) = \lim_{t\rightarrow0}(\Phi_t^\dagger(\id)-\id)/t = 0$. Using Eq.~\eqref{eq:decHermPres} this condition can be expressed as:
	\begin{align}
		\lind^\dagger(\id) = G^\dagger + G + \sum_{k,l}\;\overline{\tilde{c}_{kl}}  \,A_l^\dagger A_k=G^\dagger + G + \sum_{k,l}\;\tilde{c}_{kl}  \,A_k^\dagger A_l  = 0\,,
	\end{align}
	where we implicitly used the Hermiticity of $\tilde{c}_{kl}$. Then, the claim follows from noticing that $G + G^\dagger  = 2 K$.
\end{proof}
One can recognize in Eq.~\eqref{eq:lindTrPres} the usual Lindblad form of the generator of a one-parameter semigroup. Still, this structural characterization is independent on whether $\Phi_t= e^{t \lind}$  is CP or not. On the other hand, Thm.~\ref{thm:lindCPreq} states that if this is the case for all times, then the matrix $\tilde{c}_{kl}$ in Eq.~\eqref{eq:lindTrPres} has to be positive semidefinite.

Finally, we provide an extended version of the structural characterization of KMS detailed balanced Lindbladians from Thm.~\ref{thm:KMSLindstructuralcharacterization}:
\begin{theorem}[Structural characterization of KMS Lindbladians~\cite{amorim2021complete}]
	Let $\lind$ be the generator of a trace preserving evolution of the form $\Phi_t = e^{t\lind}$. Assume that $\lind$ satisfies KMS detailed balance. Then, the following two conditions should hold:
	\begin{align}
		\begin{cases}
			\tilde{c}^{\,\omega,\tilde{\omega}}_{kl} =\overline{ (\tilde{c}^{\,-\omega,-\tilde{\omega}}_{kl})}e^{-\beta\frac{(\omega+\tilde{\omega})}{2}}\\
			\Delta^{1/2}_{\gamma_S}[G^\dagger] = G
		\end{cases}\,,\label{eq:app:a10}
	\end{align}
	where $G$ was introduced in Eq.~\eqref{eq:decHermPres}. Moreover, this implies the decomposition:
	\begin{align}
		&\lind[\rho] = -i[H_{LS}^0 + H_{LS}, \rho] + \sum_{k,l,\omega,\tilde{\omega}}\; \gamma^{\omega,\tilde{\omega}}_{kl}\,\norbra{A_l(\omega)  \rho A_k^\dagger(\tilde{\omega}) -\frac{1}{2}\{A_k^\dagger(\tilde{\omega})A_l(\omega),\rho\}}\,,\label{eq:app:a11}
	\end{align}
	where $H_{LS}^0$ and $H_{LS}$ are two Hermitian operators such that $[H_{LS}^0,\gamma_S]=0$, and $H_{LS} =\frac{1}{2}\sum_{\omega\neq\tilde{\omega}} S^{\omega,\tilde{\omega}}_{kl} A_k^\dagger(\tilde{\omega})A_l(\omega)$, where the coefficients $\gamma^{\omega,\tilde{\omega}}_{kl}$ and $S^{\omega,\tilde{\omega}}_{kl}$ satisfy:
	\begin{align}
		\begin{cases}
			\gamma_{kl}^{\omega,\tilde{\omega}} = \overline{\gamma_{kl}^{-\omega,-\tilde{\omega}}}\,e^{-\beta(\frac{\omega+\tilde{\omega}}{2})} \\
			S_{kl}^{\omega,\tilde{\omega}} = i\tanh\norbra{\beta\norbra{\frac{\omega-\tilde{\omega}}{4}}}\gamma_{kl}^{\omega,\tilde{\omega}}	
		\end{cases}\;\;\;.\label{eq:app:a12}
	\end{align}
\end{theorem}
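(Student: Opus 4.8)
The plan is to treat $\lind$ as a generic Hermitian-preserving superoperator, impose the detailed balance condition \eqref{eq:lindKMSdef} directly on its decomposition \eqref{eq:decHermPres}, and then read off the Lindblad data using the trace-preserving normal form of Thm.~\ref{thm:tracePreserving}. Fix the Hilbert--Schmidt basis $\curbra{A_k}$ with $A_1 = \id$, so that $\lind$ has the form $\lind[X] = GX + XG^\dagger + \sum_{k,l}\tilde c_{kl}A_l X A_k^\dagger$ of \eqref{eq:decHermPres}. Before anything else I would isolate the purely reversible piece $\lind_0[\rho] = -i[H_{LS}^0,\rho]$ with $[H_{LS}^0,\gamma_S]=0$: a short computation gives $\J_{\gamma_S}\circ\lind_0^\dagger\circ\J_{\gamma_S}^{-1} = -\lind_0$, so this term flips sign under the KMS symmetrization and is therefore precisely the part that detailed balance (in the sense of~\cite{amorim2021complete}) does not constrain. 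It is carried along as $H_{LS}^0$, and the structural analysis below is applied to the KMS-symmetric remainder.

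For that remainder I would compute $\J_{\gamma_S}\circ\lind^\dagger\circ\J_{\gamma_S}^{-1}$ by conjugating each elementary term with $\gamma_S^{\pm 1/2}$. The identity-flanked (``$G$'') terms transform into $\Delta^{1/2}_{\gamma_S}[G^\dagger]\,X + X\,(\Delta^{1/2}_{\gamma_S}[G^\dagger])^\dagger$, while the genuine sandwich terms become $\sum_{k,l}\overline{\tilde c_{kl}}\,\Delta^{1/2}_{\gamma_S}[A_l^\dagger]\,X\,(\Delta^{1/2}_{\gamma_S}[A_k^\dagger])^\dagger$. By the uniqueness of the GKS decomposition in the fixed basis, matching the $G$-part against \eqref{eq:decHermPres} yields at once the second line of \eqref{eq:app:a10}, namely $\Delta^{1/2}_{\gamma_S}[G^\dagger] = G$; matching the sandwich part in the frequency representation reproduces, exactly as in the proof of Thm.~\ref{thm:KMSstructuralcharacterization}, the first line $\tilde c^{\,\omega,\tilde\omega}_{kl} = \overline{\tilde c^{\,-\omega,-\tilde\omega}_{kl}}\,e^{-\beta(\omega+\tilde\omega)/2}$.

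To obtain the decomposition \eqref{eq:app:a11}--\eqref{eq:app:a12} I would combine these with Thm.~\ref{thm:tracePreserving}, writing $\lind$ in the form \eqref{eq:lindTrPres} with $G = K - iH$ and $K = -\tfrac12\sum \gamma^{\omega,\tilde\omega}_{kl}A_k^\dagger(\tilde\omega)A_l(\omega)$, where $\gamma^{\omega,\tilde\omega}_{kl} := \tilde c^{\,\omega,\tilde\omega}_{kl}$ (so the first line of \eqref{eq:app:a12} is already proven). The key structural observation is that each product $A_k^\dagger(\tilde\omega)A_l(\omega)$ is a modular eigenoperator of frequency $\nu = \omega - \tilde\omega$, so both $K$ and $H$ split along $\nu$. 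Projecting $\Delta^{1/2}_{\gamma_S}[G^\dagger] = G$ onto the $\nu$-sector gives $e^{-\beta\nu/2}\norbra{K(\nu)+iH(\nu)} = K(\nu) - iH(\nu)$, that is $iH(\nu) = \tanh\norbra{\beta\nu/4}\,K(\nu)$; comparing the coefficients of $A_k^\dagger(\tilde\omega)A_l(\omega)$ for $\nu = \omega - \tilde\omega \neq 0$ produces exactly $S^{\omega,\tilde\omega}_{kl} = i\tanh\norbra{\beta(\omega-\tilde\omega)/4}\,\gamma^{\omega,\tilde\omega}_{kl}$ with $H_{LS}$ in the stated form, while the $\nu=0$ contribution is killed by $\tanh$ and is absorbed into the previously separated $H_{LS}^0$.

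The step I expect to be the main obstacle is this last modular bookkeeping: one must keep track of which modular frequency each operator product carries (in particular that the adjoint maps the $\nu$-sector to the $-\nu$-sector, which is why $G(0)$ is forced Hermitian and the diagonal Lamb-shift is left free), and separate cleanly the $\gamma_S$-commuting sector $\omega=\tilde\omega$ that furnishes $H_{LS}^0$ from the off-diagonal sector that is rigidly tied to $\gamma^{\omega,\tilde\omega}_{kl}$ through the $\tanh$. A related technical point is that the chosen basis $\curbra{A_k}$ need not be closed under the adjoint, so after applying $\J_{\gamma_S}$ the identification of coefficients must be justified through uniqueness of the GKS representation rather than by a naive term-by-term reading, which is the same subtlety already handled in the proof of Thm.~\ref{thm:KMSstructuralcharacterization}.
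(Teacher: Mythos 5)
Your proposal is correct and follows essentially the same route as the paper's proof in the appendix: impose the KMS involution on the GKS decomposition of Eq.~\eqref{eq:decHermPres}, use uniqueness of that decomposition (in the fixed basis with $A_1=\id$) to extract the two conditions in Eq.~\eqref{eq:app:a10}, and then combine the trace-preserving normal form of Thm.~\ref{thm:tracePreserving} with the modular eigenvalue $e^{-\beta(\omega-\tilde{\omega})}$ of $A_k^\dagger(\tilde{\omega})A_l(\omega)$ to obtain $iH(\nu)=\tanh(\beta\nu/4)K(\nu)$ and hence the $\tanh$ relation for $S_{kl}^{\omega,\tilde{\omega}}$. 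The only differences are organizational: you match coefficients directly at the generator level where the paper first invokes Thm.~\ref{thm:KMSstructuralcharacterization} on $\Phi_t$ and takes $t\to 0$, and your up-front isolation of the $\gamma_S$-commuting Hamiltonian as the KMS-antisymmetric (hence unconstrained, in the convention of the cited reference) sector is actually a cleaner justification than the paper's remark that $G_0$ ``always satisfies'' $G_0=\Delta^{1/2}_{\gamma_S}[G_0^\dagger]$, which as literally written would force $H_{LS}^0$ to be Hermitian-trivial.
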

\begin{proof}
	Thanks to Thm.~\ref{thm:KMSstructuralcharacterization} we know that:
	\begin{align}
		c^{\omega,\tilde{\omega}}_{kl}(\Phi_t) =\overline{ c^{-\omega,-\tilde{\omega}}_{kl}}(\Phi_t)e^{-\beta\frac{(\omega+\tilde{\omega})}{2}}\,.
	\end{align}
	Since this property is independent of $t$, it can be directly transferred to $\tilde{c}^{\omega,\tilde{\omega}}_{kl}(\lind)$. This gives the first condition in Eq.~\eqref{eq:app:a10}. Then, assuming that this holds, we can simplify the difference between $\lind$ and $\tilde\lind_P$ as:
	\begin{align}
		\lind[X]-\J_ {\gamma_S}\circ \lind^\dagger\circ \J_ {\gamma_S}^{-1}[X] = (G-\Delta_{\gamma_S}^{1/2}[G^\dagger])X + X (G-\Delta_{\gamma_S}^{1/2}[G^\dagger])^\dagger\,.
	\end{align}
	In order for $\lind$ to be KMS detailed balance, the difference above must be zero, that is it should hold that $G=\Delta_{\gamma_S}^{1/2}[G^\dagger]$, which proves the second condition in Eq.~\eqref{eq:app:a10}. 
	
	Then, let us decompose $G$ as $G = G_0 + G_\perp$, where $G_0$ is the diagonal part of $G$ in the eigenbasis of $H_S$. Obviously, $G_0$ always satisfies $G_0=\Delta_{\gamma_S}^{1/2}[G_0^\dagger]$, so it does not affect whether $\lind$ is detailed balanced or not (we denote the corresponding anti-self-adjoint part as $H^0_{LS}$). On the other hand, the components of $G_\perp := K_\perp - i H_\perp$ need to satisfy:
	\begin{align}
		H_\perp =- i\,\frac{(\idO-\Delta_{\gamma_S}^{1/2})}{(\idO+\Delta_{\gamma_S}^{1/2})}[K_\perp] \,,
	\end{align}
	for Eq.~\eqref{eq:app:a10} to hold.
	At this point, we can use the expression of $K$ found in Thm.~\ref{thm:tracePreserving} to obtain:
	\begin{align}
		H_\perp &= \sum_{k,l,\omega\neq\tilde{\omega}} \;\tilde{c}^{\,\omega,\tilde{\omega}}_{kl}\norbra{\frac{i}{2}\,\frac{(\idO-\Delta_{\gamma_S}^{1/2})}{(\idO+\Delta_{\gamma_S}^{1/2})}[A_k^\dagger(\tilde{\omega})A_l({\omega})] } = \frac{i}{2}\sum_{k,l,\omega\neq\tilde{\omega}} \;\tilde{c}^{\,\omega,\tilde{\omega}}_{kl}\norbra{\frac{1-e^{\frac{-\beta(\omega-\tilde{\omega})}{2}}}{1+e^{\frac{-\beta(\omega-\tilde{\omega})}{2}}}}A_k^\dagger(\tilde{\omega})A_l({\omega}) =
		\\
		&=\frac{i}{2}\sum_{k,l,\omega\neq\tilde{\omega}} \;\tilde{c}^{\,\omega,\tilde{\omega}}_{kl} \tanh\norbra{\beta\norbra{\frac{\omega-\tilde{\omega}}{4}}}\; A_k^\dagger(\tilde{\omega})A_l({\omega})\,,\label{eq:app:a17}
	\end{align}
	where we implicitly used the fact that $\Delta_{\gamma_S}[X(\tilde{\omega})^\dagger Y(\omega)] = e^{-\beta(\omega-\tilde{\omega})} X(\tilde{\omega})^\dagger Y(\omega)$, and the definition of the hyperbolic tangent. This proves the claim.
\end{proof}
The theorem just proven shows that for KMS-detailed balanced Lindbladians the two components of the matrix $G$ are closely related. This was used by the authors of~\cite{amorim2021complete} and~\cite{chen2023efficient} in order to enforce KMS detailed balance given a reduced GKS matrix. Indeed, suppose that $\tilde{c}^{\,\omega,\tilde{\omega}}_{kl}$ satisfies the first condition of Eq.~\eqref{eq:app:a12}. Then, it is always possible to define a KMS-detailed balanced Lindbladian by proceeding in two steps: first, we use the structural characterization of $K $ in Eq.~\eqref{eq:app:kStruct} to enforce trace preservation; then, since this term can generically break the KMS detailed balance of the resulting Lindbladian, we artificially add $H_\perp$ as defined in Eq.~\eqref{eq:app:a17} in order to recover exact detailed balance.

This discussion is particularly interesting when considering Thm.~\ref{thm:dbLindlbadian}--\ref{thm:lindbladErrorDB}: there we show that through a series of approximations one can define an exactly detailed balanced reduced evolution that stays close to the real dynamics. As it appears, whereas the condition on the reduced GKS matrix (that is, the rates $	\tilde{\gamma}_{kl}^{\omega,\tilde{\omega}}$) naturally emerges from the microscopic derivation, the one for $S_{kl}^{\omega,\tilde{\omega}}$ in the case $\omega\neq\tilde{\omega}$ has to be introduced by hand (similarly to what was proposed by~\cite{amorim2021complete, chen2023efficient}). 

Let us split the Lindbladian as $\lind = \lind_1 + \lind_2$, where $\lind_1[X] := \sum_{k,l}\;\tilde{c}_{kl} \,A_l XA_k^\dagger$ is exactly detailed balanced, and $\lind_2$ denotes the rest of the terms in Eq.~\eqref{eq:decHermPres}. Then, the evolution $\Phi_t = e^{t\lind}$ can be represented in Trotter form, as $\Phi_t = \lim_{n\rightarrow\infty} (e^{\frac{t}{n}\lind_2}e^{\frac{t}{n}\lind_1})^n$. In this context, every odd step corresponds to an exactly detailed balanced evolution, which breaks trace preservation (since this was enforced by the presence of $\lind_2$).  On the other hand, since $\lind_1+\lind_2$ is trace preserving, every even step acts as a projection into the space of matrices with constant trace. This generically breaks detailed balance. Then, the results of Thm.~\ref{thm:dbLindlbadian}--\ref{thm:lindbladErrorDB} can be read as two facts: first, that even if imposing trace preservation breaks detailed balance, this happens well within the uncertainty we have about the exact evolution; second, that we can counteract this effect (at the same order of approximation) by introducing a small purely unitary rotation.


\section{Derivation of the exact reduced system dynamics}\label{app:exactDynamics}
We provide a derivation of Eq.~\eqref{eq:exactReducedDynamics} from the main text using conditions~\ref{it:productInCond}--\ref{it:gaussianBath}. 
Before doing so, we remark a special property of Gaussian baths that follows from Wick's theorem in Eq.~\eqref{eq:WicksTheorem} and condition~\ref{it:noEnergyShift}~\cite{nathan2020universal}:
\begin{align}
	\TrR{B}{B_k(t) \mathcal{V}(t,0)\gamma_B\mathcal{V}^\dagger(t,0)} = -i\alpha \int_{0}^{t}\de s\;&\TrR{B}{B_k(t) \mathcal{V}(t,s) [V(s), \mathcal{V}(s,0)\gamma_B\mathcal{V}^\dagger(s,0)]\mathcal{V}^\dagger(t,s)} \,.\label{eq:wicksTheoremConsequence}
\end{align}
This identity can be verified by Trotterizing the evolution of $\mathcal{V}(t,0)$. For simplicity, here we present the proof for the asymmetric case, but the general identity follows the same reasoning:
\begin{align}
	\TrR{B}{B_k(t) \mathcal{V}(t,0)\gamma_B} &= \lim_{n\rightarrow\infty} \TrR{B}{B_k(t) \prod_{j=0}^{n}\norbra{\idO -i\alpha \Delta t\,V\norbra{j\Delta t}}\gamma_B} =
	\\
	&= \lim_{n\rightarrow\infty}\sum_{j'=0}^n\; \TrR{B}{B_k(t)\norbra{\idO -i\alpha \Delta t\,V\norbra{j'\Delta t}}\gamma_B}\TrR{B}{\prod_{\substack{j=0\\ j\neq j'}}^{n}\norbra{\idO -i\alpha \Delta t\,V\norbra{j\Delta t}}\gamma_B}=
	\\
	&=-i\alpha \lim_{n\rightarrow\infty}\sum_{j'=0}^n\; \Delta t\;\TrR{B}{B_k(t)\prod_{\substack{j=j'+1}}^{n} e^{-i\alpha \Delta t V\norbra{j\Delta t}}V\norbra{j'\Delta t}\prod_{\substack{j=0}}^{j'} e^{-i\alpha \Delta t V\norbra{j\Delta t}}\gamma_B}=
	\\
	&=-i\alpha \int_0^t\de s\; \TrR{B}{B_k(t)\mathcal{V}(t,s)V(s)\mathcal{V}(s,0)\gamma_B}\,,
\end{align}
where $\Delta t := t/n$, all the product are ordered from right to left, and we used $e^{-i\alpha \Delta t V\norbra{j\Delta t}} \simeq \idO -i\alpha \Delta t V\norbra{j\Delta t}$. In the third line we used condition~\ref{it:noEnergyShift} to get rid of the contribution arising from the identity, and used Eq.~\eqref{eq:WicksTheorem} to put back the two traces together. 
Then, we can apply Eq.~\eqref{eq:wicksTheoremConsequence} to the exact evolution in Eq.~\eqref{eq:diffEqOfMotionRed} to obtain:
\begin{align}
	\partial_t\rho_{S}(t) &= -i\alpha\,\TrR{B}{[V(t),\rho_{SB}(t)]}= -i\alpha\,\TrR{B}{[V(t),\mathcal{V}(t,0)\rho_S(0)\otimes\gamma_B\mathcal{V}^\dagger(t,0)]} = 
	\\
	&= -\alpha^2\int_{0}^{t}\de s\;\TrR{B}{\sqrbra{V(t),\norbra{\mathcal{V}(t,s) \sqrbra{V(s), \rho_{SB}(s)}\mathcal{V}^\dagger(t,s)}}}\,.\label{eq:exactReducedDynamicsapp}
\end{align}
Finally, expanding the interaction term in terms of the jump operators (see Eq.~\eqref{eq:interactionDec}) and applying Wick's theorem once more we obtain:
\begin{align}
	\partial_t\rho_{S}(t) =\alpha^2\int_{0}^{t}\de s\;\big (C_{kl}(t-s)&\,\TrR{B}{\sqrbra{\mathcal{V}(t,s) A_l(s) \rho_{SB}(s)\mathcal{V}^\dagger(t,s), A_k(t)}}+\nonumber
	\\
	&+C_{kl}(s-t)\,\TrR{B}{\sqrbra{ A_l(t),\mathcal{V}(t,s) \rho_{SB}(s) A_k(s)\mathcal{V}^\dagger(t,s)}}\big)\,,\label{eq:app:B8}
\end{align}
proving the claim. 

Even if this derivation heavily relies on the Gaussianity of the bath, there are other, more general, ways of obtaining an exact reduced dynamics akin to Eq.~\eqref{eq:app:B8}. In fact,  the key role of the bath correlation function in the characterization of the system dynamics does not depend on this assumption \cite{rivas2012open,nathan2020universal}. Nonetheless, assuming Gaussianity  of the bath allows for more straightforward error bounds on the Born-Markov approximation, and it is thus preferable for analytical calculations.

\section{Positivity of the bath spectral function}\label{app:bath}	
We aim to prove that the bath spectral function $\widehat{\textbf{C}}(\omega)$ is a positive matrix or, equivalently, that for any vector $\ket{\textbf{v}}$ we have:
\begin{align}
	\sandwich{{\textbf{v}}}{\widehat{\textbf{C}}(\omega)}{{\textbf{v}}}\geq 0\,.\label{eq:app:c1}
\end{align}
We follow the proof presented in~\cite{rivas2012open}. To this end, let us rewrite Eq.~\eqref{eq:app:c1} as:
\begin{align}
	\sandwich{{\textbf{v}}}{\widehat{\textbf{C}}(\omega)}{{\textbf{v}}} =  \int_{-\infty}^{\infty}\de t\; e^{it\omega} \,\norbra{\overline{v_k}\,\TrR{B}{B_k(t) B_l\gamma_B}\, v_l} = \int_{-\infty}^{\infty}\de t\; e^{it\omega} \,D(t)\geq 0\,,\label{eq:app:c2}
\end{align}
where we have implicitly defined the function $D(t)$. This rewriting is particularly useful thanks to the following:
\begin{lemma}[Bochner's theorem~\cite{reed1975ii}]
	The Fourier transform of a continuous function $f:\RR\rightarrow\mathbb{C}$ satisfies $\hat{f}(\omega)\geq0$ if and only if $f$ is positive definite.
\end{lemma}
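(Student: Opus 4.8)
The plan is to prove the two implications of the equivalence separately, keeping in mind that in our setting $f$ is continuous and absolutely integrable (as is the case for $f=D$, since $\sum_{kl}\int|C_{kl}(t)|\,\de t=\Gamma_0<\infty$), so that $\widehat{f}(\omega)=\int_{-\infty}^{\infty}e^{it\omega}f(t)\,\de t$ is a genuine continuous function and Fourier inversion applies. Recall that $f$ is \emph{positive definite} if $\sum_{j,k}\overline{c_j}\,c_k\,f(t_j-t_k)\geq 0$ for every finite collection of points $\{t_j\}\subset\RR$ and coefficients $\{c_j\}\subset\CC$. Only the implication ``$f$ positive definite $\Rightarrow \widehat{f}\geq 0$'' is actually used in the application (to conclude $\sandwich{\textbf{v}}{\widehat{\textbf{C}}(\omega)}{\textbf{v}}=\widehat{D}(\omega)\geq 0$ once one has checked that $D$ is positive definite), but I would establish both.

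For the direction $\widehat{f}\geq 0\Rightarrow f$ positive definite, I would insert the Fourier inversion $f(t)=\frac{1}{2\pi}\int_{-\infty}^{\infty}e^{-it\omega}\widehat{f}(\omega)\,\de\omega$ into the defining quadratic form. A direct computation collapses the double sum into a modulus square,
\begin{align}
	\sum_{j,k}\overline{c_j}\,c_k\,f(t_j-t_k)=\frac{1}{2\pi}\int_{-\infty}^{\infty}\widehat{f}(\omega)\,\Big|\sum_{k}c_k\,e^{it_k\omega}\Big|^{2}\,\de\omega\geq 0\,,
\end{align}
which is manifestly nonnegative whenever $\widehat{f}(\omega)\geq 0$.

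The substantive direction is $f$ positive definite $\Rightarrow \widehat{f}\geq 0$, and here I would recover $\widehat{f}(\omega)$ as a limit of Fejér-type averages of the positive-definite forms. Fix $\omega$ and, on a uniform grid $\{s_j\}$ of spacing $\delta$ in $[-L,L]$, take $c_j=e^{-is_j\omega}$; positive definiteness then gives $\sum_{j,k}e^{i(s_j-s_k)\omega}f(s_j-s_k)\geq 0$. Multiplying by $\delta^{2}$ and letting $\delta\to 0$, continuity of $f$ upgrades this to the integral inequality
\begin{align}
	0\leq\int_{-L}^{L}\!\int_{-L}^{L}e^{i(s-s')\omega}\,f(s-s')\,\de s\,\de s'=\int_{-2L}^{2L}\big(2L-|u|\big)\,e^{iu\omega}\,f(u)\,\de u\,,
\end{align}
where the last equality is the change of variables $u=s-s'$. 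Dividing by $2L$ and sending $L\to\infty$, the triangular weight $\max\{1-|u|/(2L),\,0\}$ tends to $1$ pointwise while staying bounded by $1$, so dominated convergence (using $f\in L^{1}$) yields convergence of the right-hand side to $\int_{-\infty}^{\infty}e^{iu\omega}f(u)\,\de u=\widehat{f}(\omega)$, whence $\widehat{f}(\omega)\geq 0$.

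The main obstacle is this second direction, and specifically its two limiting steps: passing from the discrete positive-definiteness inequality to its integral form (which requires continuity of $f$ and uniform control of the Riemann sums on $[-L,L]$), and justifying the $L\to\infty$ limit (which is exactly where integrability of $f$ enters). In full generality Bochner's theorem does not assume integrability and concludes only that $f$ is the Fourier transform of a finite nonnegative measure, a statement proved through Helly's selection theorem / weak-$*$ compactness of measures; the pointwise form $\widehat{f}(\omega)\geq 0$ quoted here is precisely the integrable specialization, which suffices for us since the bath correlation functions decay and are integrable. For the general measure-theoretic version I would simply invoke the standard reference~\cite{reed1975ii}.
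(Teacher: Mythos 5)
Your proof is correct, but note that the paper does not prove this lemma at all: it is stated as a black-box citation to Reed--Simon~\cite{reed1975ii} and immediately applied to the function $D(t)$. What you have supplied is the standard self-contained argument for the integrable specialization, and both directions are sound. The substantive direction ($f$ positive definite $\Rightarrow \hat f\geq 0$) via the Fej\'er-type triangular average $\frac{1}{2L}\int_{-2L}^{2L}(2L-|u|)e^{iu\omega}f(u)\,\de u\geq 0$ followed by dominated convergence is exactly the right way to exploit the extra hypothesis $f\in L^1$ that the general (measure-valued) Bochner theorem does not assume; you correctly identify that this is the only direction the paper actually uses, and that continuity of $f$ is what licenses the passage from the discrete quadratic form to the double integral.

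One small technical caveat in the converse direction: the pointwise inversion formula $f(t)=\frac{1}{2\pi}\int e^{-it\omega}\hat f(\omega)\,\de\omega$ requires $\hat f\in L^1$, which does not follow from $f$ being continuous and integrable alone. It \emph{does} follow once you add the hypothesis $\hat f\geq 0$ (for $f\in L^1$ continuous at the origin, nonnegativity of $\hat f$ forces $\hat f\in L^1$ with $\frac{1}{2\pi}\int\hat f = f(0)$), but that fact is itself usually established by the same Fej\'er regularization you use in the other direction, so it deserves a one-line acknowledgment rather than the bare assertion that ``Fourier inversion applies.'' Alternatively, you could avoid inversion entirely by writing the quadratic form as a limit of Gaussian-regularized integrals against $\hat f$. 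Since this direction is not needed for the positivity of $\widehat{\textbf{C}}(\omega)$, the gap is harmless for the paper's purposes.
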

Then, this reduces the problem to proving that $D(t)$ is positive definite. These are functions $h:\RR\rightarrow\mathbb{C}$ such that for any vector of reals $\{t_i\}_{i\in\{1,\dots,n\}}$ of arbitrary size $n$, the matrix $\textbf{A}$ defined in coordinates as $A_{ij} = h(t_i-t_j) $ is positive semi-definite. Before movingg to the actual proof, it is useful to rewrite $D(t)$ as:
\begin{align}
	D(t) &= \TrR{B}{e^{i H_B t}(v_k\,B_k)^\dagger e^{-i H_B t} (v_l\,B_l)\gamma_B} =
	\\
	&= \int_{-\infty}^\infty \de E_B\int_{-\infty}^\infty \de E'_B\; \sandwich{E_B}{e^{i H_B t}(v_k\,B_k)^\dagger e^{-i H_B t}\ketbra{E_B'}{E_B'} (v_l\,B_l)\gamma_B}{E_B} =
	\\
	&= \int_{-\infty}^\infty \de E_B\int_{-\infty}^\infty \de E'_B\; \frac{e^{-\beta E_B}}{\mathcal{Z}_B} e^{i (E_B-E_B')t}\,|\sandwich{E_B'}{ (v_l\,B_l)}{E_B}|^2\,,\label{eq:app:c5}
\end{align}
where we used the spectral measure $H_B := \int_{-\infty}^{\infty}\de E_B \;(E_B \ketbra{E_B}{E_B})$. Then, it remains to prove that for any vector $\ket{\textbf{w}}$ and any choice of reals $\{t_i\}_{i\in\{1,\dots,n\}}$ , the quantity $(\overline{w_k}\,D(t_k-t_l)w_l)$ is non-negative. Starting from Eq.~\eqref{eq:app:c5} this can be straightforwardly verified as:
\begin{align}
	(\overline{w_k}\,D(t_k-t_l)w_l) &= \int_{-\infty}^\infty \de E_B\int_{-\infty}^\infty \de E'_B\; \frac{e^{-\beta E_B}}{\mathcal{Z}_B} \norbra{\overline{w_k}\,e^{i (E_B-E_B')t_k}}\norbra{{w_l}\,e^{-i (E_B-E_B')t_l}}\,|\sandwich{E_B'}{ (v_l\,B_l)}{E_B}|^2=
	\\
	&=\int_{-\infty}^\infty \de E_B\int_{-\infty}^\infty \de E'_B\; \frac{e^{-\beta E_B}}{\mathcal{Z}_B} \left | \norbra{{w_m}\,e^{-i (E_B-E_B')t_m}}\right |^2\,|\sandwich{E_B'}{ (v_l\,B_l)}{E_B}|^2\geq 0\,,
\end{align}
where we remind the reader we always sum over repeated indexes. The claim follows from the fact that we are integrating over non-negative quantities. This proves that $D(t)$ is a positive definite function. Thus, it follows from Bochner's theorem and Eq.~\eqref{eq:app:c2} that $\widehat{\textbf{C}}(\omega)\geq0$. 

Another interesting property of positive definite functions is that  they are uniformly bounded by their value in zero. Thus, as a by-product we also obtain the following inequality for any vector $\ket{\textbf{v}}$:
\begin{align}
	\left | \norbra{\overline{v_k}\,C_{kl}(t)\, v_l}\right |\leq  \norbra{\overline{v_k}\,C_{kl}(0)\, v_l}\,.
\end{align}

\section{Relation between different timescales}\label{app:relationTimescales}
We are going to explore the relation of the timescales:
\ms
{
	\begin{align}
		\Gamma_0&:= \sum_{kl}\int_{-\infty}^{\infty}\dt\; |C_{kl}(t)|\,; \qquad \qquad\qquad\qquad \Gamma:= \sum_{kl}\int_{-\infty}^{\infty}\dt\; |g_{k\lambda}(t)|\int_{-\infty}^{\infty}\de s\; |g_{\lambda l}(s)|\,; \\
		\Gamma_0\tau_0 &:=  \sum_{kl}\int_{-\infty}^{\infty}\dt\;|t| |C_{kl}(t)|\,; \;\qquad \qquad\qquad\;\Gamma\tau:=\sum_{kl} \,2\int_{-\infty}^{\infty}\dt\; |t||g_{k\lambda}(t)|\int_{-\infty}^{\infty}\de s\; |g_{\lambda l}(s)|\,;\\
		K_0 &:=  \sum_{kl}\int_{-\infty}^{\infty}\dt\;|t|^2 |C_{kl}(t)|\,; \;\qquad \qquad\qquad\;K:=\sum_{kl} \,\int_{-\infty}^{\infty}\dt\int_{-\infty}^{\infty}\de s\; (|t|+|s|)^2|g_{k\lambda}(t)| |g_{\lambda l}(s)|\,,
	\end{align}
}
between each other and with the bath spectral function. First, it should be noticed that:
\ms
{
	\begin{align}
		&|\widehat{C}_{kl}(\omega)| = \left|\int_{-\infty}^{\infty}\de t\; e^{it\omega}\,{C}_{kl}(t)\right|\leq \sum_{k,l}\int_{-\infty}^{\infty}\de t\; \left|e^{it\omega}\,{C}_{kl}(t)\right| = \Gamma_0\,;\\
		&|\widehat{C}'_{kl}(\omega)|= \left|i\int_{-\infty}^{\infty}\de t\; e^{it\omega}\,t\,{C}_{kl}(t)\right|\leq \sum_{k,l}\int_{-\infty}^{\infty}\de t\; \left|e^{it\omega}\,t \,{C}_{kl}(t)\right|  = \Gamma_0\tau_0\,;\\
		&|\widehat{C}''_{kl}(\omega)|= \left|-\int_{-\infty}^{\infty}\de t^2\; e^{it\omega}\,t\,{C}_{kl}(t)\right|\leq \sum_{k,l}\int_{-\infty}^{\infty}\de t\; \left|e^{it\omega}\,t^2 \,{C}_{kl}(t)\right|  = K_0\,;
	\end{align}
}
In the same way, it can also be proven that $|\widehat{g}_{k\lambda}(\omega)||\widehat{g}_{\lambda l}(\tilde\omega)|\leq \Gamma$ (for any $\omega,\,\tilde{\omega}$), that $|\widehat{g}'_{k\lambda}(\omega)||\widehat{g}_{\lambda l}(\tilde\omega)|\leq \Gamma\tau$, 
\ms
{
	and finally that $\norbra{|\widehat{g}''_{k\lambda}(\omega)||\widehat{g}_{\lambda l}(\tilde\omega)|+|\widehat{g}'_{k\lambda}(\omega)||\widehat{g}'_{\lambda l}(\tilde\omega)|+|\widehat{g}_{k\lambda}(\omega)||\widehat{g}''_{\lambda l}(\tilde\omega)|}\leq K$.
}

Finally, it is also possible to relate $\Gamma_0$ with $\Gamma$ as:
\begin{align}
	\Gamma_0 = \sum_{kl}\int_{-\infty}^{\infty}\dt\; \left |\int_{-\infty}^{\infty}\de s\; g_{k\lambda}(s)g_{\lambda l}(t-s) \right|\leq 	\sum_{kl}\int_{-\infty}^{\infty}\dt\; \int_{-\infty}^{\infty}\de s\; \left |g_{k\lambda}(s)||g_{\lambda l}(t-s) \right|\leq \Gamma\,,
\end{align}
where in the first step we used Eq.~\eqref{eq:convTheorem} and the last inequality follows from a change of variables $t-s\rightarrow t$, which decouples the two integrands. Similarly, we have that:
\begin{align}
	\Gamma_0\tau_0 &=  \sum_{kl}\int_{-\infty}^{\infty}\dt\;|t| \left |\int_{-\infty}^{\infty}\de s\; g_{k\lambda}(s)g_{\lambda l}(t-s) \right|\leq \sum_{kl}\int_{-\infty}^{\infty}\dt\int_{-\infty}^{\infty}\de s\;|t|\left | g_{k\lambda}(s)||g_{\lambda l}(t-s) \right| \leq
	\\
	&\leq \sum_{kl}\int_{-\infty}^{\infty}\dt\int_{-\infty}^{\infty}\de s\;\norbra{|t|+|s|}\left | g_{k\lambda}(s)||g_{\lambda l}(t) \right| = \Gamma\tau\,,\label{eq:D9x}
\end{align}
where, once again, we used the change of variables $t-s\rightarrow t$, and the last step is just an application of the triangle inequality. 
\ms
{Finally, we also have:
	\begin{align}
		K_0 &:=  \sum_{kl}\int_{-\infty}^{\infty}\dt\;|t|^2 \left |\int_{-\infty}^{\infty}\de s\; g_{k\lambda}(s)g_{\lambda l}(t-s) \right| \leq \sum_{kl}\int_{-\infty}^{\infty}\dt\int_{-\infty}^{\infty}\de s\;|t|^2|g_{k\lambda}(s)||g_{\lambda l}(t-s)|\leq
		\\
		&\leq \sum_{kl}\int_{-\infty}^{\infty}\dt\int_{-\infty}^{\infty}\de s\;(|t|+|s|)^2|g_{k\lambda}(s)||g_{\lambda l}(s)| = K\,,
	\end{align}
	where, similarly with Eq.~\eqref{eq:D9x}, we changed variables $t-s\rightarrow t$, and applied the triangle inequality.
}

\section{Proof of Thm.~\ref{thm:fastestRate}: Fastest rate of change in the system}\label{app:fastestRate}
We begin from the expression in Eq.~\eqref{eq:exactReducedDynamics} for the derivative of the state:
\begin{align}
	\partial_t\rho_{S}(t) =\alpha^2\int_{0}^{t}\de s\;\big (C_{kl}(t-s)&\,\TrR{B}{\sqrbra{\mathcal{V}(t,s) A_l(s) \rho_{SB}(s)\mathcal{V}^\dagger(t,s), A_k(t)}}+\nonumber
	\\
	&+C_{kl}(s-t)\,\TrR{B}{\sqrbra{ A_l(t),\mathcal{V}(t,s) \rho_{SB}(s) A_k(s)\mathcal{V}^\dagger(t,s)}}\big)
\end{align}
We now bound the trace norm of the trace in the first line as follows:
\begin{align}
	\left\|\sqrbra{\TrR{B}{\mathcal{V}(t,s) A_l(s) \rho_{SB}(s)\mathcal{V}^\dagger(t,s)}, A_k(t)}\right\|_1 &\leq	2\,\left\|\TrR{B}{\mathcal{V}(t,s) A_l(s) \rho_{SB}(s)\mathcal{V}^\dagger(t,s)}\right\|_1 \|A_k(t)\|_\infty\leq
	\\
	&\leq 2
	\sup_{-\idO_S\leq P_S \leq \idO_S }\left | \Tr{P_S\otimes\idO_B\mathcal{V}(t,s) A_l(s) \rho_{SB}(s)\mathcal{V}^\dagger(t,s)}\right |\,,\label{eq:app:e3}
\end{align}
where in the first step we have taken $A_k(t)$ out of the trace (since it only acts on the system), then we applied the triangle and Hölder's inequality, and finally used the normalization $\|A_k(t)\|_\infty = 1$ and the standard rewriting of the trace distance in terms of a maximization over Hermitian operators whose larger eigenvalue is smaller than 1.  Then, using the cyclicity of the norm, we have that:
\begin{align}
	{\text{Eq.~\eqref{eq:app:e3}}}&=2
	\sup_{-\idO_S\leq P_S \leq \idO_S}\left | \Tr{\norbra{\mathcal{V}^\dagger(t,s)P\otimes\idO_B\mathcal{V}(t,s)} A_l(s) \rho_{SB}(s)}\right |\leq
	\\
	&\leq2
	\sup_{-\idO_{SB}\leq P_{SB} \leq \idO_{SB} }\left | \Tr{P_{SB} A_l(s) \rho_{SB}(s)}\right | =2\, \|A_l(s) \rho_{SB}(s)\|_1\leq 2\,\|A_l(s)\|_\infty \|\rho_{SB}(s)\|_1 \leq 2
\end{align}
where, once again, the last step follows from Hölder's inequality. The second trace can be bounded in the same manner. Putting everything together we have:
\begin{align}
	\|\partial_t\rho_{S}(t) \|_1 \leq 2\alpha^2 \,\sum_{k,l}\int_{0}^{t}\de s\; \norbra{|C_{kl}(t-s)| + |C_{kl}(s-t)|} \leq 2\alpha^2 \,\sum_{k,l}\int_{-\infty}^\infty\de s\;|C_{kl}(s)|\leq 2\alpha^2\Gamma_0\,,
\end{align}
proving the claim.

\section{Bounding the different error terms}\label{app:errorTerms}

\subsection{Proof of Thm.~\ref{thm:bornApproximation}: Born approximation}\label{app:born}
The difference between Eq.~\eqref{eq:exactReducedRescaled} and Eq.~\eqref{eq:bornDynamics} is that we have substituted $\mathcal{V}\norbra{\frac{\sigma}{\alpha^2},\frac{\sigma}{\alpha^2}-x}$ with the identity operator. In order to compare the two expressions, let us rewrite the action of the unitary $\mathcal{V}_{\sigma,x}$ on an operator $X$ as:
\begin{align}
	\mathcal{V}_{\sigma,x}\,X\,\mathcal{V}^\dagger_{\sigma,x} = X- i \alpha \int_{-x}^{0}\de w \; \sqrbra{V\norbra{\frac{\sigma}{\alpha^2}+w}, \,\mathcal{V}\norbra{\frac{\sigma}{\alpha^2}+w,\frac{\sigma}{\alpha^2}-x} \,X\, \mathcal{V}^\dagger\norbra{\frac{\sigma}{\alpha^2}+w,\frac{\sigma}{\alpha^2}-x}} \,.\label{eq:app:f1}
\end{align} 
Then, the Born approximation consists of disregarding the first term in the expansion. This allows us to express the error term $\mathcal{E}^B_t= \partial_t\norbra{\tilde{\rho}_{S}(t)-\tilde{\rho}^B_{S}(t)}$ as:
\begin{align}
	\mathcal{E}^B_t:&=\partial_t\int_0^{t}\de \sigma \int_0^{\frac{\sigma}{\alpha^2}}\de x\int_{-x}^{0}\de w\; H^B_{\sigma,x,w} = \int_0^{\frac{t}{\alpha^2}}\de x\int_{-x}^{0}\de w\; H^B_{t,x,w}\,,\label{eq:app:f3}
\end{align}
where, for brevity of notation, we implicitly defined the operator $H^B_{\sigma,x,w}$ which corresponds to substituting in Eq.~\eqref{eq:exactReducedRescaled} every term of the form $\mathcal{V}_{\sigma,x}\cdot\mathcal{V}_{\sigma,x}^\dagger$ with the second term on the right-hand side of Eq.~\eqref{eq:app:f1}. Then, we can proceed to first bound the trace norm of $H^B_{t,x,w}$, which explicitly reads:
\begin{align}
	&\|H^B_{t,x,w}\|_1 \leq \alpha\norbra{|C_{kl}(x)|+|C_{kl}(-x)|} \cdot\nonumber\\
	&\qquad\qquad\qquad\;\;\cdot\left \|\TrR{B}{\sqrbra{\sqrbra{V\norbra{\frac{t}{\alpha^2}+w}, \,\mathcal{V}^{\frac{t}{\alpha^2}+w}_{\frac{t}{\alpha^2}-x}\, A_l\norbra{\frac{t}{\alpha^2}-x} \tilde{\rho}_{SB}\norbra{t-\alpha^2x}\norbra{\mathcal{V}^{\frac{t}{\alpha^2}+w}_{\frac{t}{\alpha^2}-x}}^\dagger}, A_k\norbra{\frac{t}{\alpha^2}}}}\right \|_1\,,\label{eq:app:f5}
\end{align}
where, to make the expression above slightly more compact, we introduced the notation $\mathcal{V}^t_s := \mathcal{V}(t,s)$. Moreover, we denote the trace norm in Eq.~\eqref{eq:app:f5} by $\|\hat{H}^B_{t,x,w}\|_1$. Then, by applying Hölder's inequality and using the Gaussianity of the bath we obtain:
\begin{align}
	\|\hat{H}^B_{t,x,w}\|_1&\leq2\,\|A_k\|_\infty\;\left \|\TrR{B}{\sqrbra{V\norbra{\frac{t}{\alpha^2}+w}, \,\mathcal{V}^{\frac{t}{\alpha^2}+w}_{\frac{t}{\alpha^2}-x}\, A_l\norbra{\frac{t}{\alpha^2}-x} \tilde{\rho}_{SB}\norbra{t-\alpha^2x}\norbra{\mathcal{V}^{\frac{t}{\alpha^2}+w}_{\frac{t}{\alpha^2}-x}}^\dagger}}\right \|_1\leq4\alpha\Gamma_0\,,
\end{align}
where in the last step we used an expansion akin to Eq.~\eqref{eq:exactReducedDynamicsapp} (which gives the extra $\alpha$ factor associated to the contribution coming from having $V$ in the expression), and then applied the same steps presented in App.~\ref{app:fastestRate} to bound the trace norm with $2\Gamma_0$. 

Now plugging in Eq.~\eqref{eq:app:f3} the bound from Eq.~\eqref{eq:app:f5} we get:
\begin{align}
	\|\mathcal{E}^B_t\|_1&\leq \int_0^{\frac{t}{\alpha^2}}\de x\int_{-x}^{0}\de w\; \Norm{H^B_{t,x,w}} \leq 4\alpha^2\Gamma_0 \sum_{kl}\int_{0}^{\frac{t}{\alpha^2}}\de x\; |x|\norbra{|C_{kl}(x)|+|C_{kl}(-x)|} \leq 4\alpha^2 \Gamma_0 (\Gamma_0\tau_0)\,,
\end{align}
where in the last step we used the definition in Eq.~\eqref{eq:TauDef}. This proves the claim.
\subsection{Proof of Thm.~\ref{thm:markovApproximation}: Markov approximation}\label{app:markov}
Once again, we can express $\mathcal{E}^{BM}_t= \partial_t\norbra{\tilde{\rho}^{B}_{S}(t)-\tilde{\rho}^{BM}_{S}(t)}$ as in  Eq.~\eqref{eq:app:f3}, that is:
\begin{align}
	\Norm{\mathcal{E}^{BM}_t}_1\leq\int_0^{\frac{t}{\alpha^2}}\de x\; \Norm{H^{BM}_{t,x}}_1\,,
\end{align}
where the trace norm on the right-hand side can be upper bounded as:
\begin{align}
	\Norm{H^{BM}_{t,x}}_1&\leq \norbra{|C_{kl}(x)|+|C_{kl}(-x)|}\Norm{\sqrbra{A_l\norbra{\frac{t}{\alpha^2}-x} \norbra{\tilde{\rho}_{S}\norbra{t-\alpha^2x}-\tilde{\rho}_{S}\norbra{t}}, A_k\norbra{\frac{t}{\alpha^2}}}}_1\leq
	\\
	&\leq 2\norbra{|C_{kl}(s)|+|C_{kl}(-s)|}\Norm{ \tilde{\rho}_{S}\norbra{t-\alpha^2s}-\tilde{\rho}_{S}\norbra{t}}_1\,,
\end{align}
where we implicitly used the triangle inequality and the fact that $\Norm{A_k}_\infty = 1$.
Then, thanks to Thm.~\ref{thm:fastestRate}, it holds that: 
\begin{align}
	\Norm{ \tilde{\rho}_{S}\norbra{t-\alpha^2x}-\tilde{\rho}_{S}\norbra{t}}_1 = \left\|\int_{t}^{t-\alpha^2 x}\de u\; \partial_u \tilde{\rho}_{S}\norbra{u}\right\|_1\leq \frac{\alpha^2|x|}{2} \sup_{u\in [t, t-\alpha^2x]}\|\partial_u \tilde{\rho}_{S}\norbra{u}\|_1\leq \alpha^2|x|\Gamma_0\,.
\end{align}
Then, wrapping everything together we finally obtain:
\begin{align}
	\Norm{\mathcal{E}^{BM}_t}_1\leq2\alpha^2\Gamma_0\sum_{kl}\int_{0}^{\frac{t}{\alpha^2}}\de x\;|x|\norbra{|C_{kl}(x)|+|C_{kl}(-x)|} \leq 2 \alpha^2 \Gamma_0(\Gamma_0\tau)
\end{align}
proving the claim.

\subsection{Proof of Thm.~\ref{thm:redfield}: Redfield approximation}\label{app:redfield}
The error term $\mathcal{E}^{RE}_t= \partial_t\norbra{\tilde{\rho}^{BM}_{S}(t)-\tilde{\rho}^{RE}_{S}(t)}$ can be bounded through the same steps as in previous appendices as:
\begin{align}
	\|\mathcal{E}^{RE}_t\|_1&\leq2\,\sum_{kl}\int_{0}^{\infty}\de x\;|\theta\norbra{(t/\alpha^2)-x} -e^{-\frac{(x/2)^2}{T(\alpha)^2}}| \norbra{|C_{kl}(x)|+|C_{kl}(-x)|}\,,
\end{align}
where we denote by $\theta(x)$ the Heaviside step function. We can then split the integral in two parts, depending on the behavior of the $\theta$-function. The first term is bounded as:
\begin{align}
	2\,\sum_{kl}\int_{0}^{\frac{t}{\alpha^2}}\de x\;&\norbra{\frac{1 -e^{-\frac{(x/2)^2}{T(\alpha)^2}}}{|x|+\Gamma_0^{-1}}} (|x|+\Gamma^{-1}_0) \norbra{|C_{kl}(x)|+|C_{kl}(-x)|}\leq 2 (1+\Gamma_0\tau_0)\sup_{x\in \RR}\norbra{\frac{1 -e^{-\frac{(x/2)^2}{T(\alpha)^2}}}{|x|+\Gamma_0^{-1}}}\leq
	\\
	&\leq \frac{(1+\Gamma_0\tau_0)}{{\frac{T(\alpha)}{2}+\Gamma_0^{-1}}} = \frac{2\,\Gamma_0(1+\Gamma_0\tau_0)}{{2+\Gamma_0\,T(\alpha)}}\,,
\end{align}
where the first inequality follows from extending the limit of integration to infinity, and using Hölder's inequality. On the other hand, the second part can be bounded in a similar manner:
\begin{align}
	2\,\sum_{kl}\int_{\frac{t}{\alpha^2}}^{\infty}\de x\;&e^{-\frac{(x/2)^2}{T(\alpha)^2}}\norbra{\frac{|x| +\Gamma_0^{-1}}{|x|+\Gamma_0^{-1}}}  \norbra{|C_{kl}(x)|+|C_{kl}(-x)|}\leq \frac{2\,\alpha^2\,\Gamma_0(1+\Gamma_0\tau_0)}{(\alpha^2+\Gamma_0 \,t)}\,,
\end{align}
where we implicitly used the fact that $(|x|+\Gamma_0^{-1})^{-1}\leq(t/\alpha^2+\Gamma_0^{-1})^{-1}$ on the domain of interest. 
This is the only error bound that depends on the time $t$ in $\|\mathcal{E}^{RE}_t\|_1$. Upon integration, this gives the logarithmic behavior in Eq.~\eqref{eq:redfieldApproximation}.

\subsection{Proof of Thm.~\ref{thm:smoothing} and Thm.~\ref{thm:coarseGrain}: Time averaging procedures}\label{app:cg}
We begin by bounding the trace difference between the smoothed and the exact state:
\begin{align}
	\Norm{\tilde{\rho}_S(t)-\tilde{\rho}^{S}_S(t)}_1 &\leq \int_{-\infty}^{\infty}\de q\; \frac{e^{-\frac{q^2}{T(\alpha)^2}}}{\sqrt{\pi} \,T(\alpha)}\;\Norm{\tilde{\rho}_S(t)-\tilde{\rho}_S(t+\alpha^2q)}_1\leq \int_{-\infty}^{\infty}\de q\; \frac{e^{-\frac{q^2}{T(\alpha)^2}}}{\sqrt{\pi} \,T(\alpha)}\;\int_{0}^{\alpha^2|q|}\de x\;\Norm{\partial_x\tilde{\rho}_S(t+x)}_1\leq \nonumber
	\\
	&\leq2\Gamma_0\alpha^2\int_{-\infty}^{\infty}\de q\; \frac{e^{-\frac{q^2}{T(\alpha)^2}}}{\sqrt{\pi} \,T(\alpha)}|q| = \frac{2\Gamma_0\alpha^2 T(\alpha)}{\sqrt{\pi}}\,,\label{eq:app:f16}
\end{align}
where we applied Thm.~\ref{thm:fastestRate}. This proves Thm.~\ref{thm:smoothing}.

Then, we proceed to prove the differential expression in Eq.~\eqref{eq:diffS}, which can be deduced by direct calculation:
\begin{align}
	\partial_t\tilde{\rho}^{S}_S(t) :&= \int_{-\infty}^{\infty}\de q\; \frac{e^{-\frac{q^2}{T(\alpha)^2}}}{\sqrt{\pi} \,T(\alpha)}\;\partial_t\tilde{\rho}_S(t+\alpha^2q) =
	\\
	&= \int_{-\infty}^{\infty}\de q\; \frac{e^{-\frac{q^2}{T(\alpha)^2}}}{\sqrt{\pi} \,T(\alpha)}\;\norbra{\norbra{\mathcal{E}^B_{t+\alpha^2q}+\mathcal{E}^{BM}_{t+\alpha^2q}+\mathcal{E}^{RE}_{t+\alpha^2q}} + \partial_t\tilde{\rho}^{RE}_{S}(t+\alpha^2q)} = 
	\\
	&=\mathcal{E}^{D}_{t}+\int_{-\infty}^{\infty}\de q\; \frac{e^{-\frac{q^2}{T(\alpha)^2}}}{\sqrt{\pi} \,T(\alpha)}\;\norbra{\lind^{RE}_{{t+\alpha^2q}}[\tilde{\rho}_{S}(t+\alpha^2q)]}\,, \label{eq:app:f17}
\end{align}
where we implicitly defined the error term $\mathcal{E}^{D}_{t}$. We further rewrite the integral as:
\begin{align}
	\int_{-\infty}^{\infty}\de q\; \frac{e^{-\frac{q^2}{T(\alpha)^2}}}{\sqrt{\pi} \,T(\alpha)}\;\norbra{ \lind^{RE}_{{t+\alpha^2q}}[\tilde{\rho}_{S}(t+\alpha^2q)-\tilde{\rho}^S_{S}(t)]+ \lind^{RE}_{{t+\alpha^2q}}[\tilde{\rho}^S_{S}(t)]} =\mathcal{E}^{SM}_{t}+ \lind^{CG0}_{{t}}[\tilde{\rho}^S_{S}(t)],
\end{align}
where we introduced $\mathcal{E}^{SM}_{t}$ and we used the definition in Eq.~\eqref{eq:lindcg0}. Finally, introducing the notation  $\mathcal{E}^{CP}_{t}:=\lind^{CG0}_{{t}}[\tilde{\rho}^S_{S}(t)]-\lind^{CG}_{{t}}[\tilde{\rho}^S_{S}(t)]$, we can rewrite Eq.~\eqref{eq:app:f17} as:
\begin{align}
	\partial_t\tilde{\rho}^{S}_S(t) = \mathcal{E}^{D}_{t}+\mathcal{E}^{SM}_{t}+ \lind^{CG0}_{{t}}[\tilde{\rho}^S_{S}(t)] = \mathcal{E}^{D}_{t}+\mathcal{E}^{SM}_{t}+\mathcal{E}^{CP}_{t}+\lind^{CG}_{{t}}[\tilde{\rho}^S_{S}(t)]
\end{align}
This proves Eq.~\eqref{eq:diffS} with $\mathcal{E}^{TOT}_t $ given by the sum of the three error terms above.

Let us estimate the trace norm of the error terms one by one. First, a bound on $\mathcal{E}^{D}_{t}$ directly follows from the estimates in  Thm.~\ref{thm:bornApproximation},~\ref{thm:markovApproximation} and~\ref{thm:redfield}:
\begin{align}
	\Norm{\mathcal{E}^{D}_{t}}_1&\leq \int_{-\infty}^{\infty}\de q\; \frac{e^{-\frac{q^2}{T(\alpha)^2}}}{\sqrt{\pi} \,T(\alpha)}\; \norbra{\Norm{\mathcal{E}^B_{t+\alpha^2q}}_1+\Norm{\mathcal{E}^{BM}_{t+\alpha^2q}}_1+\Norm{\mathcal{E}^{RE}_{t+\alpha^2q}}_1}|\leq
	\\
	&\leq \int_{-\infty}^{\infty}\de q\; \frac{e^{-\frac{q^2}{T(\alpha)^2}}}{\sqrt{\pi} \,T(\alpha)}\;\norbra{6\alpha^2 \Gamma_0 (\Gamma_0\tau_0)+2\,\Gamma_0(1+\Gamma_0\tau_0) \norbra{\frac{1}{{2+\Gamma_0\,T(\alpha)}}+\frac{\alpha^2}{(\alpha^2+\Gamma_0 \,t)}}}\,.
\end{align}
In the second inequality we used the bound $\frac{\alpha^2}{(\alpha^2+\Gamma_0 \,(t+\alpha^2|q|))}\leq\frac{\alpha^2}{(\alpha^2+\Gamma_0 \,t)}$ to get rid of the $q$ dependence of  $\Norm{\mathcal{E}^{RE}_{t+\alpha^2q}}_1$. 
Then, integrating over $q$ gives a factor $1$, since the Gaussian is normalized. 

We can now estimate $\mathcal{E}^{SM}_t$, which we rewrite as:
\begin{align}
	\Norm{\mathcal{E}^{SM}_t}_1\leq\int_{-\infty}^{\infty}\de q\; \frac{e^{-\frac{q^2}{T(\alpha)^2}}}{\sqrt{\pi} \,T(\alpha)}\;& \norbra{\Norm{\lind^{RE}_{{t+\alpha^2q}}[(\tilde{\rho}_{S}(t+\alpha^2q)-\tilde{\rho}_{S}(t))+(\tilde{\rho}_{S}(t)-\tilde{\rho}^{S}_{S}(t))]}_1}\leq 
	\\
	&\leq\Lambda\int_{-\infty}^{\infty}\de q\; \frac{e^{-\frac{q^2}{T(\alpha)^2}}}{\sqrt{\pi} \,T(\alpha)}\norbra{\Norm{\tilde{\rho}_{S}(t+\alpha^2q)-\tilde{\rho}_{S}(t)}_1+\Norm{\tilde{\rho}_{S}(t)-\tilde{\rho}^{S}_{S}(t)}_1}\;\label{app:eq:f25}
\end{align}
where we introduced the constant $\Lambda := \sup_{t} \,\|\lind^{RE}_{t}\|_{1-1}$, which satisfies the bound:
\begin{align}
	\Lambda &\leq \sup_{\Norm{X}_1=1} \int_0^{\infty}\de x\;e^{-\frac{x}{T(\alpha)^2}}\Norm{\mathcal{U}_{\frac{t}{\alpha^2}}^\dagger\sqrbra{\norbra{C_{kl}\norbra{x}\,\sqrbra{\, A_l\norbra{-x} \mathcal{U}_{\frac{t}{\alpha^2}}[X], A_k}+C_{kl}(-x)\,\sqrbra{ A_l, \mathcal{U}_{\frac{t}{\alpha^2}}[X] A_k\norbra{-x}}}}}_1\leq
	\\
	&\leq\sup_{\Norm{X}_1=1} \int_0^{\infty}\de x\;(|C_{kl}(x)|+|C_{kl}(-x)|)\Norm{\sqrbra{\, A_l\norbra{-x} X, A_k}}_1 \leq  2\,\Gamma_0\,.
\end{align}
Moreover, Thm.~\ref{thm:smoothing} implies that the second norm in Eq.~\eqref{app:eq:f25} can be bounded as:
\begin{align}
	\norbra{\int_{-\infty}^{\infty}\de q\; \frac{e^{-\frac{q^2}{T(\alpha)^2}}}{\sqrt{\pi} \,T(\alpha)} }\Norm{\tilde{\rho}_{S}(t)-\tilde{\rho}^{S}_{S}(t)}_1\leq\frac{2\Gamma_0\alpha^2 T(\alpha)}{\sqrt{\pi}}\,.
\end{align}
Similarly the first term is bounded by:
\begin{align}
	\int_{-\infty}^{\infty}\de q\; \frac{e^{-\frac{q^2}{T(\alpha)^2}}}{\sqrt{\pi} \,T(\alpha)}\;&\Norm{\tilde{\rho}_{S}(t+\alpha^2q)-\tilde{\rho}_{S}(t)}_1 \leq \int_{-\infty}^{\infty}\de q\; \frac{e^{-\frac{q^2}{T(\alpha)^2}}}{\sqrt{\pi} \,T(\alpha)}\int_{0}^{\alpha^2 |q|}\de x\;\Norm{\partial_x\tilde{\rho}_{S}(t+x)}_1\leq
	\\
	&\leq2\alpha^2\Gamma_0\int_{-\infty}^{\infty}\de q\; \frac{e^{-\frac{q^2}{T(\alpha)^2}}}{\sqrt{\pi} \,T(\alpha)}|q|  = \frac{2\Gamma_0\alpha^2 T(\alpha)}{\sqrt{\pi}}\,,
\end{align}
where we used Thm.~\ref{thm:fastestRate} to upper-bound the norm of the derivative. This proves that:
\begin{align}
	\Norm{\mathcal{E}^{SM}_t}_1\leq \frac{4\Gamma_0\alpha^2 T(\alpha)}{\sqrt{\pi}}\,.
\end{align}
Finally, we need to estimate $\mathcal{E}^{CP}_{t}:=\lind^{CG0}_{{t}}[\tilde{\rho}^S_{S}(t)]-\lind^{CG}_{{t}}[\tilde{\rho}^S_{S}(t)]$. To this end, it is useful to rewrite $\lind^{CG0}_{{t}}$ as:
\begin{align}
	\lind^{CG0}_{{t}}[\tilde{\rho}^S_{S}\norbra{t}] := \int_{-\infty}^{\infty}\de q\int_0^{\infty}\de x\;& \frac{e^{-\frac{((q+x/2)^2+(x/2)^2)}{T(\alpha)^2}}}{\sqrt{\pi} \,T(\alpha)}\bigg (C_{kl}\norbra{x}\,\sqrbra{\, A_l\norbra{\frac{t}{\alpha^2}+q-\frac{x}{2}} \tilde{\rho}^S_{S}\norbra{t}, A_k\norbra{\frac{t}{\alpha^2}+q+\frac{x}{2}}}+\nonumber
	\\
	&\qquad\qquad\qquad\;\;\;\;+C_{kl}(-x)\,\sqrbra{ A_l\norbra{\frac{t}{\alpha^2}+q+\frac{x}{2}}, \tilde{\rho}^S_{S}\norbra{t} A_k\norbra{\frac{t}{\alpha^2}+q-\frac{x}{2}}}\bigg)\,,
\end{align}
where we implicitly carried out the substitution $q\rightarrow q+\frac{x}{2}$. Then, $\lind^{CG0}_{{t}}$ differs from $\lind^{CG}_{{t}}$ (see Eq.~\eqref{eq:averagedLindblad}) only for the argument of the exponential. This allow to estimate $\mathcal{E}^{CP}_{t}$ as:
\begin{align}
	\Norm{\mathcal{E}^{CP}_{t}}_1 &\leq 2\int_{-\infty}^{\infty}\de q\int_0^{\infty}\de x\; \frac{e^{-\frac{(x/2)^2}{T(\alpha)^2}}}{\sqrt{\pi} \,T(\alpha)}\,\left |e^{-\frac{(q+x/2)^2}{T(\alpha)^2}}-e^{-\frac{q^2}{T(\alpha)^2}}\right||C_{kl}(x)+C_{kl}(-x)| =
	\\
	&=2\int_{-\infty}^{\infty}\de q\int_0^{\infty}\de x\; \frac{e^{-\frac{(x/2)^2}{T(\alpha)^2}}}{\sqrt{\pi} \,T(\alpha)}\,\left |\int_0^{\frac{x}{2}}\de y \; \frac{e^{-\frac{(q+y/2)^2}{T(\alpha)^2}}}{T(\alpha)^2}\norbra{q+\frac{y}{2}}\right||C_{kl}(x)+C_{kl}(-x)|\leq
	\\
	&\leq2\int_0^{\infty}\de x\; \frac{e^{-\frac{(x/2)^2}{T(\alpha)^2}}}{\sqrt{\pi} \,T(\alpha)}\,\int_0^{\frac{x}{2}}\de y \norbra{\int_{-\infty}^{-\frac{y}{2}}\de q+\int^{\infty}_{-\frac{y}{2}}\de q}\; \frac{e^{-\frac{(q+y/2)^2}{T(\alpha)^2}}}{T(\alpha)^2}\left |q+\frac{y}{2}\right||C_{kl}(x)+C_{kl}(-x)|\,,\label{eq:app:f35}
\end{align}
where the first equality just follows from taking the derivative with respect to $q$ and then integrating again, while in the last step we split the integral in $q$ depending on the sign of $q+\frac{y}{2}$.  Let us examine one of the two halves (the other case is done similarly). Then, we can ignore the absolute value, and carry out explicitly the computation:
\begin{align}
	2\int_0^{\infty}\de x\; &\frac{e^{-\frac{(x/2)^2}{T(\alpha)^2}}}{\sqrt{\pi} \,T(\alpha)}\,|C_{kl}(x)+C_{kl}(-x)|\int_0^{\frac{x}{2}}\de y \int^{\infty}_{-\frac{y}{2}}\de q\; \frac{e^{-\frac{(q+y/2)^2}{T(\alpha)^2}}}{T(\alpha)^2}\norbra{q+\frac{y}{2}} =
	\\
	&=2\int_0^{\infty}\de x\; \frac{e^{-\frac{(x/2)^2}{T(\alpha)^2}}}{\sqrt{\pi} \,T(\alpha)}\,|C_{kl}(x)+C_{kl}(-x)|\int_0^{\frac{x}{2}}\de y \int^{\infty}_{0}\de w\; \frac{e^{-\frac{w^2}{T(\alpha)^2}}}{T(\alpha)^2}\,w =
	\\
	&=\frac{1}{2}\int_0^{\infty}\de x\; \frac{e^{-\frac{(x/2)^2}{T(\alpha)^2}}}{\sqrt{\pi} \,T(\alpha)}\,|C_{kl}(x)+C_{kl}(-x)||x| \leq \frac{\Gamma_0\tau_0}{2\sqrt{\pi}\,T(\alpha)}\,.
\end{align}
The other term in Eq.~\eqref{eq:app:f35} gives the same bound, so we finally obtain the estimate:
\begin{align}
	\Norm{\mathcal{E}^{CP}_{t}}_1\leq \frac{\Gamma_0\tau_0}{\sqrt{\pi}\,T(\alpha)}.
\end{align}
Wrapping everything together, we have the upper-bound:
\begin{align}
	\Norm{\mathcal{E}^{TOT}_t}_1\leq \norbra{6\alpha^2 \Gamma_0 (\Gamma_0\tau_0)+ \norbra{\frac{2\,\Gamma_0(1+\Gamma_0\tau_0)}{{2+\,\Gamma_0\,T(\alpha)}}+\frac{2\alpha^2\,\Gamma_0(1+\Gamma_0\tau_0)}{(\alpha^2+\Gamma_0 \,t)}}} + \frac{4\alpha^2 \Gamma_0(\Gamma_0T(\alpha))}{\sqrt{\pi}}+\frac{\Gamma_0\tau_0}{\sqrt{\pi}\,T(\alpha)}\,.\label{eq:app:31}
\end{align}
Then, we can identify the right hand side of Eq.~\eqref{eq:app:31} with the function $m(t)$ of Thm.~\ref{thm:DBerrorBoundIntegrated}. Moreover, thanks to Eq.~\eqref{eq:app:f16}, the constant $R$ is also known. It directly follows that:
\begin{align}
	&\Norm{	\tilde{\rho}^{S}_S(t)-	\tilde{\rho}^{CG}_S(t)}_1\leq \frac{2\Gamma_0\alpha^2 T(\alpha)}{\sqrt{\pi}} + \int_0^t  m(s) \de s  =
	\\
	&=2\alpha^2\norbra{\norbra{(1+2\Gamma_0t)\frac{\Gamma_0 T(\alpha)}{\sqrt{\pi}}+ \frac{(\Gamma_0t)(1+\Gamma_0\tau_0)}{\alpha^2+\alpha^2\Gamma_0 T(\alpha)}+t(\Gamma_0 \tau_0)\norbra{3\Gamma_0+\frac{\alpha^{-2}}{2\sqrt{\pi}\,T(\alpha)}}}+(1+\Gamma_0\tau_0)\log\norbra{1+\frac{\Gamma_0t}{\alpha^2}}}\,.\label{eq:app:f41}
\end{align}
Applying the triangle inequality, we finally obtain:
\begin{align}
	\Norm{	\tilde{\rho}_S(t)-	\tilde{\rho}^{CG}_S(t)}_1 &\leq \Norm{	\tilde{\rho}_S(t)-	\tilde{\rho}^{S}_S(t)}_1+\Norm{	\tilde{\rho}^{S}_S(t)-	\tilde{\rho}^{CG}_S(t)}_1 \leq\label{eq:app:f34}
	\\
	& \leq K_1 + K_2 t + K_3 \log\norbra{1+\frac{\Gamma_0t}{\alpha^2}}
\end{align}
where the three constants are defined as: 
\begin{align}
	&K_1 := \frac{4\Gamma_0\alpha^2 T(\alpha)}{\sqrt{\pi}} \;,
	\\
	&K_2 := 6\alpha^2 \Gamma_0 (\Gamma_0\tau_0)+\frac{1}{T(\alpha)}\norbra{2+3\,\Gamma_0\tau_0} + \norbra{4\alpha^2 \Gamma_0^2}T(\alpha)\,,
	\\
	&K_3:=2\alpha^2\,\Gamma_0(1+\Gamma_0\tau_0)\,.
\end{align}
It should be noticed that in the definition of $K_2$ we used the inequality $(2+\Gamma_0 T(\alpha))^{-1}\leq(\Gamma_0 T(\alpha))^{-1}$ and $\sqrt{\pi}^{-1}< 1$ to simplify the expression from Eq.~\eqref{eq:app:f41}. Moreover, there is an
additional contribution to $K_1$ which comes from the use of Eq.~\eqref{eq:app:f16} (Thm.~\ref{thm:smoothing}) to estimate the first trace distance in Eq.~\eqref{eq:app:f34}. Then, we can minimize $K_2$ as a function of $T(\alpha)$. The minimum is reached for:
\begin{align}
	K_2^* :&=  
	\alpha\norbra{4\Gamma_0\sqrt{2+3\,\Gamma_0\tau_0}+6\alpha \Gamma_0 (\Gamma_0\tau_0)}
\end{align}
where the optimal observation time takes the form:
\begin{align}
	T^*_{opt}(\alpha) :&=
	\frac{1}{2\alpha \Gamma_0}\sqrt{2+3\,\Gamma_0\tau_0}
\end{align}
This proves Thm.~\ref{thm:coarseGrain}.

\section{Properties of the coarse-grained Lindbladian}\label{app:cgLindProperties}
\subsection{Hermiticity of $H_{LS}$}\label{app:hLShermitian}
Let us start by showing that indeed the Lamb-shift $H_{LS}$ defined in Eq.~\eqref{eq:LSHam} generates a unitary rotation. This can be done by showing its hermiticity through a direct calculation:
\begin{align}
	(H_{LS}^{CG})^\dagger&= -i\int_{-\infty}^{\infty}\de q\; e^{-iq\omega_-}\int_{-\infty}^{\infty}\de x\, e^{ix\omega_+}\mathcal{G}(q,x) \overline{C_{kl}(x)}{\rm sign}(x)A^\dagger_l(\omega)A_k(\tilde\omega) = 
	\\
	&= -i\int_{-\infty}^{\infty}\de q\; e^{iq\omega_-}\int_{-\infty}^{\infty}\de x\, e^{-ix\omega_+}\mathcal{G}(q,x) \overline{C_{lk}(-x)}{\rm sign}(-x)A^\dagger_k(\tilde\omega)A_l(\omega)=
	\\
	&= i\int_{-\infty}^{\infty}\de q\; e^{iq\omega_-}\int_{-\infty}^{\infty}\de x\, e^{-ix\omega_+}\mathcal{G}(q,x)  C_{kl}(x){\rm sign}(x)A^\dagger_k(\tilde\omega)A_l(\omega)= H_{LS}^{CG}\label{eq:93}
\end{align}
where we implicitly integrate over all frequencies, and in the second line we exchanged indexes, frequencies and performed the change of variable $x\rightarrow -x$, while in the last step we used Eq.~\eqref{eq:symmetryBathcorrelation} together with the fact that ${\rm sign}(-x) = -{\rm sign}(x)$. 

\subsection{Decomposition of the rates}\label{app:compRates}
The function ${\widehat{f}}^\alpha(\omega)$ in Eq.~\eqref{eq:ratesFreqDecouples} can be explicitly computed, giving:
\begin{align}
	{\widehat{f}}^\alpha(\omega) = \int_{-\infty}^{\infty}\de t_1\;\norbra{\frac{e^{-\frac{t_1^2}{2T(\alpha)^2}}}{(\sqrt{\pi}\, T(\alpha))^{1/2}}}e^{it_1\omega} = (2 T(\alpha)\sqrt{\pi})^{1/2} e^{-\frac{T(\alpha)^2\omega^2}{2}}\,.
\end{align}
Then, plugging this expression in Eq.~\eqref{eq:ratesFreqDecouples}, we obtain:
\begin{align}
	\gamma_{kl}^{\omega,\tilde{\omega}}=\int_{-\infty}^{\infty}\frac{\de\omega^*}{2\pi}\;{\widehat{C}}_{kl}(\omega^*)&{\widehat{f}}^\alpha(\omega+\omega^*)\overline{{\widehat{f}}^\alpha({\tilde\omega}+\omega^*)}=\frac{T(\alpha)}{\sqrt{\pi}}\int_{-\infty}^{\infty}\de\omega^*\;{\widehat{C}}_{kl}(\omega^*)\,e^{-\frac{T(\alpha)^2\norbra{(\omega+\omega^*)^2+(\tilde{\omega}+\omega^*)^2}}{2}} =
	\\
	&=\frac{e^{-(T(\alpha)\,\omega_-)^2/4}}{\sqrt{\pi}}T(\alpha)\int_{-\infty}^{\infty}\de\omega^*\;{\widehat{C}}_{kl}(\omega^*)\,e^{-T(\alpha)^2\norbra{\omega^*+\omega_+}^2} =
	\\
	&=\frac{e^{-(T(\alpha)\,\omega_-)^2/4}}{\sqrt{\pi}}\int_{-\infty}^{\infty}\de\Omega\;{\widehat{C}}_{kl}\norbra{\frac{\Omega}{T(\alpha)}-\omega_+}\,e^{-\Omega^2}
\end{align}
where in the last step we substituted $\{\Omega = T(\alpha)(\omega^*+\omega_+)\}$. This proves Eq.~\eqref{eq:gammarateCG}.

Similar manipulations can be carried out for $S_{kl}^{\omega,\tilde{\omega}}$. To this end, we will need the following well-known relation:
\begin{align}
	\int_{-\infty}^{\infty} \de x\; e^{ix\omega}\,C_{kl}(x){\rm sign}(x) = \frac{i}{\pi}\int_{-\infty}^{\infty} \de\omega_1 \; \frac{\widehat{C}_{kl}(\omega_1)}{\omega-\omega_1}\,,\label{eq:G8x}
\end{align}
which connects the Fourier transform of a function multiplied by ${\rm sign}(x)$ to the Hilbert transform of the original function. Then, we can recast $S_{kl}^{\omega,\tilde{\omega}}$ as:
\begin{align}
	S_{kl}^{\omega,\tilde{\omega}}&=i\int_{-\infty}^{\infty}\de t_1\int_{-\infty}^{\infty}\de t_2\;{f}^\alpha(t_1){f}^\alpha(t_2)e^{it_1\omega} e^{-it_2{\tilde{\omega}}}C_{kl}(t_2-t_1)  {\rm sign}(t_2-t_1)=
	\\
	&= \int_{-\infty}^{\infty}\frac{\de\omega_1}{2\pi}\int_{-\infty}^{\infty}\frac{\de\omega_2}{\pi}\;\frac{\widehat{C}_{kl}(\omega_2)}{\omega_2-\omega_1}\;{\widehat{f}}^\alpha(\omega+\omega_1)\overline{{\widehat{f}}^\alpha({\tilde\omega}+\omega_1)} =
	\\
	&=\frac{e^{-(T(\alpha)\,\omega_-)^2/4}}{\pi\sqrt{\pi}}T(\alpha)\,\int_{-\infty}^{\infty}\de\omega_1\int_{-\infty}^{\infty}\de\omega_2\;\frac{\widehat{C}_{kl}(\omega_2)}{\omega_2-\omega_1}\;e^{-T(\alpha)^2\norbra{\omega_1+\omega_+}^2}\,.\label{eq:126}
\end{align} 
We can further simplify the above expression by carrying out the change of variables $\{\Omega = T(\alpha)(\omega_1+\omega_+)\}$, which gives:
\begin{align}
	S_{kl}^{\omega,\tilde{\omega}}&=\frac{e^{-(T(\alpha)\,\omega_-)^2/4}}{\sqrt{\pi}}\int_{-\infty}^{\infty}\frac{\de\Omega}{\pi}\int_{-\infty}^{\infty}\de\omega_2\;\frac{\widehat{C}_{kl}(\omega_2)}{(\omega_2+\omega_+)-\frac{\Omega}{T(\alpha)}}\;e^{-\Omega^2} \,.\label{eq:app:g13}
\end{align}
This proves Eq.~\eqref{eq:SrateCG}.


\subsection{Quasilocality of the jump operators}\label{app:quasiLocalJump}

Let us now study the setting where $H$ is a local Hamiltonian on a lattice $\Lambda$, for which the jump operators become quasi-local. We follow closely a similar argument from \cite{rouze2024optimal}. To recall the jump operators from Eq.~\eqref{eq:jumpOperatorsCG}:
\begin{align}
	{\tilde A}_l(\omega^*) :& =\int_{-\infty}^{\infty}\de t_1\;e^{it_1\omega^*}{f}^\alpha(t_1)A_l(t_1)\,,
\end{align}
where $f^{\alpha}(t)=\frac{e^{-\frac{t^2}{2T(\alpha)^2}}}{(\sqrt{\pi}\, T(\alpha))^{1/2}}$ . The coarse-graining induced by the filter function $f$ implies that these have vanishingly small tails as controlled by the Lieb-Robinson bound.  The bound states that for any operator $A_l$ supported on site $l \in \Lambda$ \cite{lieb1972finite,Haah_2021} (so that in this case $l$ is a label for the location):

\begin{equation}\label{eq:LiebRobinson}
	\vert \vert e^{-iHt}A_l e^{iHt}-e^{-iH_{B_l(r)}t}A_l e^{iH_{B_l(r)}t} \vert \vert \le \vert \vert A_l \vert \vert \frac{( v_{\text{LR}}\vert t \vert)^r}{r!}
\end{equation}
where $B_l(r)$ is a ball of radius $r$ centered at $l$, so that $H_{B_l(r)}$ consists of the Hamiltonian terms on a ball of radius $r$ centered at the support of $A_{l}$. Let us also define ${\tilde A}_{l,r}(\omega^*)$ as in Eq. \eqref{eq:jumpOperatorsCG} but with $A_{l,r}(t_1):=e^{-iH_{B_l(r)}t}A_l e^{iH_{B_l(r)}t}$ instead, so that its support is confined to $B_l(r)$. Then,
\begin{align}
	\vert \vert {\tilde A}_l(\omega^*)-{\tilde A}_{l,r}(\omega^*) \vert \vert &\le \int_{-\infty}^{\infty}\de t_1\;{f}^\alpha(t_1)\vert \vert A_l(t_1)-A_{l,r}(t_1) \vert \vert\leq
	\\
	& \le  \vert \vert A_l \vert \vert  \int_{-t_0}^{t_0} {f}^\alpha(t) \frac{( v_{\text{LR}} \vert t \vert )^r}{r!} \text{d} t + 4 \vert \vert A_l \vert \vert \int_{t_0}^\infty  {f}^\alpha(t) \text{d} t \leq
	\\
	& \le  \vert \vert A_l \vert \vert (\sqrt{\pi}\, T(\alpha))^{1/2}
	\frac{(v_{\text{LR}} e  \vert t_0 \vert )^r}{r^{r}} + 4 \vert \vert A_l \vert \vert \frac{T(\alpha)^{3/2}}{t_0} e^{-\frac{t_0^2}{2 T(\alpha)^2}},
\end{align}
where in the second line we divided the integral into two and applied the Lieb-Robinson bound to the first term, and in the second we used that $r! \ge \frac{r^r}{e^{r-1}}$ and the upper bound on the complementary error function $\text{Erfc}(x) \le \frac{e^{-x^2}}{\sqrt{\pi}x}$.

We now choose the parameter $t_0 = \frac{r g(v_{\text{LR}} T(\alpha))} {v_{\text{LR}} e}$ with $g(x)=\frac{\sqrt{x}}{1+\sqrt{x}}$, so that

\begin{align}
	\vert \vert {\tilde A}_l(\omega^*)-{\tilde A}_{l,r}(\omega^*) \vert \vert &\le \vert \vert A_l \vert \vert  \left ( (\sqrt{\pi}\, T(\alpha))^{1/2} g(v_{\text{LR}} T(\alpha))^r + \frac{4e}{r} \sqrt{v_{\text{LR}}} T(\alpha) (1+\sqrt{v_{\text{LR}} T(\alpha)}) e^{\frac{- r^2}{2 e^2 (1+\sqrt{v_{\text{LR}} T(\alpha)})v_{\text{LR}} T(\alpha)}} \right)\leq
	\\ & \le \vert \vert A_l \vert \vert  \mathcal{O}\left(  g(v_{\text{LR}} T(\alpha))^r \right),
\end{align}
which decays exponentially with the distance $r$ and a rate controlled by $T(\alpha)$, since $g(v_{\text{LR}} T(\alpha)) \rightarrow 0$ as $T(\alpha)\rightarrow 0$ and $g(v_{\text{LR}} T(\alpha)) \rightarrow 1$ as $T(\alpha)\rightarrow \infty$. Recall from Eq. \eqref{eq:optimalT} that the optimal value is $T^*_{opt}(\alpha) := \frac{1}{2\alpha \Gamma_0}\sqrt{2+3\,\Gamma_0\tau_0}$.



\subsection{Reduction to Davies generator}\label{app:davies}
\ms
{
	The expression of the rates in Eq.~\eqref{eq:gammarateCG} and Eq.~\eqref{eq:SrateCG} can be directly used to prove that $\widehat{\lind}^{CG}$ approaches the usual Davies generator in the limit $T(\alpha)\rightarrow\infty$. Define $\omega^{min}_{-} := \min_{\omega\neq \tilde\omega} (\omega-\tilde\omega)$. This quantity measures the minimum spacing between different frequencies, and it becomes exponentially small in the system size for many-body systems, but could in principle be finite for special classes of Hamiltonians (e.g., commuting ones). As it will be shown, $\omega^{min}_{-}$ governs how big $T(\alpha)$ needs to be for the reduction to the Davies generator to be accurate.
	
	To this end, let us rewrite the rates $\gamma_{kl}^{\omega,\tilde{\omega}}$ as:
	\begin{align}
		\gamma_{kl}^{\omega,\tilde{\omega}} &=\frac{e^{-(T(\alpha)\,\omega_-)^2/4}}{\sqrt{\pi}}\int_{-\infty}^{\infty}\de\Omega\;{\widehat{C}}_{kl}\norbra{\frac{\Omega}{T(\alpha)}-\omega_+}\,e^{-\Omega^2} =
		\\
		&=e^{-(T(\alpha)\,\omega_-)^2/4}\norbra{\widehat{C}_{kl}\norbra{-\omega_+}+\frac{1}{\sqrt{\pi}}\int_{-\infty}^{\infty}\de\Omega\int_{-\omega_+}^{\frac{\Omega}{T(\alpha)}-\omega_+}\de\omega_1\;{\widehat{C}}'_{kl}\norbra{\omega_1}\,e^{-\Omega^2}}\,.
	\end{align}
	The first term in the parenthesis is uniformly bound as $|\widehat{C}_{kl}\norbra{-\omega_+}|\leq \Gamma_0$. On the other hand, we can bound the second term as:
	\begin{align}
		\left|\frac{1}{\sqrt{\pi}}\int_{-\infty}^{\infty}\de\Omega\int_{-\omega_+}^{\frac{\Omega}{T(\alpha)}-\omega_+}\de\omega_1\;{\widehat{C}}'_{kl}\norbra{\omega_1}\,e^{-\Omega^2} \right| \leq \|\widehat{C}'_{kl}\|_\infty\frac{1}{\sqrt{\pi}}\int_{-\infty}^{\infty}\de\Omega\;\frac{|\Omega|}{T(\alpha)}\,e^{-\Omega^2} \leq \frac{\Gamma_0\tau_0}{\sqrt{\pi} \, T(\alpha)}\,.\label{eq:G22x}
	\end{align}
	
	The Davies generator has rates that are diagonal in the frequency representation, that is $(\gamma_{kl}^{\omega,\tilde{\omega}} )^{DAV}= \delta_{\omega,\tilde{\omega}} \widehat{C}_{kl}\norbra{-\omega}$. In our case, on the other hand, we have that:
	\begin{align}
		\gamma_{kl}^{\omega,\tilde{\omega}} = 
		\begin{cases}
			\widehat{C}_{kl}\norbra{-\omega}  +\bigo{\frac{\Gamma_0 \tau_0}{\sqrt{\pi} T(\alpha)}}& \qquad\qquad{\rm if}\; \omega=\tilde{\omega}\\
			\bigo{e^{-(T(\alpha)\,\omega_-^{min})^2/4}\,\Gamma_0\norbra{1+\frac{ \tau_0}{\sqrt{\pi} T(\alpha)}}} & \qquad\qquad{\rm otherwise}
		\end{cases}\;.\label{eq:app:g17}
	\end{align}
	This shows that in the limit $T(\alpha)\rightarrow\infty$ we obtain $\gamma_{kl}^{\omega,\tilde{\omega}} \rightarrow(\gamma_{kl}^{\omega,\tilde{\omega}} )^{DAV}$. Moreover, it also proves that the convergence of the off-diagonal is exponential for $T(\alpha)\gg (\omega_-^{min})^{-1}$. This means that if $\omega_-^{min}$ is bounded, the Davies generator is a good approximation already for small observation times.
	
	A similar discussion can also be made for the Lamb-shift Hamiltonian. Indeed, in the case of a Davies generator we have that:
	\begin{align}
		(S_{kl}^{\omega,\omega})^{DAV} = {\rm P.V.}\norbra{\frac{1}{\pi}\int_{-\infty}^{\infty}\de\omega^*\;\frac{\widehat{C}_{kl}(\omega^*)}{(\omega^*+\omega)}}\,.
	\end{align}
	In order to prove that the rates we obtain are close to the Davies' ones in the limit of large observation time, let us rewrite $S_{kl}^{\omega,\omega}$ as:
	\begin{align}
		S_{kl}^{\omega,\omega} &= \frac{e^{-(T(\alpha)\,\omega_-)^2/4}}{\pi}\int_{-\infty}^{\infty}\frac{\de\Omega}{\sqrt{\pi}}\int_{-\infty}^{\infty}\de\omega_2\;\frac{\widehat{C}_{kl}\norbra{\omega_2+\frac{\Omega}{T(\alpha)}}}{(\omega_2+\omega_+)}\;e^{-\Omega^2} =
		\\
		&=e^{-(T(\alpha)\,\omega_-)^2/4} \norbra{\frac{1}{\pi}\int_{-\infty}^{\infty}\de\omega_2\;\frac{\widehat{C}_{kl}\norbra{\omega_2}}{(\omega_2+\omega_+)}+\int_{-\infty}^{\infty}\frac{\de\Omega}{\sqrt{\pi}}\int_{0}^{\frac{\Omega}{T(\alpha)}}\de \omega_1\int_{-\infty}^{\infty}\frac{\de\omega_2}{\pi}\;\frac{\widehat{C}'_{kl}\norbra{\omega_2 +\omega_1}}{(\omega_2+\omega_+)}\;e^{-\Omega^2}}\,.
	\end{align}
	We can bound both terms inside of the parenthesis thanks to the relation in Eq.~\eqref{eq:G8x}. First, it should be noticed that:
	\begin{align}
		\left|\frac{1}{\pi}\int_{-\infty}^{\infty}\de\omega_2\;\frac{\widehat{C}_{kl}\norbra{\omega_2}}{(\omega_2+\omega_+)}\right| \leq \int_{-\infty}^{\infty} \de x\; \left|e^{ix\omega}\,C_{kl}(x){\rm sign}(x) \right| \leq \Gamma_0\,.
	\end{align}
	On the other hand, noticing that the inverse Fourier transform of $\widehat{C}'_{kl}\norbra{\omega_2 +\omega_1}$ (for fixed $\omega_1$) is given by $e^{i\omega_1 x}\,x\,{C}_{kl}\norbra{x}$, we can also bound the second term as:
	\begin{align}
		&\left|\int_{-\infty}^{\infty}\frac{\de\Omega}{\sqrt{\pi}}\int_{0}^{\frac{\Omega}{T(\alpha)}}\de \omega_1\int_{-\infty}^{\infty}\frac{\de\omega_2}{\pi}\;\frac{\widehat{C}'_{kl}\norbra{\omega_2 +\omega_1}}{(\omega_2+\omega_+)}\;e^{-\Omega^2}\right|\leq
		\\
		&\qquad\qquad\leq\int_{-\infty}^{\infty}\frac{\de\Omega}{\sqrt{\pi}}\int_{0}^{\frac{|\Omega|}{T(\alpha)}}\de \omega_1 \int_{-\infty}^{\infty} \de x\; \left|e^{ix(\omega_1-\omega_+)}x\,C_{kl}(x){\rm sign}(x) e^{-\Omega^2} \right|\leq \Gamma_0\tau_0 \int_{-\infty}^{\infty}\frac{\de\Omega}{\sqrt{\pi}}\int_{0}^{\frac{|\Omega|}{T(\alpha)}}\de \omega_1\;e^{-\Omega^2}=
		\\
		&\qquad\qquad=\frac{\Gamma_0\tau_0}{\sqrt{\pi} \, T(\alpha)}\,.
	\end{align}
	Then, also in this case we have a similar behaviour for the rates of the Lamb-shift Hamiltonian, that is:
	\begin{align}
		S_{kl}^{\omega,\tilde{\omega}} = 
		\begin{cases}
			(S_{kl}^{\omega,\omega})^{DAV}   +\bigo{\frac{\Gamma_0 \tau_0}{\sqrt{\pi} T(\alpha)}}& \qquad\qquad{\rm if}\; \omega=\tilde{\omega}\\
			\bigo{e^{-(T(\alpha)\,\omega_-^{min})^2/4}\,\Gamma_0\norbra{1+\frac{ \tau_0}{\sqrt{\pi} T(\alpha)}}} & \qquad\qquad{\rm otherwise}
		\end{cases}\;.\label{eq:app:g18}
	\end{align}
	
	Plugging the results just obtained into Eq.~\eqref{eq:LSHam} and Eq.~\eqref{eq:86} we obtain that:
	\begin{align}
		\lim_{T(\alpha)\rightarrow\infty}\widehat{\lind}^{CG} [\rho] = \widehat{\lind}^{DAV} [\rho]
		&=-\frac{i}{2}\sqrbra{H^{DAV}_{LS},\rho}+\int_{-\infty}^{\infty}\frac{\de\omega}{2\pi}\;{\widehat{C}}_{kl}(-\omega)\Big({ A}_l(\omega)\rho { A}^\dagger_k(\omega)-\frac{1}{2}\{{ A}^\dagger_k(\omega){ A}_l(\omega),\rho\}\Big)\,,\label{eq:daviesGen}
	\end{align}
	where we defined:
	\begin{align}
		H^{DAV}_{LS}:= \int_{-\infty}^{\infty} \de\omega \norbra{{\rm P.V.}\norbra{\frac{1}{\pi}\int_{-\infty}^{\infty}\de\omega^*\;\frac{\widehat{C}_{kl}(\omega^*)}{(\omega^*+\omega)}}}A^\dagger_k(\omega)A_l(\omega)\,.
	\end{align}
	This shows that when the observation time $T(\alpha)$ goes to infinity (in fact, as soon as $T(\alpha)\gg\omega_-^{min}$) we regain the usual Davies generator (see the expressions in e.g.~\cite{breuer2002theory, rivas2012open}). Moreover, thanks to the decomposition in Eq.~\eqref{eq:daviesGen}, we can directly apply Thm.~\ref{thm:GNSeqRW}, obtaining the well-known result that the Davies generator satisfies GNS detail balanced.
}

\subsection{Proof of Thm.~\ref{thm:approxDB}: Approximate detailed balance of the coarse-grained Lindbladian}\label{app:approxDB}
Let us begin with the first inequality. Let us first rewrite $\overline{\gamma_{kl}^{-\omega,-\tilde{\omega}}}$ as:
\begin{align}
	\overline{\gamma_{kl}^{-\omega,-\tilde{\omega}}} &= \frac{e^{-(T(\alpha)\,\omega_-)^2/4}}{\sqrt{\pi}}\int_{-\infty}^{\infty}\de\Omega\;{\widehat{C}}_{lk}\norbra{-\frac{\Omega}{T(\alpha)}+\omega_+}\,e^{-\Omega^2} = \\
	&=\frac{e^{-(T(\alpha)\,\omega_-)^2/4}}{\sqrt{\pi}}\int_{-\infty}^{\infty}\de\Omega\;{\widehat{C}}_{lk}\norbra{\frac{\Omega}{T(\alpha)}-\omega_+}\,e^{-\Omega^2}e^{-\frac{\beta\Omega}{T(\alpha)}}e^{\beta\omega_+}
\end{align}
where in the first step we implicitly changed $\Omega\rightarrow-\Omega$ and used the hermiticity of ${\widehat{C}}_{kl}(\omega)$ (see  
Eq.~\eqref{eq:correlationFunctionHermitian}) and the KMS condition from Eq.~\eqref{eq:corrKMS}. Then, we simply have:
\begin{align}
	\left|\gamma_{kl}^{\omega,\tilde{\omega}} -e^{-\beta(\frac{\omega+\tilde{\omega}}{2})} \overline{\gamma_{kl}^{-\omega,-\tilde{\omega}}}\right| &= \frac{e^{-(T(\alpha)\,\omega_-)^2/4}}{\sqrt{\pi}}\left|\int_{-\infty}^{\infty}\de\Omega\;{\widehat{C}}_{lk}\norbra{\frac{\Omega}{T(\alpha)}-\omega_+}\,e^{-\Omega^2}(1-e^{-\frac{\beta\Omega}{T(\alpha)}})\right|\leq
	\\
	&\leq\|\hat{C}_{kl}\|_\infty \,e^{-(T(\alpha)\,\omega_-)^2/4} e^{\frac{\beta^2}{4T(\alpha)^2}}{\rm erf}\norbra{\frac{\beta}{2T(\alpha)}} \,.
\end{align}
Thanks to the relation $\|\hat{C}_{kl}\|_\infty\leq \Gamma_0$ (see App.~\ref{app:relationTimescales}) and ${\rm erf}(x)\leq \frac{2}{\sqrt{\pi}}x$, we obtain Eq.~\eqref{eq:approxGamma}.

On the other hand, we also have:
\begin{align}
	\left|\tanh\norbra{\beta\norbra{\frac{\omega-\tilde{\omega}}{4}}}\gamma_{kl}^{\omega,\tilde{\omega}} \right|&=\left| \tanh\norbra{\frac{\beta\omega_-}{4}}\frac{e^{-(T(\alpha)\,\omega_-)^2/4}}{\sqrt{\pi}}\int_{-\infty}^{\infty}\de\Omega\;{\widehat{C}}_{kl}\norbra{\frac{\Omega}{T(\alpha)}-\omega_+}\,e^{-\Omega^2}\right|\leq
	\\
	&\leq\frac{\|\hat{C}_{kl}\|_\infty\beta|\omega_-|}{4}\,e^{-(T(\alpha)\,\omega_-)^2/4} \leq\frac{\Gamma_0\beta}{2 \sqrt{2 e}\, T(\alpha)}\,,\label{eq:139}
\end{align}
where in the first inequality we applied the relation $|\tanh(x)|\leq |x|$, and then we maximized over all possible values of $\omega_-$. Then, thanks to the triangle inequality, the bound in Eq.~\eqref{eq:139} gives the claim.

\ms
{
	\section{The effect of renormalizing the system Hamiltonian}\label{app:interactionPicture}
	\subsection{Proof of Thm.~\ref{thm:LSIgnore}: Closeness of the Lamb-shift Hamiltonians}\label{app:LSI}
	
	In this section we aim to prove that the solutions of the two differential equations:
	\begin{align}
		\begin{cases}
			\partial_{{t}}\,\tilde{\rho}^{CG}_{S}({t})=\widehat{\lind}^{CG} [\tilde{\rho}^{CG}_{S}({t})]\\
			\tilde{\rho}^{CG}_{S}(0) =\tilde{\rho}_{S}(0)
		\end{cases}\;;
		\qquad\qquad\qquad
		\begin{cases}
			\partial_{\tilde{t}}\,\tilde{\rho}^{CG*}_{S}({t})=\widehat{\lind}^{CG*} [\tilde{\rho}^{CG*}_{S}({t})]\\
			\tilde{\rho}^{CG*}_{S}(0) = \tilde{\rho}_{S}(0)
		\end{cases}\;.
	\end{align}
	diverge at most linearly in time. To this end, it should be noticed that in the renormalized interaction picture, the two generating Lindbladian differ in one-norm as:
	\begin{align}
		\|\widehat{\lind}^{CG}[\rho] - \widehat{\lind}^{CG*}[\rho]\|_1 = \frac{1}{2}\left\|\sqrbra{(H^{CG}_{LS})^*-H^{CG}_{LS},{\rho}}\right\|_1 \leq \|(H^{CG}_{LS})^*-H^{CG}_{LS}\|_\infty\,,\label{eq:appH2x}
	\end{align}
	where in the last step we used Hölder's inequality and the fact that $\|\rho\|_1 = 1$. Then, by controlling the right hand side of Eq.~\eqref{eq:appH2x}, we have a way to apply Thm.~\ref{thm:DBerrorBoundIntegrated} and estimate how the difference between $\tilde{\rho}^{CG}_{S}(t)$ and  $\tilde{\rho}^{CG*}_{S}(t)$ evolves in time.  
	
	To this end, let us define $\mathcal{U}^{\alpha^2}_{t}[X] = e^{-i(H_S+\alpha^2 H^{CG}_{LS}) t} X e^{i(H_S +\alpha^2 H^{CG}_{LS})t}$, so that $\mathcal{U}^{0}_{t}[X]\equiv \mathcal{U}_{t}[X]$. Moreover, by differentiating and integrating again, we can relate these two evolutions as:
	\begin{align}
		\mathcal{U}^{\alpha^2}_{t}[X]  = \mathcal{U}^{0}_{t}[X] - i\int_0^{\alpha^2}\de \nu \int_0^t\de\tau\; \mathcal{U}^{\nu}_{t-\tau}\sqrbra{[H^{CG}_{LS}, \mathcal{U}^{\nu}_{\tau}[X]]} = \mathcal{U}^{0}_{t}[X] + \Delta\mathcal{U}^{\alpha^2}_{t}[X]\,.
	\end{align}
	Then, using the rates defined in Eq.~\eqref{eq:defGammaS}, we can express $H^{CG}_{LS}$ in the time representation and give an upper-bound on its operator norm as:
	\begin{align}
		\left\|H^{CG}_{LS}\right\|_\infty &= \left\| i\int_{-\infty}^{\infty}\de q \int_{-\infty}^{\infty}\de x\; \mathcal{G}(q,x)  C_{kl}(x){\rm sign}(x) A_k\norbra{q+\frac{x}{2}}A_l\norbra{q-\frac{x}{2}}\right\|_\infty\leq
		\\
		&\leq\sum_{kl}\int_{-\infty}^{\infty}\de q \int_{-\infty}^{\infty}\de x\;\frac{e^{-\frac{q^2+(x/2)^2}{T(\alpha)^2}}}{\sqrt{\pi}\, T(\alpha)}|C_{kl}(x)|\leq \Gamma_0\,,\label{eq:upperBoundHLS}
	\end{align}
	where we used the fact that $\|A_k\|_\infty=1$ and the definition of $\Gamma_0$. We are now ready to bound the quantity in Eq.~\eqref{eq:appH2x}. First, it should be noticed that:
	\begin{align}
		(H^{CG}_{LS})^* &=  i\int_{-\infty}^{\infty}\de q \int_{-\infty}^{\infty}\de x\; \mathcal{G}(q,x)  C_{kl}(x){\rm sign}(x) (\mathcal{U}^{\alpha^2}_{q+\frac{x}{2}})^\dagger [A_k](\mathcal{U}^{\alpha^2}_{q-\frac{x}{2}})^\dagger[A_l] =	
		\\
		&= H^{CG}_{LS} +i\int_{-\infty}^{\infty}\de q \int_{-\infty}^{\infty}\de x\; \mathcal{G}(q,x)  C_{kl}(x){\rm sign}(x) \Big( (\Delta\mathcal{U}^{\alpha^2}_{q+\frac{x}{2}})^\dagger [A_k]\mathcal{U}^\dagger_{q-\frac{x}{2}}[A_l]+\mathcal{U}_{q+\frac{x}{2}}^\dagger [A_k](\Delta\mathcal{U}^{\alpha^2}_{q-\frac{x}{2}})^\dagger[A_l]+\nonumber
		\\
		&\qquad\qquad\qquad\qquad\qquad\qquad\qquad\qquad\qquad\qquad\qquad\qquad\qquad\qquad\qquad+(\Delta\mathcal{U}^{\alpha^2}_{q+\frac{x}{2}})^\dagger [A_k](\Delta\mathcal{U}^{\alpha^2}_{q-\frac{x}{2}})^\dagger[A_l]\Big) \,.
	\end{align}
	We can bound the operator norm of the first term in the parenthesis as:
	\begin{align}
		\left\| (\Delta\mathcal{U}^{\alpha^2}_{q+\frac{x}{2}})^\dagger [A_k]\mathcal{U}^\dagger_{q-\frac{x}{2}}[A_l]\right\|_\infty &\leq \int_0^{\alpha^2}\de \nu \int_0^{|q+\frac{x}{2}|}\de\tau\;\left\| (\mathcal{U}^{\nu}_{t-\tau})^\dagger\sqrbra{[H^{CG}_{LS}, (\mathcal{U}^{\nu}_{\tau})^\dagger[A_k]]}\right\|\|\mathcal{U}^\dagger_{q-\frac{x}{2}}[A_l]\|\leq 
		\\
		&\leq 2\Gamma_0 \int_0^{\alpha^2}\de \nu \int_0^{|q+\frac{x}{2}|} \de\tau= 2\alpha^2\Gamma_0\left|q+\frac{x}{2}\right|\,,
	\end{align}
	where we again used $\|A_k\|_\infty = 1$, the bound in Eq.~\eqref{eq:upperBoundHLS} and the fact that unitary dynamics keep the operator norm constant. The other terms can be estimated similarly. This allows to bound  Eq.~\eqref{eq:appH2x} as:
	\begin{align}
		&\|(H^{CG}_{LS})^*-H^{CG}_{LS}\|_\infty\leq
		\\
		&\qquad\leq \int_{-\infty}^{\infty}\de q \int_{-\infty}^{\infty}\de x\; \frac{e^{-\frac{q^2+(x/2)^2}{T(\alpha)^2}}}{\sqrt{\pi}\, T(\alpha)} |C_{kl}(x)|\norbra{2\alpha^2\Gamma_0\norbra{\left|q+\frac{x}{2}\right|+\left|q-\frac{x}{2}\right|}+ 4\alpha^4\Gamma_0^2 \left|q+\frac{x}{2}\right|\left|q-\frac{x}{2}\right|}\leq
		\\
		&\qquad\leq \int_{-\infty}^{\infty}\de q \int_{-\infty}^{\infty}\de x\; \frac{e^{-\frac{q^2+(x/2)^2}{T(\alpha)^2}}}{\sqrt{\pi}\, T(\alpha)} |C_{kl}(x)|\norbra{2\alpha^2\Gamma_0\norbra{2 |q| + |x|}+ 4\alpha^4\Gamma_0^2\norbra{|q|+\left|\frac{x}{2}\right|}^2}\,,
	\end{align}
	where in the last step we applied the triangle inequality to simplify the expression. There are five different terms that appear, which we study separately. First, the two linear terms can be upper-bounded as:
	\begin{align}
		&4\alpha^2\Gamma_0\int_{-\infty}^{\infty}\de q \int_{-\infty}^{\infty}\de x\; \frac{e^{-\frac{q^2+(x/2)^2}{T(\alpha)^2}}}{\sqrt{\pi}\, T(\alpha)} |C_{kl}(x)|\;|q| \leq \frac{4\alpha^2\Gamma_0^2 \,T(\alpha)}{\sqrt{\pi}}\,;
		\\
		&2\alpha^2\Gamma_0\int_{-\infty}^{\infty}\de q \int_{-\infty}^{\infty}\de x\; \frac{e^{-\frac{q^2+(x/2)^2}{T(\alpha)^2}}}{\sqrt{\pi}\, T(\alpha)} |C_{kl}(x)|\;|x| \leq 2\alpha^2\Gamma_0^2 \tau_0\,,
	\end{align}
	where in both cases we upper-bounded the Gaussian in $x$ with its value in zero. The quadratic terms, on the other hand can be estimated as:
	\begin{align}
		&4\alpha^4\Gamma_0^2\int_{-\infty}^{\infty}\de q \int_{-\infty}^{\infty}\de x\; \frac{e^{-\frac{q^2+(x/2)^2}{T(\alpha)^2}}}{\sqrt{\pi}\, T(\alpha)} |C_{kl}(x)|\;|q|^2 \leq 2\alpha^4\Gamma_0^3\, T(\alpha)^2
		\,;
		\\
		&4\alpha^4\Gamma_0^2\int_{-\infty}^{\infty}\de q \int_{-\infty}^{\infty}\de x\; \frac{e^{-\frac{q^2+(x/2)^2}{T(\alpha)^2}}}{\sqrt{\pi}\, T(\alpha)} |C_{kl}(x)|\;|q|\,|x|\leq \frac{4\alpha^4\Gamma_0^3\tau_0 \,T(\alpha)}{\sqrt{\pi}}\,;
		\\
		& 4\alpha^4\Gamma_0^2\int_{-\infty}^{\infty}\de q \int_{-\infty}^{\infty}\de x\; \frac{e^{-\frac{q^2+(x/2)^2}{T(\alpha)^2}}}{\sqrt{\pi}\, T(\alpha)} |C_{kl}(x)|\;\frac{|x|^2}{4} \leq \alpha^4\Gamma_0^2 K_0\,.
	\end{align}
	Wrapping everything together we obtain:
	\begin{align}
		\|(H^{CG}_{LS})^*-H^{CG}_{LS}\|_\infty\leq \frac{4\alpha^2\Gamma_0^2 \,T(\alpha)}{\sqrt{\pi}} +2\alpha^2\Gamma_0^2( \tau_0+\alpha^2\Gamma_0\, T(\alpha)^2) + \frac{4\alpha^4\Gamma_0^3\tau_0 \,T(\alpha)}{\sqrt{\pi}} + \alpha^4\Gamma_0^2 K_0\,.
	\end{align}
	Then, thanks to Eq.~\eqref{eq:appH2x}, this allows us to apply Thm.~\ref{thm:DBerrorBoundIntegrated} to obtain:
	\begin{align}
		\|\tilde{\rho}^{CG}_{S}({t}) -\tilde{\rho}^{CG*}_{S}({t})\|_1 \leq (\Gamma_0t)\norbra{\frac{4\alpha^2\Gamma_0 \,T(\alpha)}{\sqrt{\pi}} +2\alpha^2( \Gamma_0\tau_0+\alpha^2\, (\Gamma_0T(\alpha))^2) + \frac{4\alpha^4\Gamma_0^2\tau_0 \,T(\alpha)}{\sqrt{\pi}} + \alpha^4\Gamma_0 K_0}\,.
	\end{align}
	Choosing $T(\alpha) = T^*_{opt}(\alpha)$ (defined in Eq.~\eqref{eq:optimalT}) we get that the error scales as $\bigo{\alpha (\Gamma_0t) \sqrt{\Gamma_0 \tau_0}}$, where we assumed that $K_0<\infty$.

	\subsection{Comparison with the mean force state}\label{app:meanFieldH}
	The zeroth law of thermodynamics says that a system in contact with a thermal bath will eventually thermalize at the same temperature. Then, from this consideration we would expect the fixed point of the evolution to take the form:
	\begin{align}
		\rho_S^{DB*}(\infty) = \TrR{B}{\frac{e^{-\beta H}}{\mathcal{Z}_{SB}}}= \frac{e^{-\beta\,H^{(mf)}_S}}{\mathcal{Z}_S^{(mf)}} \,.
	\end{align}
	As noted in the main text, this is not the case. In order to show this discrepancy, let us define the mean force Hamiltonian as:
	\begin{align}
		e^{-\beta\,H^{(mf)}_S}:= \TrR{B}{\frac{e^{-\beta\,(H_S +\alpha V+H_B)}}{\mathcal{Z}_B}}\,,\label{eq:H21x}
	\end{align}
	and let us assume, for simplicity, that it is analytic in $\alpha$. This means, that we can expand $H^{(mf)}_S$ in a power series as:
	\begin{align}
		H^{(mf)}_S = \sum_{n=0}^\infty \; \frac{\alpha^n}{n!}\,H^{(n)}_S\,,
	\end{align}
	where $H^{(n)}_S$ are independent on $\alpha$. First of all, setting $\alpha=0$ in Eq.~\eqref{eq:H21x} gives $H^{(0)}_S=H_S$. Moreover, if we take the derivative of the right hand side of Eq.~\eqref{eq:H21x} and we set it to zero, we also obtain that:
	\begin{align}
		\partial_{\alpha} \TrR{B}{\frac{e^{-\beta\,(H_S +\alpha V+H_B)}}{\mathcal{Z}_B}}\bigg|_{\alpha=0} = -\beta\int_0^1 \de x\; e^{-\beta\,H_S(1-x)} \TrR{B}{ V\gamma_B}e^{-\beta\,H_Sx} = 0\,,
	\end{align}
	where we applied the assumption~\ref{it:noEnergyShift} from the main text (that is $\TrR{B}{B_k\gamma_B}=0$). This shows that $H^{(1)}_S=0$. Finally, let us focus on the second order contribution to the mean force Hamiltonian. Let us first take the second derivative of the left hand side of Eq.~\eqref{eq:H21x}. Since $H^{(1)}_S=0$, this is given by:
	\begin{align}
		\partial_{\alpha}^2(e^{-\beta\,H^{(mf)}_S})\Big|_{\alpha=0} = -\beta\int_0^1 \de x\; e^{-\beta\,H_S(1-x)}H_S^{(2)}e^{-\beta\,H_Sx}\,.\label{eq:appH24x}
	\end{align}
	On the other hand, the second derivative of the right hand side of Eq.~\eqref{eq:H21x} can be expressed through the Dyson series as:
	\begin{align}
		\partial_{\alpha}^2 \TrR{B}{\frac{e^{-\beta\,(H_S +\alpha V+H_B)}}{\mathcal{Z}_B}}\bigg|_{\alpha=0} &= 2\beta^2\int_0^1 \de x_1\int_0^{x_1}\de x_2\; \TrR{B}{\pi_{SB}^{1-x_1}\,V\,\pi_{SB}^{x_1-x_2}V\,\pi_{SB}^{x_2}}=
		\\
		&=2\beta^2\int_0^1 \de x\int_{\frac{x-1}{2}}^{\frac{1-x}{2}}\de q\; \TrR{B}{\pi_{SB}^{\frac{1}{2}-q-\frac{x}{2}}\,V\,\pi_{SB}^{x}V\,\pi_{SB}^{\frac{1}{2}+q-\frac{x}{2}}}\,,
	\end{align}
	where we have introduced the notation $\pi_{SB}:= e^{-\beta\,H_S}\otimes \gamma_B$, and in the second line we carried out the substitution $\{q = \frac{x_1+x_2}{2}-1\,; \, x= x_1-x_2\}$. At this point, it is useful to expand the equation above in the frequency domain as:
	\begin{align}
		&2\beta^2\sum_{kl}\int_0^1 \de x\int_{\frac{x-1}{2}}^{\frac{1-x}{2}}\de q\; \TrR{B}{\pi_{SB}^{\frac{1}{2}-q-\frac{x}{2}}\,A^\dagger_k(\tilde{\omega})\otimes B_k(\Omega)\,\pi_{SB}^{x}A_l(\omega)\otimes B_l\,\pi_{SB}^{\frac{1}{2}+q-\frac{x}{2}}} = 
		\\
		&=e^{-\frac{\beta\,H_S}{2}}(A^\dagger_k(\tilde{\omega})A_l(\omega))e^{-\frac{\beta\,H_S}{2}}\norbra{2\beta^2\int_0^1 \de x\int_{\frac{x-1}{2}}^{\frac{1-x}{2}}\de q\; e^{-\beta (\tilde{\omega}-\omega)q}e^{-\beta \frac{(\tilde{\omega}+\omega)}{2}x}\,\TrR{B}{\gamma_B^{-x} B_k(\Omega) \gamma_B^x B_l\gamma_B}}\,,\label{eq:H28x}
	\end{align}
	where, here and in the following, we always implicitly integrate over all frequencies.
	We can further express the trace as:
	\begin{align}
		\TrR{B}{\gamma_B^{-x} B_k(\Omega) \gamma_B^x B_l\gamma_B} = e^{\beta \Omega x} \int_\infty^\infty \frac{\de t}{2\pi}  \; e^{-i\Omega t}\;\TrR{B}{ B_k(t) B_l\gamma_B} = \frac{1}{2\pi}\, e^{\beta \Omega x}  \,\widehat{C}_{kl}(-\Omega)\,,
	\end{align}
	where in the last step we used the definition of spectral density (see Eq.~\eqref{eq:spectralDensity}).
	
	At this point, it is useful to expand Eq.~\eqref{eq:appH24x} in the frequency representation as well:
	\begin{align}
		\partial_{\alpha}^2(e^{-\beta\,H^{(mf)}_S})\Big|_{\alpha=0}  &= -\beta\int_0^1 \de x\; e^{-\beta\,H_S(1-x)}H_S^{(2)}(\hat{\omega})e^{-\beta\,H_Sx} =-\beta e^{-\frac{\beta\,H_S}{2}} H_S^{(2)}(\hat{\omega})  e^{-\frac{\beta\,H_S}{2} }  \norbra{\frac{2\sinh\norbra{\frac{\beta \hat\omega }{2}}}{\beta\hat \omega}} \,.\label{eq:H30x}
	\end{align}
	Equating the terms with the same frequency in Eq.~\eqref{eq:H30x} and Eq.~\eqref{eq:H28x} (which corresponds to choosing $\omega =\hat \omega +\tilde \omega$), we obtain:
	\begin{align}
		H_S^{(2)}&(\hat{\omega})  =A^\dagger_k(\tilde{\omega})A_l(\hat{\omega}+\tilde{\omega})\norbra{-\frac{\beta}{\pi}\int_0^1 \de x\int_{\frac{x-1}{2}}^{\frac{1-x}{2}}\de q\; e^{\beta \hat{\omega}q}e^{-\beta \frac{(2\tilde{\omega}+\hat{\omega})}{2}x}\, e^{\beta \Omega x}  \,\widehat{C}_{kl}(-\Omega)}\norbra{\frac{2\sinh\norbra{\frac{\beta \hat\omega }{2}}}{\beta\hat \omega}}^{-1} = 
		\\
		&=A^\dagger_k(\tilde{\omega})A_l(\hat{\omega}+\tilde{\omega}) \norbra{\frac{1}{\pi}\int_{-\infty}^\infty \de \Omega\;\norbra{\norbra{\frac{e^{\beta\Omega}-e^{\beta(\hat{\omega}+\tilde{\omega})}}{e^{\beta\tilde{\omega}}-e^{\beta(\hat{\omega}+\tilde{\omega})}}}\frac{\widehat{C}_{kl}(-\Omega)}{\Omega-(\hat{\omega}+\tilde{\omega})}+\norbra{\frac{e^{\beta\tilde{\omega}}-e^{\beta\Omega}}{e^{\beta\tilde{\omega}}-e^{\beta(\hat{\omega}+\tilde{\omega})}}}\frac{\widehat{C}_{kl}(-\Omega)}{\Omega-\tilde{\omega}}}}\,.\label{eq:H32x}
	\end{align}
	An equivalent expression was found in~\cite{winczewski2021renormalization}, where they presented a version in which the integral over $\Omega$ was computed, which gives:
	\begin{align}
		H_S^{(2)}(\hat{\omega})=
		&\frac{i}{2}A^\dagger_k(\tilde{\omega})A_l(\hat{\omega}+\tilde{\omega})\cdot\nonumber\\
		&\int_{-\infty}^\infty \de x\;\norbra{\frac{1}{e^{\beta\hat{\omega}}-1}}\norbra{\norbra{e^{-ix\tilde{\omega}}-e^{\beta\hat{\omega}}e^{-ix(\hat{\omega}+\tilde{\omega})}}C_{kl}(x)- e^{-\beta \tilde{\omega}}\norbra{e^{-ix\tilde{\omega}}-e^{ix(\hat{\omega}+\tilde{\omega})}}C_{lk}(x)}{\rm sign}(x)\,.
	\end{align}
	The expression in Eq.~\eqref{eq:H32x} should be compared with the one for $H_{LS}^{CG}$, that is:
	\begin{align}
		H_{LS}^{CG}(\hat{\omega}) = A^\dagger_k(\tilde\omega)A_l(\hat{\omega}+ \tilde{\omega}) \frac{e^{-(T(\alpha)\,\hat{\omega})^2/4}}{\sqrt{\pi}}\int_{-\infty}^{\infty}\frac{\de\Omega}{\pi}\int_{-\infty}^{\infty}\de\omega_2\;\frac{\widehat{C}_{kl}(\omega_2)}{(\omega_2+(2\tilde{\omega}+\hat{\omega}))-\frac{\Omega}{T(\alpha)}}\;e^{-\Omega^2} \,.
	\end{align}
	The difference between $H_S^{(2)}(\hat{\omega})$ and $H_{LS}^{CG}(\hat{\omega}) $ is evident, even more so when checking the behaviour for $\hat{\omega}\gg 1$: whereas in the case of the Lamb-shift term, these contributions are exponentially suppressed, the contribution to the mean force  Hamiltonian converges to the Hilbert transform of $\widehat{C}_{kl}(\omega)$ evaluated at $\hat{\omega}$. Thus, the two Hamiltonians are different even in the limit $\alpha\rightarrow0$.
	
	\subsection{Proof of Thm.~\ref{thm:interactionInsensitive}: Closeness of fixed points}\label{app:intInsensitive}
	
	Despite the negative result of the previous section, it can be proven that the renormalized state is sufficiently close to the bare Hamiltonian one. To this end, for a fixed renormalization $H_{LS}^{CG}$, let us define $\gamma_S^{\alpha^2}$ as:
	\begin{align}
		\gamma_S^{\alpha^2} := \frac{e^{-\beta (H_S+\frac{\alpha^2}{2} H_{LS}^{CG})}}{\mathcal{Z}_S^{\alpha^2}} = \gamma_S^0 -\beta\int_{0}^{\frac{\alpha^2}{2}}\de \nu\int_0^1 \de x\; (\gamma_S^{\nu})^{1-x}\norbra{H_{LS}^{CG}- \Tr{H_{LS}^{CG} \gamma_S^\nu}\idO}(\gamma_S^{\nu})^x\,, \label{eq:H35x}
	\end{align}
	where, as usual, we took the derivative and then integrated again to obtain the right hand side. Thanks to the symmetry of the integral in $x$, we can also rewrite Eq.~\eqref{eq:H35x} as:
	\begin{align}
		\gamma_S^{\alpha^2} =\gamma_S^0 -\frac{\beta}{2}\int_{0}^{\frac{\alpha^2}{2}}\de \nu\int_0^1 \de x\; \norbra{(\gamma_S^{\nu})^{1-x}\norbra{H_{LS}^{CG}- \Tr{H_{LS}^{CG} \gamma_S^\nu}\idO}(\gamma_S^{\nu})^x + (\gamma_S^{\nu})^{x}\norbra{H_{LS}^{CG}- \Tr{H_{LS}^{CG} \gamma_S^\nu}\idO}(\gamma_S^{\nu})^{1-x}}\,.
	\end{align}
	This expression is useful in order to estimate the difference between $\gamma_S^{\alpha^2}$ and $\gamma_S^{0}$. Indeed, let $A$ and $B$ be positive, and $X$ be any generic matrix. Then, for all $t\in [0,1]$, it holds that (see Corollary IX.4.10 in~\cite{bhatia2013matrix}):
	\begin{align}
		\|A^t X B^{1-t} + A^{1-t} X B^t\|_1 \leq \|A X + X B\|_1\,.
	\end{align}
	Then, applying this inequality to our case we obtain:
	\begin{align}
		\|\gamma_S^{0}-\gamma_S^{\alpha^2}\|_1 &\leq\, \frac{\beta}{2}\int_{0}^{\frac{\alpha^2}{2}}\de \nu\int_0^1 \de x\;\| \gamma_S^{\nu}\norbra{H_{LS}^{CG}- \Tr{H_{LS}^{CG}}\idO} +\norbra{H_{LS}^{CG}- \Tr{H_{LS}^{CG}\gamma_S^{\nu}}\idO}\gamma_S^{\nu}\|_1\leq
		\\
		&\leq\frac{\alpha^2\beta}{2}\, \|\norbra{H_{LS}^{CG}- \Tr{H_{LS}^{CG}\gamma_S^{\nu}}\idO}\gamma_S^{\nu}\|_1 \leq\frac{\alpha^2\beta}{2}\, \|\norbra{H_{LS}^{CG}- \Tr{H_{LS}^{CG}\gamma_S^{\nu}}\idO}\|_\infty\,,
	\end{align}
	where in the last step we applied Hölder's inequality, together with $\|\gamma_S^\nu\|_1=1$. Finally, the claim follows from the bound in Eq.~\eqref{eq:upperBoundHLS} together with the fact that $\|\Tr{H_{LS}^{CG}\gamma_S^{\nu}}\idO\|_\infty \leq \|H_{LS}^{CG}\|_\infty$.
}

\section{Exactly detailed balanced Lindbladian}\label{app:exactDBLind}
\subsection{Proof of Thm.~\ref{thm:approxGammaSquareRoot}: Approximation to the coarse-grained rates}\label{app:thmapproxGammaSquare}
Once again we resort to Eq.~\eqref{eq:gammarateCG} in order to prove the claim.
We will do this in two steps, corresponding to the two terms in the triangle inequality:
\begin{align}
	|\gamma_{kl}^{\omega,\tilde{\omega}} - e^{-(T(\alpha)\,\omega_-)^2/4}\;{\widehat{g}}_{k\lambda}\norbra{-\tilde\omega } {\widehat{g}}_{\lambda l}\norbra{-{\omega} }|\leq\;
	&|\gamma_{kl}^{\omega,\tilde{\omega}} - e^{-(T(\alpha)\,\omega_-)^2/4}\;{\widehat{C}}_{kl}\norbra{-\omega_+}|+\nonumber
	\\
	&+ e^{-(T(\alpha)\,\omega_-)^2/4}\left|{\widehat{C}}_{kl}\norbra{-\omega_+} -{\widehat{g}}_{k\lambda}\norbra{-\tilde\omega } {\widehat{g}}_{\lambda l}\norbra{-{\omega} } \right|\,.\label{eq:app:h2!}
\end{align}
It should be noticed that the first term actually is the same that appeared in the estimate in  Eq.~\eqref{eq:G22x}. Then, repeating the same calculation we have: 
\begin{align}
	|\gamma_{kl}^{\omega,\tilde{\omega}} - e^{-(T(\alpha)\,\omega_-)^2/4}\;{\widehat{C}}_{kl}\norbra{-\omega_+}|&= \left|\frac{e^{-(T(\alpha)\,\omega_-)^2/4}}{\sqrt{\pi}}\int_{-\infty}^{\infty}\de\Omega\; \int_{-\omega_+}^{\frac{\Omega}{T(\alpha)}-\omega_+}\de \omega_1\;{\widehat{C}}_{kl}'\norbra{\omega_1}\,e^{-\Omega^2} \right|\leq
	\\
	&\leq {e^{-(T(\alpha)\,\omega_-)^2/4}} \frac{\|{\widehat{C}}'_{kl}\|_\infty}{\sqrt{\pi}\,T(\alpha)}\,.\label{eq:app:h3!}
\end{align}
\ms
{
	On the other hand, the second term in Eq.~\eqref{eq:app:h2!}, can be estimated as:
	\begin{align}
		&{e^{-(T(\alpha)\,\omega_-)^2/4}}\left|{\widehat{C}}_{kl}\norbra{-\omega_+}-{\widehat{g}}_{k\lambda}\norbra{-\tilde\omega } {\widehat{g}}_{\lambda l}\norbra{-{\omega} }\right| =
		\\
		&={e^{-(T(\alpha)\,\omega_-)^2/4}}\left|{\widehat{g}}_{k\lambda}\norbra{-\tilde\omega -\frac{\omega_-}{2}}{\widehat{g}}_{\lambda l}\norbra{-\omega +\frac{\omega_-}{2}}-{\widehat{g}}_{k\lambda}\norbra{-\tilde\omega } {\widehat{g}}_{\lambda l}\norbra{-{\omega} }\right|=
		\\
		&={e^{-(T(\alpha)\,\omega_-)^2/4}} \left|\int_0^{\frac{\omega_-}{2}}\de\omega_1\;\norbra{{\widehat{g}}_{k\lambda}\norbra{-\tilde\omega}{\widehat{g}}'_{\lambda l}\norbra{-\omega +\omega_1}+{\widehat{g}}'_{k\lambda}\norbra{-\tilde\omega-\omega_1}\int_{0}^{\frac{\omega_-}{2}}\de\omega_2\norbra{{\widehat{g}}_{\lambda l}\norbra{-\omega}+{\widehat{g}}'_{\lambda l}\norbra{-\omega+\omega_2}}}\right|\leq
		\\
		&\leq\norbra{\|{\widehat{g}}_{k\lambda}\|_\infty\|{\widehat{g}}'_{\lambda l}\|_\infty+\|{\widehat{g}}'_{k\lambda}\|_\infty\|{\widehat{g}}_{\lambda l}\|_\infty+\frac{|\omega_-|}{2}\|{\widehat{g}}'_{k\lambda}\|_\infty\|{\widehat{g}}'_{\lambda l}\|_\infty}\;{{e^{-(T(\alpha)\,\omega_-)^2/4}} \frac{|\omega_-|}{2}} \leq
		\\
		&\leq  \norbra{\frac{\|{\widehat{g}}'_{k\lambda}\|_\infty\|{\widehat{g}}_{\lambda l}\|_\infty+\|{\widehat{g}}_{k\lambda}\|_\infty\|{\widehat{g}}'_{\lambda l}\|_\infty}{2}} \frac{\sqrt{2}}{\sqrt{e}\, T(\alpha)} +\|{\widehat{g}}'_{k\lambda}\|_\infty\|{\widehat{g}}'_{\lambda l}\|_\infty \frac{2}{e (T(\alpha)^2)}\,,\label{eq:app:h7!}
	\end{align}
	where in the second line we used the definition of $\omega_+ := \frac{\omega+\tilde{\omega}}{2}$ and $\omega_- := {({\omega}-\tilde\omega)}$; at this point, we derived and integrated again, and finally we took the superior over all possible values of $\omega_-$. Putting together Eq.~\eqref{eq:app:h3!} and Eq.~\eqref{eq:app:h7!}, together with the two estimates $\|{\widehat{C}}'_{kl}\|_\infty\leq \Gamma \tau$ and $\|{\widehat{g}}_{k\lambda}\|_\infty\|{\widehat{g}}'_{k\lambda}\|_\infty\leq \Gamma \tau$ and $\|{\widehat{g}}'_{k\lambda}\|_\infty\|{\widehat{g}}'_{k\lambda}\|_\infty\leq K$ (see App.~\ref{app:relationTimescales}), we obtain the claim.}

\subsection{Proof of Thm.~\ref{thm:dbLindlbadian}: $\widehat{\lind}^{DB*} $ is exactly detailed balanced and CP}\label{app:dbLindbladian}
\ms
{
	For ease of notation, in this section we drop the star from $\lind^{DB*}$, the jump operators and the Lamb-shift Hamiltonian.
	
	We begin by proving KMS detailed balance for the coefficients in Eq.~\eqref{eq:DBcoeffCP}. This holds by construction.} Indeed, the second property in Eq.~\eqref{eq:DBdefinition} is automatically satisfied, whereas the first condition can be proven by direct calculation:
\begin{align}
	\overline{\tilde{\gamma}_{kl}^{-\omega,-\tilde{\omega}}} = \overline{e^{-(T(\alpha)\,\omega_-)^2/4}}\;\overline{{\widehat{g}}_{k\lambda}\norbra{\tilde\omega } {\widehat{g}}_{\lambda l}\norbra{{\omega} }} = e^{-(T(\alpha)\,\omega_-)^2/4}\;{\widehat{g}}_{k\lambda}\norbra{-\tilde\omega } {\widehat{g}}_{\lambda l}\norbra{-{\omega} } e^{\beta\norbra{\frac{\omega+{\tilde\omega}}{2}}} =	\tilde{\gamma}_{kl}^{\omega,\tilde{\omega}} e^{\beta\norbra{\frac{\omega+{\tilde\omega}}{2}}}\,,
\end{align}
where, once again, we used the property in Eq.~\eqref{eq:corrKMS} together with the definition of ${\widehat{g}}_{k\lambda}\norbra{\omega }$. Then, it remains to prove that these coefficients induce a CP evolution. Once again, for the part in $\tilde{S}_{kl}^{\omega,\tilde{\omega}}$ it is sufficient to prove that the Lamb Shift Hamiltonian:
\begin{align}
	H^{DB}_{LS} := \int_{-\infty}^{\infty} \de\omega \int_{-\infty}^{\infty} \de\tilde\omega\;{\tilde S}_{kl}^{\omega,\tilde{\omega}}A_k^\dagger(\tilde\omega)A_l(\omega)
\end{align}
is Hermitian. Then, using the explicit expression of ${\tilde S}_{kl}^{\omega,\tilde{\omega}}$ we have:
\begin{align}
	(H^{DB}_{LS})^\dagger :&= -i\int_{-\infty}^{\infty} \de\omega \int_{-\infty}^{\infty} \de\tilde\omega\;\tanh\norbra{\beta\norbra{\frac{\omega-\tilde{\omega}}{4}}}\overline{\tilde{\gamma}_{kl}^{\omega,\tilde{\omega}}}	A^\dagger_l(\omega)A_k(\tilde\omega) =
	\\
	&= i\int_{-\infty}^{\infty} \de\omega \int_{-\infty}^{\infty} \de\tilde\omega\;\tanh\norbra{-\beta\norbra{\frac{\omega-\tilde{\omega}}{4}}}\overline{\tilde{\gamma}_{lk}^{{\omega},\tilde\omega}}	A^\dagger_k(\omega)A_l(\tilde\omega)  =
	\\
	&=i\int_{-\infty}^{\infty} \de\omega \int_{-\infty}^{\infty} \de\tilde\omega\;\tanh\norbra{\beta\norbra{\frac{\omega-\tilde{\omega}}{4}}}\tilde{\gamma}_{kl}^{\omega,\tilde{\omega}} 	A^\dagger_k(\tilde\omega)A_l(\omega) = H^{DB}\,,
\end{align}
thanks to the property of $\tilde{\gamma}_{kl}^{\omega,\tilde{\omega}} $:
\begin{align}
	\overline{\tilde{\gamma}_{lk}^{{\omega},\tilde\omega}} = e^{-(T(\alpha)\,\omega_-)^2/4}\;\overline{{\widehat{g}}_{l\lambda}\norbra{-\tilde\omega } {\widehat{g}}_{\lambda k}\norbra{-{\omega} }} =e^{-(T(\alpha)\,\omega_-)^2/4}\;{{\widehat{g}}_{\lambda l}\norbra{-\tilde\omega } {\widehat{g}}_{k\lambda}\norbra{-{\omega} }}=\tilde{\gamma}_{kl}^{\tilde{\omega},\omega} \,,
\end{align}
where we implicitly used Eq.~\eqref{eq:correlationFunctionHermitian}.

We can now focus on the dissipative term $\widehat{\mathcal{D}}^{DB}$. For brevity, let us introduce the notation $\mathcal{D}^{\omega,\tilde{\omega}}_{kl}$:
\begin{align}
	\mathcal{D}^{\omega,\tilde{\omega}}_{kl}[\rho] := \Big(A_l(\omega)\rho A^\dagger_k(\tilde\omega)-\frac{1}{2}\{A^\dagger_k(\tilde\omega)A_l(\omega),\rho\}\Big)\,.\label{eq:dissipatorFrequencyRep}
\end{align}
Then, we can express $\widehat{\mathcal{D}}^{DB}$ as:
\begin{align}
	\widehat{\mathcal{D}}^{DB}[\rho] &=  \int_{-\infty}^{\infty} \de\omega \int_{-\infty}^{\infty} \de\tilde\omega\; \tilde{\gamma}_{kl}^{\omega,\tilde{\omega}}\mathcal{D}^{\omega,\tilde{\omega}}_{kl}[\rho] =
	\\
	&=\frac{T(\alpha)}{\sqrt{\pi}}\int_{-\infty}^{\infty} \de\omega^* \; \int_{-\infty}^{\infty} \de\omega \int_{-\infty}^{\infty} \de\tilde\omega\;e^{-T(\alpha)^2\frac{(\omega^*-\omega)^2}{2}}e^{-T(\alpha)^2\frac{(\omega^*-\tilde{\omega})^2}{2}}\;\overline{{\widehat{g}}_{\lambda k}\norbra{-\tilde\omega } }{\widehat{g}}_{\lambda l}\norbra{-{\omega} }\mathcal{D}^{\omega,\tilde{\omega}}_{kl}[\rho] =
	\\
	&=  \int_{-\infty}^{\infty} \de\omega^* \Big(\hat{A}_\lambda(\omega^*)\rho \hat{A}^\dagger_\lambda(\omega^*)-\frac{1}{2}\{\hat{A}_\lambda^\dagger(\omega^*)\hat{A}_\lambda(\omega^*),\rho\}\Big) \label{eq:105}
\end{align}
where, in the second line we used the relation for Gaussians: 
\begin{align}
	e^{-(T(\alpha)\,(\omega-\tilde{\omega}))^2/4} = \frac{T(\alpha)}{\sqrt{\pi}}\int_{-\infty}^{\infty} \de\omega^* \; e^{-T(\alpha)^2\frac{(\omega^*-\omega)^2}{2}}e^{-T(\alpha)^2\frac{(\omega^*-\tilde{\omega})^2}{2}}
\end{align}
and in the last step we defined the jump operators:
\begin{align}
	&\hat{A}_\lambda(\omega^*) = \sqrt{\frac{T(\alpha)}{\sqrt{\pi}}}\int_{-\infty}^{\infty} \de\omega\;e^{-T(\alpha)^2\frac{(\omega^*-{\omega})^2}{2}}\;{\widehat{g}}_{\lambda l}\norbra{-{\omega}}A_l({\omega})\,.
\end{align}
Since Eq.~\eqref{eq:105} is in Lindblad form, the corresponding semigroup is completely positive.


\subsection{Quasilocality of the jump operators}\label{app:quasiLocalJump2}

We can now write the jump operators as

\begin{align}
	\hat{A}_\lambda(\omega^*) &= \sqrt{\frac{T(\alpha)}{\sqrt{\pi}}}\int_{-\infty}^{\infty} \de\omega\;e^{-T(\alpha)^2\frac{(\omega^*-{\omega})^2}{2}}\;{\widehat{g}}_{\lambda l}\norbra{-{\omega}}A_l({\omega})\,=
	\\
	& = \int_{-\infty}^{\infty} e^{i \omega^* t } A(t) \text{d}t \left ( \frac{1}{\sqrt{\sqrt{\pi} T(\alpha)}}\,\norbra{e^{\frac{-t^2}{2 T(\alpha)^2}}  \ast g_{\lambda l}}(t) \right )
\end{align}

By the same argument as in App. \ref{app:quasiLocalJump}, we have that these jump operators are again quasi-local, with their locality now controlled by the convolution of the Gaussian and the bath correlation function $g_{\lambda l}$.
In contrast to App. \ref{app:quasiLocalJump}, this means that now the locality also depends on the decay rate of these bath correlation functions, which in App.~\ref{app:relationTimescales} is defined as $\Gamma$. Considering the effect of the convolution, we expect the radius of the terms to be $\mathcal{O} \left ( v_{\text{LR}} T(\alpha) + \frac{v_{\text{LR}}}{\Gamma} \right)$, thus decaying roughly as $\left (\frac{\sqrt{v_{\text{LR}} T(\alpha) + \frac{v_{\text{LR}}}{\Gamma}}}{1+\sqrt{v_{\text{LR}} T(\alpha) + \frac{v_{\text{LR}}}{\Gamma}}} \right)^r$.	


\subsection{Reduction to Davies generator}\label{app:davies2}
\ms
{
	The derivation is very similar to the one in App.~\ref{app:davies}. Let us begin by expanding the rates $\tilde{\gamma}_{kl}^{\omega,\tilde{\omega}}$ as:
	\begin{align}
		\tilde{\gamma}_{kl}^{\omega,\tilde{\omega}} &= e^{-(T(\alpha)\,\omega_-)^2/4}\;{\widehat{g}}_{k\lambda}\norbra{-\tilde\omega } {\widehat{g}}_{\lambda l}\norbra{-{\omega} } =  \label{eq:i23}
		\\
		&= e^{-(T(\alpha)\,\omega_-)^2/4}\;\norbra{{\widehat{g}}_{k\lambda}\norbra{-\omega} {\widehat{g}}_{\lambda l}\norbra{-{\omega} }+\int_{0}^{\frac{\omega_-}{2}}\de \omega_1\;{\widehat{g}}'_{k\lambda}\norbra{-\tilde\omega -\omega_1} {\widehat{g}}_{\lambda l}\norbra{-{\omega} }}
	\end{align}
	Carrying out the sum over $\lambda$, the first term reduces to ${\widehat{g}}_{k\lambda}\norbra{-\omega} {\widehat{g}}_{\lambda l}\norbra{-{\omega} } = C_{kl}(-\omega)$, so it can be upper-bounded by $\Gamma_0$. On the other hand, the second term can also be controlled as:
	\begin{align}
		\left|\int_{0}^{\frac{\omega_-}{2}}\de \omega_1\;{\widehat{g}}'_{k\lambda}\norbra{-\tilde\omega -\omega_1} {\widehat{g}}_{\lambda l}\norbra{-{\omega} }\right|  \leq \frac{|\omega_-|\Gamma\,\tau}{2}
	\end{align}
	It should be noticed that the function $|\omega_-|e^{-(T(\alpha)\,\omega_-)^2/4}$ can be explicitly maximised (as done in App.~\ref{app:thmapproxGammaSquare}) to give $|\omega_-|e^{-(T(\alpha)\,\omega_-)^2/4}\leq \frac{\sqrt{2}}{\sqrt{e}T(\alpha)}$. On the other hand, it should also be noticed that a trivial upper-bound to Eq.~\eqref{eq:i23} can be given by using the fact that $|\widehat{g}_{k\lambda}(\omega)||\widehat{g}_{\lambda l}(\tilde\omega)|\leq \Gamma$ (see App.~\ref{app:relationTimescales}). Then, wrapping everything together we have:
	\begin{align}
		{\tilde \gamma}_{kl}^{\omega,\tilde{\omega}} = 
		\begin{cases}
			\widehat{C}_{kl}\norbra{-\omega}  +\bigo{\frac{\sqrt{2}\Gamma \tau}{\sqrt{e} T(\alpha)}}& \qquad\qquad{\rm if}\; \omega=\tilde{\omega}\\
			\bigo{e^{-(T(\alpha)\,\omega_-^{min})^2/4}\,\Gamma} & \qquad\qquad{\rm otherwise}
		\end{cases}
	\end{align}
	that is $\tilde{\gamma}_{kl}^{\omega,\tilde{\omega}}$ and ${\gamma}_{kl}^{\omega,\tilde{\omega}}$ coincide in the limit $T(\alpha)\rightarrow\infty$, and they both coincide with what one would expect from the Davies generator (see Eq.~\eqref{eq:app:g17} and Eq.~\eqref{eq:daviesGen}). 
	
	On the other hand, having renormalised the interaction picture, in order for $\widehat{\lind}^{DB*}$ to reduce to the Davies generator, we need the Lamb-shift term to reduce to zero. Then, using the upper-bound $\left|\tanh\norbra{\frac{\beta\omega_-}{4}}\right|\leq1$, we have:
	\begin{align}
		{\tilde S}_{kl}^{\omega,\tilde{\omega}} = 
		\begin{cases}
			0& \qquad\qquad{\rm if}\; \omega=\tilde{\omega}\\
			\bigo{e^{-(T(\alpha)\,\omega_-^{min})^2/4}\,\Gamma} & \qquad\qquad{\rm otherwise}
		\end{cases}
	\end{align}
	This proves that for $T(\alpha)\gg \omega_-^{min}$ we have the approximate equality $\widehat{\lind}^{DB} \simeq  \widehat{\lind}^{CG} \simeq\widehat{\lind}^{DAV}$, which become exact in the limit $T(\alpha)\rightarrow\infty$.
	
}

\subsection{Proof of Thm.~\ref{thm:lindbladErrorDB}: $\widehat{\lind}^{CG*} $ and $\widehat{\lind}^{DB*} $ are close}\label{app:thmlindbladErrorDB}

First, it should be noticed that thanks to the triangle inequality it holds that :
\begin{align}
	\|\widehat{\lind}^{CG*}[\rho]-\widehat{\lind}^{DB*}[\rho]\|_1\leq 	\|\widehat{\mathcal{D}}^{CG*}[\rho]-\widehat{\mathcal{D}}^{DB*}[\rho]\|_1+ \|[H_{LS}^{DB*},\rho]\|_1\,,\label{eq:163}
\end{align}
so that we can consider the two cases separately. 

We begin by analyzing the dissipative part. By starting from   Eq.~\eqref{eq:averagedLindblad}, and changing variables as in Eq.~\eqref{eq:substitutionQXT} ($\{t_1 = q-x/2\,; \, t_2= q+x/2\}$), we can rewrite the time averaged dissipator as:
\begin{align}
	\widehat{\mathcal{D}}^{CG*}[\rho]&=\int_{-\infty}^{\infty}\frac{\de t_1}{\sqrt{\pi}\, T(\alpha)}\int_{-\infty}^{\infty}\de t_2\int_{-\infty}^{\infty}\de t_3\;\norbra{e^{-\frac{t_1^2+t_2^2}{2T(\alpha)^2}}g_{k\lambda}\norbra{t_3+\frac{t_2-t_1}{2}}g_{\lambda l}\norbra{\frac{t_2-t_1}{2}-t_3}}\mathcal{D}^{t_1,t_2}_{kl}[\rho]\,,\label{eq:dissCGTime}
\end{align}
where we used the definition of $g_{k\lambda}(x)$ and we introduced the notation:
\begin{align}
	\mathcal{D}^{t_1,t_2}_{kl}[\rho] := \Big(A_l(t_1)\rho A_k(t_2)-\frac{1}{2}\{A_k(t_2)A_l(t_1),\rho\}\Big)\,.
\end{align} 
\ms
{It is worth reminding the reader that the evolution of the jump operators is defined with respect to the renormalized Hamiltonian.} Moreover,in the following, it will be useful to denote by $\gamma_{kl}^{t_1,t_2}$  the coefficient in the parenthesis of Eq.~\eqref{eq:dissCGTime}. We can now cast $\widehat{\mathcal{D}}^{DB*}$ in the same form.
First, the rate $\tilde{\gamma}_{kl}^{\omega,\tilde{\omega}}$ defined in Eq.~\eqref{eq:DBcoeffCP} can be transformed as follows:
\begin{align}
	\tilde{\gamma}_{kl}^{\omega,\tilde{\omega}}&=e^{-(T(\alpha)\,\omega_-)^2/4}\;{\widehat{g}}_{k\lambda}\norbra{-\tilde\omega } {\widehat{g}}_{\lambda l}\norbra{-{\omega} } =
	\\
	&= \int_{-\infty}^{\infty}\frac{\de\tau_1}{\sqrt{\pi}\, T(\alpha)}\int_{-\infty}^{\infty}\de \tau_2\int_{-\infty}^{\infty}\de \tau_3 \;  e^{-\tau_1^2/T(\alpha)^2}e^{i\omega (\tau_1-\tau_3)}e^{-i\tilde{\omega} (\tau_1+\tau_2)}{{g}}_{k\lambda}\norbra{\tau_2}{{g}}_{\lambda l}\norbra{\tau_3}\;=
	\\
	&= \int_{-\infty}^{\infty}\de t_1\int_{-\infty}^{\infty}\de t_2\;e^{i{\omega}t_1}e^{-i\tilde\omega t_2}\norbra{\int_{-\infty}^{\infty} \frac{\de t_3}{\sqrt{\pi}\,T(\alpha)} \;  e^{-\frac{(t_1+t_2-2t_3)^2}{4T(\alpha)^2}}{{g}}_{k\lambda}\norbra{t_3+\frac{t_2-t_1}{2}}{{g}}_{\lambda l}\norbra{\frac{t_2-t_1}{2}-t_3}}\,,\label{eq:134}
\end{align}
where, in the last step, we performed the change of variables $\{t_1 = \tau_1-\tau_3\,; \, t_2= \tau_1+\tau_2\,; \, t_3=\frac{\tau_2-\tau_3}{2}\}$. Once again, we denote by $\tilde{\gamma}_{kl}^{t_1,t_2}$ the coefficient in the parenthesis.

This gives us all the ingredients to estimate the first trace distance in Eq.~\eqref{eq:163}, which can be bound as:
\begin{align}
	\|\widehat{\mathcal{D}}^{CG*}[\rho]&-\widehat{\mathcal{D}}^{DB*}[\rho]\|_1  \leq\int_{-\infty}^{\infty}\de t_1\int_{-\infty}^{\infty}\de t_2\; \left\| \norbra{{\gamma}_{kl}^{t_1,t_2}-\tilde{\gamma}_{kl}^{t_1,t_2}}\mathcal{D}^{t_1,t_2}_{kl}[\rho]\right\|_1 \leq 
	\\
	&\leq2\int_{-\infty}^{\infty}\de \tau_1\int_{-\infty}^{\infty}\de \tau_2\; |{\gamma}_{kl}^{\tau_1,\tau_2}-\tilde{\gamma}_{kl}^{\tau_1,\tau_2}| \leq
	\\
	&\leq2\int_{-\infty}^{\infty}\de t_1\int_{-\infty}^{\infty}\de t_2\;|{{g}}_{k\lambda}\norbra{t_1}{{g}}_{\lambda l}\norbra{t_2}|\int_{-\infty}^{\infty}  \frac{\de t_3}{\sqrt{\pi}\,T(\alpha)}\; e^{-\frac{t^2_3}{T(\alpha)^2}}\left |  \norbra{e^{-\frac{((t_3+t_1)^2+(t_3-t_2)^2-2t_3^2)}{2T(\alpha)^2}}-1}\right | =\label{eq:171}
	\\
	&=2\int_{-\infty}^{\infty}\de t_1\int_{-\infty}^{\infty}\de t_2\;|{{g}}_{k\lambda}\norbra{t_1}{{g}}_{\lambda l}\norbra{t_2}|\int_{-\infty}^{\infty}  \frac{\de \Omega}{\sqrt{\pi}}\; e^{-\Omega^2}\left |  \norbra{e^{-\frac{1}{2}((\Omega+t_1/T(\alpha))^2+(\Omega-t_2/T(\alpha))^2) + \Omega^2}-1}\right | \,,\label{eq:142}
\end{align}
where in Eq.~\eqref{eq:171} we implicitly changed variables as $\{t_1 = \frac{\tau_1-\tau_2}{2}+\tau_3\,; \, t_2= \frac{\tau_1-\tau_2}{2}-\tau_3\,; \, t_3=\frac{\tau_1+\tau_2}{2}-\tau_3\}$, and in the last step we also substituted $\{ \Omega=\frac{t_3}{T(\alpha)}\}$. We are now interested in understanding the scaling of the integral in $\de \Omega$ as a function of $T(\alpha)^{-1}$ in the limit $T(\alpha)\rightarrow\infty$. In order to do so, let us first change variables as $T(\alpha)^{-1}\rightarrow \varepsilon$, so that our perturbative expansion can be recast as the study of a function of $\varepsilon$ around zero. Then, taking the derivative with respect to $\varepsilon$ and integrating again we obtain:
\begin{align}
	\int_{-\infty}^{\infty}  \frac{\de \Omega}{\sqrt{\pi}}\;& e^{-\Omega^2}\left |  \norbra{e^{\Omega\,\varepsilon{(t_2-t_1)}-\frac{\varepsilon^2}{2}{(t_1^2+t_2^2)}{}}-1}\right |=
	\\
	&=\int_{-\infty}^{\infty}  \frac{\de \Omega}{\sqrt{\pi}}\; e^{-\Omega^2}\left | \int_0^\varepsilon\de \sigma\;e^{\sigma \,\Omega\,(t_2-t_1)-\frac{\sigma^2}{2}(t_1^2+t_2^2)} \norbra{\Omega\,(t_2-t_1)- \sigma\, (t_1^2+t_2^2)} \right |\leq
	\\
	&\leq\int_{-\infty}^{\infty}  \frac{\de \Omega}{\sqrt{\pi}}\int_0^\varepsilon\de \sigma\;e^{-\Omega^2}e^{\sigma \,\Omega\,(t_2-t_1)-\frac{\sigma^2}{2}(t_1^2+t_2^2)}\norbra{\left |\Omega \right |\left |t_1-t_2 \right | + \sigma\, (t_1^2+t_2^2)}=
	\\
	&=\int_0^\varepsilon\de \sigma\;\norbra{\norbra{\sigma\,(t_1^2+t_2^2)+\sigma\,\frac{(t_1-t_2)^2}{2}{\rm erf}\norbra{\frac{|t_1-t_2|\sigma}{2}}}e^{-\frac{1}{4}\sigma^2 (t_1+t_2)^2}+\frac{|t_1-t_2|}{\sqrt{\pi}}e^{-\frac{\sigma^2}{2}(t_1^2+t_2^2)}}\leq
	\\
	&\leq\;  \frac{|t_1-t_2|}{\sqrt{\pi}}\,\varepsilon+(t_1^2+t_2^2+\frac{1}{2}(t_1-t_2)^2)\,\frac{\varepsilon^2}{2} \leq  \frac{|t_1-t_2|}{\sqrt{\pi}}\,\varepsilon+(|t_1|+|t_2|)^2\,\varepsilon^2\,,
\end{align}
where in the penultimate step we used the fact that both the exponential of negative value and the error function are upper bounded by $1$. Plugging this expression back in Eq.~\eqref{eq:142}, we get the bound:
\begin{align}
	\|\widehat{\mathcal{D}}^{CG*}[\rho]&-\widehat{\mathcal{D}}^{DB*}[\rho]\|_1  \leq 2\int_{-\infty}^{\infty}\de t_1\int_{-\infty}^{\infty}\de t_2\;\norbra{\frac{|t_1-t_2|}{\sqrt{\pi}\,T(\alpha)}+\frac{(|t_1|+|t_2|)^2}{T(\alpha)^2}}|{{g}}_{k\lambda}\norbra{t_1}{{g}}_{\lambda l}\norbra{t_2}|\leq
	\\
	&\leq\frac{2}{\sqrt{\pi}\,T(\alpha)}\int_{-\infty}^{\infty}\de t_1\int_{-\infty}^{\infty}\de t_2\;\norbra{|t_1|+|t_2|}|{{g}}_{k\lambda}\norbra{t_1}{{g}}_{\lambda l}\norbra{t_2}| + \frac{2\,K}{T(\alpha)^2}\leq \frac{4\Gamma\tau}{\sqrt{\pi}\,T(\alpha)} + \frac{2\,K}{T(\alpha)^2}\,,
\end{align}
\ms
{where we used the constant $K$ introduced in Eq.~\eqref{eq:KDef}.}

Let us now focus on the second trace distance in Eq.~\eqref{eq:163}. A simple application of Hölder's inequality gives:
\begin{align}
	\|[H_{LS}^{DB*},\rho]\|_1\leq 2\, \|H_{LS}^{DB*}\|_\infty\|\rho\|_1\,,\label{eq:181}
\end{align}
so it is sufficient to bound the operator norm of the Lamb-Shift Hamiltonian. To this end, let us transform $\tilde{S}_{kl}^{\omega,\tilde{\omega}}$ into the time representation:
\begin{align}
	\tilde{S}_{kl}^{\omega,\tilde{\omega}}&=i\tanh\norbra{{\frac{\beta \,\omega_-}{4}}}e^{-(T(\alpha)\,\omega_-)^2/4}\;{\widehat{g}}_{k\lambda}\norbra{-\tilde\omega } {\widehat{g}}_{\lambda l}\norbra{-{\omega} } =
	\\
	&=\int_{-\infty}^{\infty}\frac{\de\tau_1}{\sqrt{\pi}\, T(\alpha)} \int_{-\infty}^{\infty}\de \tau_2\int_{-\infty}^{\infty}\de\tau_3 \;  F(\tau_1)e^{i\omega (\tau_1-\tau_3)}e^{-i\tilde{\omega} (\tau_1+\tau_2)}{{g}}_{k\lambda}\norbra{\tau_2}{{g}}_{\lambda l}\norbra{\tau_3}\;=
	\\
	&= \int_{-\infty}^{\infty}\de t_1\int_{-\infty}^{\infty}\de t_2\;e^{i{\omega}t_1}e^{-i\tilde\omega t_2}\norbra{\int_{-\infty}^{\infty} \frac{\de t_3}{\sqrt{\pi}\,T(\alpha)} \;  F(t_1+t_2-2t_3){{g}}_{k\lambda}\norbra{t_3+\frac{t_2-t_1}{2}}{{g}}_{\lambda l}\norbra{\frac{t_2-t_1}{2}-t_3}}\,,
\end{align}
where we introduced the function:
\begin{align}\label{eq:LSft}
	F(t) := \frac{2}{\beta}\int_{-\infty}^{\infty}\de x\; {\rm sech}\norbra{\frac{2\pi x}{\beta}}\sin\norbra{\frac{\beta (t-x)}{2 T(\alpha)^2}} e^{-\frac{(t-x)^2}{T(\alpha)^2}}e^{\frac{\beta^2}{16T(\alpha)^2}}\,,
\end{align}
and in the last step we changed variables as $\{t_1 = \tau_1-\tau_3\,; \, t_2= \tau_1+\tau_2\,; \, t_3=\frac{\tau_2-\tau_3}{2}\}$. Then, we have the bound:
\begin{align}
	\|H_{LS}^{DB*}\|_\infty &\leq \int_{-\infty}^{\infty}\de t_1\int_{-\infty}^{\infty}\de t_2\; \left\| \tilde{S}_{kl}^{t_1,t_2}\, A_k^\dagger(t_2)A_l(t_1)\right\|_\infty\leq
	\\
	&\leq\int_{-\infty}^{\infty}\de t_1\int_{-\infty}^{\infty}\de t_2\;|\tilde{S}_{kl}^{t_1,t_2}|\,,
\end{align}
where we used $\|A_k\|_\infty=1$. We can bound the last integral as follows:
\begin{align}
	\int_{-\infty}^{\infty}\de t_1\int_{-\infty}^{\infty}\de t_2\; |\tilde{S}_{kl}^{t_1,t_2}| &= \int_{-\infty}^{\infty}\de \tau_2\int_{-\infty}^{\infty}\de\tau_3  |{{g}}_{k\lambda}\norbra{\tau_2}||{{g}}_{\lambda l}\norbra{\tau_3}| \int_{-\infty}^{\infty}\frac{\de\tau_1}{\sqrt{\pi}\, T(\alpha)}\;|F(\tau_1)|\,,\label{eq:194}
\end{align}
where we implicitly changed variables back to $\{t_1 = \frac{\tau_1-\tau_2}{2}+\tau_3\,; \, t_2= \frac{\tau_1-\tau_2}{2}-\tau_3\,; \, t_3=\frac{\tau_1+\tau_2}{2}-\tau_3\}$. Let us focus on the last integral:
\begin{align}
	\int_{-\infty}^{\infty}\frac{\de\tau_1}{\sqrt{\pi}\, T(\alpha)}\;|F(\tau_1) |&\leq  \frac{2\,e^{\frac{\beta^2}{16T(\alpha)^2}}}{\beta}\int_{-\infty}^{\infty}\frac{\de\tau_1}{\sqrt{\pi}\, T(\alpha)}\int_{-\infty}^{\infty}\de x\; {\rm sech}\norbra{\frac{2\pi x}{\beta}}\left |\sin\norbra{\frac{\beta (\tau_1-x)}{2 T(\alpha)^2}}\right | e^{-\frac{(\tau_1-x)^2}{T(\alpha)^2}}=
	\\
	&=\frac{2\,e^{\frac{\beta^2}{16T(\alpha)^2}}}{\beta}\int_{-\infty}^{\infty}\frac{\de\tau}{\sqrt{\pi}\, T(\alpha)}\int_{-\infty}^{\infty}\de x\; {\rm sech}\norbra{\frac{2\pi x}{\beta}}\left |\sin\norbra{\frac{\beta \,\tau}{2 T(\alpha)^2}}\right | e^{-\frac{\tau^2}{T(\alpha)^2}} \leq
	\\
	&\leq e^{\frac{\beta^2}{16T(\alpha)^2}}\int_{-\infty}^{\infty}\frac{\de\tau}{\sqrt{\pi}\, T(\alpha)}\left |\frac{\beta \,\tau}{2 T(\alpha)^2}\right | e^{-\frac{\tau^2}{T(\alpha)^2}} = \frac{\beta e^{\frac{\beta^2}{16T(\alpha)^2}}}{2 \sqrt{\pi}\,T(\alpha)}\,,
\end{align}
where we changed variables to decouple the hyperbolic secant from the rest of the terms, and then used a linear upper bound for the sine function (that is, $|\sin(x)|\leq |x|$). Putting back this estimate in Eq.~\eqref{eq:194} we finally obtain:
\begin{align}
	\int_{-\infty}^{\infty}\de t_1\int_{-\infty}^{\infty}\de t_2\; |\tilde{S}_{kl}^{t_1,t_2}|\leq  \frac{\Gamma\beta e^{\frac{\beta^2}{16T(\alpha)^2}}}{2 \sqrt{\pi}\,T(\alpha)}\,.
\end{align}
Wrapping everything up we obtain the claim.

\end{document}